\def\dOi{11(2:9)2015}
\subjclass{I.2.2, F.1.1, F.3.1, F.4.1}%
\def\figurename{Figure}
\begin{document}
\title[Pure Nash equilibria in concurrent deterministic games]{Pure Nash
  equilibria\\ in concurrent deterministic games\rsuper*}
\author[P.~Bouyer]{Patricia~Bouyer\rsuper a}%
\address{{\lsuper{a,c}}LSV -- CNRS \& ENS Cachan -- France}%
\email{\{bouyer,markey\}@lsv.ens-cachan.fr}%

\author[R.~Brenguier]{Romain~Brenguier\rsuper b}%
\address{{\lsuper b}Universit\'e Libre de Bruxelles -- Belgium}%
\email{romain.brenguier@ulb.ac.be}%

\author[N.~Markey]{Nicolas~Markey\rsuper c}%
\address{\vspace{-18 pt}}

\author[M.~Ummels]{Michael~Ummels\rsuper d}%
\address{{\lsuper d}Institute of Transportation Systems, German Aerospace Center -- Germany}%
\email{michael.ummels@dlr.de}%
\titlecomment{{\lsuper*}This article is an extended version of several works that
  appeared as~\cite{BBM10a,BBMU11,BBMU12}. Most of the work reported here was
  done while the second and fourth authors were students at LSV.}
\thanks{This work has been partly supported by 
  ESF LogiCCC project GASICS, 
  ERC Starting Grant EQualIS (308087), 
  ERC Starting Grant inVEST (279499), and 
  EU FP7 project Cassting (601148)}%
\keywords{Nash equilibria, concurrent games,
    qualitative objectives, ordered objectives}%

\begin{abstract}
  We study pure-strategy Nash equilibria in multi-player concurrent
  deterministic games, for a variety of preference relations. We provide a
  novel construction, called the suspect game, which transforms a
  multi-player concurrent game into a two-player turn-based game which turns
  Nash equilibria into winning strategies (for some objective that depends on
  the preference relations of the players in the original game). We use that
  transformation to design algorithms for computing Nash equilibria in finite
  games, which in most cases have optimal worst-case complexity, for large
  classes of preference relations. This includes the purely qualitative
  framework, where each player has a single $\omega$-regular objective that
  she wants to satisfy, but also the larger class of semi-quantitative
  objectives, where each player has several $\omega$-regular objectives
  equipped with a preorder (for~instance, a~player may want to satisfy all her
  objectives, or to maximise the number of objectives that she achieves.)
\end{abstract}

\maketitle

\section{Introduction}

Games (and especially games played on graphs) have been intensively
used in computer science as a powerful way of modelling interactions
between several computerised
systems~\cite{Thomas-cav02,Henzinger-tark05}. Until recently, more
focus had been put on the study of purely antagonistic games
(a.k.a.~zero-sum games), which conveniently represent systems evolving
in a (hostile) environment. In~this zero-sum games setting, the
objectives of both players are opposite: the aim of one player is to
prevent the other player from achieving her own objective.

Over the last ten years, games with non-zero-sum objectives have come
into the picture: they~allow for conveniently modelling complex
infrastructures where each individual system tries to fulfil its own
objectives, while still being subject to uncontrollable actions of the
surrounding systems. As an example, consider a wireless network in
which several devices try to send data: each device can modulate its
transmit power, in order to maximise its bandwidth and reduce energy
consumption as much as possible. In~that setting, focusing only on
optimal strategies for one single agent may be too narrow.
Game-theoreticians have defined and studied many other solution
concepts for such settings, of which Nash equilibrium~\cite{nash50} is
the most prominent.  A~Nash equilibrium is a strategy profile where no
player can improve the outcome of the game by unilaterally changing
her strategy. In~other terms, in a~Nash equilibrium, each individual
player has a satisfactory strategy. Notice that Nash equilibria need
not exist or be unique, and are not necessarily optimal: Nash
equilibria where all players lose may coexist with more interesting
Nash equilibria. Finding \emph{constrained} Nash equilibria (\eg,
equilibria in which some players are required to win) is thus an
interesting problem for our setting.

In this paper, we report on our recent contributions on the
computation of Nash equilibria in concurrent games (preliminary works
appeared as~\cite{BBM10a,BBMU11,BBMU12}).  Concurrent games played on
graphs are a general model for interactive systems, where the agents
take their decision simultaneously. Therefore concurrent games subsume
turn-based games, where in each state, only one player has the
decision for the next move. One motivation for concurrent games is the
study of \emph{timed games} (which are games played on timed
automata~\cite{AMPS98,AFH+03}): the~semantics of a timed game is
naturally given as a concurrent game (the~players all choose
simultaneously a delay and an action to play, and the player with the
shortest delay decides for the next move---this~mechanism cannot be
made turn-based since we cannot fix \textit{a~priori} the player who
will choose the smallest delay); the region-based game
abstraction which preserves Nash equilibria also requires the formalism of
concurrent games~\cite{BBMU11,brenguier12}.  Multi-agent
infrastructures can be viewed as distributed systems, which can
naturally be modelled as concurrent games.

\subsection*{Our contributions}  
The paper focuses on concurrent deterministic games and on \textit{pure} Nash
equilibria, that is, strategy profiles which are deterministic (as
opposed to randomised). In this work we assume strategies only depend
on the set of states which is visited, and not on the actions that
have been played. This is a partial-information hypothesis which we
believe is relevant in the context of distributed systems, where only
the effect of the actions can be seen by the players. We will discuss
in more detail all these choices in the conclusion.

In the context exposed above, we develop a complete methodology for
computing pure Nash equilibria in (finite) games. First, in
Section~\ref{sec:suspect}, we propose a novel transformation of the
multi-player concurrent game (with a preference relation for each
player) into a two-player zero-sum turn-based game, which we call the
\emph{suspect game}. Intuitively, in the suspect game, one of the
players suggests a global move (one action per player of the original
game), with the aim to progressively build a Nash equilibrium; while
the second player aims at proving that what the first player proposes
is \emph{not} a Nash equilibrium. This transformation can be applied
to arbitrary concurrent games (even those with infinitely many states)
and preference relations for the players, and it has the property that
there is a correspondence between Nash equilibria in the original game
and winning strategies in the transformed two-player turn-based
game. The winning condition in the suspect game of course depends on
the preference relations of the various players in the original game.

Then, using that construction we develop (worst-case)
optimal-complexity algorithms for deciding the existence of
(constrained) Nash equilibria in \textit{finite} games for various
classes of preference relations. In Section~\ref{sec:single}, we focus
on qualitative $\omega$-regular objectives, \ie, preference relations
are given by single objectives (which can be reachability, B\"uchi,
parity, etc), and it is better for a player to satisfy her objective
than to not satisfy her objective. We prove the whole set of results
which are summarised in the second column of Table~\ref{table-single}
(the first column summarises the complexity in the zero-sum two-player
setting~--~called the value problem). Among the results obtained this
way, the constrained Nash equilibrium existence problem is
\NP-complete in finite games with single reachability or safety
objectives, while it is \PTIME-complete for single B\"uchi objectives.

\begin{table}[t]
  \centering
\def\arraystretch{1.1}
  \begin{tabular}{@{}r||c|c@{}}
  \hline
    Objective & Value 
    & (Constrained) Existence of Nash Eq.\\
    \hline
    Reachability & \P-c. \cite{McNaughton93}  &
    \NP-c. (Sect.~\ref{subsec:reachability}) \\% &  \NP-c. \\
    Safety & \P-c. \cite{McNaughton93} & \NP-c. (Sect.~\ref{subsec:safety}) \\%&  \NP-c. \\
    B\"uchi  & \P-c. \cite{McNaughton93} & \P-c. (Sect.~\ref{subsec:buchi}) \\%& \P-c.  \\
    co-B\"uchi  & \P-c. \cite{McNaughton93}& \NP-c. (Sect.~\ref{subsec:cobuchi}) \\% &\NP-c.  \\
    Parity & \UP $\cap$ \co-\UP  \cite{Jurdzinski98} &
    $\P^\NP_\parallel$-c.\footnotemark (Sect.~\ref{subsec:rabin}) \\
    Streett & \co-\NP-c. \cite{emerson1988complexity}& $\P^\NP_\parallel$-h. and
      in \PSPACE \\
    Rabin & \NP-c. \cite{emerson1988complexity} & $\P^\NP_\parallel$-c.
    (Sect.~\ref{subsec:rabin}) \\ 
    Muller & \PSPACE-c. \cite{Hunter07}
    & \PSPACE-c. \\ 
    Circuit & \PSPACE-c. \cite{Hunter07}
    & \PSPACE-c. (Sect.~\ref{subsec:circuits}) \\
    Det. B\"uchi Automata & \P-c. & \PSPACE-h.
    (Sect.~\ref{sec:rabin-auto}) and in \EXPTIME \\
    Det. Rabin Automata & \NP-c. & \PSPACE-h. and in \EXPTIME
    (Sect.~\ref{sec:rabin-auto}) \\
    \hline
  \end{tabular}
  \caption{Summary of the complexities for single
    objectives}\label{table-single}

  \begin{tabular}{@{}r|c|c|c@{}}
    \hline
    Preorder & Value & Existence of NE & Constr.~Exist. of~NE \\
    \hline
    Maximise, Disj. & 
      \P-c. (Sect.\ref{subsec:reducible})
    & \P-c. (Sect.\ref{subsec:reducible})
    & \P-c. (Sect.\ref{subsec:reducible})
      \\
     Subset  &  
     \P-c. (Sect.~\ref{sec:reduc-buchi-auto}) & 
     \P-c. (Sect.\ref{subsec:reducible}) & 
     \P-c. (Sect.\ref{subsec:reducible})
     \\
    Conj., Lexicogr. & 
    \P-c. (Sect.~\ref{sec:reduc-buchi-auto}) &
    \P-h., in \NP
    (Sect.~\ref{subsec:monotonic})  & % & 
    \NP-c. (Sect.~\ref{subsec:monotonic}) 
    \\
    Counting & 
    \coNP-c.  (Sect.~\ref{subsec:monotonic})  & 
    \NP-c.  (Sect.~\ref{subsec:monotonic})  & 
    \NP-c.   (Sect.~\ref{subsec:monotonic}) 
    \\
    Mon.~Bool.~Circuit & 
    \coNP-c.  (Sect.~\ref{subsec:monotonic})  &
    \NP-c.  (Sect.~\ref{subsec:monotonic})  & 
    \NP-c.    (Sect.~\ref{subsec:monotonic}) 
    \\
    Boolean~Circuit & 
    \PSPACE-c. (Sect.~\ref{subsec:general}) &
    \PSPACE-c. (Sect.~\ref{subsec:general}) & %\ref{prop:buchi-pspace}&
    \PSPACE-c. (Sect.~\ref{subsec:general}) %& %\ref{prop:buchi-pspace}
    \\
    \hline
  \end{tabular}
  \caption{Summary of the results for ordered B\"uchi objectives}\label{table-buchi}

  \begin{tabular}{@{}r|c|c@{}}
  \hline
    Preorder & Value & (Constrained) Exist. of NE \\
    \hline
    Disjunction, Maximise & \P-c. (Sect.~\ref{reach-simple}) &  \NP-c.
    (Sect.~\ref{reach-simple}) \\
    Subset  & \PSPACE-c. (Sect.~\ref{ssec-generalcase}) & \NP-c.
    (Sect.~\ref{reach-simple}) \\
    Conjunction, Counting, Lexicogr. & \PSPACE-c.
    (Sect.~\ref{ssec-generalcase}) & \PSPACE-c. (Sect.~\ref{ssec-generalcase})
    \\
    (Monotonic) Boolean Circuit & \PSPACE-c. (Sect.~\ref{ssec-generalcase})
    & \PSPACE-c. (Sect.~\ref{ssec-generalcase})   \\
    \hline
  \end{tabular}
  \caption{Summary of the results for ordered reachability objectives}\label{table-reach}

\end{table}
\footnotetext{\label{fn-pnp||}%
  The complexity class $\P^\NP_\parallel$ is defined in terms of
  Turing machine having access to an oracle; oracle are artificial devices
  that can solve a problem in constant time, thus hiding part of the
  complexity of the overall problem. The class $\P^\NP$ is the class of
  problems that can be solved in polynomial time by a deterministic Turing
  machine which has access to an oracle for solving \NP problems. The class
  $\P^\NP_\parallel$ is the subclass where, instead of asking a sequence of
  (dependent) queries to the oracle, the Turing machine is only allowed to ask
  one set of queries. We~refer to~\cite{Pap94,Wag88} for more details.}

In Sections~\ref{sec:buchi} and~\ref{sec:reach}, we~extend the
previous qualitative setting to the \textit{semi-quantitative} setting
of ordered objectives. An ordered objective is a set of B\"uchi (or
reachability) objectives and a preorder on this set. The preference
relation given by such an ordered objective is then given by the value
of the plays (w.r.t. the objectives) in that preorder. Preorders of
interest are for instance conjunction, disjunction, lexicographic
order, counting preorder, maximise preorder, subset preorder, or more
generally preorders given as Boolean circuits. We provide algorithms
for deciding the existence of Nash equilibria for ordered objectives,
with (in most cases) optimal worst-case complexity. These algorithms
make use of the suspect-game construction.
The results are listed in Table~\ref{table-buchi} for B\"uchi
objectives and in Table~\ref{table-reach} for reachability objectives.

\subsection*{Examples}
Back to the earlier wireless network example, we can model a~simple
discretised version~of~it as follows. From a state, each device can
increase (action~$1$) or keep unchanged (action~$0$) its power: the
arena of the game is represented for two devices and two levels of
power on Figure~\ref{fig-network} (labels of states are power levels).
This yields a new bandwidth allocation (which depends on the
degradation due to the other devices) and a new energy
consumption. The satisfaction of each device is measured as a
compromise between energy consumption and bandwidth allocated, and it
is given by a quantitative payoff function.\footnote{The
  (quantitative) payoff for player $i$ can be expressed by $\payoff_i
  = \frac{R}{\mathsf{power}_i} \Big(1- e^{-0.5 \gamma_i}\Big)^L$ where
  $\gamma_i$ is the signal-to-interference-and-noise ratio for player
  $i$, $R$ is the rate at which the wireless system transmits the
  information in bits per seconds and $L$ is the size of the packets
  in bits (\cite{SMG99}).} This can be transformed into B\"uchi
conditions and a preorder on them.  There are basically two families
of pure Nash equilibria in this system: the one where the two players
choose to go and stay forever in state $(1,1)$; and the one where the
two players go to state $(2,2)$ and stay there forever.

%\begin{wrapfigure}r{5.5cm}
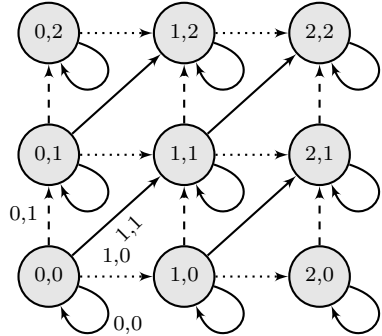
\begin{figure}[t]
 \centering
 \begin{tikzpicture}[scale=.9,yscale=.9]
   \def\plus{\ensuremath{1}\xspace}
   \def\egal{\ensuremath{0}\xspace}
    \everymath{\scriptstyle}
    \tikzset{noeud/.style={circle,draw=black,thick,fill=black!10,minimum height=8mm}}
    \draw (0,0) node [noeud] (00) {$0,0$};
    \draw (0,2) node [noeud] (01) {$0,1$};
    \draw (2,0) node [noeud] (10) {$1,0$};
    \draw (0,4) node [noeud] (02) {$0,2$};
    \draw (2,2) node [noeud] (11) {$1,1$};
    \draw (4,0) node [noeud] (20) {$2,0$};
    \draw (2,4) node [noeud] (12) {$1,2$};
    \draw (4,2) node [noeud] (21) {$2,1$};
    \draw (4,4) node [noeud] (22) {$2,2$};

    \draw [-latex',thick,dashed] (00) -- (01) node [pos=.5,left]
          {$\egal,\plus$};

    \draw [-latex',thick,dashed] (01) -- (02);
    \draw [-latex',thick,dashed] (10) -- (11);
    \draw [-latex',thick,dashed] (11) -- (12);
    \draw [-latex',thick,dashed] (21) -- (22);
    \draw [-latex',thick,dashed] (20) -- (21);
    
    \draw [-latex',thick,dotted] (00) -- (10) node [pos=.5,above,sloped]
          {$\plus,\egal$}; 
    \draw [-latex',thick,dotted] (01) -- (11);
    \draw [-latex',thick,dotted] (02) -- (12);
    \draw [-latex',thick,dotted] (10) -- (20);
    \draw [-latex',thick,dotted] (11) -- (21);
    \draw [-latex',thick,dotted] (12) -- (22);

    \draw [-latex',thick] (00) -- (11) node [midway,sloped,below] {$\plus,\plus$};
    \draw [-latex',thick] (11) -- (22);
    \draw [-latex',thick] (01) -- (12);
    \draw [-latex',thick] (10) -- (21);

    \draw [-latex',thick] (00) .. controls +(-20:1.5) and +(-70:1.5) .. (00) node
          [pos=.5,right] {$\egal,\egal$}; 

    \draw [-latex',thick] (01) .. controls +(-20:1.5) and +(-70:1.5)
    .. (01); 
    \draw [-latex',thick] (02) .. controls +(-20:1.5) and +(-70:1.5)
    .. (02); 
    \draw [-latex',thick] (10) .. controls +(-20:1.5) and +(-70:1.5)
    .. (10); 
    \draw [-latex',thick] (11) .. controls +(-20:1.5) and +(-70:1.5)
    .. (11); 
    \draw [-latex',thick] (12) .. controls +(-20:1.5) and +(-70:1.5)
    .. (12); 
    \draw [-latex',thick] (20) .. controls +(-20:1.5) and +(-70:1.5)
    .. (20); 
    \draw [-latex',thick] (21) .. controls +(-20:1.5) and +(-70:1.5)
    .. (21); 
    \draw [-latex',thick] (22) .. controls +(-20:1.5) and +(-70:1.5)
    .. (22); 

 \end{tikzpicture}
  \caption{A simple game-model for the wireless network}
  \label{fig-network}
%\end{wrapfigure}
\end{figure}

We describe another example, the medium access control, that involves
qualitative objectives. It was first given a game-theoretic model
in~\cite{MW03}. Several users share the access to a wireless
channel. During each slot, they can choose to either transmit or wait
for the next slot. If too many users are emitting in the same slot,
then they fail to send data. Each attempt to transmit costs energy to
the players. They have to maximise the number of successful attempts
using the energy available to them. We give in Figure~\ref{fig-mac} a
possible model for that protocol for two players and at most one
attempt per player and a congestion of $2$ (that is, the two players
should not transmit at the same time): each state is labelled with the
energy level of the two players, and the number of successful attempts
of each of the player.  There is several Nash equilibria, and they
give payoff $1$ to every player: it consists in going to state
$(0,1,0,1)$ by not simultaneously transmitting.

%\begin{wrapfigure}r{4.5cm}
\begin{figure}[t]
 \centering
 \begin{tikzpicture}[xscale=1.6,yscale=.9]
   \def\transmit{1\xspace}
   \def\wait{0\xspace}

   \tikzset{round/.style={rounded
       corners=2mm,draw=black,thick,fill=black!10,minimum height=8mm}}
   \everymath{\scriptstyle}
    \draw (0,0.2) node [round] (00) {$1,0,1,0$};
    \draw (0,2) node [round] (01) {$1,0,0,1$};
    \draw (2,0.2) node [round] (10) {$0,1,1,0$};
    \draw (2,2) node [round] (11) {$0,1,0,1$};
    \draw (-45:1.5) node [round] (20) {$0,0,0,0$};

    %\draw [-latex'] (-1,1) -- (00);
    \draw [-latex',thick] (00) -- node [left] {$\wait,\transmit$} (01) ;
    \draw [-latex',thick] (00) -- node [above] {$\transmit,\wait$}(10) ; 
    \draw [-latex',thick] (00) -- node [above,sloped] {$\transmit,\transmit$} (20) ; 

    \draw [-latex',thick] (10) -- node [right] {$\wait,\transmit$} (11) ;
    \draw [-latex',thick] (01) -- node [above] {$\transmit,\wait$} (11) ;

    \draw [-latex',thick] (00) .. controls +(-150:1.2) and +(150:1.2) .. (00) node
          [pos=.5,below left] {$\wait,\wait$}; 

    \draw [-latex',thick] (01) .. controls +(-150:1.2) and +(150:1.2)
    .. (01) node
          [pos=.5,left] {$\wait,\wait$}; 
    \draw [-latex',thick] (10) .. controls +(30:1.2) and +(-30:1.2)
    .. (10) node
          [pos=.5,below right] {$\wait,\wait$}; 
    \draw [-latex',thick] (11) .. controls +(30:1.2) and +(-30:1.2)
    .. (11) node
          [pos=.5,right] {$\wait,\wait$}; 
    \draw [-latex',thick] (20) .. controls +(30:1.2) and +(-30:1.2)
    .. (20) node
          [pos=.5,right] {$\wait,\wait$}; 

 \end{tikzpicture}
  \caption{A simple game-model for the medium access control}
  \label{fig-mac}
%\end{wrapfigure}
\end{figure}
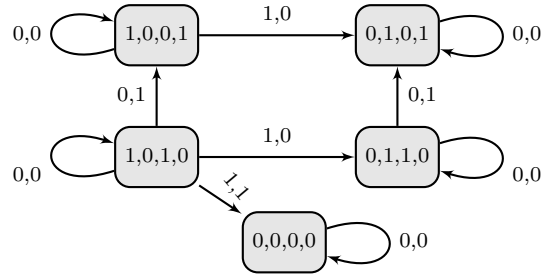

\subsection*{Related work}

Game theory has been a very active area since the 1940's, with the
pioneering works of Von~Neumann, Morgenstern~\cite{MvN47},
Nash~\cite{nash50} and Shapley~\cite{shapley1952value}. It~has had
numerous uses in various domains, ranging from economics to human
sciences and logic.  Equilibria are a central concept in
(non-zero-sum) games, as they are meant to represent rational
behaviours of the players. Many important results about existence of
various kinds of equilibria in different kinds of games have been
established~\cite{MvN47,nash50,fink64}.

For applications in logic and computer science, games played on graphs
have received more focus; also, computer scientists have been mostly
looking for algorithmic solutions for deciding the existence and
effectively computing equilibria and
$\epsilon$-equilibria~\cite{CMJ04,chatterjee05,ummels08}.

For two-player concurrent games with B\"uchi objectives, the existence of
$\epsilon$-equilibria (in~randomised strategies) was proved by
Chatterjee~\cite{chatterjee05}. However, exact Nash equilibria need not exist;
turn-based games with B\"uchi objectives are an~important subclass where Nash
equilibria (even in pure strategies) always exist~\cite{CMJ04}. When they
exist, Nash equilibria need not be unique; equilibria where all the players
lose can coexist with equilibria where some (or~all) of them win. Ummels
introduced \emph{constrained} Nash equilibria, \ie, Nash equilibria where some
players are required to win. In~particular, he~showed that the existence of
constrained Nash equilibria can be decided in polynomial time for turn-based
games with B\"uchi objectives \cite{ummels08}. In~this paper, we extend this
result to concurrent games, and to various classes of $\omega$-regular winning
objectives. For concurrent games with $\omega$-regular objectives, the
decidability of the constrained Nash equilibrium existence problem \wrt pure
strategies was established by Fisman~\ea~\cite{FKL10}, but their algorithm
runs in doubly exponential time, whereas our algorithm runs in exponential
time for objectives given as B\"uchi automata. Finally, Ummels and
Wojtczak~\cite{UW11} proved that the existence of a Nash equilibrium in pure
or randomised strategies is undecidable for \emph{stochastic} games with
reachability or B\"uchi objectives, which justifies our restriction to
concurrent games without probabilistic transitions. They also proved a similar
undecidability result for randomised Nash equilibria in non-stochastic
games~\cite{UW11a}, hence we consider only pure-strategy Nash equilibria.

Several solution concepts have been defined and studied for games on graphs.
In~particular, \emph{secure equilibria}~\cite{CHJ05b,DFKSV14} are Nash
equilibria where besides satisfying their primary objectives, the players try
to prevent the other players from achieving their own (primary) objectives.
Notice that our results in Sect.~\ref{subsec:monotonic} and
Sect.~\ref{ssec-generalcase} do apply to such kinds of lexicographic
combination of several objectives.

Temporal logics can also be used to express properties of games. While
\ATL~\cite{AHK02} can mainly express only zero-sum properties, other logics
such as \ATL with strategy contexts~(\ATLsc)~\cite{DLM10} and \textsf{Strategy
  Logic}~(\SL)~\cite{CHP10,MMV10} can be used to express rich properties in a
non-zero-sum setting.
In~terms of complexity however, model checking for such logics has high
complexity: Nash equilibria can be expressed using one quantifier alternation
(an existential quantification over strategy profiles followed with a
universal quantification over deviations); model checking this fragment of
\ATLsc or~\SL is \EXPTIME[2]-complete.

\section{Definitions}

\subsection{General definitions}\label{ssec-gendef}
In this section, we fix some definitions and notations.
\paragraph{Preorders.}
We~fix a non-empty set~$P$. A~\newdef{preorder} over~$P$ is a binary
relation~$\mathord\preorder\subseteq P\times P$ that is reflexive and
transitive.  With~a preorder~$\preorder$, we~associate an
\newdef{equivalence relation}~$\sim$ defined so that $a \sim b$ if,
and only if, ${a \preorder b}$ and ${b \preorder a}$.
The~\newdef{equivalence class} of~$a$, written $[a]_\preorder$, is the
set $\{ b \in P \mid a \sim b\}$.  We~also associate with~$\preorder$
a~\newdef{strict partial order}~$\prec$ defined so that ${a \prec b}$
if, and only if, ${a \preorder b}$ and ${b \not\preorder a}$.
A~preorder~$\preorder$ is said \newdef{total} if, for all
elements~$a,b\in P$, either ${a \preorder b}$, or ${b\preorder a}$.
An element $a$ in a subset~$P'\subseteq P$ is said \emph{maximal
  in~$P'$} if there is no $b \in P'$ such that $a \prec b$; it~is said
\emph{minimal in~$P'$} if there is no $b\in P'$ such that ${b \prec
  a}$.
A~preorder is said \newdef{Noetherian} (or \emph{upwards
  well-founded}) if any subset~$P'\subseteq P$ has at least one
maximal element. It is said \newdef{almost-well-founded} if any
lower-bounded subset $P' \subseteq P$ has a minimal element.

\paragraph{Transition systems.}
A \newdef{transition system} is a pair $\calS = \tuple{\Stat,\Edg}$
where $\Stat$ is a set of states and $\Edg \subseteq \Stat \times
\Stat$ is the set of transitions.  A~\newdef{path}~$\pi$ in $\calS$ is
a sequence $(s_i)_{0\leq i < n}$ (where~$n\in\N^+ \cup\{\infty\}$) of
states such that $(s_i,s_{i+1})\in \Edg$ for all~$i\leq n$.
The~\newdef{length} of~$\pi$, denoted by~$\length\pi$, is~$n-1$.
The~set of finite paths (also called \newdef{histories}) of~$\calS$ is
denoted by~$\Hist_\calS$, the set of infinite paths (also called
\newdef{plays}) of~$\calS$ is denoted by~$\Play_\calS$, and
$\Path_\calS = \Hist_\calS\cup \Play_\calS$ is the set of all paths
of~$\calS$.
Given a path $\pi = (s_i)_{0 \le i < n}$ and an integer~$j<n$, the
\newdef{$j$-th prefix} (\resp \newdef{$j$-th suffix}, \newdef{$j$-th
  state}) of~$\pi$, denoted by~$\pref\pi j$ (\resp $\pi_{\ge j}$,
$\pi_{=j}$), is the finite path~$(s_i)_{0\leq i<j+1}$ (\resp the path
$(s_{j+i})_{0 \le i <n-j}$, the state~$s_j$). If $\pi = (s_i)_{0\leq
  i< n}$ is a history, we write $\last(\pi) = s_{\length\pi}$ for the
last state of~$\pi$.  If~$\pi'$ is a path such that
$(\last(\pi),\pi'_{=0})\in\Edg$, then the
\newdef{concatenation}~$\pi\cdot \pi'$ is the path~$\rho$ s.t.
$\rho_{=i}=\pi_{=i}$ for~$i\leq \length\pi$ and
$\rho_{=i}=\pi'_{=(i-1-\length\pi)}$ for~$i>\length\pi$.  In~the
sequel, we~write $\Hist_\calS(\stat)$, $\Play_{\calS}(\stat)$
and~$\Path_\calS(\stat)$ for the respective subsets of paths starting
in state~$\stat$.  If~$\pi$~is a~play, $\Occ(\pi) = \{ s \mid \exists
j.\ \pi_{=j} =s \}$ is the sets of states that appears at least once
along~$\pi$ and $\Inf(\pi) = \{ s \mid \forall i.\ \exists j \ge i.\
\pi_{=j} =s \}$ is the set of states that appears infinitely often
along~$\pi$.

\subsection{Concurrent games}

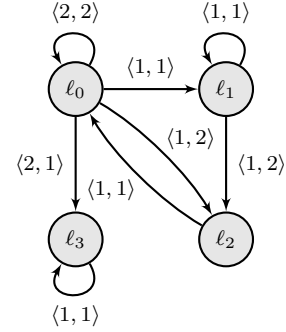
\begin{wrapfigure}r{5cm}
\hfill
%\begin{figure}[t]
\begin{minipage}{4.9cm}
\centering
\begin{tikzpicture}[scale=1,thick]
\tikzset{every node/.style={font=\scriptsize}}
\tikzset{rond/.style={circle,draw=black,thick,fill=black!10,minimum height=7mm}}
\draw (0,0) node[rond] (l0) {$\ell_0$};
\draw (2,0) node[rond] (l1) {$\ell_1$};
\draw (2,-2) node[rond] (l2) {$\ell_2$};
\draw (0,-2) node[rond] (l3) {$\ell_3$};
\path[use as bounding box] (0,.8) -- (0,-3);
\draw[-latex'] (l0) .. controls +(60:10mm) and +(120:10mm) .. (l0)
  node[midway,above] {$\langle 2,2\rangle$};
\draw[-latex'] (l0) -- (l1) node[midway,above] {$\langle 1,1\rangle$};
\draw[-latex'] (l0) -- (l3) node[midway,left] {$\langle
  2,1\rangle$};
\draw[-latex'] (l0) .. controls +(-30:10mm) and +(120:10mm) .. (l2)
  node[midway,above right=-1mm] {$\langle 1,2\rangle$};
\draw[-latex'] (l2) .. controls +(150:10mm) and +(-60:10mm) .. (l0)
  node[midway,below left=-1mm] {$\langle 1,1\rangle$};
\draw[-latex'] (l1) .. controls +(60:10mm) and +(120:10mm) .. (l1)
  node[midway,above] {$\langle 1,1\rangle$};
\draw[-latex'] (l3) .. controls +(-60:10mm) and +(-120:10mm) .. (l3)
  node[midway,below] {$\langle 1,1\rangle$};
\draw[-latex'] (l1) -- (l2) node[midway,right] {$\langle 1,2\rangle$};
\end{tikzpicture}
\caption{Representation of a two-player concurrent game}\label{fig-ex} 
\end{minipage}
\end{wrapfigure}
Our definition of concurrent games extends the definition
in~\cite{AHK02} by allowing for more than two players, each of them
having a preorder over plays.

\begin{definition}
  A \newdef{concurrent game} is a tuple
  $\calG=\tuple{\Stat,\Agt,\Act,\Allow,\Tab,(\mathord\prefrel_A)_{A\in\Agt}
  }$, where $\Stat$ is a finite non-empty set of states, $\Agt$~is a finite
  set of players, $\Act$~is a finite set of actions, and
  \begin{itemize}
  \item $\Allow\colon \Stat \times \Agt \to
    2^\Act\setminus\{\varnothing\}$ is a mapping indicating the
    actions available to a given player in a given state;
  \item $\Tab\colon \Stat\times \Act^{\Agt} \to \Stat$ associates,
    with a given state and a given move of the players (\ie, an
    element of $\Act^\Agt$), the state resulting from that move;
  \item for each~$A\in\Agt$, $\prefrel_A$ is a preorder over
    $\Stat^\omega$, called the preference relation of player~$A$.
  \end{itemize}
\end{definition}

\noindent Figure~\ref{fig-ex} displays an example of a finite concurrent game.
Transitions are labelled with the moves that trigger them. We~say that a
\newdef{move} $\act_\Agt=\indice\act A\Agt\in\Act^\Agt$ is \newdef{legal}
at~$\stat$ if $m_A\in\Allow(\stat,A)$ for all $A\in\Agt$. A~game
is~\newdef{turn-based} if for each state the set of allowed moves is a
singleton for all but at most one player.

In a concurrent game~$\calG$, whenever we arrive at a state~$\stat$,
the players simultaneously select an available action, which results
in a legal move~$\act_\Agt$; the next state of the game is then
$\Tab(\stat,\act_\Agt)$.  The same process repeats
\textit{ad~infinitum} to form an infinite sequence of states.

In the sequel, as no ambiguity will arise, we~may abusively write~$\calG$ for
its underlying transition system~$(\Stat, \Edg)$ where $\Edg = \{ (s,s') \in
\Stat \times \Stat \mid \exists m_{\Agt} \in \prod_{A\in\Agt}
\Allow(s,A)\allowbreak\text{ s.t. }\Tab(s,m_\Agt) = s' \}$. The notions of
paths and related concepts in concurrent games follow from this
identification.

\begin{remark}[Representation of finite games]
  \label{remark:encoding}
  In this paper, for finite games, we will assume an explicit encoding
  of the transition function $\Tab$. Hence, its size, denoted
  $|\Tab|$, is equal to $\sum_{s\in \Stat} \prod_{A\in\Agt}
  |\Mov(s,A)| \cdot \lceil\log(|\Stat|)\rceil$. Note that it can be
  exponential with respect to the number of players.  A~symbolic
  encoding of the transition table has been proposed in~\cite{LMO08},
  in the setting of \textsc{ATL} model checking. This makes the
  problem harder, as the input is more succinct (see
  Remark~\ref{remark:explosion} and
  Proposition~\ref{proposition:explosion} for a formal
  statement). We~would also have a blowup in our setting, and prefer
  to keep the explicit representation in order to be able to compare
  with existing results.  Notice that, as a matter of fact, there is
  no way to systematically avoid an explosion: as~there are
  $|\Stat|^{|\Stat|\cdot|\Act|^{|\Agt|}}$ possible transition
  functions, for any encoding there is one function whose encoding
  will have size at least $\lceil\log(|\Stat|)\rceil \cdot |\Stat|
  \cdot |\Act|^{|\Agt|}$.
  The total size of the game, is then
  \[ |\calG| = |\Stat| + |\Stat|\cdot|\Agt|\cdot|\Act| + \sum_{s\in
    \Stat} \prod_{A\in\Agt} |\Mov(s,A)| \cdot
  \lceil\log(|\Stat|)\rceil + \sum_{A\in\Agt} | \preorder_A |.\] The
  size of a preference relation~$\preorder_A$ will depend on how it is
  encoded, and we will make it precise when it is relevant. This is
  given in Section~\ref{sec:prefrel}.
\end{remark}

\begin{definition}
  Let~$\calG$ be a concurrent game, and~$A\in\Agt$.  A
  \newdef{strategy} for~$A$ is a mapping $\sigma_A\colon \Hist_\calG
  \to \Act$ such that $\sigma_A(\pi) \in \Allow(\last(\pi),A)$ for all
  $\pi\in\Hist_\calG$.
  A~strategy~$\sigma_P$ for a coalition~$P\subseteq \Agt$ is a tuple
  of strategies, one for each player in~$P$. We~write
  $\sigma_P=(\sigma_A)_{A\in P}$ for such a strategy.
  A~\newdef{strategy profile} is a strategy for~$\Agt$.  We~write
  $\Strat\calG P$ for the set of strategies of coalition~$P$, and
  $\Profile_\calG=\Strat\calG \Agt$.
\end{definition}

Note that, in this paper, we only consider \emph{pure} (\ie,
non-randomised) strategies. This is actually crucial in all the
constructions we give (lasso representation in
Subsection~\ref{sec:lasso} and suspect-game construction in
Section~\ref{sec:suspect}).
Notice also that our strategies are based on the sequences of visited states
(they map sequences of states to actions), which is realistic when considering
multi-agent systems. In some settings, it is more usual to base strategies on 
the sequences of actions played by all the players. When dealing with Nash
equilibria, this makes a big difference: strategies based on actions can
immediately detect which player(s) deviated from their strategy; strategies based
on states will only detect deviations because an unexpected state is visited,
without knowing which player(s) is responsible for the deviation. Our
construction precisely amounts to keeping track of a list of \emph{suspects}
for some deviation.

Let~$\calG$ be a game, $P$ a~coalition, and $\sigma_P$ a~strategy
for~$P$. A~path~$\pi$ is \newdef{compatible} with the
strategy~$\sigma_P$ if, for all~$k<\length\pi$, there exists a
move~$\indicebis\act A\Agt$ \st 
\begin{enumerate}
\item $\indicebis\act A\Agt$ is legal at~$\pi_{=k}$,
\item $\act_A = \sigma_A(\pi_{\le k})$ for all~$A\in P$, and
\item $\Tab(\pi_{=k}, \indicebis\act A\Agt) = \pi_{=k+1}$.
\end{enumerate}
We~write~$\Out_{\calG}(\sigma_P)$ for the set of paths (called
the~\emph{outcomes}) in~$\calG$ that are compatible with
strategy~$\sigma_P$ of~$P$.  We~write $\FOut_{\calG}$ (\resp
$\IOut_{\calG}$) for the finite (\resp infinite) outcomes, and
$\Out_\calG(\stat,\sigma_P)$, $\FOut_\calG(\stat,\sigma_P)$ and
$\IOut_\calG(\stat,\sigma_P)$ for the respective sets of outcomes
of~$\sigma_P$ with initial state~$\stat$.  Notice that any strategy
profile has a single infinite outcome from a given state. In the
sequel, when given a strategy profile~$\sigma_\Agt$, we~identify
$\Out(\stat, \sigma_\Agt)$ with the unique play it contains.

A~concurrent game involving only two players ($A$~and~$B$,~say) is
\newdef{zero-sum} if, for any two plays~$\pi$ and~$\pi'$, it~holds
$\pi\prefrel_A \pi'$ if, and only~if, $\pi'\prefrel_B \pi$. Such a
setting is purely antagonistic, as both players have opposite
objectives. The most relevant concept in such a setting is that of
\emph{winning strategies}, where the aim is for one player to achieve
her objectives \emph{whatever the other
  players~do}. In~\newdef{non-zero-sum} games, winning strategies are
usually too restricted, and the most relevant concepts are
\emph{equilibria}, which correspond to strategies that \emph{satisfy} (which
can be given several meanings) all the players. One of the most
studied notion of equilibria is \emph{Nash equilibria}~\cite{nash50},
which we now introduce.

\subsection{Nash equilibria}

We begin with introducing some vocabulary.  When $\pi \prefrel_A
\pi'$, we say that $\pi'$ is \newdef{at~least as good} as~$\pi$
for~$A$.
% ; when this is not the case (\ie, $\pi \not\prefrel_A \pi'$), we say
% that $A$ \newdef{prefers}~$\pi$ over~$\pi'$.
We say that a strategy~$\sigma_A$ for~$A$ \newdef{ensures}~$\pi$ if
every outcome of~$\sigma_A$ is at least as good as~$\pi$ for~$A$, and
that $A$ \newdef{can ensure}~$\pi$ when such a strategy exists.

Given a move~$\indicebis mA\Agt$ and an action~$m'$ for some
player~$A$, we~write ${\replaceter m A {m'}}$ for the move~$\indicebis
nA\Agt$ with $n_B=m_B$ when~$B\not=A$ and $n_A=m'$.  This~is extended
to strategies in the natural way.

\begin{definition}\label{def-NE}
  Let~$\calG$ be a concurrent game and let $\stat$ be a state
  of~$\calG$.  A~\newdef{Nash equilibrium} of~$\calG$ from~$\stat$ is
  a strategy profile $\sigma_\Agt \in \Profile_\calG$ \st
  $\Out(\stat,\replaceter \sigma A {\sigma'}) \prefrel_A
  \Out(\stat,\sigma_\Agt)$ for all players $A\in\Agt$ and all
  strategies $\sigma'\in\Strat{}A$.
\end{definition}

\begin{wrapfigure}r{5.4cm}
\centering
\begin{tikzpicture}[yscale=-1,scale=1.2,thick,minimum height=5mm]
\draw[dotted,rounded corners=4mm,fill=black!30!white] (0,.5) -- (1.5,.5) --
(1.5,2.4) -- (.4,3.5) -- (-.4,3.5) -- (-1.5,2.4) -- (-1.5,1.5) -- (-.5,1.5) --
(-.5,.5) -- (0,.5);
\draw[dotted,rounded corners=2mm,fill=black!10!white,opacity=.8] (-1,.7) -- (-.7,.7) --
(-.7,1.7) -- (.3,1.7) -- (.3,3.3) -- (-.2,3.3) -- (-1.3,2.2) -- (-1.3,.7) -- (-1,.7);
\draw (0,0) node[circle,draw,fill=white] (000) {};
\draw (-1,1) node[circle,draw,fill=black] (001) {};
\draw (0,1) node[circle,draw,fill=white] (010) {};
\draw (1,1) node[circle,draw,fill=white] (100) {};
\draw (-1,2) node[circle,draw,fill=white] (011) {};
\draw (0,2) node[circle,draw,fill=white] (101) {};
\draw (1,2) node[circle,draw,fill=white] (110) {};
\draw (0,3) node[circle,draw,fill=white] (111) {};
\draw[latex'-] (000) -- (001);
\draw[latex'-] (000) -- (010);
\draw[latex'-] (000) -- (100);
\draw[latex'-] (001) -- (011);
\draw[latex'-] (001) -- (101);
\draw[latex'-] (010) -- (011);
\draw[latex'-] (010) -- (110);
\draw[latex'-] (100) -- (101);
\draw[latex'-] (100) -- (110);
\draw[latex'-] (011) -- (111);
\draw[latex'-] (101) -- (111);
\draw[latex'-] (110) -- (111);
\end{tikzpicture}
\caption{Two different notions of \emph{improvements} for a non-total order.}\label{fig-improve}
\end{wrapfigure}
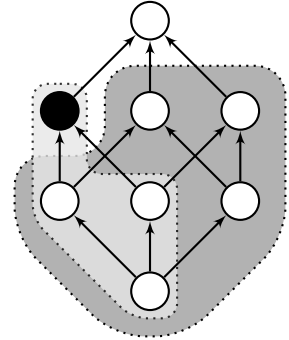
So, Nash equilibria are strategy profiles where no single player has an
incentive to unilaterally deviate from her strategy.

\begin{remark}
  Our definition of a Nash equilibrium requires any deviation to be
  worse or equivalent to the equilibrium. Another possible definition
  would have been to ask any deviation to be no better than the
  equilibrium. Those two definitions yield different notions of Nash
  equilibria (unless the preorders are total), as illustrated in
  Figure~\ref{fig-improve}:
  the black node~$n$ represents $\Out(\stat,\sigma_\Agt)$, the
  light-gray area contains the nodes~$n'$ such that $n'\prefrel n$,
  while the dark-gray area contains the nodes~$n'$ for which
  $n\not\prefrel n'$.

  This alternative definition would also be meaningful, and the
  techniques we develop in this paper could be adapted to handle such
  a variant.
\end{remark}

In this paper we will give a general construction that relates Nash
equilibria in a game (which can be infinite) and winning strategies in
a two-player turn-based game (called the \emph{suspect game}), it is
presented in Section~\ref{sec:suspect}.  We will then be mostly
interested in solving the decision problems that we define next, when
games are finite.

\subsection{Decision problems we will consider}\label{ssec-probs}

Given a~concurrent game
$\calG=\tuple{\Stat,\Agt,\Act,\Allow,\Tab,(\mathord\prefrel_A)_{A\in\Agt}}$
and a state~$\stat\in\Stat$, we consider the following problems:
\begin{itemize}
\item \emph{Value problem}: Given a player~$A$ and a play~$\pi$,
  is~there a strategy~$\sigma_A$ for player~$A$ such that for any
  outcome~$\rho$ in~$\calG$ from~$\stat$ of~$\sigma_A$, it~holds $\pi
  \prefrel_A \rho$?
\item \emph{NE Existence problem}: Does there exist a Nash equilibrium
  in~$\calG$ from~$\stat$?
\item \emph{Constrained NE existence problem}: Given two plays $\pi_A^-$
  and $\pi_A^+$ for each player~$A$, does there exist a Nash
  equilibrium in~$\calG$ from~$\stat$ whose outcome~$\pi$ satisfies
  $\pi_A^- \prefrel_A \pi \prefrel_A \pi_A^+$ for all~$A\in \Agt$?
\end{itemize}
We will focus on decidability and complexity results of these three
problems when games are finite, for various classes of preference
relations. Complexity results will heavily rely on what preorders we
allow for the preference relation and how they are represented. We
have already discussed the representation of the game structure in
Remark~\ref{remark:encoding}. We define and discuss now the various
preference relations we will study, and explain how we encode the
various inputs to the problems.

\subsection{Focus on the preference relations we will consider}
\label{sec:prefrel}

We define the various classes of preference relations we will focus on
in the rest of the paper. We begin with single-objective preference
relations, and we then define a more general class of ordered
objectives. We fix a game $\calG =
\tuple{\Stat,\Agt,\Act,\Allow,\Tab,(\mathord\prefrel_A)_{A\in\Agt} }$.

\subsubsection{Single-objective preference relations}

\begin{definition}
  An~\newdef{objective} (or \newdef{winning condition}) is an
  arbitrary set of plays. A preference relation $\prefrel_A$ is
  \newdef{single-objective} whenever there exists an objective
  $\Omega_A$ \st: $\rho \prefrel_A \rho'$ if, and only~if, $\rho' \in
  \Omega_A$ (we~then say that $\rho'$ is winning for~$A$) or
  $\rho\not\in\Omega_A$ (we~then say that $\rho$ is losing for~$A$).
\end{definition}

The setting of single-objective preference relations is purely
qualitative, since a player can only win (in case the outcome is in
her objective), or lose (otherwise). 

\medskip An~objective~$\Omega$ can be specified in various ways. Next
we will consider the following families of $\omega$-regular
objectives:
\begin{itemize}
\item A \textbf{reachability objective} is given by a target set $T
  \subseteq \Stat$ and the corresponding set of winning plays is
  defined by
  \[\Omega^{\text{Reach}}_T = \{\rho \in \Play
  \mid \Occ(\rho) \cap T \ne \varnothing \}. \]
\item A \textbf{safety objective} is given by a target set $T
  \subseteq \Stat$ and the corresponding set of winning plays is
  defined by
  \[\Omega^{\text{Safety}}_T = \{\rho \in \Play \mid
  \Occ(\rho) \cap T = \varnothing \}.\]
\item A \textbf{B\"uchi} objective is given by a target set $T
  \subseteq \Stat$ and the corresponding set of winning plays is
  defined by
  \[\Omega^{\text{B\"uchi}}_T = \{ \rho \in \Play
  \mid \Inf(\rho) \cap T \ne \varnothing \}.\]
\item A \textbf{co-B\"uchi objective} is given by a target set $T
  \subseteq \Stat$ and the corresponding set of winning plays is
  defined by
  \[\Omega^{\text{co-B\"uchi}}_T =
  \{ \rho \in \Play \mid \Inf(\rho) \cap T = \varnothing \}.\]
\item A \textbf{parity objective} is given by a priority function $p
  \colon \Stat \mapsto \lsem 0 , d \rsem$ (where $\lsem 0,d\rsem = [0,d]\cap
  \mathbb Z$) with $d\in \N$, and the
  corresponding set of winning plays is defined by
  \[\Omega^{\text{Parity}}_{p} = \{ \rho \in \Play \mid 
  \min(\Inf(p(\rho))) \ \text{is even} \}.\]
\item A \textbf{Streett objective} is given by a tuple
  $(Q_i,R_i)_{i\in\lsem 1,k\rsem}$ and the corresponding set of
  winning plays is defined by
  \[\Omega^{\text{Streett}}_{(Q_i,R_i)_{i\in\lsem
      1,k\rsem}} = \{ \rho \in \Play \mid \forall i.\ \Inf(\rho) \cap
  Q_i \ne \varnothing \Rightarrow \Inf(\rho) \cap R_i \ne \varnothing
  \}.\]
\item A \textbf{Rabin objective} is given by a tuple
  $(Q_i,R_i)_{i\in\lsem 1,k\rsem}$ and the corresponding set of
  winning plays is defined by
  \[\Omega^{\text{Rabin}}_{(Q_i,R_i)_{i\in\lsem
      1,k\rsem}} = \{ \rho \in \Play \mid \exists i.\ \Inf(\rho) \cap
  Q_i \ne \varnothing \land \Inf(\rho) \cap R_i = \varnothing \}.\]
\item A \textbf{Muller objective} is given by a finite set $C$, a
  coloring function $c \colon \Stat \mapsto C$, and a set $\mathcal{F}
  \subseteq 2^C$. The corresponding set of winning plays is then
  defined by
  \[\Omega^{\text{Muller}}_{c,\mathcal{F}} = \{ \rho \in \Play
  \mid \Inf(c(\rho)) \in\mathcal{F} \}.\]
\end{itemize}

\medskip\noindent We will also consider the following other types of objectives:
\begin{itemize}
\item A \textbf{circuit objective} is given by a boolean circuit~$C$ with the
  set $\Stat$ as input nodes and one output node. A~play~$\rho$ is winning if
  and only if $C$~evaluates to true when the input nodes corresponding to
  states in $\Inf(\rho)$ are set to \true, and all other input nodes are set to
  \false. We~write $\Omega^{\text{Circuit}}_{C}$ for the set of winning plays.
  
  %Circuit objectives are less standard than the previous objectives,
  Figure~\ref{fig:ex-circuit} displays an example of a circuit
  for the game of Figure~\ref{fig-ex}: this Boolean
  circuit defines the condition that either $\ell_3$ appears
  infinitely often, or if $\ell_1$ appears infinitely often then
  so does~$\ell_2$.
\begin{figure}[t]
  \centering
  \begin{tikzpicture}[thick]
    \draw (0,4) node [draw,circle] (S0) {$\ell_0$};
    \draw (1.5,4) node [draw,circle] (S1) {$\ell_1$};
    \draw (3,4) node [draw,circle] (S2) {$\ell_2$};
    \draw (4.5,4) node [draw,circle] (S3) {$\ell_3$};
    \draw (1.5,3) node [draw,circle] (Not) {$\lnot$};
    \draw (2.5,2.5) node [draw,circle] (Or) {$\lor$};
    \draw (3.5,2) node [draw,circle] (Or2) {$\lor$};
    \draw (3.5,1) node  (Out) {~~};
    \draw[-latex'] (S1) -- (Not);
    \draw[-latex'] (Not) -- (Or);
    \draw[-latex'] (S2) -- (Or);
    \draw[-latex'] (Or) -- (Or2);
    \draw[-latex'] (S3) -- (Or2);
    \draw[-latex'] (Or2) -- (Out);
  \end{tikzpicture}
  \caption{Boolean circuit defining
    the condition that either $\ell_3$ appears infinitely often, or if
    $\ell_1$ appears infinitely often then so does~$\ell_2$.}
  \label{fig:ex-circuit}
\end{figure}
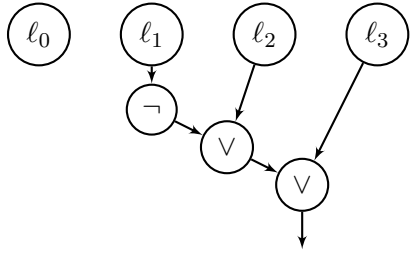
\item A \textbf{deterministic B\"uchi automaton objective} is given by
  a deterministic B\"uchi automaton~$\calA =
  \tuple{Q,\Sigma,\delta,q_0,R}$, with $\Sigma=\Stat$.  Then the
  corresponding set of winning plays is defined by
  \[\Omega^{\text{det-B\"uchi-aut}}_{\calA} = \Lang(\calA).\]
\item A \textbf{deterministic Rabin automaton objective} is given by a
  deterministic Rabin automaton~$\calA = \tuple{Q,\Sigma,\delta,q_0,
    (E_i,F_i)_{i\in \lsem 1 , k\rsem}}$, with $\Sigma=\Stat$. Then the
  corresponding set of winning plays is defined by
  \[\Omega^{\text{det-Rabin-aut}}_{\calA} = \Lang(\calA).\]
\item A \textbf{Presburger-definable objective} is given by a
  Presburger formula~$\phi$ with free variables $(X_s)_{s \in
    \Stat}$. The corresponding set of winning plays is defined
  by \[\Omega^{\text{Presb}}_\phi = \{\rho \in \Play \mid
  \phi(\#s(\rho))_{s \in \Stat})=0\}\] where $\#s(\rho)$ is the number
  of occurrences\footnote{By convention, if $s \in \Inf(\rho)$, and
    variable $X_s$ appears in~$\phi$, then $\rho \notin
    \Omega^{\text{Presb}}_\phi$.}  of state~$s$ along~$\rho$.
\end{itemize}

\paragraph{Encodings} For complexity issues we now make explicit how
the various objectives are encoded:
\begin{itemize}
\item Reachability, safety, B\"uchi and co-B\"uchi objectives are
  given by a set $T\subseteq \Stat$, they can therefore be encoded
  using $|\Stat|$ bits.
\item For parity objectives, we assume without loss of generality that
  $d \le 2 \cdot |\Stat|$.  The priority function has then size at
  most $|\Stat| \cdot \lceil\log(2\cdot |\Stat| + 1)\rceil$.
\item Street and Rabin objectives are given by tuples $(Q_i,R_i)_{i\in
    \lsem 1,k\rsem}$.  Their sizes are given by: $\sum_{i\in \lsem
    1,k\rsem} |Q_i| \lceil\log(|\Stat|)\rceil$.
\item Muller objectives are given by a coloring function and a set
  $\mathcal{F}$.  Its size is $|\Stat|\cdot \lceil\log(|C|)\rceil +
  |\mathcal{F}| \cdot \lceil\log(|C|)\rceil$.  Note that thanks to the
  coloring function, this encoding can be exponentially more succinct
  than an explicit representation such as the one considered
  in~\cite{horn2008explicit}.
\item The size of objectives given by circuits, deterministic automata
  or Presburger formulas is that of the corresponding circuits,
  deterministic automata or Presburger formulas.
\end{itemize}

\paragraph{Encodings of thresholds in inputs of the value and the
  constrained NE existence problems.}
For all the objectives except for those given by automata, whether a
play $\rho$ satisfies the objective or not only depends on the sets
$\Occ(\rho)$ and $\Inf(\rho)$. The various thresholds will therefore
be encoded as such pairs $(\Occ,\Inf)$.

For deterministic-automata objectives, the thresholds will be also
encoded as pairs of sets of states of the objectives, representing
respectively the set of states which are visited and the set of states
which are visited infinitely often.

For the Boolean circuit objectives, whether a play $\rho$ satisfies
the objective or not only depends on the set $\Inf(\rho)$. Therefore
we will use as encoding for the threshold a single set $\Inf$.

For the Presburger formulas objectives, we will use as encoding for
the thresholds the Parikh image of the play (\ie, the number of visits
to each of the states).

\subsubsection{Ordered objectives}
We now turn to a more general class of preference relations, allowing
for a \textit{semi-quantitative setting}.
\begin{definition}
  An \newdef{ordered objective} is a pair $\omega = \langle
  (\Omega_i)_{1 \le i \le n},\preorder\rangle$, where, for every $1
  \le i \le n$, $\Omega_i$~is an objective, and $\preorder$~is a
  preorder on $\{0,1\}^n$. A~play~$\rho$ is assigned a \newdef{payoff
    vector} w.r.t. that ordered objective, which is defined
  as~$\payoff_\omega(\rho) = \One_{\{i \mid
    \rho\in\Omega_i\}}\in\{0,1\}^{n}$ (where $\One_S$ is the vector
  $v$ such that $v_i = 1 \Leftrightarrow i \in S$). The corresponding
  preference relation $\prefrel_\omega$ is then defined by $\rho
  \prefrel_\omega \rho'$ \iff $\payoff_\omega(\rho) \preorder
  \payoff_\omega(\rho')$.
\end{definition}

There are many ways of specifying a preorder. We define below the
preorders on~$\{0,1\}^n$ that we consider in the
sequel. Figure~\ref{fig-preorder} displays four such preorders for
$n=3$. For the purpose of these definitions, we assume that
$\max\varnothing=-\infty$.

\begin{figure}[t]
\bgroup
\makeatletter
\def\@captype{subfigure}
\makeatother
\centering
%\def\figurename{Figure}
%\subfloat[Subset preorder\label{fig-first}]{
  \begin{minipage}{.3\textwidth}
  \centering
  \begin{tikzpicture}
    % \path[use as bounding box] (-1.5,-.5) -- (1.5,3.5);
    \tikzset{every node/.style={font=\scriptsize}}
      \draw(0,0) node (A) {$(0,0,0)$};
      \draw (-1,1) node (B) {$(1,0,0)$};
      \draw (0,1) node (C) {$(0,1,0)$};
      \draw (1,1) node (D) {$(0,0,1)$};
      \draw (-1,2) node (E) {$(1,1,0)$};
      \draw (0,2) node (F) {$(1,0,1)$};
      \draw (1,2) node (G) {$(0,1,1)$};
      \draw (0,3) node (H) {$(1,1,1)$};
      \draw[-latex'] (A) -- (B);
      \draw[-latex'] (A) -- (C);
      \draw[-latex'] (A) -- (D);
      \draw[-latex'] (B) -- (E);
      \draw[-latex'] (B) -- (F);
      \draw[-latex'] (C) -- (E);
      \draw[-latex'] (C) -- (G);
      \draw[-latex'] (D) -- (F);
      \draw[-latex'] (D) -- (G);
      \draw[-latex'] (E) -- (H);
      \draw[-latex'] (F) -- (H);
      \draw[-latex'] (G) -- (H);
    \end{tikzpicture}
  \caption{Subset preorder}\label{fig-first}
  \end{minipage}%
%}
%\subfloat[Maximise preorder]{
  \begin{minipage}{.35\textwidth}
  \centering
    \begin{tikzpicture}[x=0.95cm]
%      \path[use as bounding box] (-2.5,-.5) -- (2.5,3.5);
      \tikzset{every node/.style={font=\scriptsize}}
      \draw(0,0) node (A) {$(0,0,0)$};
      \draw (0,1) node (B) {$(1,0,0)$};
      \draw (-1,2) node (C1) {$(0,1,0)$};
      \draw (1,2) node (C2) {$(1,1,0)$};
      \draw (-2,3) node (D1) {$(0,0,1)$};
      \draw (-0.7,3) node (D2) {$(1,0,1)$};
      \draw (0.7,3) node (D3) {$(0,1,1)$};
      \draw (2,3) node (D4) {$(1,1,1)$};
      \node[draw,densely dotted,rounded corners=2mm,fit=(A),inner sep=0mm] {};
      \node[draw,densely dotted,rounded corners=2mm,fit=(B),inner sep=0mm] {};
      \node[draw,densely dotted,rounded corners=2mm,fit=(C1)(C2),inner sep=0mm] {};
      \node[draw,densely dotted,rounded corners=2mm,fit=(D1)(D2)(D3)(D4),inner sep=0mm] {};
      \draw[-latex'] (0,.3) -- (0,.7);
      \draw[-latex'] (0,1.3) -- (0,1.7);
      \draw[-latex'] (0,2.3) -- (0,2.7);
    \end{tikzpicture}
    \caption{\mbox{Maximise preorder}}% \textcolor{red}{figure pas jolie}}
  \end{minipage}%
%}
%\subfloat[Counting preorder]{
  \begin{minipage}{.35\textwidth}
  \centering
    \begin{tikzpicture}
%     \path[use as bounding box] (-1.5,-.5) -- (1.5,3.5);
      \tikzset{every node/.style={font=\scriptsize}}
      \draw(0,0) node (A) {$(0,0,0)$};
      \draw (-1,1) node (B) {$(1,0,0)$};
      \draw (0,1) node (C) {$(0,1,0)$};
      \draw (1,1) node (D) {$(0,0,1)$};
      \draw (-1,2) node (E) {$(1,1,0)$};
      \draw (0,2) node (F) {$(1,0,1)$};
      \draw (1,2) node (G) {$(0,1,1)$};
      \draw (0,3) node (H) {$(1,1,1)$};
      \node[draw,densely dotted,rounded corners=2mm,fit=(A),inner sep=0mm] {};
      \node[draw,densely dotted,rounded corners=2mm,fit=(B)(C)(D),inner sep=0mm] {};
      \node[draw,densely dotted,rounded corners=2mm,fit=(E)(F)(G),inner sep=0mm] {};
      \node[draw,densely dotted,rounded corners=2mm,fit=(H),inner sep=0mm] {};
      \draw[-latex'] (0,.3) -- (0,.7);
      \draw[-latex'] (0,1.3) -- (0,1.7);
      \draw[-latex'] (0,2.3) -- (0,2.7);
    \end{tikzpicture}
  \caption{Counting preorder}
  \end{minipage}
%}

\bigskip
%\subfloat[Lexicographic order\label{fig-last}]{
  \begin{minipage}{.9\linewidth}
  \centering
  \begin{tikzpicture}[scale=1.1]
      \tikzset{every node/.style={font=\scriptsize}}
      \draw(0,0) node (A) {$(0,0,0)$};
      \draw (1.4,0) node (B) {$(0,0,1)$};
      \draw (2.8,0) node (C) {$(0,1,0)$};
      \draw (4.2,0) node (D) {$(0,1,1)$};
      \draw (5.6,0) node (E) {$(1,0,0)$};
      \draw (7,0) node (F) {$(1,0,1)$};
      \draw (8.4,0) node (G) {$(1,1,0)$};
      \draw (9.8,0) node (H) {$(1,1,1)$};
      \draw[-latex'] (A) -- (B);
      \draw[-latex'] (B) -- (C);
      \draw[-latex'] (C) -- (D);
      \draw[-latex'] (D) -- (E);
      \draw[-latex'] (E) -- (F);
      \draw[-latex'] (F) -- (G);
      \draw[-latex'] (G) -- (H);
  \end{tikzpicture}
  \caption{Lexicographic order}\label{fig-last}
\end{minipage}
%}
\egroup
\caption{Examples of preorders (for $n=3$): dotted boxes represent
  equivalence classes for the relation~$\sim$, defined as $a\sim b
  \Leftrightarrow a\preorder b \land b\preorder a$; arrows represent
  the preorder relation~$\preorder$ quotiented by~$\sim$.}\label{fig-preorder}
\end{figure}
\begin{enumerate}
\item \newdef{Conjunction}: $v \preorder w$ \iff either $v_i=0$ for
  some~$1\leq i\leq n$, or $w_i=1$ for all~$1\leq i\leq n$.  This
  corresponds to the case where a player wants to achieve all her
  objectives.
\item \newdef{Disjunction}: $v \preorder w$ \iff either $v_i=0$ for
  all~$1\leq i\leq n$, or $w_i=1$ for some~$1\leq i\leq n$. The aim
  here is to satisfy at least one objective.
\item \newdef{Counting}: $v \preorder w$ \iff $|\{i\mid v_i=1\}| \leq
  |\{i\mid w_i=1\}|$.  The aim is to maximise the number of conditions
  that are satisfied;
\item \newdef{Subset}: $v \preorder w$ \iff $\{i\mid v_i=1\} \subseteq
  \{i\mid w_i = 1 \}$: in this setting, a player will always struggle
  to satisfy a larger (for inclusion) set of objectives.
\item \newdef{Maximise}: $v \preorder w$ \iff $\max \{i\mid v_i=1\}
  \leq \max \{i\mid w_i = 1 \}$.  The aim is to maximise the highest
  index of the objectives that are satisfied.
\item \newdef{Lexicographic}: $v \preorder w$ \iff either $v=w$, or
  there is~$1 \le i \le n$ such that $v_i=0$, $w_i=1$ and $v_j=w_j$
  for all $1\leq j<i$.
\item \newdef{Boolean Circuit}: given a Boolean circuit, with input
  from $\{0,1\}^{2 n}$, $v \preorder w$ \iff the circuit evaluates~$1$
  on input $v_1 \ldots v_n w_1 \ldots w_n$.
\item \newdef{Monotonic Boolean Circuit}: same as above, with the
  restriction that the input gates corresponding to~$v$ are negated,
  and no other negation appear in the circuit.
\end{enumerate}

\noindent In terms of expressiveness, any preorder over $\{0,1\}^n$ can be given
as a Boolean circuit: for each pair~$(v,w)$ with $v \preorder w$, it
is possible to construct a circuit whose output is~$1$ \iff the input
is $v_1 \ldots v_n w_1 \ldots w_n$; taking the disjunction of all
these circuits we obtain a Boolean circuit defining the preorder.  Its
size can be bounded by $2^{2 n+3} n$, which is exponential in
general. But all the above examples ((1)-(6)) can be specified with a
circuit of polynomial size. In Figure~\ref{fig:boolean-subset} we give
a polynomial-size Boolean circuit for the subset preorder. In the
following, for complexity issues, we will assume that the encoding of
all preorders (1)-(6) takes constant size, and that the size of the
preorder when it is given as a Boolean circuit is precisely the size
of the circuit for input size $n$, where $n$ is the number of
objectives.

A preorder~$\preorder$ is \newdef{monotonic} if it is compatible with
the subset ordering, \ie if $\{i\mid v_i=1\} \subseteq \{i\mid w_i = 1
\}$ implies $v \preorder w$. Hence, a preorder is monotonic if
fulfilling more objectives never results in a lower payoff.  All our
examples of preorders except for the Boolean circuit preorder are
monotonic.  Moreover, any monotonic preorder can be expressed as a
monotonic Boolean circuit: for a pair~$(v,w)$ with~$v\preorder w$, we
can build a circuit whose output is~$1$ \iff the input is~$v_1 \ldots
v_n w_1 \ldots w_n$.  We~can require this circuit to have negation at
the leaves.  Indeed, if the input~$w_j$ appears negated, and if
$w_j=0$, then by monotonicity, also the input $(v,\tilde w)$ is
accepted, with $\tilde w_i=w_i$ when $i\not=j$ and $\tilde w_j=1$.
Hence the negated input gate can be replaced with~$\texttt{true}$.
Similarly for positive occurrences of any~$v_j$.  Hence any monotonic
preorder can be written as a monotonic Boolean circuit.  Notice that
with Definition~\ref{def-NE}, any Nash equilibrium $\sigma_\Agt$ for
the subset preorder is also a Nash equilibrium for any monotonic
preorder.

\begin{figure}[htb]
  \centering{
  \begin{tikzpicture}[thick]
    \everymath{\scriptstyle}
    \draw[black!10!white,line width=4.5mm] (-.4,0) -- (3cm+3pt,0);
    \draw[black!10!white,line width=4.5mm] (4cm+3pt,0) -- +(3.4cm+3pt,0);
    \draw(0,0) node[draw,minimum width=8mm, minimum height=4.5mm,inner sep=0pt] (A) {$v_1$};
    \draw(A.0) node[draw,right,minimum width=8mm, minimum height=4.5mm,inner sep=0pt] (B) {$v_2$};
    \draw(B.0) node[right,minimum width=10mm, minimum height=4.5mm,inner sep=0pt]  (C) {$\dots$};
    \draw(C.0) node[draw,right,minimum width=8mm, minimum height=4.5mm,inner sep=0pt]  (D) {$v_n$};
    \draw(D.0) +(1,0) node[draw,right,minimum width=8mm, minimum height=4.5mm,inner sep=0pt]  (E) {$w_1$};
    \draw(E.0) node[draw,right,minimum width=8mm, minimum height=4.5mm,inner sep=0pt]  (F) {$w_2$};
    \draw(F.0) node[right,minimum width=10mm, minimum height=4.5mm,inner sep=0pt]  (G) {$\dots$};
    \draw(G.0) node[draw,right,minimum width=8mm, minimum height=4.5mm,inner sep=0pt]  (H) {$w_n$};
  
    \draw (A) + (0,-0.7) node[draw,rounded corners=2mm,inner sep=1mm] (G1) {$\textsf{NOT}$};
    \draw (B) + (0,-0.7) node[draw,rounded corners=2mm,inner sep=1mm] (G2)
          {$\textsf{NOT}$};
    \draw (C) + (0,-0.7) node  {$\dots$};
    \draw (D) + (0,-0.7) node[draw,rounded corners=2mm,inner sep=1mm] (G3) {$\textsf{NOT}$};

    \draw (E) + (0,-2.4) node[draw,rounded corners=2mm,inner sep=1mm] (G4) {$\textsf{OR}$};
    \draw (F) + (0,-1.9) node[draw,rounded corners=2mm,inner sep=1mm] (G5) {$\textsf{OR}$};
    \draw (G) + (0,-1.5) node  {$\dots$};
    \draw (H) + (0,-1.2) node[draw,rounded corners=2mm,inner sep=1mm] (G6) {$\textsf{OR}$};

    \draw (E) + (0,-3) node[draw,rounded corners=2mm,inner sep=1mm] (G7) {$\textsf{AND}$};

    \draw (A) -- (G1);
    \draw (B) -- (G2);
    \draw (D) -- (G3);

    \draw (G1) |- (G4);
    \draw (G2) |- (G5);
    \draw (G3) |- (G6);

    \draw (E) -- (G4);
    \draw (F) -- (G5);
    \draw (H) -- (G6);

    \draw (G4) -- (G7);
    \draw (G5) |- (G7.20);
    \draw (G6) |- (G7);

    \draw (G7) -- +(0,-0.6);
  \end{tikzpicture}
  }
  \caption{Boolean circuit defining the subset preorder}
  \label{fig:boolean-subset}
\end{figure}
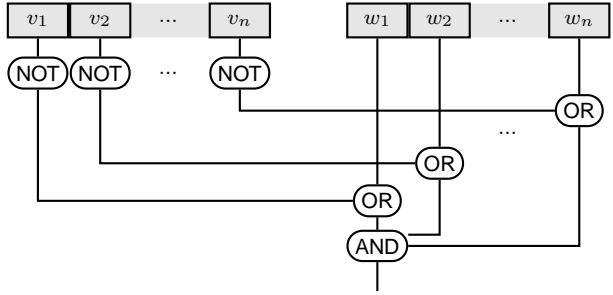

Next we will be be interested in two kinds of ordered objectives,
\emph{ordered reachability objectives}, where all objectives are supposed to
be reachability objectives, and \emph{ordered B\"uchi objectives}, where all
objectives are supposed to be B\"uchi objectives. Note that other classical
objectives (parity, Streett, Rabin, Muller, \etc) can be equivalently
described with a preorder given by a polynomial-size Boolean circuit over
B\"uchi objectives. For instance, each set of a~Muller condition can be
encoded as a conjunction of B\"uchi and co-B\"uchi conditions.

For ordered reachability (resp. B\"uchi) objectives, thresholds used
as inputs to the various decision problems will be given by the set of
states that are visited (resp. visited infinitely often).

\bigskip In Sections~\ref{sec:buchi} and~\ref{sec:reach}, we will be
interested in games where, for every player $A$, the preference
relation $\prefrel_A$ is given by an ordered objective $\omega_A =
\langle (\Omega^A_i)_{1 \le i \le n_A},\preorder_A\rangle$.  We will
then write~$\payoff_A$ instead of $\payoff_{\omega_A}$ for the
payoffs, and if $\rho$ is a play, $\payoff(\rho) =
(\payoff_A(\rho))_{A \in \Agt}$.

\subsection{Undecidability of all three problems for single Presburger-definable objectives}

We end this section with an undecidability result in the quite
general setting of Presburger-definable preference relations.

\begin{theorem}
  \label{theorem:undecidable}
  The value, NE existence and constrained NE existence problems are
  undecidable for finite games with preference relations given by
  Presburger-definable qualitative objectives.
\end{theorem}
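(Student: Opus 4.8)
The plan is to reduce the halting problem for deterministic two-counter Minsky machines, which is undecidable, to all three problems at once. Given such a machine $\mathcal M$, I would build a finite turn-based (hence concurrent) two-player game $\calG_{\mathcal M}$ in which one player, Eve, incrementally simulates the unique run of~$\mathcal M$ from the all-zero configuration, while the other player, Adam, watches for cheating. The control locations of $\mathcal M$ are recorded in the (finite) state space of $\calG_{\mathcal M}$; the counter \emph{values} are not stored, but Eve merely \emph{announces} the operations she performs, so that the play visits a dedicated state $I_j$ each time counter~$j$ is incremented and a state $D_j$ each time it is decremented. Since $\mathcal M$ is deterministic, the arena forces which instruction is executed at each location, so Eve cannot lie about \emph{which} instruction is run; she can only lie about the outcome of a zero-test, i.e.\ about which branch of a test-and-decrement instruction on some counter~$j$ to follow. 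Immediately after Eve declares which branch she takes, Adam is offered the chance to contest it; if he does, the play is sent at once to a fresh absorbing sink, and at that instant the occurrence counts $\#I_j$ and $\#D_j$ recorded so far are exactly the numbers of increments and decrements of counter~$j$ performed \emph{before} the contested step, so the true current value of counter~$j$ equals $\#I_j - \#D_j$.

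Eve's single, qualitative objective $\Omega_{\mathcal M}$ is the Presburger-definable set of plays for which one of the following holds: a halting marker $H$ is visited; or Adam contested a ``decrement'' branch on counter~$j$ and $\#I_j > \#D_j$ (so the decrement was in fact legal, and Adam was wrong); or Adam contested a ``$=0$'' branch on counter~$j$ and $\#I_j = \#D_j$ (so the counter was indeed $0$). Only the finitely-visited states $I_j$, $D_j$, $H$ and the contest markers occur in the defining formula, so by the convention attached to Presburger objectives every winning play must eventually stabilise in a sink that is irrelevant to the formula. If $\mathcal M$ halts, Eve's honest simulation is a winning strategy: against any strategy of Adam, the outcome either reaches $H$ undisturbed, or is cut short by a contest of a legal operation, which makes the corresponding disjunct true. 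If $\mathcal M$ does not halt, then for every strategy of Eve, Adam has a reply falsifying $\Omega_{\mathcal M}$: Adam recomputes the true counter values from the history, and if Eve ever declares a branch inconsistent with them he contests it, whereupon the contested disjunct is false and $H$ has not been visited; and if Eve always plays honestly the outcome is the genuine infinite run, which never visits $H$ and, if a counter grows unboundedly, visits some $I_j$ infinitely often. Hence Eve has a winning strategy for $\Omega_{\mathcal M}$ from the initial state iff $\mathcal M$ halts; taking for $\pi$ any play losing for Eve, this already proves undecidability of the value problem. The adversary is essential here: with a single player, ``is there a play satisfying a Presburger objective?'' is \emph{decidable}, because the Parikh images of the finite paths of a finite graph form an effectively computable semilinear (hence Presburger-definable) set, and it suffices to test non-emptiness of its intersection with the set defined by the formula.

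For the constrained NE existence problem one lets Adam's preference be the reverse of Eve's (making $\calG_{\mathcal M}$ genuinely zero-sum) and asks for a Nash equilibrium whose outcome is constrained, via the thresholds, to be winning for Eve; such an equilibrium exists iff Eve has a winning strategy, since if she has none then by Borel determinacy (the objective is Borel) Adam has one, and in any profile with an Eve-winning outcome Adam would profitably deviate to it. The plain NE existence problem is obtained by the usual auxiliary-player gadget that rules out trivial all-losing equilibria, so that some Nash equilibrium exists iff Eve wins $\Omega_{\mathcal M}$; the auxiliary preference relations are again Presburger-definable. As $\calG_{\mathcal M}$ is finite and every objective involved is a Presburger-definable qualitative objective, undecidability of the halting problem transfers to all three problems.

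The step I expect to be the main obstacle is designing the arena and the formula $\Omega_{\mathcal M}$ so that dishonesty is \emph{always} punishable: every illegal branch Eve might declare must create a Presburger-detectable discrepancy ($\#I_j$ versus $\#D_j$ with the wrong polarity) that Adam can expose by freezing the counts at exactly the right moment, while symmetrically honest play must survive every possible contest. Getting the timing of the freeze (just before the contested operation is recorded) and the polarity of the inequalities exactly right---together with the bookkeeping that keeps winning plays from visiting the counting states infinitely often---is the delicate part; the remaining reductions to the NE and constrained-NE problems are routine.
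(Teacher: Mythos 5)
Your construction is correct, and it takes a genuinely different route from the paper's. The paper attacks the \emph{constrained} NE existence problem first, via a three-player game: each counter $c_i$ is policed by an auxiliary player whose qualitative Presburger objective is ``the numbers of visits to $S_i$ and to $T_i$ are finite and equal'', the zero-test modules are engineered so that a wrong zero/nonzero claim by the halting-seeking player gives that auxiliary player a profitable deviation, and the value and plain NE existence versions are then derived by making the game zero-sum and prepending a concurrent matching-pennies module. You instead prove the value problem first, with a single auditor (Adam) and a \emph{single} Presburger objective for Eve bundling ``visit $H$'' with the frozen-count vindication disjuncts, and then transfer to the two equilibrium problems: your determinacy argument is legitimate (your arena is turn-based and the objective is Borel, and with a single qualitative objective the preference relation is total, Noetherian and almost-well-founded, so either your direct argument or Propositions~\ref{lem:link-value-constr} and~\ref{lem:link-value-exist} apply), and the ``auxiliary-player gadget'' you invoke for plain NE existence is exactly the paper's initial matching-pennies module. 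Your verifier/challenger simulation itself is sound: freezing the counts just before the contested operation makes honest play survive every contest, makes the \emph{first} lie Presburger-detectable (before any lie the true value of counter $j$ is indeed $\#I_j-\#D_j$), and the convention on infinitely visited variables, together with the falsity of all disjuncts, disposes of honest non-halting runs. What each approach buys: yours keeps two players and one objective and leans on determinacy; the paper's distributes the arithmetic over extra players so that the equilibrium incentives do the auditing, which is why it reaches the constrained NE problem without any determinacy detour.

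Two small repairs are needed. First, for the value problem the threshold play $\pi$ must be an Eve-\emph{winning} play, not a losing one: with a single objective, $\pi \prefrel \rho$ holds for every $\rho$ as soon as $\pi$ is losing, so your instance would be trivially a yes-instance; with $\pi$ winning the question becomes exactly ``does Eve have a winning strategy'', which is what your reduction controls (you use the correct polarity in the constrained-NE step, where the lower bound is a winning play). Second, the reduction must actually output such a threshold, encoded as in the paper by its Parikh image, so you should argue that a winning play exists and is computable from the machine --- e.g.\ normalise the machine so that its run performs a zero test before halting, in which case the play where Adam contests the first honest ``$=0$'' declaration is winning with an explicitly computable Parikh image --- or treat separately the degenerate machines whose run never reaches a test or the halting instruction, for which halting is decidable outright. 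Both fixes are routine and do not affect the core of your argument.
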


\begin{proof}
  We first prove the result for the constrained NE existence problem, by
  encoding a two-counter machine. We fix a two-counter machine, and
  assume without loss of generality that the halting state is preceded
  by a non-zero test for the two counters (hence if the machine halts,
  the two counters have a positive value in the halting state).

  We begin with defining a family of preorders. Fix two sets of
  states~$S$ and~$T$; a~play is said $(S=T)$-winning if the
  number of visits to~$S$ equals the number of visits to~$T$, and both
  are finite. Formally, $\pi \prefrel_{S=T} \pi'$ whenever $\pi$ is
  not $(S=T)$-winning, or $\pi'$~is.

We use such preorders to encode the acceptance problem for two-counter
machines: the value of counter~$c_1$ is encoded as the difference
between the number of visits to~$S_1$ and~$T_1$, and similarly for
counter~$c_2$.  Incrementing counter~$c_i$ thus consists in visiting a
state in~$S_i$, and decrementing consists in visiting~$T_i$; in other
terms, if instruction~$q_k$ of the two-counter machine consists in
incrementing~$c_1$ and jumping to~$q_{k'}$, then the game will have a
transition from some state~$q_k$ to a state in~$S_1$, and a transition
from there to~$q_{k'}$. The game involves three players: $A_1$,
$A_2$ and~$B$. The~aim of player~$A_1$ (resp.~$A_2$) is to visit~$S_1$
and~$T_1$ (resp.~$S_2$ and~$T_2$) the same number of times: player
$A_i$'s preference is $\prefrel_{S_i=T_i}$.  The aim of player~$B$ is
to reach the state corresponding to the halting state of the
two-counter machine. Due to the assumption on the two-counter machine,
if $B$ wins, then both $A_1$ and $A_2$ lose.

\smallskip 
\begin{wrapfigure}{r}{6.3cm}
  \centering
  \begin{tikzpicture}[thick]
    \tikzstyle{square}=[draw,minimum height=5mm,minimum width=5mm, inner sep=1mm]
    \tikzstyle{smsquare}=[draw,minimum height=5mm,minimum width=5mm,fill=black!20!white,inner sep=1mm]
    \draw (0.2,0) node[draw,circle,minimum width=5mm] (B) {};
    \draw (B.-90) node[below] (BB) {$B$};    
    \draw (2,0.8) node[square] (A1) {$u_i^{{\scriptscriptstyle \ne 0}}$};
    \draw (A1.170) node[left] (A1B) {$A_i$};    
    \draw (2,-0.8) node[square] (A2) {$u_i^{{\scriptscriptstyle = 0}}$};
    \draw (A2.90) node[above] (A2A) {$A_i$};    
    \draw (2,1.8) node[smsquare] (U) {};
    \draw (1,-1.6) node[square] (S) {$s_i$};
    \draw (S.0) node[right] (A2A) {$A_i$};    
    \draw (3,-1.6) node[square] (T) {$t_i$};
    \draw (T.180) node[left] (A2A) {$A_i$};    
    \draw (1,-2.6) node[smsquare] (SB) {};
    \draw (3,-2.6) node[smsquare] (TB) {};

    \draw[-latex'] (B) -- (A1);
    \draw[-latex'] (A1) -- (U);
    \draw[-latex'] (A1) -- +(.8,0);
    \draw[-latex'] (U) .. controls +(.5,0.9) and +(-.5,0.9) .. (U);
    \draw[-latex'] (B) -- (A2);
    \draw[-latex'] (A2) -- +(.8,0);
    \draw[-latex'] (A2) -- (S);
    \draw[-latex'] (A2) -- (T);
    \draw[-latex'] (S) -- (SB);
    \draw[-latex'] (T) -- (TB);
    \draw[-latex'] (S) .. controls +(-1,-0.7) and +(-1,0.7) .. (S);
    \draw[-latex'] (T) .. controls +(1,-0.7) and +(1,0.7) .. (T);
    \draw[-latex'] (SB) .. controls +(0.5,-0.9) and +(-0.5,-0.9) .. (SB);
    \draw[-latex'] (TB) .. controls +(0.5,-0.9) and +(-0.5,-0.9) .. (TB);
    \draw[-latex'] (-0.5,0) -- (B);
  \end{tikzpicture}
  \caption{Testing whether $c_i=0$.}
  \label{fig:zero-test}
\end{wrapfigure}
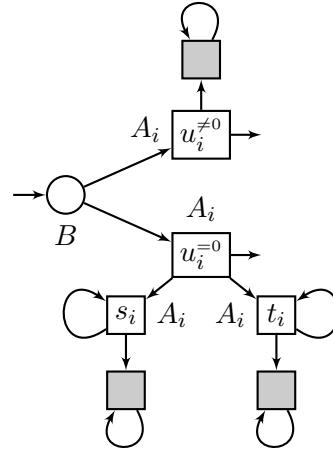
It~remains to encode the zero-test: this is achieved by the
module of Figure~\ref{fig:zero-test}. In~this module, player~$B$ 
tries to avoid the three sink states (marked in grey), since this would
prevent her from reaching her goal.
When entering the module, player~$B$ has to choose one of the
available branches: if she decides to go to~$u_i^{{\scriptscriptstyle
    \ne 0}}$, then $A_i$ could take the play into the self-loop, which
is winning for her if $S_i$ and~$T_i$ have been visited the same
number of times in the history of this path, which corresponds to
having $c_i=0$; hence player~$B$ should play
to~$u_i^{{\scriptscriptstyle \ne 0}}$ only if $c_i\not=0$, so that
$A_1$ has no interest in going to this self-loop.

Similarly, if player~$B$ decides to go to~$u_i^{{\scriptscriptstyle
    =0}}$, player~$A_i$ has the opportunity to ``leave'' the main
stream of the game, and go to~$s_i$ or~$t_i$ (obviously $s_i \in S_i$
and $t_i \in T_i$). If~the numbers of visits to~$S_i$ and~$T_i$ up to
that point are different, then player~$A_i$ has the opportunity to
make both numbers equal, and to win. Conversely, if both numbers are
equal (i.e., $c_i=0$), then going to~$s_i$ or~$t_i$ will be losing
for~$A_i$, whatever happens from there.  Hence, if~$c_i=0$ when
entering the module, then player~$B$ should go
to~$u_i^{{\scriptscriptstyle =0}}$.

\medskip One can then easily show that the two-counter machine stops
if, and only if, there is a Nash equilibrium in the resulting game~$\calG$, in
which player~$B$ wins and players~$A_1$ and~$A_2$ lose.
Indeed, assume that the machine stops, and consider the strategies
where player~$B$ plays (in the first state of the test modules)
according to the value of the corresponding counter, and where
players~$A_1$ and~$A_2$ always keep the play in the main stream of the
game. Since the machine stops, player~$B$ wins, while players~$A_1$
and~$A_2$ lose. Moreover, none of them has a way to improve their
payoff: since player~$B$ plays according to the values of the
counters, players~$A_1$ and~$A_2$ would not benefit from deviating
from their above strategies.
Conversely, if there is such a Nash equilibrium, then in any visited
test module, player~$B$ always plays according to the values of the
counters: otherwise, player~$A_1$ (or~$A_2$) would have the
opportunity to win the game.  By~construction, this means that the run
of the Nash equilibrium corresponds to the execution of the
two-counter machine. As~player~$B$ wins, this execution reaches the
halting state.

\bigskip Finally, it is not difficult to adapt this reduction to
involve only two players: players~$A_1$ and~$A_2$ would be replaced by
one single player~$A$, in charge of ensuring that both conditions
(for~$c_1$ and~$c_2$) are fulfilled.  This requires minor changes to
the module for testing~$c_i=0$: when leaving the main stream of the
game in a module for testing counter~$c_i$, player~$A$ should be given
the opportunity (after the grey state) to visit states~$S_{3-i}$
or~$T_{3-i}$ in order to adjust that part of her objective.

\smallskip
By changing the winning condition for Player~$B$, the game~$\calG$ can also be made
zero-sum: for this, $B$~must lose if the play remains in the main stream
forever without visiting the final state; otherwise, $B$~loses if the
number of visits to~$s_i$ and~$t_i$ are finite and equal for both~$i=1$
and~$i=2$; $B$~wins in any other case. The objective of player~$A$ is
opposite. It~is not difficult to modify the proof above for showing that the
two-counter machine halts if, and only if, player~$B$ has a winning strategy
in this game.

\smallskip 
\begin{wrapfigure}{r}{6.2cm}
  \centering
  \begin{tikzpicture}[thick]
    \tikzset{noeud/.style={circle,draw=black,thick,fill=black!10,minimum
        height=6mm,inner sep=0pt}}
    \draw (0,0) node [noeud] (A) {$s_0$};
    \draw (30:1.6cm) node [noeud] (B) {$s_1$};
    \draw (-30:1.6cm) node [noeud] (C) {$s$};
    \begin{scope}[xshift=1.1cm,yshift=-.75cm]
    \draw[dashed, rounded corners=2mm] (0,0) .. controls +(90:5mm) .. (1,.5) -- (2,.9) -- (2.5,.6) --
    (3,0) -- (2.5,-.8) -- (2,-.9) -- (1,-.7) .. controls +(150:5mm) and
    +(-90:5mm) .. (0,0);
    \draw (1.8,-0.1) node {\begin{minipage}{2.1cm}\footnotesize\centering
        Copy of~$\calG$
      \end{minipage}};
    \end{scope}
    \draw[-latex'] (A) -- (B) node[midway,above left=-2pt] {$\scriptstyle \tuple{1,1}, \tuple{2,2}$};
    \draw[-latex'] (A) -- (C) node[midway,below left=-2pt] {$\scriptstyle \tuple{1,2}, \tuple{2,1}$};
    \draw[-latex'] (B) .. controls +(30:10mm) and +(-30:10mm) .. (B);
    \draw[dashed] (C) -- +(40:5mm);
    \draw[dashed] (C) -- +(0:4mm);
    \draw[dashed] (C) -- +(-40:5mm);
  \end{tikzpicture}
  \caption{Extending the game with an initial concurrent module}
  \label{fig-init-module}
\end{wrapfigure}
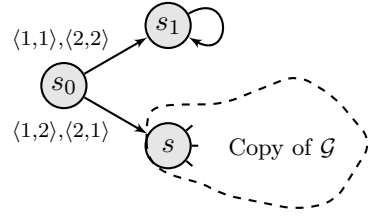
Finally, by adding a small initial module depicted on
Figure~\ref{fig-init-module} to this zero-sum version of the game~$\calG$, one
can encode the halting problem for two-counter machines to the
NE existence problem.  Indeed, in the zero-sum game, there is exactly one
Nash equilibrium, with only two possible payoffs (either~$A$ wins,
or~$B$~wins). 
Now, assuming that $A$ loses and $B$ wins in state~$s_1$, 
then there is a (pure) Nash equilibrium in
the game extended with the initial module if, and only if, player~$B$
wins in the zero-sum game above.
\end{proof}

\section{Preliminary results}\label{sec-prelim}

This section contains general results that will be applied later in
various settings. In each of the statements, we give the restrictions
on the games and on the preference relations that should be satisfied.

\subsection{Nash equilibria as lasso runs}
\label{sec:lasso}
We first characterise outcomes of Nash equilibria as ultimately
periodic runs, in the case where preference relations only depend on the set of
states that are visited, and on the set of states that are visited
infinitely often. 
Note that $\omega$-regular conditions satisfy this hypothesis, 
but Presburger relations such as the ones used for proving
Theorem~\ref{theorem:undecidable} do~not.

\begin{proposition}\label{lem:play-length}
  Let~$\calG=\tuple{\Stat,\Agt,\Act,\Allow,\Tab,(\mathord\prefrel_A)_{A\in\Agt} }$ be
  a \textbf{finite} concurrent game such that, for every player $A$,
  it~holds\footnote{We recall that $\rho \sim_A \rho'$ if, and only if, $\rho
    \prefrel_A \rho'$ and $\rho' \prefrel_A \rho$.} $\rho \sim_A \rho'$ as
  soon as $\Inf(\rho) = \Inf(\rho')$ and $\Occ(\rho) = \Occ(\rho')$. Let~$\rho
  \in \Play$. If there is a Nash equilibrium with outcome~$\rho$, then there
  is a Nash equilibrium with outcome~$\rho'$ of the form $\pi \cdot
  \tau^\omega$ such that $\rho \sim_A \rho'$, and where $|\pi|$ and~$|\tau|$
  are bounded by~$|\Stat|^2$.
\end{proposition}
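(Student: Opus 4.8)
The plan is to extract from the equilibrium hypothesis a family of ``local punishment'' facts and then to fold $\rho$ into a lasso of the prescribed size while retaining exactly the information needed to re-establish these facts for the new outcome. Write $\rho=(s_i)_{i\ge 0}$, $O=\Occ(\rho)$, $I=\Inf(\rho)$ (so $I\ne\varnothing$, $\calG$ being finite), $s_0=\rho_{=0}$, and let $\sigma_\Agt$ be a Nash equilibrium from~$s_0$ with outcome~$\rho$. \emph{Local form of the equilibrium:} unfolding Definition~\ref{def-NE}, for every position~$j$, every player~$A$ and every state $s'\ne s_{j+1}$ obtained from~$s_j$ by a move agreeing with $\sigma_\Agt(\rho_{\le j})$ on all players but~$A$ (we then call~$A$ a \emph{suspect} of $(j,s')$), the coalition $\Agt\setminus\{A\}$, continuing~$\sigma_\Agt$ after the history $\rho_{\le j}\cdot s'$, guarantees that every resulting play $\rho_{\le j}\cdot\eta$ satisfies $\rho_{\le j}\cdot\eta\prefrel_A\rho$. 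This continuation of~$\sigma_\Agt$ does not depend on who the suspect is, so it punishes every suspect of $(j,s')$ at once — which is fortunate, since, reading only states, it cannot tell them apart anyway.

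\emph{Folding $\rho$ into a lasso.} Enumerate $O=\{o_1,\dots,o_p\}$ by first occurrence in~$\rho$: $o_i$ first occurs at position~$g_i$, with $0=g_1<\dots<g_p$; fix $s^\star\in I$. For $1\le i<p$ let $\pi^{(i)}$ be a simple path from~$o_i$ to~$o_{i+1}$ obtained from $\rho_{[g_i,g_{i+1}]}$ by iteratively deleting the cycles it contains, let $\pi^{(p)}$ be obtained likewise from a segment of~$\rho$ from~$o_p$ to~$s^\star$, and put $\pi=\pi^{(1)}\cdots\pi^{(p)}$; then $\Occ(\pi)=O$ and $|\pi|\le p(|\Stat|-1)\le|\Stat|^2$. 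Choose $M\ge g_p$ with $\rho_{\ge M}$ staying inside~$I$; since $\rho_{\ge M}$ visits every state of~$I$ infinitely often, deleting cycles from well-chosen segments of~$\rho_{\ge M}$ produces a cycle~$\tau$ on~$I$ based at~$s^\star$ with $\Occ(\tau)=I$ and $|\tau|\le|\Stat|^2$. Set $\rho'=\pi\cdot\tau^\omega$ (with the obvious normalisation). Then $\Occ(\rho')=O$ and $\Inf(\rho')=I$, hence $\rho\sim_A\rho'$ for every~$A$ by the hypothesis on the preorders. Finally fix a map~$\phi$ from positions of~$\rho'$ to positions of~$\rho$ with $\rho'_{=k}=\rho_{=\phi(k)}$, $\rho'_{=k+1}=\rho_{=\phi(k)+1}$, and — the key property — $\Occ(\rho'_{\le k})=\Occ(\rho_{\le\phi(k)})$: along the stem, at a position inside the $i$-th block the left-hand side is $\{o_1,\dots,o_i\}$ (plus~$o_{i+1}$ at the block's last position), and since no new state occurs between~$g_i$ and~$g_{i+1}$ one can take $\phi(k)$ in $[g_i,g_{i+1})$, resp.\ in $[g_{i+1},g_{i+2})$, realising the relevant edge of~$\rho$; along the periodic part both sides equal~$O$, because $\Occ(\pi)=O$ already and every edge of~$\tau$ occurs in~$\rho$ at some position $\ge M\ge g_p$, where $\Occ$ is full.

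\emph{Re-establishing the equilibrium.} Let $\sigma'_\Agt$ play, along a prefix~$\rho'_{\le k}$ of~$\rho'$, the move $\sigma_\Agt(\rho_{\le\phi(k)})$ (legal, and inducing $\rho'_{=k}\to\rho'_{=k+1}$, so $\Out(s_0,\sigma'_\Agt)=\rho'$), and, from the first moment a state $s'\ne\rho'_{=k+1}$ is observed after~$\rho'_{\le k}$, follow the continuation of~$\sigma_\Agt$ after the history $\rho_{\le\phi(k)}\cdot s'$. If a single player~$A$ deviates from $\sigma'_\Agt$ without ever causing a discrepancy, the outcome is still~$\rho'$; otherwise, at the first discrepancy the observed state~$s'$ is an $A$-deviation target from $\rho'_{=k}=\rho_{=\phi(k)}$ against $\sigma_\Agt(\rho_{\le\phi(k)})$, so~$A$ is a suspect of $(\phi(k),s')$, and by the local form of the equilibrium the resulting play is $\rho'_{\le k}\cdot\eta$ with $\rho_{\le\phi(k)}\cdot\eta\prefrel_A\rho$. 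By the key property of~$\phi$, the plays $\rho'_{\le k}\cdot\eta$ and $\rho_{\le\phi(k)}\cdot\eta$ have the same~$\Occ$ and the same~$\Inf$, so $\rho'_{\le k}\cdot\eta\sim_A\rho_{\le\phi(k)}\cdot\eta\prefrel_A\rho\sim_A\rho'$. Hence no deviation is profitable: $\sigma'_\Agt$ is a Nash equilibrium from~$s_0$, with outcome~$\rho'$ of the required form and with $\rho\sim_A\rho'$ for all~$A$.

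\emph{Where the difficulty lies.} The whole subtlety is in the lasso construction. Since each $\prefrel_A$ is sensitive to the set of \emph{visited} states, a careless shortening of~$\rho$ can turn a legitimate punishment of~$\sigma_\Agt$ into an illegitimate one for the new play; the cure is to make the abstracted trace $k\mapsto(\rho'_{=k},\Occ(\rho'_{\le k}))$ of the new outcome be, position by position, a sub-trace of $j\mapsto(\rho_{=j},\Occ(\rho_{\le j}))$. Segmenting the stem at first occurrences of states (instead of loop-erasing~$\rho$ as a whole) and reading the period off a suffix of~$\rho$ lying past every first occurrence are precisely what makes this compatible with the bounds $|\pi|,|\tau|\le|\Stat|^2$; the rest is routine bookkeeping on prefixes, suspects and outcomes.
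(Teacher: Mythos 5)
Your proof is correct and takes essentially the same route as the paper's: fold the outcome into a lasso whose prefixes visit exactly the same states as matched prefixes of~$\rho$ (the paper does this via a greedy target-chasing construction with its matching map~$c$, you via first-occurrence blocks plus loop-erasure with your map~$\phi$), then transfer the equilibrium by mimicking the original profile at matched positions and punishing a first visible deviation with the original profile's continuation, the preference comparison going through because the two deviation plays share the same $\Occ$ and $\Inf$. The differences are only bookkeeping details in the lasso construction (and you make explicit the prefix-occurrence invariant that the paper's proof keeps implicit in its construction of~$\pi$ and~$\tau$).
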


\begin{proof}
  Let $\sigma_\Agt$ be a Nash equilibrium from some state~$\stat$, and
  $\rho$~be its outcome. We~define a new strategy
  profile~$\sigma'_{\Agt}$, whose outcome from~$\stat$ is ultimately
  periodic, and then show that $\sigma'_{\Agt}$ is a Nash equilibrium
  from~$\stat$.

  To begin with, we inductively construct a history $\pi = \pi_0 \pi_1
  \dots \pi_n$ that is not too long and visits precisely those states
  that are visited by~$\rho$ (that is, $\Occ(\pi) = \Occ(\rho)$).

  The initial state is $\pi_0 = \rho_0=\stat$. Then we assume we have
  constructed $\pi_{\le k} = \pi_0 \dots \pi_k$ which visits exactly
  the same states as $\rho_{\le k'}$ for some~$k'$. If all the states
  of~$\rho$ have been visited in $\pi_{\le k}$ then the construction
  is over. Otherwise there is an index~$i$ such that $\rho_{i}$ does
  not appear in~$\pi_{\le k}$. We~therefore define our next target as
  the smallest such~$i$: we~let $t(\pi_{\le k}) = \min \{ i \mid
  \forall j \le k.\ \pi_j \neq \rho_i\}$.  We~then look at the
  occurrence of the current state~$\pi_k$ that is the closest to the
  target in~$\rho$: we~let $c(\pi_{\le k}) = \max \{ j < t(\pi_{\le
    k}) \mid \pi_k = \rho_j\}$.  Then we~emulate what happens at that
  position by choosing $\pi_{j+1} = \rho_{c(\pi_{\le j})+1}$.  Then
  $\pi_{k+1}$ is either the target, or a state that has already been
  seen before in $\pi_{\le k}$, in which case the resulting $\pi_{\le
    k+1}$ visits exactly the same states as $\rho_{\le c(\pi_{\le
      k})+1}$.

  At each step, either the number of remaining targets strictly
  decreases, or the number of remaining targets is constant but the
  distance to the next target strictly decreases. Therefore the
  construction terminates.  Moreover, notice that between two targets
  we do not visit the same state twice, and we visit only states that
  have already been visited, plus the target.  As~the number of
  targets is bounded by~$|\Stat|$, we~get that the length of the path
  $\pi$ constructed thus far is bounded by $1+|\Stat|\cdot
  (|\Stat|-1)/2$.

  \smallskip Using similar ideas, we now inductively construct $\tau =
  \tau_0 \tau_1 \dots \tau_m$, which visits precisely those states
  which are seen infinitely often along~$\rho$, and which is not too
  long.  Let $l$ be the least index after which the states visited
  by~$\rho$ are visited infinitely often, \ie $l = \min\{ {i\in\N \mid
    \forall j\geq i.\ \rho_j \in \Inf(\rho)}\}$.  The~run~$\rho_{\ge
    l}$ is such that its set of visited states and its set of states
  visited infinitely often coincide. We~therefore define~$\tau$ in the
  same way we have defined~$\pi$ above, but for play~$\rho_{\ge
    l}$. As~a by-product, we also get $c(\tau_{\leq k})$, for~$k<m$.

  We now need to glue $\pi$ and~$\tau$ together, and to ensure
  that~$\tau$ can be glued to itself, so that $\pi \cdot \tau^\omega$
  is a real run.  We~therefore need to link the last state of~$\pi$
  with the first state of~$\tau$ (and similarly the last state of
  $\tau$ with its first state). This possibly requires appending some
  more states to~$\pi$ and~$\tau$: we~fix the target of~$\pi$ and
  $\tau$ to be~$\tau_0$, and apply the same construction as previously
  until the target is reached.  The total length of the resulting
  paths~$\pi'$ and~$\tau'$ is bounded by $1+(|\Stat| -
  1)\cdot(|\Stat|+2)/2$ which is less than~$|\Stat|^2$.

  \smallskip We~let~$\rho'=\pi'\cdot{\tau'}^\omega$, and abusively
  write $c(\rho'_{\leq k})$ for $c(\pi'_{\leq k})$ if~$k\leq
  \length{\pi'}$ and $c(\tau'_{\leq k'})$ with $k'=(k-1-\length{\pi'})
  \mod \length{\tau'}$ otherwise.  We now define our new strategy
  profile, having~$\rho'$ as outcome from~$\stat$.  Given a
  history~$h$:
  \begin{itemize}
  \item if $h$ followed the expected path, \ie, $h = \rho'_{\le k}$
    for some~$k$, we~mimic the strategy at~$c(h)$: $\sigma'_\Agt(h) =
    \sigma_\Agt(\rho'_{c(h)})$.
    This way, $\rho'$ is the outcome of~$\sigma'_\Agt$ from~$\stat$.
  \item otherwise we take the longest prefix $h_{\le k}$ that is a prefix
    of~$\rho'$,  and define $\sigma'_\Agt(h) =
    \sigma_\Agt(\rho'_{c(h_{\le k})} \cdot h_{\ge k+1})$.
  \end{itemize}

  \noindent We now show that $\sigma'_\Agt$ is a Nash equilibrium. Assume that
  one of the players changes her strategy while playing according
  to~$\sigma'_\Agt$: either the resulting outcome does not deviate
  from~$\pi \cdot \tau^\omega$, in which case the payoff of that
  player is not improved; or~it~deviates at some point, and from that
  point on, $\sigma'_\Agt$~follows the same strategies as
  in~$\sigma_\Agt$.  Assume that the resulting outcome is an
  improvement over~$\rho'$ for the player who deviated. The suffix of
  the play after the deviation is the suffix of a play
  of~$\sigma_\Agt$ after a deviation by the same player. By
  construction, both plays have the same sets of visited and
  infinitely-visited states. Hence we have found an advantageous
  deviation from~$\sigma_\Agt$ for one player, contradicting the fact
  that $\sigma_\Agt$ is a Nash equilibrium.  
\end{proof}

\subsection{Encoding the value problem as a constrained NE existence problem}
We now give a reduction that will be used to infer hardness results for the
constrained NE existence problem from the hardness of the value problem (as
defined in Section~\ref{ssec-probs}): this will be the case when the hardness
proof for the value problem involves the construction of a game satisfying the
hypotheses of the proposition.

\begin{proposition}
  \label{lem:link-value-constr}
  Let~$\calG=\tuple{\Stat,\Agt,\Act,\Allow,\Tab,(\mathord\prefrel_A)_{A\in\Agt}
  }$ be a two-player zero-sum game played between players $A$ and $B$,
  such that:
  \begin{itemize}
  \item the preference relation $\prefrel_{A}$ for player~$A$ is total,
    Noetherian and almost-well-founded (see~Section~\ref{ssec-gendef});
  \item $\calG$ is determined, \ie, for all play~$\pi$:
    \[ 
    [\exists \sigma_A.\ \forall \sigma_B.\ \pi \prefrel_A
    \Out(\sigma_A,\sigma_B)] \quad \Leftrightarrow \quad [\forall
    \sigma_B.\ \exists \sigma_A.\ \pi \prefrel_A
    \Out(\sigma_A,\sigma_B)].
    \]
  \end{itemize}
  Let $\calG'$ be the (non-zero-sum) game obtained from~$\calG$ by
  replacing the preference relation of player~$B$ by the one where all
  plays are equivalent. Then, for every state $s$, for every play
  $\pi$ from $s$, the two following properties are equivalent:
  \begin{enumerate}[label=(\roman*)]
  \item there is a Nash equilibrium in $\calG'$ from~$s$ with
    outcome~$\rho$ such that $\pi \not\prefrel_A \rho$;
  \item player $A$ cannot ensure $\pi$ from~$s$ in $\calG$.
  \end{enumerate}
\end{proposition}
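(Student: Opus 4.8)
I would prove the two implications separately. The implication (i)$\Rightarrow$(ii) is a short transitivity argument that needs nothing about $\calG$ beyond $\prefrel_A$ being a preorder; the real content is in (ii)$\Rightarrow$(i), where the hypotheses (totality, Noetherianity, determinacy) all get used.

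\emph{Direction (i)$\Rightarrow$(ii).} Assume $\sigma_\Agt=(\sigma_A,\sigma_B)$ is a Nash equilibrium of $\calG'$ from $s$ with outcome $\rho=\Out(s,\sigma_\Agt)$ satisfying $\pi\not\prefrel_A\rho$, and suppose for contradiction that $A$ can ensure $\pi$ from $s$ in $\calG$, witnessed by a strategy $\tau_A$. Playing $\tau_A$ against $\sigma_B$ gives $\pi\prefrel_A\Out(s,\tau_A,\sigma_B)$ (since $\tau_A$ ensures $\pi$). As $\sigma_\Agt$ is a Nash equilibrium of $\calG'$, $\sigma_A$ is a best response of $A$ to $\sigma_B$, so $\Out(s,\tau_A,\sigma_B)\prefrel_A\rho$. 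Transitivity of $\prefrel_A$ yields $\pi\prefrel_A\rho$, contradicting the choice of $\rho$. Hence $A$ cannot ensure $\pi$.

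\emph{Direction (ii)$\Rightarrow$(i).} Suppose $A$ cannot ensure $\pi$ from $s$ in $\calG$, i.e.\ $\neg\,[\exists\sigma_A\,\forall\sigma_B.\ \pi\prefrel_A\Out(s,\sigma_A,\sigma_B)]$. The first step is to rewrite this using the determinacy hypothesis: the statement ``$A$ cannot ensure $\pi$'' is precisely the negation of the left-hand side of the determinacy equivalence, so determinacy lets me pass to the negation of $[\forall\sigma_B\,\exists\sigma_A.\ \pi\prefrel_A\Out(s,\sigma_A,\sigma_B)]$, which unfolds to $\exists\sigma_B\,\forall\sigma_A.\ \pi\not\prefrel_A\Out(s,\sigma_A,\sigma_B)$. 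Fix such a strategy $\sigma_B$ for $B$. Now look at the set $O=\{\Out(s,\sigma_A,\sigma_B)\mid\sigma_A\in\Strat{}A\}\subseteq\Play$ of plays $A$ can force against $\sigma_B$. Since $\prefrel_A$ is Noetherian, $O$ has a maximal element; since $\prefrel_A$ is moreover total, any maximal element of $O$ is a greatest element of $O$ (if $a$ is maximal and $b\in O$, then $a\prefrel_A b$ or $b\prefrel_A a$, and in the first case maximality forces $b\prefrel_A a$ as well). So there is $\sigma_A^\star$ with $\Out(s,\sigma_A,\sigma_B)\prefrel_A\Out(s,\sigma_A^\star,\sigma_B)$ for every $\sigma_A\in\Strat{}A$; set $\rho=\Out(s,\sigma_A^\star,\sigma_B)$.

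\emph{Conclusion.} It remains to check that $(\sigma_A^\star,\sigma_B)$ is a Nash equilibrium of $\calG'$ from $s$ and that $\pi\not\prefrel_A\rho$. The inequality $\pi\not\prefrel_A\rho$ is immediate because $\sigma_A^\star\in\Strat{}A$ and $\sigma_B$ was chosen to defeat $\pi$ against every strategy of $A$. For the equilibrium condition: a unilateral deviation of $A$ to some $\sigma_A'$ produces the outcome $\Out(s,\sigma_A',\sigma_B)\prefrel_A\rho$ by maximality of $\rho$ in $O$, so $A$ has no profitable deviation; and in $\calG'$ player $B$'s preference relation identifies all plays, so $B$ has none either. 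Hence $(\sigma_A^\star,\sigma_B)$ witnesses (i). The main obstacles I anticipate are purely bookkeeping ones: carrying out the quantifier manipulation so that determinacy is applied to exactly the right formula, and being careful that Noetherianity alone gives only a \emph{maximal} outcome, so that totality is genuinely needed to upgrade it to a \emph{greatest} outcome (a best response). Almost-well-foundedness of $\prefrel_A$ appears not to be needed for this particular proposition; it is part of the uniform hypothesis bundle used by the results that invoke this proposition.
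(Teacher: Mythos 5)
Your proof is correct, and for the non-trivial direction it takes a genuinely different route from the paper. The paper proves (ii)$\Rightarrow$(i) by first fixing $\pi^+$, a maximal payoff that player~$A$ can ensure (Noetherian plus totality), together with a strategy $\sigma_A$ securing it; it then argues by contradiction that some $\sigma'_B$ caps every $A$-outcome at $\pi^+$: assuming otherwise, it uses almost-well-foundedness to pick a minimal element $\pi'$ of the set of outcomes strictly above $\pi^+$, applies determinacy to that play $\pi'$, and contradicts the maximality of $\pi^+$; the equilibrium is then $(\sigma_A,\sigma'_B)$, whose outcome is equivalent to $\pi^+ \prec_A \pi$. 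You instead apply determinacy directly to the given play $\pi$ (negating both sides of the equivalence), which immediately yields a spoiling strategy $\sigma_B$ with $\pi \not\prefrel_A \Out(\sigma_A,\sigma_B)$ for every $\sigma_A$, and then you use Noetherianity to pick a maximal element of the set $O$ of outcomes against this fixed $\sigma_B$ and totality to upgrade it to a greatest element, i.e.\ a best response $\sigma_A^\star$; all the required checks (nonemptiness of $O$, $B$'s indifference in $\calG'$, maximal $=$ greatest under a total preorder) go through. What each approach buys: yours is shorter and, as you note, dispenses with almost-well-foundedness — a small correction, though: that hypothesis is not merely bundled for other results, the paper's own proof genuinely uses it, but only because of its detour through $\pi'$, so your argument shows it is unnecessary for this proposition (it still plays a role elsewhere, e.g.\ around Proposition~\ref{lem:link-value-exist}). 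The paper's longer argument buys a slightly more informative conclusion: the equilibrium it constructs has $A$ playing an optimal strategy and $B$ a punishing one, so its outcome sits exactly at the best ensurable level $\pi^+$, recovering the classical value picture; your equilibrium outcome is only guaranteed to lie strictly below $\pi$, which is all the statement requires. Your (i)$\Rightarrow$(ii) is the same argument as the paper's, phrased by contradiction and, as you observe, needing only transitivity.
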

\begin{proof}
  In this proof, $\sigma_A$ and $\sigma'_A$ (resp. $\sigma_B$ and
  $\sigma'_B$) refer to player-$A$ (resp. player-$B$) strategies.
  Furthermore we will write $\Out(\sigma_A,\sigma_B)$ instead of
  $\Out_{\calG}(s,(\sigma_A,\sigma_B))$.
 
  We first assume there is a Nash equilibrium $(\sigma_A,\sigma_B)$ in
  $\calG'$ from $s$ such that $\pi \not\prefrel_A
  \Out(\sigma_A,\sigma_B)$. Since $\prefrel_A$ is total,
  $\Out(\sigma_A,\sigma_B) \prec_A \pi$.  Consider a strategy
  $\sigma'_A$ of player $A$ in $\calG$.  As $(\sigma_A,\sigma_B)$ is a
  Nash equilibrium, it holds that $\Out(\sigma'_A,\sigma_B) \prefrel_A
  \Out(\sigma_A,\sigma_B)$, which implies $\Out(\sigma'_A,\sigma_B)
  \prec_A \pi$.  We conclude that condition $(ii)$ holds.

  Assume now property $(ii)$.  As the preference relation is
  Noetherian, we can select $\pi^+$ which is the largest element for
  $\prefrel_A$ which can be ensured by player $A$. Let $\sigma_A$ be a
  corresponding strategy: for every strategy $\sigma_B$, $\pi^+
  \prefrel_A \Out(\sigma_A,\sigma_B)$.  Towards a contradiction,
  assume now that for every strategy $\sigma'_B$, there exists a
  strategy $\sigma'_A$ such that $\pi^+ \prec_A
  \Out(\sigma'_A,\sigma'_B)$. Consider the set $S$ of such outcomes,
  and define $\pi'$ as its minimal element (this is possible since the
  order $\prefrel_A$ is almost-well-founded). Notice then that $\pi^+
  \prec_A \pi'$, and also that for every strategy $\sigma'_B$, there
  exists a strategy $\sigma'_A$ such that $\pi' \prefrel_A
  \Out(\sigma'_A,\sigma'_B)$.  Then, as the game is determined, we get
  that there exists some strategy $\sigma'_A$ such that for all
  strategy $\sigma'_B$, it holds that $\pi' \prefrel_A
  \Out(\sigma'_A,\sigma'_B)$. In particular, strategy $\sigma'_A$
  ensures $\pi'$, which contradicts the maximality of $\pi^+$.
  Therefore, there is some strategy $\sigma'_B$ for which for every
  strategy $\sigma'_A$, $\pi^+ \not\prec_A \Out(\sigma'_A,\sigma'_B)$,
  which means $\Out(\sigma'_A,\sigma'_B) \prefrel_A \pi^+$.  We show
  now that $(\sigma_A,\sigma'_B)$ is a witness for property $(i)$. We
  have seen on the one hand that $\pi^+ \prefrel_A
  \Out(\sigma_A,\sigma'_B)$, and on the other hand that
  $\Out(\sigma_A,\sigma'_B) \prefrel_A \pi^+$. By~hypothesis, $\pi^+
  \prec_A \pi$, which yields $\Out(\sigma_A,\sigma'_B) \prec_A \pi$.
  Pick another strategy $\sigma'_A$ for player $A$.  We have seen that
  $\Out(\sigma'_A,\sigma'_B) \prefrel_A \pi^+$, which implies
  $\Out(\sigma'_A,\sigma'_B) \prefrel_A
  \Out(\sigma_A,\sigma'_B)$. This concludes the proof of $(i)$.
\end{proof}

\begin{remark}
  Any finite total preorder is obviously Noetherian and
  almost-well-founded. Also, any total preorder isomorphic to the set
  of non-positive integers is Noetherian and almost-well-founded. 
  On the other hand, a total preorder isomorphic to $\{1/n \mid n \in
  \mathbb{N}^+\}$ is Noetherian but not almost-well-founded.
\end{remark}

\subsection{Encoding the value problem as a NE existence problem}
\label{sec:link-value-exist}

We prove a similar result for the NE existence problem. In this reduction
however, we have to modify the game by introducing a truly concurrent
move at the beginning of the game.  This is necessary since for
turn-based games with $\omega$-regular winning conditions, there
always exists a Nash equilibrium~\cite{CMJ04}, hence the NE existence
problem would be trivial.

Let
$\calG=\tuple{\Stat,\Agt,\Act,\Allow,\Tab,(\mathord\prefrel_A)_{A\in\Agt}}$ be
a two-player zero-sum game, with players $A$ and~$B$. Given a state~$s$ of
$\calG$and a play~$\pi$ from~$s$, we~define a game $\calG_\pi$ by adding two
states $s_0$ and~$s_1$, in the very same way as in
Figure~\ref{fig-init-module}, on page~\pageref{fig-init-module}. 
From~$s_0$, $A$~and $B$ play a matching-penny
game to either go to the sink state~$s_1$, or to the state~$s$ in the
game~$\calG$.
%, as shown in Figure~\ref{fig-init-module2}. 
We~assume the same hypotheses than in Proposition~\ref{lem:link-value-constr}
for the preference relation~$\prefrel_A$. Let~$\pi^+$ be in the highest
equivalence class for $\prefrel_A$ smaller than~$\pi$ (it exists since
$\prefrel_A$ is Noetherian). In~$\calG_\pi$, player~$B$ prefers runs that end
in~$s_1$: formally, the preference relation~$\prefrel^\pi_B$ of player $B$ in
$\calG_\pi$ is given by $\pi' \prefrel^\pi_B \pi'' \Leftrightarrow \pi'' = s_0
\cdot s_1^\omega \lor \pi' \ne s_0 \cdot s_1^\omega$. On~the other hand,
player~$A$ prefers a path of $\calG$ over going to $s_1$, if and only~if,
it~is at least as good as~$\pi$: formally, the preference relation
$\prefrel^\pi_A$ for player~$A$ in~$\calG_\pi$ is given by $s_0 \cdot \pi'
\prefrel^\pi_A s_0 \cdot \pi'' \Leftrightarrow \pi' \prefrel_A \pi''$, and
$s_0 \cdot s_1^\omega \sim''_A s_0 \cdot \pi^+$.

\begin{proposition}
  \label{lem:link-value-exist}
  Let
  $\calG=\tuple{\Stat,\Agt,\Act,\Allow,\Tab,(\mathord\prefrel_A)_{A\in\Agt} }$
  be a two-player zero-sum game, with players $A$ and $B$, such that:
  \begin{itemize}
  \item the preference relation $\prefrel_{A}$ for player~$A$ is
    total, Noetherian and almost-well-founded;
  \item $\calG$ is determined.
  \end{itemize} 
  Let $s$ be a state and $\pi$ be a play in $\calG$ from $s$. Consider
  the game~$\calG_\pi$ defined above. Then the following two
  properties are equivalent:
  \begin{enumerate}[label=(\roman*)]
  \item there is a Nash equilibrium in $\calG_\pi$ from $s_0$;
  \item player $A$ cannot ensure $\pi$ from $s$ in $\calG$.
  \end{enumerate}
\end{proposition}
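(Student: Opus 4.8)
The plan is to deduce this proposition from Proposition~\ref{lem:link-value-constr} together with an analysis of the matching-penny gadget added at~$s_0$. First I would recall what the initial module does: from~$s_0$, players~$A$ and~$B$ each pick one of two actions; if they agree the play proceeds to~$s_1$ (a~sink, yielding the play $s_0\cdot s_1^\omega$), and if they disagree the play enters the copy of~$\calG$ at~$s$. By the preference definitions, $B$~strictly prefers $s_0\cdot s_1^\omega$ to everything else, while $A$~is indifferent between $s_0\cdot s_1^\omega$ and any continuation of value~$\pi^+$, and prefers (resp.\ disprefers) a continuation $s_0\cdot\pi'$ to $s_0\cdot s_1^\omega$ exactly when $\pi'\succ_A\pi^+$ (resp.\ $\pi'\prec_A\pi^+$), which by the choice of~$\pi^+$ (highest class below~$\pi$) is the same as $\pi'\succeq_A\pi$ (resp.\ $\pi'\not\succeq_A\pi$).

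For the direction $(ii)\Rightarrow(i)$: assume $A$~cannot ensure~$\pi$ in~$\calG$. I~would build a Nash equilibrium in~$\calG_\pi$ from~$s_0$ in which the play goes into~$\calG$ at~$s$. By Proposition~\ref{lem:link-value-constr} (whose hypotheses on $\prefrel_A$ and determinacy are exactly those assumed here), there is a strategy profile $(\sigma_A,\sigma_B)$ in the game~$\calG'$ (the variant of~$\calG$ where $B$~is indifferent) that is a Nash equilibrium from~$s$ with outcome~$\rho$ such that $\pi\not\preorder_A\rho$, i.e.\ (since $\prefrel_A$ is total) $\rho\prec_A\pi$, hence $\rho\preorder_A\pi^+$. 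Now let the players, at~$s_0$, play a mismatching pair of actions so that the play goes to~$s$, and thereafter follow $(\sigma_A,\sigma_B)$. I~claim this is a Nash equilibrium of~$\calG_\pi$: player~$B$ can deviate at~$s_0$ to match~$A$ and reach $s_0\cdot s_1^\omega$, which is $B$'s best possible play, so I must argue that in fact $s_0\cdot s_1^\omega$ is \emph{also} what $B$ gets on the equilibrium path in terms of the preference ordering — and here is where I would be careful: $B$'s preference relation in $\calG_\pi$ makes $s_0\cdot s_1^\omega$ strictly best, so going to~$\calG$ must already be best for~$B$, which forces us instead to have $B$ play into~$\calG$ only if $B$ cannot profitably reach~$s_1$. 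The resolution is the matching-penny structure: whatever $A$ plays at~$s_0$, $B$ has an action matching it, but $A$ can \emph{also} deviate; since the outcome on the equilibrium path is $s_0\cdot\rho$ with $\rho\preorder_A\pi^+$, player~$A$ is indifferent between staying and switching to $s_0\cdot s_1^\omega$, so $A$~has no incentive to change her~$s_0$-action, and with $A$'s action fixed, $B$ reaching~$s_1$ would require $B$ to match — but then we must check no player gains. Concretely, one shows the only candidate equilibria are: (a)~a profile where the play goes to~$s_1$, which is an equilibrium iff $A$ cannot do better than~$\pi^+$ against $B$'s $s_0$-action continued by $B$'s strategy in~$\calG$, i.e.\ iff $A$ cannot ensure~$\pi$; and (b)~is ruled out because $B$ always strictly prefers~$s_1$. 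So the clean statement is: the equilibrium we build sends the play to~$s_1$, using $\sigma_B$ as $B$'s punishing strategy inside~$\calG$, and $(ii)$ guarantees $A$ has no profitable deviation into~$\calG$.

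For the direction $(i)\Rightarrow(ii)$, I~would argue the contrapositive: if $A$~\emph{can} ensure~$\pi$ in~$\calG$ via some strategy~$\hat\sigma_A$, then no Nash equilibrium exists in~$\calG_\pi$ from~$s_0$. In any strategy profile, consider the action played at~$s_0$: if the play goes to~$s_1$, then $A$~can deviate at~$s_0$ to mismatch $B$'s action, entering~$\calG$, and then play~$\hat\sigma_A$, obtaining a play $s_0\cdot\rho'$ with $\pi\preorder_A\rho'$, hence $\rho'\succ_A\pi^+$, which is a strict improvement for~$A$ over $s_0\cdot s_1^\omega$; if instead the play goes into~$\calG$, then $B$~can deviate at~$s_0$ to match $A$'s action and reach $s_0\cdot s_1^\omega$, which is a strict improvement for~$B$. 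Either way there is a profitable unilateral deviation, so there is no Nash equilibrium.

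The main obstacle, and the step that needs the most care, is the bookkeeping around the matching-penny gadget in the $(ii)\Rightarrow(i)$ direction: one has to pin down precisely what it means for a profile at~$s_0$ to be stable given that $B$ always has a matching response available, and to phrase the argument so that $(ii)$ — equivalently, by Proposition~\ref{lem:link-value-constr}, the existence of a Nash equilibrium of~$\calG'$ with outcome strictly below~$\pi$ for~$A$ — translates exactly into the absence of a profitable $A$-deviation into~$\calG$. Once that correspondence is set up, the two implications are short.
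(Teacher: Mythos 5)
Your proposal is correct and follows essentially the same route as the paper: both directions hinge on Proposition~\ref{lem:link-value-constr}, with the equilibrium in~$\calG_\pi$ being the profile that matches at~$s_0$ (outcome $s_0\cdot s_1^\omega$) while $B$ uses $\sigma_B$ from the equilibrium of~$\calG'$ as the punishing threat inside~$\calG$, and the converse being the same two-case analysis (if the play ends in~$s_1$, player~$A$ deviates and ensures~$\pi$; if it enters~$\calG$, player~$B$ deviates to reach~$s_1$). Your initial attempt to make the mismatching profile an equilibrium is a detour, but you correctly self-correct to the paper's construction, so there is no gap in substance.
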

\noindent In particular, in a given class of games, if the hardness proof of the
value problem involves a game which satisfies the hypotheses of the
proposition, and if $\calG_\pi$ belongs to that class, then the
NE existence problem is at least as hard as the complement of the value
problem.

\begin{proof}
  Assume that player~$A$ cannot ensure at least $\pi$ from~$s$
  in~$\calG$, then according to
  Proposition~\ref{lem:link-value-constr}, there is a Nash equilibrium
  $(\sigma_A,\sigma_B)$ in the game~$\calG'$ of
  Proposition~\ref{lem:link-value-constr} with outcome $\rho$ such
  that $\pi \not\prefrel_A \rho$.  Consider the strategy profile
  $(\sigma^\pi_A,\sigma^\pi_B)$ in~$\calG_\pi$ that consists in
  playing the same action for both players in~$s_0$, and then if the
  path goes to~$s$, to play according to $(\sigma_A,\sigma_B)$.
  Player~$B$ gets her best possible payoff under that strategy
  profile. If $A$ could change her strategy to get a payoff better
  than $s_0 \cdot \pi^+$, then it would induce a strategy in $\calG'$
  giving her a payoff better than~$\rho$ (when played with strategy
  $\sigma_B$), which contradicts the fact that $(\sigma_A,\sigma_B)$
  is a Nash equilibrium in~$\calG'$. Therefore,
  $(\sigma^\pi_A,\sigma^\pi_B)$ is a Nash equilibrium in~$\calG_\pi$.

  Conversely, assume that $A$ can ensure $\pi$ from $s$ in $\calG$,
  and assume towards a contradiction that there is a Nash equilibrium
  $(\sigma^\pi_A,\sigma^\pi_B)$ in $\calG_\pi$ from $s_0$. Then
  $\Out_{\calG_\pi}(\sigma^\pi_A,\sigma^\pi_B)$ does not end in $s_1$,
  otherwise player $A$ could improve by switching to $s$ and then
  playing according to a strategy which ensures $\pi$. Also,
  $\Out_{\calG_\pi}(\sigma^\pi_A,\sigma^\pi_B)$ cannot end in $\calG$
  either, otherwise player $B$ would improve by switching to $s_1$. We
  get that there is no Nash equilibrium in $\calG_\pi$ from $s_0$,
  which concludes the proof.
\end{proof}

\subsection{Encoding the constrained NE existence problem as an NE 
  existence problem}
\label{sec:link-constr-exist}

The next proposition makes a link between the existence of a Nash
equilibrium where a player gets a payoff larger than some bound and
the (unconstrained) existence of a Nash equilibrium in a new game.
% $E(\calG,A_i,A_j,\rho)$, where $\rho$ is a play in $\calG$. 
This will allow, in some specific cases, to infer hardness results
from the constrained NE existence problem to the NE existence problem.

The construction is inspired by the previous one, but it applies to a
game with at least two players, and it applies to any two selected
players as follows.  Let
$\calG=\tuple{\Stat,\Agt,\Act,\Allow,\Tab,(\mathord\prefrel_A)_{A\in\Agt} }$
be a concurrent game, $s$ be a~state of $\calG$, $\rho$ be a play from
$s$, and $A_i$ and~$A_j$ be two distinct players.  We define the new
game $E(\calG,A_i,A_j,\rho)$ again in the same way as on
Figure~\ref{fig-init-module}.  Now, in~$s_0$, the two players $A_i$~and
$A_j$~play a matching-penny game to either go to the sink state~$s_1$,
or to state~$s$ in game~$\calG$.

For player~$A_j$, the preference relation in $E(\calG,A_i,A_j,\rho)$
is given by $\prefrel'_{A_j}$ such that $s_0\cdot s_1^\omega
\prec'_{A_j} s_0 \cdot \pi$ and $s_0 \cdot \pi \prefrel'_{A_j} s_0
\cdot \pi' \Leftrightarrow \pi \prefrel_{A_j} \pi'$, for any path
$\pi$ and $\pi'$ from $s$ in $\calG$. For player $A_i$ the preference
relation is $s_0 \cdot \pi \prefrel'_{A_i} s_0 \cdot \pi'
\Leftrightarrow \pi \prefrel_{A_i} \pi'$, for any path $\pi$ and
$\pi'$ from $s$ in $\calG$, and $s_0\cdot s_1^\omega \sim_{A_i} s_0
\cdot \rho$.  For any other player~$A_k$, the preference relation
$E(\calG,A_i,A_j,\rho)$ is given by $s_0 \cdot \pi \prefrel'_{A_k} s_0
\cdot \pi' \Leftrightarrow \pi \prefrel_{A_k} \pi'$ for any path $\pi$
and $\pi'$ from $s$ in $\calG$, and $s_0\cdot s_1^\omega \sim_{A_k}
s_0 \cdot \rho$.

\begin{proposition}\label{lem:link-constr-exist}
  Let
  $\calG=\tuple{\Stat,\Agt,\Act,\Allow,\Tab,(\mathord\prefrel_A)_{A\in\Agt} }$
  be a concurrent game, let $s$ be a state of $\calG$, and $A_i$ and
  $A_j$ be two distinct players participating to $\calG$. Pick two
  plays $\pi$ and $\rho$ from $s$ such that $\rho \prefrel_{A_i}
  \pi$. If there is a Nash equilibrium in $\calG$ whose outcome
  is~$\pi$, then there is a Nash equilibrium in
  $E(\calG,A_i,A_j,\rho)$ whose outcome is ${s_0 \cdot \pi}$.
  Reciprocally, if there is a Nash equilibrium in
  $E(\calG,A_i,A_j,\rho)$ whose outcome is $s_0 \cdot \pi$, then there
  is a Nash equilibrium in $\calG$ whose outcome is~$\pi$.
\end{proposition}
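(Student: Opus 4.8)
The plan is to prove each implication by transferring a strategy profile between $\calG$ and $E(\calG,A_i,A_j,\rho)$ along the evident bijection between histories of $\calG$ starting in $s$ and histories of $E(\calG,A_i,A_j,\rho)$ starting in $s_0$ that go through $s$ (it maps $h$ to $s_0\cdot h$); all the content is in the analysis of unilateral deviations at the fresh state $s_0$. Recall that, by the matching-penny module of Figure~\ref{fig-init-module}, the move leaving $s_0$ depends only on the actions chosen by $A_i$ and $A_j$.

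\emph{From $\calG$ to $E(\calG,A_i,A_j,\rho)$.} Let $\sigma_\Agt$ be a Nash equilibrium of $\calG$ from $s$ with outcome $\pi$. Define $\bar\sigma_\Agt$ in $E(\calG,A_i,A_j,\rho)$ by letting $A_i$ and $A_j$ play, at $s_0$, two actions that send the play to $s$ (the other players playing arbitrarily there), and letting every player follow $\sigma_\Agt$ afterwards; its outcome is $s_0\cdot\pi$. Consider a deviation of a player $A$. If the deviated play still enters $\calG$, then removing the $s_0$-prefix exhibits a deviation against $\sigma_\Agt$ in $\calG$, which cannot improve $A$'s payoff, hence — since on plays entering $\calG$ the preorder $\prefrel'_A$ agrees with $\prefrel_A$ for every player — it does not improve $A$'s payoff in $E(\calG,A_i,A_j,\rho)$ either. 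The remaining case is a deviation of $A_i$ or $A_j$ sending the play to $s_1$: for $A_j$ this yields $s_0\cdot s_1^\omega\prec'_{A_j}s_0\cdot\pi$, no improvement; for $A_i$ this yields $s_0\cdot s_1^\omega\sim_{A_i}s_0\cdot\rho$, and the hypothesis $\rho\prefrel_{A_i}\pi$ is exactly what gives $s_0\cdot s_1^\omega\prefrel'_{A_i}s_0\cdot\pi$, again no improvement. Hence $\bar\sigma_\Agt$ is a Nash equilibrium.

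\emph{From $E(\calG,A_i,A_j,\rho)$ to $\calG$.} Let $\tau_\Agt$ be a Nash equilibrium of $E(\calG,A_i,A_j,\rho)$ from $s_0$ with outcome $s_0\cdot\pi$. Since $\pi$ is a play of $\calG$ from $s$, this outcome goes through $s$, so $A_i$ and $A_j$ mismatch at $s_0$. Define $\sigma_\Agt$ on $\Hist_\calG(s)$ by $\sigma_A(h)=\tau_A(s_0\cdot h)$; it is a legal profile of $\calG$ with outcome $\pi$ from $s$. Given a deviation $\sigma'_A$ of a player $A$ in $\calG$, consider the strategy of $E(\calG,A_i,A_j,\rho)$ that plays $\tau_A$ at $s_0$ (so the play still reaches $s$, as $A_i$ and $A_j$ still mismatch there) and plays $\sigma'_A$ afterwards; when the other players keep $\tau_\Agt$, its outcome is $s_0$ prepended to the $\calG$-outcome of $A$'s deviation against $\sigma_\Agt$. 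By the Nash property of $\tau_\Agt$ this is $\prefrel'_A s_0\cdot\pi$, and since $\prefrel'_A$ agrees with $\prefrel_A$ on plays entering $\calG$, the $\calG$-outcome of $A$'s deviation is $\prefrel_A\pi$. Thus $\sigma_\Agt$ is a Nash equilibrium of $\calG$ with outcome $\pi$.

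I do not anticipate a genuine difficulty: the whole argument is a gluing argument, and the only non-bookkeeping ingredient is the deliberately asymmetric design of the preferences at $s_0$ ($s_0\cdot s_1^\omega$ equivalent to $\rho$ for $A_i$ and for every player but $A_j$, strictly worse than entering $\calG$ for $A_j$), together with the hypothesis $\rho\prefrel_{A_i}\pi$, which is used exactly once, to neutralise the deviation of $A_i$ towards $s_1$ in the first implication.
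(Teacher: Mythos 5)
Your proof is correct and follows essentially the same route as the paper's: build the profile in $E(\calG,A_i,A_j,\rho)$ by mismatching at $s_0$ and following the $\calG$-equilibrium (using $\rho\prefrel_{A_i}\pi$ exactly to neutralise $A_i$'s deviation to $s_1$), and conversely restrict the extended-game equilibrium to $\calG$. You merely spell out the deviation analysis (deviations staying inside $\calG$, and the lifting of $\calG$-deviations in the converse direction) that the paper leaves implicit.
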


\begin{proof}
  Assume that there is a Nash equilibrium $\sigma_\Agt$ in
  $\calG$ with outcome $\pi$ such that $\rho \prefrel_{A_i} \pi$.
  Then $s_0 \cdot s_1^\omega \prefrel'_{A_i} s_0 \cdot \pi$.  Consider
  the strategy profile in $E(\calG,A_i,A_j,\rho)$ that consists for
  $A_i$ and~$A_j$ in playing different actions in $s_0$ and when the
  path goes to~$s$, to play according to~$\sigma_\Agt$.
  Players $A_i$ and~$A_j$ have no interest in changing their
  strategies in~$s_0$, since for~$A_j$ all plays of~$\calG$ are better
  than $s_0 \cdot s_1^\omega$, and for~$A_i$ the play $s_0 \cdot \pi$
  is better than $s_0 \cdot s_1^\omega$. Hence, this is a Nash
  equilibrium in game $E(\calG,A_i,A_j,\rho)$.

  Reciprocally, if there is a Nash equilibrium in $E(\calG,A_i,A_j,\rho)$, its
  outcome cannot end in~$s_1$, since $A_j$ would have an interest in changing
  her strategy in~$s_0$ (all~plays of~$\calG$ are then better for her). The
  strategies followed from~$s$ thus defines a Nash equilibrium in~$\calG$.
\end{proof}

If we consider a class of games such that $E(\calG,A_i,A_j,\rho)$
belongs to that class when $\calG$ does, then the NE existence problem is
then at least as hard as the constrained NE existence problem. Note
however that the reduction assumes lower bounds on the payoffs, and we
do not have a similar result for upper bounds on the payoffs. For
instance, as we will see in Section~\ref{sec:buchi}, for a conjunction
of B\"uchi objectives, we do not know whether the NE existence problem is
in \P~(as the value problem) or \NP-hard (as is the existence of an
equilibrium where all the players are losing).

\section{The suspect game}
\label{sec:suspect}

In this section, we construct an abstraction of a multi-player
game~$\calG$ as a two-player zero-sum game~$\calH$, such that there is
a correspondence between Nash equilibria in~$\calG$ and winning
strategies in~$\calH$ (formalised in forthcoming
Theorem~\ref{thm:eq-win}).  This transformation does not require the
game to be finite and is conceptually much deeper than the reductions
given in the previous section; it will allow us to use algorithmic
techniques from zero-sum games to compute Nash equilibria and hence
solve the value and (constrained) NE existence problems in various
settings.

\subsection{Construction of the suspect game}

We fix a concurrent game
$\calG=\tuple{\Stat,\Agt,\Act,\Allow,\Tab,(\mathord\prefrel_A)_{A\in\Agt} }$
for the rest of the section, and begin with introducing a few extra
definitions.

\begin{definition}
  \label{def:trigger}
  A strategy profile~$\sigma_\Agt$ is a \emph{trigger profile} for a
  play~$\pi$ from some state~$s$ if, for every player~$A\in \Agt$, for
  every strategy~$\sigma'_A$ of player~$A$, the~path~$\pi$ is at
  least as good as the outcome of $\replaceter \sigma A {\sigma'_A}$
  from~$s$ (that~is, $\Out(s,\replaceter \sigma A {\sigma'_A})
  \prefrel_A \pi$).
\end{definition}

The following result is folklore and a direct consequence of the
definition:
\begin{lemma}
  A Nash equilibrium is a trigger profile for its outcome.
  Reciprocally, a strategy profile which is trigger profile for its
  outcome~is a Nash equilibrium.
\end{lemma}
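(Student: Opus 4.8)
The plan is to unfold the two definitions and observe that, once the distinguished play is taken to be the outcome of the profile itself, the defining condition of ``Nash equilibrium'' and the defining condition of ``trigger profile'' are literally the same statement. Throughout, fix a state $s$, and recall that the profile $\sigma_\Agt$ has a single infinite outcome from $s$; write $\pi = \Out(s,\sigma_\Agt)$ for that outcome.

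First I would prove that a Nash equilibrium is a trigger profile for its outcome. Assume $\sigma_\Agt$ is a Nash equilibrium from $s$. By Definition~\ref{def-NE}, for every player $A \in \Agt$ and every strategy $\sigma'$ for $A$ we have $\Out(s,\replaceter \sigma A {\sigma'}) \prefrel_A \Out(s,\sigma_\Agt) = \pi$. This is exactly the requirement of Definition~\ref{def:trigger} that $\pi$ be at least as good, for each player, as the outcome of every unilateral deviation; hence $\sigma_\Agt$ is a trigger profile for $\pi$ from $s$.

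For the converse, assume $\sigma_\Agt$ is a trigger profile for its own outcome $\pi$ from $s$. By Definition~\ref{def:trigger}, for every player $A$ and every strategy $\sigma'$ for $A$ we have $\Out(s,\replaceter \sigma A {\sigma'}) \prefrel_A \pi$, and since $\pi = \Out(s,\sigma_\Agt)$ this is precisely the Nash equilibrium condition of Definition~\ref{def-NE}; hence $\sigma_\Agt$ is a Nash equilibrium from $s$.

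There is no genuine obstacle here: the argument is a direct substitution in both directions. The only point requiring a little care is the bookkeeping of the initial state --- the ``outcome'' of the profile and the play with respect to which the trigger condition is stated must be taken from the same state $s$ --- after which both implications are immediate.
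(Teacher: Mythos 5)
Your proof is correct and matches the paper's treatment: the paper gives no explicit argument, calling the lemma folklore and a direct consequence of the definitions, which is exactly the substitution you carry out. Nothing further is needed.
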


\begin{definition}[\cite{BBM10a}]
  Given two states~$s$ and~$s'$, and a move~$m_{\Agt}$, the set of
  \newdef{suspect players} for~$(s,s')$ and~$m_{\Agt}$ is the set
  \[
  \Susp((s,s'),m_{\Agt}) = \{A \in\Agt \mid \exists\,m' \in
  \Allow(s,A).\ \Tab(s,\replaceter m A {m'}) = s'\}.
  \]
  Given a path~$\rho$ and a strategy profile~$\sigma_\Agt$, the set of
  suspect players for~$\rho$ and~$\sigma_\Agt$ is the set of players
  that are suspect along each transition of~$\rho$, \ie, it~is the set
  \[
  \Susp(\rho,\sigma_\Agt) = \Bigl\{A \in\Agt \Bigm| \forall i<
  \size\rho.\
  A\in\Susp\bigl((\rho_{=i},\rho_{=i+1}),\sigma_\Agt(\rho_{\leq
    i})\bigr)\Bigr\}.
  \]
\end{definition}
Intuitively, player~$A\in\Agt$ is a suspect for transition~$(s,s')$
and move~$\indicebis mA\Agt$ if she can unilaterally change her action
to activate the transition $(s,s')$: if~$s' \ne \Tab(s,
\indicebis mA\Agt)$, then this may be due to a deviation
from~$m_{\Agt}$ of any of the players in the set
$\Susp((s,s'),m_{\Agt})$, and no one else. If $s' = \Tab(s, \indicebis
mA\Agt)$, it~may simply be the case that no one has deviated, so
everyone is a potential suspect for the next moves.
Similarly, we easily infer that player~$A$ is in
$\Susp(\rho,\sigma_\Agt)$ if, and only if, there is a
strategy~$\sigma'_A$ such that $\Out(\stat,\replaceter \sigma A
{\sigma'_A})=\rho$.

Note that the notion of suspect players requires moves and arenas to
be deterministic, and therefore everything which follows assumes the
restriction to pure strategy profiles and to deterministic game
structures.

\bigskip 
We fix a play $\pi$ in~$\calG$. From game~$\calG$ and play~$\pi$, we build the
\emph{suspect game}~$\calH(\calG,\pi)$,
% (which we~denote~$\calH$ when~$\calG$ and $\pi$ are clear from the context), 
which is a two-player turn-based game defined as follows.
The players in~$\calH(\calG,\pi)$ are named~\Eve and \Adam.  Since~$\calH(\calG,\pi)$ is
turn-based, its state space can be written as the disjoint union of
the~set~$V_\shortEve$ controlled by~\Eve, which is (a~subset~of)
$\Stat \times 2^\Agt$, and the set~$V_\shortAdam$ controlled by~\Adam,
which~is (a~subset~of) $\Stat \times 2^\Agt \times \Act^\Agt$.  The
game is played in the following way: from a configuration~$(s,P)$
in~$V_\shortEve$, \Eve chooses a legal move~$m_\Agt$ from~$s$; the
next state is $(s,P,m_\Agt)$; then \Adam chooses some state~$s'$
in~$\Stat$, and the new configuration is~$(s',P
\cap\Susp((s,s'),m_\Agt))$.  In~particular, when the state~$s'$ chosen
by \Adam is such that $s'=\Tab(s,m_\Agt)$ (we~say that \Adam
\newdef{obeys}~\Eve when this is the case), then the new configuration
is~$(s',P)$.

We define projections $\Sproj_1$ and~$\Sproj_2$ from $V_\shortEve$ on
$\Stat$ and~$2^{\Agt}$, resp., by $\Sproj_1(s,P) = s$ and
$\Sproj_2(s,P)=P$. We~extend these projections to paths in a
natural~way (but only using \Eve's states in order to avoid
stuttering), letting $\Sproj_1((s_0,P_0) \cdot (s_0,P_0,m_0) \cdot
(s_1,P_1)\cdots ) = s_0 \cdot s_1 \cdots$.
For any play~$\rho$, $\Sproj_2(\rho)$ (seen as a sequence of sets of
players of~$\calG$) is non-increasing, therefore its
limit~$\limitpi2(\rho)$ is well defined.  We notice that if
$\limitpi2(\rho) \ne \emptyset$, then $\Sproj_1(\rho)$ is a play
in~$\calG$.  An~outcome~$\rho$ is \emph{winning for~\Eve}, if for
all~$A\in \limitpi2(\rho)$, it~holds ${\Sproj_1(\rho) \prefrel_A \pi}$.
The~\emph{winning region} $W(\calG, \pi)$ (later simply denoted
by~$W$ when $\calG$ and $\pi$ are clear from the context) is the set
of configurations of~$\calH(\calG, \pi)$ from which \Eve has a
winning strategy.
Intuitively \Eve tries to have the players play a Nash equilibrium,
and \Adam tries to disprove that it is a Nash equilibrium, by~finding
a possible deviation that improves the payoff of one of the players.

\subsection{Correctness of the suspect-game construction}

The next lemma establishes a correspondence between winning strategies
in $\calH(\calG,\pi)$ and trigger profiles (and therefore Nash equilibria) in
$\calG$. 
\begin{lemma}\label{lem:suspect-game}\label{lemma-suspectgame}
  Let $s$ be a state of $\calG$ and $\pi$ be a play from $s$
  in~$\calG$. The following two conditions are equivalent:
  \begin{itemize}
  \item \Eve has a winning strategy in~$\calH(\calG,\pi)$
    from~$(s,\Agt)$, and its outcome~$\rho'$ from~$(s,\Agt)$ when
    \Adam~obeys~\Eve is such that $\Sproj_1(\rho')=\rho$;
  \item there is a trigger profile for $\pi$ in~$\calG$ from
    state~$s$ whose outcome from~$s$ is~$\rho$.
  \end{itemize}
\end{lemma}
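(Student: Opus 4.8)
The plan is to prove the two implications separately, exhibiting an explicit translation between \Eve's strategies in $\calH(\calG,\pi)$ and strategy profiles in $\calG$, using the suspect-set bookkeeping to match up deviations.

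For the direction from a winning strategy to a trigger profile: suppose $\sigma_\shortEve$ is winning for \Eve from $(s,\Agt)$, and let $\rho'$ be the outcome when \Adam obeys, with $\Sproj_1(\rho')=\rho$. I~would define a strategy profile $\sigma_\Agt$ in $\calG$ as follows. Along the ``main'' outcome $\rho$ (as long as the visited history is a prefix of $\Sproj_1$ of an \Adam-obeying play), let each player play the move that $\sigma_\shortEve$ suggests at the corresponding \Eve-configuration; so $\Out(s,\sigma_\Agt)=\rho$. If a deviation occurs, producing some history $h$ whose last step $(\last(h_{< k}),\last(h))$ was not the \Tab-image of the suggested move, this corresponds in $\calH$ to \Adam not obeying; there is a unique \Eve-configuration $(\last(h),P)$ reached, with $P$ the set of players still suspect along~$h$, and I~let the players continue to follow $\sigma_\shortEve$ from there. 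The key verification: for any single-player deviation $\replaceter\sigma A{\sigma'_A}$, its outcome $\rho_A$ in $\calG$ is realisable as $\Sproj_1$ of a play in $\calH$ consistent with $\sigma_\shortEve$ in which \Adam picks exactly the states of $\rho_A$ — and along that play $A$ remains a suspect forever (using the characterisation ``$A\in\Susp(\rho_A,\sigma_\Agt)$ iff $\exists\sigma'_A.\ \Out=\rho_A$'' noted just before the lemma). Since $\sigma_\shortEve$ is winning, $\limitpi2$ of that play is winning for \Eve, and $A$ belongs to it, so $\rho_A=\Sproj_1(\text{that play})\prefrel_A\pi$. Hence $\sigma_\Agt$ is a trigger profile for $\pi$ with outcome $\rho$.

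For the converse: given a trigger profile $\sigma_\Agt$ for $\pi$ from $s$ with outcome $\rho$, define $\sigma_\shortEve$ by: at an \Eve-configuration $(s',P)$ reached along a play of $\calH$ whose $\Sproj_1$-history is $h$, let \Eve suggest the move $\sigma_\Agt(h)$. One checks this is well defined (the $\Sproj_1$-history determines the \Agt-history, so $\sigma_\Agt(h)$ makes sense) and that when \Adam obeys, the outcome projects to $\rho$. To see it is winning: take any outcome $\rho''$ of $\sigma_\shortEve$; I~need that every $A\in\limitpi2(\rho'')$ satisfies $\Sproj_1(\rho'')\prefrel_A\pi$. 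By construction $A$ is suspect along every transition of $\rho''$, which (again by the suspect characterisation) means there is $\sigma'_A$ with $\Out(s,\replaceter\sigma A{\sigma'_A})=\Sproj_1(\rho'')$; since $\sigma_\Agt$ is a trigger profile, $\Sproj_1(\rho'')\prefrel_A\pi$, as desired.

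The main obstacle I~anticipate is the careful handling of \emph{multiple} deviations or deviations that ``rejoin'' the main line: \Eve's strategy must be defined on \emph{all} histories of $\calH$, including those in which \Adam disobeys several times, and one must argue that the suspect set $P$ attached to the \Eve-configuration is exactly the set of players who could have produced the observed sequence of states — and that restricting attention to single-player deviations of $\sigma_\Agt$ (as the trigger-profile condition does) suffices, because a play of $\calH$ in which $A$ stays suspect forever corresponds precisely to a single-player deviation by~$A$. Getting the indexing of histories/prefixes right (the off-by-one between $\calH$-plays with their stuttering \Adam-states and their $\Sproj_1$-images) is the fiddly part; the conceptual content is entirely carried by the equivalence ``$A\in\Susp(\rho,\sigma_\Agt)\iff\exists\sigma'_A.\ \Out(s,\replaceter\sigma A{\sigma'_A})=\rho$'' stated right before the lemma.
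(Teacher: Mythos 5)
Your proposal is correct and follows essentially the same route as the paper's proof: both directions translate between the two games by lifting a history of $\calG$ to the unique $\sigma_\shortEve$-compatible play of $\calH(\calG,\pi)$ in which \Adam picks exactly the observed states (resp.\ by setting $\sigma_\shortEve(h)=\sigma_\Agt(\Sproj_1(h))$), and the winning/trigger verification rests, exactly as in the paper, on the characterisation that $A$ stays suspect along a play iff that play is the outcome of a unilateral deviation by~$A$. The ``obstacles'' you anticipate (multiple or rejoining deviations, indexing) are handled by this uniform lifting, just as in the paper.
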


\begin{proof}
  Assume there is a winning strategy~$\sigma_\shortEve$ for \Eve
  in~$\calH(\calG,\pi)$ from~$(s,\Agt)$, whose outcome from~$(s,\Agt)$ when \Adam
  obeys \Eve is~$\rho'$ with $\Sproj_1(\rho') = \rho$. We~define the
  strategy profile~$\sigma_\Agt$ according to the actions played
  by~\Eve.
  Pick a history~$g=s_1 s_2\cdots s_{k+1}$, with $s_1=s$.  Let~$h$ be
  the outcome of~$\sigma_\shortEve$ from~$s$ ending in a state
  of~$V_\shortEve$ and such that $\Sproj_1(h)=s_1\cdots s_k$. This
  history is uniquely defined as follows: the first state of~$h$
  is~$(s_1,\Agt)$, and if its $(2i+1)$-st state is~$(s_i,P_i)$, then
  its~$(2i+2)$-nd state is $(s_i,P_i,\sigma_\shortEve(h_{\leq 2i+1}))$
  and its $(2i+3)$-rd state is $(s_{i+1}, P_i\cap
  \Susp((s_i,s_{i+1}),\sigma_\shortEve(h_{\leq 2i+1})))$.
  Now, write $(s_k,P_k)$ for the last state of~$h$, and let $h'=h\cdot
  (s_k,P_k,\sigma_\shortEve(h))\cdot (s_{k+1}, P_k\cap
  \Susp((s_k,s_{k+1}), \sigma_\shortEve(h)))$. Then we define
  $\sigma_\Agt(g)=\sigma_\shortEve(h')$.  Notice that when $g\cdot s$
  is a prefix of~$\Sproj_1(\rho')$, then $g\cdot s\cdot
  \sigma_\Agt(g\cdot s)$ is also a prefix of~$\Sproj_1(\rho')$. In
  particular, $\Out(s,\sigma_\Agt) = \Sproj_1(\rho') = \rho$.

  We~now prove that $\sigma_\Agt$ is a trigger profile for~$\pi$.
  Pick a player~$A\in \Agt$, a~strategy~$\sigma'_A$ for player~$A$,
  and let $g=\Out(\stat, \replaceter \sigma A {\sigma'_A})$.  With a
  play~$g$, we~associate a play~$h$ in~$\calH(\calG,\pi)$ in the same way as
  above. Then player~$A$ is a suspect along all the transitions
  of~$g$, so that she~belongs to~$\limitpi2(h)$. Now,
  as~$\sigma_\shortEve$~is winning, $\Sproj_1(h) \prefrel_A \pi$, which
  proves that $\sigma_\Agt$ is a trigger profile.

  \medskip Conversely, assume that $\sigma_\Agt$ is a trigger profile
  for~$\pi$ whose outcome is~$\rho$, and define the
  strategy~$\sigma_\shortEve$ by $\sigma_\shortEve(h) =
  \sigma_\Agt(\Sproj_1(h))$. Notice that the outcome~$\rho'$
  of~$\sigma_\shortEve$ when \Adam obeys~\Eve satisfies
  $\Sproj_1(\rho')=\rho$.

  Let~$\eta$ be an outcome of~$\sigma_\shortEve$ from~$\stat$,
  and~$A\in \limitpi2(\eta)$. Then $A$~is a suspect for each
  transition along~$\Sproj_1(\eta)$, which means that for all~$i$, there
  is a move~$m^A_i$ such that
  \[
  \Sproj_1(\eta_{=i+1}) = \Tab(\Sproj_1(\eta_{=i}),
  \sigma_\Agt(\Sproj_1(\eta_{\leq i}))[A\mapsto m^A_i]).
  \]
  Therefore there is a strategy $\sigma_A'$ such that
  $\Sproj_1(\eta)=\Out(s,\sigma_\Agt[A\mapsto\sigma_A'])$. Since
  $\sigma_\Agt$ is a trigger profile for~$\pi$, it holds that
  $\Sproj_1(\eta) \prefrel_A \pi$. As~this holds for any~$A\in
  \limitpi2(\eta)$, $\sigma_\shortEve$ is winning.
\end{proof}

We now state the correctness theorem for the suspect game construction.
\begin{theorem}\label{thm:eq-win}
  Let
  $\calG=\tuple{\Stat,\Agt,\Act,\Allow,\Tab,(\mathord\prefrel_A)_{A\in\Agt} }$
  be a concurrent game, $s$ be a state of~$\calG$, and $\pi$ be a play
  in~$\calG$.  The following two conditions are equivalent:
  \begin{itemize}
  \item there is a Nash equilibrium $\sigma_\Agt$ from~$s$ in~$\calG$
    whose outcome is~$\pi$.
  \item there is a play~$\rho$ from $(s,\Agt)$ in~$\calH(\calG,\pi)$, %(\calG,\pi)$,
    \begin{enumerate}
    \item \label{cond:proj} such that $\Sproj_1(\rho)=\pi$;
    \item \label{cond:obey} along which \Adam always obeys~\Eve; and 
    \item\label{cond:win} such that for all indices~$i$, there is a
      strategy $\sigma^i_\shortEve$ for~\Eve, for which any play in
      $\rho_{\le i} \cdot \Out(\rho_{=i}, \sigma^i_\shortEve)$ is
      winning for~\Eve.
    \end{enumerate}
  \end{itemize}
\end{theorem}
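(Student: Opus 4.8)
The plan is to bootstrap off Lemma~\ref{lem:suspect-game}, which already characterises Nash equilibria (via trigger profiles) in terms of winning strategies in $\calH(\calG,\pi)$. The theorem is essentially a restatement of that correspondence, but phrased so that the play $\pi$ for which we build the suspect game is exactly the outcome we want the equilibrium to have, and with the winning condition localised to each position along the obedient outcome. The first direction is straightforward: given a Nash equilibrium $\sigma_\Agt$ from $s$ with outcome $\pi$, the lemma (applied with this $\pi$) produces a winning strategy $\sigma_\shortEve$ for \Eve from $(s,\Agt)$ whose obedient outcome $\rho$ projects to $\pi$. I~would take this $\rho$ as the witness for the second condition: property~\eqref{cond:proj} is $\Sproj_1(\rho)=\pi$, property~\eqref{cond:obey} holds by construction, and for property~\eqref{cond:win} I~would simply use, at each index~$i$, the strategy $\sigma_\shortEve$ itself (more precisely, its residual after the history $\rho_{\le i}$, which is still a winning strategy from $\rho_{=i}=(s_i,P_i)$ since winning regions are closed under taking residuals of winning strategies along their own outcomes). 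Every play in $\rho_{\le i}\cdot\Out(\rho_{=i},\sigma^i_\shortEve)$ is then an outcome of the winning strategy $\sigma_\shortEve$ from $(s,\Agt)$, hence winning for~\Eve.

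For the converse, suppose we are given a play $\rho$ in $\calH(\calG,\pi)$ satisfying conditions~\eqref{cond:proj}--\eqref{cond:win}. The goal is to assemble a single winning strategy $\sigma_\shortEve$ for \Eve from $(s,\Agt)$ whose obedient outcome is $\rho$, so that Lemma~\ref{lem:suspect-game} then yields a trigger profile for $\pi$ with outcome $\Sproj_1(\rho)=\pi$, which is a Nash equilibrium. The natural construction is: as long as \Adam obeys \Eve, follow $\rho$ (i.e.\ at position $\rho_{=i}$ play the move that $\rho$ prescribes, which is legal since $\rho$ is a genuine play of $\calH$); as soon as \Adam first disobeys at some index~$i$ (producing a configuration off the spine $\rho$), switch permanently to the strategy $\sigma^i_\shortEve$ given by~\eqref{cond:win}. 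One must check this is well-defined: the history seen by \Eve up to the first disobedience at step~$i$ is exactly $\rho_{\le i}$ (since before that \Adam obeyed), and $\sigma^i_\shortEve$ is a strategy from $\rho_{=i}$, so the concatenation makes sense and the resulting play lies in $\rho_{\le i}\cdot\Out(\rho_{=i},\sigma^i_\shortEve)$, hence is winning. If \Adam never disobeys, the outcome is $\rho$ itself, which is winning by condition~\eqref{cond:win} applied at $i=0$ (every play in $\Out((s,\Agt),\sigma^0_\shortEve)$ is winning, and $\rho$ is such a play when combined with obedience --- alternatively one argues directly from $\eqref{cond:proj}$ that $\limitpi2(\rho)=\Agt$ need not hold, so one really does need the $i=0$ instance of~\eqref{cond:win} to cover the fully-obedient outcome). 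In all cases the resulting $\sigma_\shortEve$ is winning from $(s,\Agt)$ and its obedient outcome projects to $\pi$, so Lemma~\ref{lem:suspect-game} applies.

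The main subtlety --- and the step I~would be most careful about --- is the bookkeeping in the converse direction: one has to verify that "switch to $\sigma^i_\shortEve$ at the first deviation at step~$i$" really does cover all outcomes of the constructed strategy, and in particular that a later deviation is impossible once we have already switched (because after switching we follow $\sigma^i_\shortEve$, and any subsequent \Adam move, obedient or not, is simply part of an outcome of $\sigma^i_\shortEve$, which is winning regardless). A~second point worth spelling out is \emph{why} the localised condition~\eqref{cond:win} is needed rather than a single global winning strategy: the play $\rho$ is fixed in advance, so \Eve cannot be assumed to "know" it as a strategy; instead~\eqref{cond:win} hands us, at each branching point, a winning continuation, and gluing these along $\rho$ is exactly what produces the global winning strategy. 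Beyond that, the argument is routine given Lemma~\ref{lem:suspect-game} and the closure of winning regions under residuals.
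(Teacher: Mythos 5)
Your construction is the same as the paper's in both directions: the residual strategies $\sigma^i_\shortEve\colon h\mapsto\sigma_\shortEve(\rho_{\le i}\cdot h)$ for the forward implication, and the ``follow $\rho$ while \Adam obeys, switch permanently to $\sigma^i_\shortEve$ at the first deviation at step~$i$'' strategy for the converse, with Lemma~\ref{lem:suspect-game} converting the resulting winning strategy into a trigger profile, hence a Nash equilibrium with outcome $\Sproj_1(\rho)=\pi$. The forward direction and the deviation case of the converse are argued correctly.

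However, your treatment of the fully obedient outcome in the converse is flawed as stated. You justify that $\rho$ itself is winning for \Eve by claiming it lies in $\Out((s,\Agt),\sigma^0_\shortEve)$ ``when combined with obedience''; but $\sigma^0_\shortEve$ is merely \emph{some} strategy all of whose outcomes from $(s,\Agt)$ are winning --- nothing forces its choices to agree with the moves \Eve plays along $\rho$, so $\rho$ need not be one of its outcomes, and condition~\ref{cond:win} at $i=0$ does not cover this case. Your parenthetical alternative is also off: since \Adam obeys along all of $\rho$ and obeying leaves the suspect set unchanged, $\limitpi2(\rho)=\Agt$ in fact \emph{does} hold. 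The correct (and immediate) argument is that the obedient outcome is winning by condition~\ref{cond:proj} alone: \Eve's winning condition requires $\Sproj_1(\rho)\prefrel_A\pi$ for every $A\in\limitpi2(\rho)$, and since $\Sproj_1(\rho)=\pi$ and each $\prefrel_A$ is a preorder (hence reflexive), this holds trivially, whatever $\limitpi2(\rho)$ is. With that one-line repair your proof goes through and coincides with the paper's.
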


\begin{proof}
  The Nash equilibrium is a trigger profile, and from
  Lemma~\ref{lemma-suspectgame}, we~get a winning
  strategy~$\sigma_\shortEve$ in~$\calH(\calG,\pi)$. The outcome~$\rho$
  of~$\sigma_\shortEve$ from~$s$ when \Adam obeys~\Eve is such that
  $\pi=\Sproj_1(\rho)$ is the outcome of the Nash equilibrium.  Now
  for all prefix $\rho_{\le i}$, the strategy
  $\sigma_\shortEve^i\colon h \mapsto \sigma_\shortEve(\rho_{\le
    i}\cdot h)$ is such that any play in $\rho_{\le i}\cdot
  \Out(\rho_{=i},\sigma^i_\shortEve)$ is winning for~\Eve.

  \medskip Conversely, let $\rho'$ be a path in~$\calH(\calG,\pi)$ and
  assume it satisfies all three conditions. We~define a
  strategy~$\lambda_\shortEve$ that follows~$\rho'$ when
  \Adam~obeys. Along~$\rho'$, this strategy is defined as follows:
  $\lambda_\shortEve(\rho'_{\leq 2i}) = m_\Agt$ such that
  $\Tab(\Sproj_1(\rho'_{=i}),m_\Agt) = \Sproj_1(\rho'_{=i+1})$. Such a
  legal move must exist since \Adam obeys~\Eve along~$\rho'$ by
  condition~\ref{cond:obey}. Now, if \Adam deviates from the obeying
  strategy (at step~$i$), we~make~$\lambda_\shortEve$ follow the
  strategy~$\sigma_\shortEve^i$ (given by condition~\ref{cond:win}),
  which will ensure that the outcome is winning for~\Eve.

  The outcomes of~$\lambda_\shortEve$ are then either the path~$\rho'$,
  or a path~$\rho''$ obtained by following a winning strategy after a
  prefix of~$\rho'$.  The path~$\rho''$ is losing for~\Adam, hence for
  all $A\in \limitpi2(\rho')$, $\rho''\prefrel_A \rho'$.  This proves
  that $\lambda_\shortEve$ is a winning strategy.  Applying
  Lemma~\ref{lemma-suspectgame}, we~obtain a strategy
  profile~$\sigma_\Agt$ in~$\calG$ that is a trigger profile
  for~$\pi$. Moreover, the~outcome of~$\sigma_\Agt$ from~$s$
  is~$\Sproj_1(\rho')$ (using condition~\ref{cond:proj}), so that
  $\sigma_\Agt$ is a Nash equilibrium.
\end{proof}

\begin{remark}\label{rem:prefix-independant}
  Assume the preference relations of each player~$A$ in $\calG$ are
  prefix-independent, \ie, for all plays $\rho$ and $\rho'$, $\rho
  \prefrel_A \rho'$ iff for all indices $i$ and $j$, $\rho_{\ge i}
  \prefrel_A \rho'_{\ge j}$.  Then the winning condition of \Eve is
  also prefix-independent, and condition~\ref{cond:win} just states
  that $\rho'$ has to stay within the winning region of~\Eve.  Note
  that, for prefix-dependent preference relations,
  condition~\ref{cond:win} does not reduce to stay within the winning
  region of~\Eve: for instance, for safety objectives, if the losing
  states of all the players have been visited then any prolongation
  will satisfy the condition, even though it might leave the winning
  region of~\Eve.
\end{remark}

\begin{example}
  We depict on Figure~\ref{fig-suspg} part of the suspect game for the game of
  Figure~\ref{fig-ex}. Note that the structure of $\calH(\calG,\pi)$
  does not depend on~$\pi$. Only the winning condition is affected by
  the choice of~$\pi$.
  \begin{figure}[t]
  \begin{center}
    \begin{tikzpicture}[thick]
      \everymath{\scriptstyle}
      \draw (0,1.5) node [draw,dashed] (A) {$\ell_0,\{\pl 1,\pl 2\}$};
      \draw (3,3.5) node [draw] (B1) {$\ell_0,\{\pl 1,\pl 2\},\langle 1,1 \rangle$};
      \draw (3,2) node [draw] (B2) {$\ell_0,\{\pl 1,\pl 2\},\langle 1,2 \rangle$};
      \draw (3,1) node [draw] (B3) {$\ell_0,\{\pl 1,\pl 2\},\langle 2,1 \rangle$};
      \draw (3,-.5) node [draw] (B4) {$\ell_0,\{\pl 1,\pl 2\},\langle 2,2 \rangle$};
      \draw [-latex'] (A) -- (B1);
      \draw [-latex'] (A) -- (B2);
      \draw [-latex'] (A) -- (B3);
      \draw [-latex'] (A) -- (B4);
      \begin{scope}[yshift=1mm]
      \draw (6,5.5) node [draw,dashed] (C10) {$\ell_0,\emptyset$};
      \draw (6,4.5) node [draw,dashed] (C11) {$\ell_1,\{\pl 1,\pl 2\}$};
      \draw (6,3.5) node [draw,dashed] (C12) {$\ell_2,\{\pl 2\}$};
      \draw (6,2.5) node [draw,dashed] (C13) {$\ell_3,\{\pl 1\}$};

      \draw (9,4) node [draw] (D1111) {$\ell_1,\{\pl 1,\pl 2\}, \langle 1,1\rangle$};
      \draw (9,5) node [draw] (D1112) {$\ell_1,\{\pl 1,\pl 2\}, \langle 1,2\rangle$};
      \draw (9,3) node [draw] (D1211) {$\ell_2,\{\pl 2\}, \langle 1,1\rangle$};
      \draw (9,2) node [draw] (D1311) {$\ell_3,\{\pl 1\}, \langle 1,1\rangle$};
      \end{scope}

      \begin{scope}[yshift=-3mm]
      \draw (6,-1) node [draw,dashed] (C41) {$\ell_1,\emptyset$};
      \draw (6,0) node [draw,dashed] (C42) {$\ell_2,\{\pl 1\}$};
      \draw (6,1) node [draw,dashed] (C43) {$\ell_3,\{\pl 2\}$};

      \draw (9,0) node [draw] (D4211) {$\ell_2,\{\pl 1\}, \langle 1,1\rangle$};
      \draw (9,1) node [draw] (D4311) {$\ell_3,\{\pl 2\}, \langle 1,1\rangle$};
      \end{scope}
      \draw [-latex'] (B1) -- (C10);
      \draw [-latex'] (B1) -- (C11);
      \draw [-latex'] (B1) -- (C12);
      \draw [-latex'] (B1) -- (C13);

      \draw [-latex'] (B4) .. controls +(180:2cm) and +(-90:2cm) .. (A);
      \draw [-latex'] (B4) -- (C41);
      \draw [-latex'] (B4) -- (C42);
      \draw [-latex'] (B4) -- (C43);

      \draw[-latex'] (C11.-20) .. controls +(-20:5mm) .. (D1111);
      \draw[-latex'] (C11) -- (D1112);
      \draw[-latex'] (C12) -- (D1211);
      \draw[-latex'] (C13.-20) .. controls +(-20:5mm) ..  (D1311);
      \draw[-latex'] (D1111.160) .. controls +(160:5mm) .. (C11);
      \draw[-latex'] (D1111) -- (C12);

      \draw[-latex'] (C42) -- (D4211);
      \draw[-latex'] (C43.-10) .. controls +(-10:5mm) .. (D4311.-170);
      \draw[-latex'] (D4311.170) .. controls +(170:5mm) .. (C43.10);

      \draw[-latex'] (D1311.160)  .. controls +(160:5mm) ..  (C13);

      \draw[dashed] (B2) -- +(15:18mm);
      \draw[dashed] (B2) -- +(5:18mm);
      \draw[dashed] (B2) -- +(-5:18mm);
      \draw[dashed] (B2) -- +(-15:18mm);
      \draw[dashed] (B3) -- +(15:18mm);
      \draw[dashed] (B3) -- +(5:18mm);
      \draw[dashed] (B3) -- +(-5:18mm);
      \draw[dashed] (B3) -- +(-15:18mm);
      \draw[dashed] (D1112) -- +(15:18mm);
      \draw[dashed] (D1112) -- +(5:18mm);
      \draw[dashed] (D1112) -- +(-5:18mm);
      \draw[dashed] (D1112) -- +(-15:18mm);
      \draw[dashed] (D1111) -- +(5:18mm);
      \draw[dashed] (D1111) -- +(-5:18mm);
      \draw[dashed] (D1211) -- +(15:18mm);
      \draw[dashed] (D1211) -- +(5:18mm);
      \draw[dashed] (D1211) -- +(-5:18mm);
      \draw[dashed] (D1211) -- +(-15:18mm);
      \draw[dashed] (D1311) -- +(10:18mm);
      \draw[dashed] (D1311) -- +(0:18mm);
      \draw[dashed] (D1311) -- +(-10:18mm);
      \draw[dashed] (D4311) -- +(10:18mm);
      \draw[dashed] (D4311) -- +(0:18mm);
      \draw[dashed] (D4311) -- +(-10:18mm);
      \draw[dashed] (D4211) -- +(15:18mm);
      \draw[dashed] (D4211) -- +(5:18mm);
      \draw[dashed] (D4211) -- +(-5:18mm);
      \draw[dashed] (D4211) -- +(-15:18mm);      
    \end{tikzpicture}
  \end{center}
  \caption{A small part of the suspect game for the game of
  Figure~\ref{fig-ex}}\label{fig-suspg}
\end{figure}
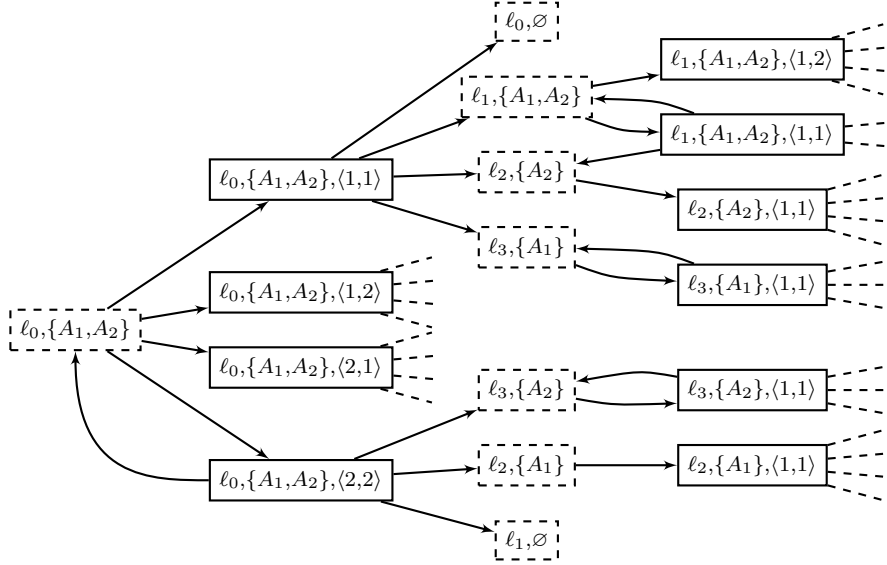
\end{example}

In the rest of the paper, we use the suspect-game construction to
algorithmically solve the NE existence problem and the constrained NE
existence problem in finite games for large classes of preference relations.
Before that we carefully analyse the size of the suspect game when the
original game is finite.

\subsection{Size of the suspect games when the original game is finite}
We suppose that $\calG$ is finite.
At first sight, the number of states in~$\calH(\calG,\pi)$ is exponential (in~the
number of players of~$\calG$). However, there are two cases for which
we easily see that the number of states of~$\calH(\calG,\pi)$ is actually only
polynomial:
\begin{itemize}
\item if there is a state in which all the players have several
  possible moves, then the transition table (which is part of the
  input, as discussed in Remark~\ref{remark:encoding}) %~\cite{LMO08})
  is also exponential in the number of players;
\item if the game is turn-based, then the transition table is ``small'', but
  there is always at most one suspect player (unless all of them are
  suspects), so that the number of reachable states in~$\calH(\calG,\pi)$ is also small.
\end{itemize}
We now prove that, due to the explicit encoding of the set of
transitions (recall Remark~\ref{remark:encoding},
page~\pageref{remark:encoding}), this can be generalised:
\begin{proposition}\label{lem:polynomial-size}
  Let $\calG=\tuple{\Stat,\Agt,\Act,\Allow,\Tab,(\mathord\prefrel_A)_{A\in\Agt}
  }$ be a finite concurrent game and $\pi$ be a play in $\calG$. The
  number of reachable configurations from $\Stat\times \{\Agt\}$
  in~$\calH(\calG,\pi)$ is polynomial in the size of~$\calG$.
\end{proposition}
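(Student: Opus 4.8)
The plan is to bound the number of reachable configurations by counting separately the states controlled by \Eve (in $\Stat\times 2^\Agt$) and those controlled by \Adam (in $\Stat\times 2^\Agt\times\Act^\Agt$), and to show that only polynomially many subsets $P\subseteq\Agt$ can occur in a reachable configuration. The key observation is that reachable subsets are exactly the sets $\Susp(\rho,\sigma_\Agt)$ that arise as limits of intersections $\Agt\cap\Susp((s_0,s_1),m_0)\cap\Susp((s_1,s_2),m_1)\cap\cdots$, and a single intersection step $\Susp((s,s'),m_\Agt)$ is completely determined by the data $(s,s',m_\Agt)$. So first I would make precise which configurations are reachable: a configuration $(s',P)$ is reachable from $\Stat\times\{\Agt\}$ iff there is a history $s_0\,s_1\cdots s_k=s'$ in (the transition system of) $\calG$ and a sequence of legal moves $m_0,\dots,m_{k-1}$ with $m_i$ legal at $s_i$, such that $P=\bigcap_{i<k}\Susp((s_i,s_{i+1}),m_i)$, and similarly for \Adam's configurations $(s,P,m_\Agt)$ with $(s,P)$ reachable and $m_\Agt$ legal at $s$.

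\textbf{Counting the reachable second components.} The heart of the argument is that the number of distinct sets $P$ reachable at a fixed state $s'$ is polynomial. I would argue as follows. Each reachable $P$ is an intersection of ``one-step suspect sets'' $\Susp((s,t),m)$; since intersection is monotone and idempotent, and since each such one-step set $\Susp((s,t),m)$ is a subset of $\Agt$ that depends only on the triple $(s,t,m)$, the family of reachable sets at $s'$ is contained in the closure under intersection of the finite family $\calF_{s'}=\{\Susp((s,t),m)\mid (s,t)\in\Edg,\ m\text{ legal at }s\}$. A priori that closure could still be exponential, but here is where the explicit encoding of $\Tab$ (Remark~\ref{remark:encoding}) enters: the number of triples $(s,t,m)$ with $t\ne\Tab(s,m)$ for which $\Susp((s,t),m)\ne\varnothing$ is at most $\sum_{s}\prod_A|\Mov(s,A)|\cdot|\Stat|$, which is polynomial in $|\calG|$ since $\prod_A|\Mov(s,A)|$ is a summand of $|\Tab|$. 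Moreover for a fixed move $m_\Agt$ legal at $s$, the sets $\{\Susp((s,t),m_\Agt)\mid t\in\Stat,\ t\ne\Tab(s,m_\Agt)\}$ are pairwise disjoint (a player $A$ witnessing $t$ via $m[A\mapsto m']$ is uniquely tied to the value $t=\Tab(s,m[A\mapsto m'])$), so at most $|\Agt|$ of them are nonempty — hence for each legal move there are at most $|\Agt|$ relevant proper-deviation suspect sets, and the total number of distinct nonempty proper one-step suspect sets is at most $|\Agt|\cdot\sum_s\prod_A|\Mov(s,A)|$, which is polynomial. Along any history, the running intersection $P$ strictly decreases each time a proper (non-obeying) step with a suspect set not already containing $P$ is taken; obeying steps leave $P$ unchanged. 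Since there are only polynomially many distinct proper one-step suspect sets overall, and $P$ is the intersection of some subcollection of them, the number of values $P$ can take is bounded by the number of such subcollections that are ``intersection-irredundant'' — but more simply, $P$ is determined by the \emph{set} of one-step suspect sets that were intersected, and intersecting a set that is a superset of the current $P$ changes nothing, so every reachable $P$ equals $\bigcap\calA$ for some $\calA\subseteq\calF$; since $P\subseteq S$ for at most one-step sets actually below it, one gets that each reachable $P$ at state $s'$ is an intersection of at most $|\Agt|$ members of $\calF$ (as $|\Agt|$ intersections suffice to reach any element of a lattice of height $\le|\Agt|$), giving a bound of $|\calF|^{|\Agt|}$ — which is \emph{not} obviously polynomial, so this crude bound is insufficient and the argument must be sharpened.

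\textbf{The sharper argument.} The right bound comes from observing that the reachable subsets at a fixed state $s'$ form a chain-decomposable family of size at most polynomial, because of the disjointness within a single state: to compute the exact polynomial bound I would show that the set of \emph{all} reachable configurations (over all states) has size at most $|\Stat|\cdot\big(1+|\Agt|\cdot\sum_{s\in\Stat}\prod_{A\in\Agt}|\Mov(s,A)|\big)$ for \Eve's side, by the following direct accounting: every reachable \Eve-configuration $(s',P)$ with $P\ne\Agt$ arises from a history whose \emph{last} non-obeying step $(s,s',m_\Agt)$ satisfies $P\subseteq\Susp((s,s'),m_\Agt)\subsetneq\Agt$; since $P$ at that point is the intersection of the \emph{previous} running set (a reachable set, by induction) with that one-step set, one sets up an induction on the number of non-obeying steps and uses that at each non-obeying step the index $|P|$ drops, so at most $|\Agt|$ non-obeying steps ever occur, and at each the choice of one-step suspect set is among the polynomially many listed above. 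This yields that the number of reachable $(s',P)$ is at most $|\Stat|\cdot(1+N)^{?}$ — and here I would invoke the tighter combinatorial fact, proved in the paper's source (cf.\ the disjointness remark), that in fact \emph{the reachable subsets along any path form a totally ordered chain under inclusion once the state is fixed is false}, so instead I rely on: the number of reachable configurations is bounded by the number of pairs (history-prefix-up-to-last-deviation), and these are controlled by the polynomially-many triples. \textbf{The main obstacle} is precisely this: turning the observation ``each one-step suspect set is cheap, and there are polynomially many of them'' into a \emph{polynomial} (not merely $|\calF|^{|\Agt|}$) bound on intersections of them. I expect the resolution to be that one only ever needs to track, for the current $P$, \emph{one} witnessing non-obeying transition per ``drop'', and that the total configuration count telescopes to $O(|\Stat|^2\cdot|\Tab|)$ or similar; I would write the induction on the number of suspect-reducing steps carefully, using the per-state disjointness to keep the branching at each drop linear in $|\Agt|$, and conclude $|\calH(\calG,\pi)|$ is polynomial in $|\calG|$, after which \Adam's configurations add only a factor $\prod_A|\Mov(s,A)|\le|\Tab|$ per reachable \Eve-configuration. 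I would finish by remarking that the same bound shows the \emph{edges} of the reachable part of $\calH(\calG,\pi)$ are polynomial in $|\calG|$, which is what later algorithmic sections need.
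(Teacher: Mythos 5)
There is a genuine gap, and you flag it yourself: after the (correct) observation that reachable second components are intersections of one-step suspect sets, your counting only reaches a bound exponential in $|\Agt|$ (your $|\mathcal{F}|^{|\Agt|}$), you note that ``the argument must be sharpened'', and the sketched sharpening is never carried out---it ends with ``I expect the resolution to be\dots''. Moreover, the combinatorial fact you lean on along the way is false: for a fixed state $s$ and legal move $m_\Agt$, the sets $\Susp((s,t),m_\Agt)$ for the various $t\ne\Tab(s,m_\Agt)$ need \emph{not} be pairwise disjoint, since a player with three or more actions in $\Allow(s,A)$ can deviate to several distinct successor states and is then suspect for each of them. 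So neither the ``at most $|\Agt|$ nonempty one-step suspect sets per move'' claim nor the later ``branching linear in $|\Agt|$ at each drop'' step is justified, and no polynomial bound on the family of reachable subsets is actually established.

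The idea that closes the gap (and is the heart of the paper's proof) is more local and requires no counting of intersection-closures at all. If $t\ne\Tab(s,m_\Agt)$ and $A\in\Susp((s,t),m_\Agt)$, then besides $m_A$ player $A$ has a second action $m'\in\Allow(s,A)$ with $\Tab(s,m_\Agt[A\mapsto m'])=t$; hence every player of $P=\Susp((s,t),m_\Agt)$ has at least two available actions at $s$, and therefore $2^{|P|}\le\prod_{A\in\Agt}|\Mov(s,A)|\le|\Tab|$, thanks to the explicit encoding of the transition table. So any suspect set created by a genuine deviation has at most $\log_2|\Tab|$ elements, hence at most $|\Tab|$ subsets. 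Since the second component can only shrink along a play of $\calH(\calG,\pi)$, every configuration reachable after the first non-obeying move has its second component among the subsets of such a $P$, that is, among at most $|\Tab|$ sets, while before any deviation the configurations are just the $(s,\Agt)$ and $(s,\Agt,m_\Agt)$, at most $|\Stat|+|\Tab|$ in number. Putting these together yields a bound such as $(|\Stat|+|\Tab|)\cdot\bigl(1+(|\Stat|+|\Tab|)\cdot|\Tab|\bigr)$, which is the polynomial bound you were after; this is exactly the step your proposal is missing.
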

\begin{proof}
  The game~$\calH(\calG,\pi)$ contains the state~$(s,\Agt)$ and the states
  $(s,\Agt,m_\Agt)$, where $m_\Agt$ is a legal move from~$s$; the
  number of these states is bounded by ${|\Stat| + |\Tab|}$.
  The successors of those states that are not of the same form, are
  the $(t,\Susp((s,t),m_\Agt))$ with $t \ne \Tab(s,m_\Agt)$. If~some
  player~$A\in\Agt$ is a suspect for transition~$(s,t)$, then
  besides~$m_A$, she~must have at least a second action~$m'$, for
  which $\Tab(s,m_\Agt [A\mapsto m']) = t$.  Thus the transition table
  from state~$s$ has size at least $2^{|\Susp((s,t),m_\Agt)|}$.
  The successors of $(t,\Susp((s,t),m_\Agt))$ are of the form $(t',P)$
  or $(t',P,m_\Agt)$ where $P$~is a subset of $\Susp((s,t),m_\Agt)$;
  there can be no more than $(|\Stat| + |\Tab|) \cdot
  2^{|\Susp((s,t),m_\Agt)|}$ of them, which is bounded by $(|\Stat| +
  |\Tab|)\cdot |\Tab|$. The total number of reachable states is then
  bounded by $(|\Stat| + |\Tab|) \cdot (1 + (|\Stat| + |\Tab|) \cdot
  |\Tab|)$. 
\end{proof}

%\section{When each player has a single objective}\label{sec:single}
\section{Single-objective preference  relations}\label{sec:single}

In this section we will be interested in finite games with
single-objective preference relations.

\medskip The value problem for finite concurrent games with $\omega$-regular
objectives has standard solutions in game theory; they are given in
Table~\ref{table-single} (page~\pageref{table-single}). Let us briefly give
some explanations. Most of the basic literature on two-player games focus on
turn-based games, and in particular algorithms for solving two-player games
with $\omega$-regular objectives only deal with turn-based games (see for
instance~\cite[Chapter~2]{GTW02}). In~particular, McNaughton developed an
algorithm to solve turn-based parity games in time $O(|\Stat| \cdot |
\Edg|^{p-1})$, where $p-1$ is the number of priorities~\cite{McNaughton93}.
B\"uchi games and co-B\"uchi games correspond to parity games with two
priorities, hence they are solvable in polynomial time. Similarly reachability
games and safety games can be transformed into B\"uchi games by making the
target states absorbing. Hence turn-based game with these types of objectives
can be solved in polynomial time.

Note however that we can reuse these algorithms in the concurrent case as follows. Any finite
concurrent zero-sum game with objective $\Omega$ for player $A_1$ can
be transformed into a turn-based zero-sum game with objective
$\widetilde\Omega$ for player~$A_1$: the idea is to replace any edge
labelled with pair of actions $\langle a_1,a_2\rangle$ into two
consecutive transitions labelled with $a_1$ (belonging to player
$A_1$) and with $a_2$ (belonging to player $A_2$). Furthermore
$\Omega$ is an $\omega$-regular condition, then so is
$\widetilde\Omega$, and the type of the objective (reachability,
B\"uchi, etc) is preserved (note however that this transformation only
preserves Player~$A_1$ objective). Hence the standard algorithm on the
resulting turn-based game can be applied. 
Lower bounds for reachability/safety
and B\"uchi/co-B\"uchi games are also folklore results, and can be
obtained by encoding the circuit-value problem (we recall the encoding
in Section~\ref{ptime-hard}).

\medskip We now focus on the NE existence problem and on the constrained
NE existence problem when each player has a single ($\omega$-regular)
objective using the suspect game construction. The results are
summarised in the second column of Table~\ref{table-single}.

Streett and Muller objectives are not explicitly mentioned in the
rest of the section. The complexity of their respective (constrained)
NE existence problems, which is given in Table~\ref{table-single}, can
easily be inferred from other ones.  The $\P^\NP_\parallel$-hardness
for the NE existence problem with Streett objectives follows from the
corresponding hardness for parity objectives (parity objectives can be
encoded efficiently as Streett objectives).  Hardness for the
NE existence problem in Muller games, is deduced from hardness of the
value problem (which holds for turn-based games), applying
Proposition~\ref{lem:link-value-exist}.  For both objectives,
membership in \PSPACE follows from \PSPACE membership for objectives
given as Boolean circuits, since they can efficiently be encoded as
Boolean circuits.

We fix for the rest of the section a multi-player finite game $\calG =
\tuple{\Stat,\Agt,\Act,\Allow,\Tab,(\mathord\prefrel_A)_{A\in\Agt} }$, and we
assume that each $\prefrel_A$ is single-objective, given by set
$\Omega_A$.

\begin{remark}
  \label{remark:explosion}
  Let us come back to Remark~\ref{remark:encoding} on our choice of an
  explicit encoding for the set of transitions. Assuming more compact
  encodings, the complexity of computing Nash equilibria for
  qualitative objectives does not allow to distinguish between the
  intrinsic complexity of the objectives. Indeed, in the formalism
  of~\cite{LMO08}, the transition function is given in each state by a
  finite sequence $((\phi_0 , s_0 ), ..., (\phi_h , s_h ))$, where
  $s_i \in \Stat$, and $\phi_i$ is a boolean combination of
  propositions $(A = \act)$ that evaluates to true iff agent $A$
  chooses action~$\act$.  The transition table is then defined as
  follows: $\Tab(s, m_\Agt) = s_j$ iff $j$ is the smallest index such
  that $\phi_j$ evaluates to true when, for every player $A\in\Agt$,
  $A$ chooses action $m_A$. It is required that the last boolean
  formula $\phi_h$ be $\top$, so that no agent can enforce a deadlock.
  
  We can actually state the following result, whose proof is postponed
  to the Appendix on page~\pageref{app}.
  \begin{proposition}\label{proposition:explosion}
    For finite concurrent games with compact encoding of transition
    functions and with reachability/B\"uchi/safety objectives, the
    constrained NE existence problems is \PSPACE-hard.
  \end{proposition}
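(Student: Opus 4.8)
The plan is to reduce from \textsc{TrueQBF} (the truth problem for closed quantified Boolean formulas), which is \PSPACE-complete, while keeping every winning condition a reachability objective; the safety and B\"uchi cases then follow by routine adaptations (a reachability objective $\Omega^{\mathrm{Reach}}_T$ becomes a B\"uchi objective once the states of $T$ are made absorbing, and the whole construction can be recast with safety winning conditions by complementing the relevant target sets). Fix a closed formula $\Phi = Q_1 x_1\, Q_2 x_2 \cdots Q_k x_k\,\varphi$ with $\varphi$ quantifier-free. I would build a finite concurrent game $\calG_\Phi$, a state $s_0$, and a threshold for the constrained NE existence problem, so that $\Phi$ is true iff $\calG_\Phi$ has, from $s_0$, a Nash equilibrium meeting the constraint. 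The game has one player $A_i$ per variable $x_i$, each with the two actions $0$ and $1$, together with a bounded number of auxiliary ``referee'' players and, for each universal variable $x_j$, a dedicated referee $B_j$. The whole point of having $k$ players moving simultaneously is that a single concurrent move is a full valuation $\nu\in\{0,1\}^k$, and here the compact encoding is used in an essential way: the transition function of a single state may branch on all $2^k$ such moves while still being written down in size polynomial in $k$ and $|\varphi|$ in the format of~\cite{LMO08}, whereas its explicit transition table would be exponential and the suspect game $\calH(\calG_\Phi,\pi)$, being polynomial only in that explicit size (Proposition~\ref{lem:polynomial-size}), would be of exponential size.

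The reduction simulates the evaluation game of $\Phi$. The intended outcome of the equilibrium corresponds to a play in which the $A_i$'s declare a valuation $\nu$ with $\varphi(\nu)$ true and the play reaches a common target, so that the threshold — which I would choose to force every $A_i$ to win and every $B_j$ to lose — is met. The quantifier alternation is then encoded by \emph{which deviations the profile must be immune to}. For an existential variable $x_i$, player $A_i$ simply declares its value: since everybody wins on the intended play, $A_i$ has no profitable deviation, and the freedom in her declaration realizes the existential choice. For a universal variable $x_j$, the arena is arranged (via the compact transition function) so that referee $B_j$ can reach her private target $G_j$ only by \emph{objecting}, that is, by a single deviation that steers the play into a subgame in which $x_j$ is flipped, every earlier variable is frozen, and only the existential variables quantified \emph{after} $x_j$ may be re-declared. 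Thus $B_j$ has a profitable deviation exactly when, for the current values of $x_1,\dots,x_{j-1}$ and the flipped value of $x_j$, the later existential players cannot rescue $\varphi$; requiring the profile to be a Nash equilibrium for every $B_j$ then enforces precisely the ``$\forall x_j$'' layer of $\Phi$, nested correctly because each objection re-opens only the later existential choices. Correctness is then obtained from the suspect-game characterisation: by Theorem~\ref{thm:eq-win} a Nash equilibrium with the prescribed outcome exists iff \Eve wins the associated suspect game, and one checks that \Eve's winning strategies correspond exactly to winning strategies of the existential player in the evaluation game of $\Phi$, so $\calG_\Phi$ has the desired equilibrium iff $\Phi$ is true.

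The main obstacle is exactly the tension already visible above: the arena of $\calG_\Phi$ must have only polynomially many states, yet along a deviation branch the play must ``carry'' the partial valuation $x_1,\dots,x_j$ fixed so far, of which there are exponentially many, and must eventually evaluate $\varphi$ on the full reconstructed valuation. The construction avoids ever storing this information in the state: on the intended (obeyed) branch the suspect set stays equal to $\Agt$ and nothing is recorded, while along an objection by $B_j$ the partial valuation is recorded by the history of states entered after the deviation (equivalently, by the surviving suspect set), and the compact transition function is precisely what lets a single polynomial-size state inspect this information — decide $\varphi$ on the reconstructed valuation and check that each re-declaration is consistent with the frozen variables, routing any inconsistency to a sink that is bad for the cheating player. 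Getting these consistency checks and the freezing / re-opening of the variable blocks right, so that the single-deviation semantics of Nash equilibria reproduces the \emph{nested} alternation of $\Phi$ rather than just one layer of it (which would only yield $\NP$- or $\mathsf{coNP}$-hardness), is the delicate part of the argument; the remaining steps — verifying that the claimed equilibria exist and that they are the only ones meeting the threshold, and matching them with \Eve's winning strategies — are routine.
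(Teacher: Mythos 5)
There is a genuine gap, and it is exactly at the point you flag as ``the delicate part'' without resolving it. In your design, the only deviation available for the universal layer $x_j$ is a single objection by the referee $B_j$, which flips $x_j$, freezes $x_1,\dots,x_{j-1}$, and lets the \emph{later existential} variables be re-declared. But under Nash-equilibrium (single-deviation) semantics, once $B_j$ has deviated, every other player --- including whoever declares the universal variables quantified \emph{after} $x_j$ --- follows the equilibrium profile, i.e.\ is on the ``existential side''. So the condition your construction enforces is: there is a valuation $\nu$ with $\varphi(\nu)$ true such that for each universal $j$, flipping $x_j$ (earlier variables as in $\nu$) there \emph{exist} values of all later variables making $\varphi$ true. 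That is one round of challenge per universal variable, not the nested alternation of $\Phi$; it sits at a fixed level of the polynomial hierarchy and cannot give \PSPACE-hardness. The paper's appendix proof avoids this by a different mechanism: a \emph{single} deviator is given control of \emph{all} universal variables at once, via XOR-style coalition formulas in the compact transition function (any one player of a large coalition can flip the branch at each universal state, except the one player excluded from that formula), while \Eve, who stays on-profile, answers the existential variables adaptively along the deviation branch. One deviation thus simulates an entire play of the QBF evaluation game against \Eve's strategy, and ``no profitable deviation'' becomes ``\Eve's existential strategy defeats every universal strategy'', i.e.\ validity of the formula; the exclusion pattern ($A_k$ cannot route through $f_k$, $B_k$ not through $t_k$) guarantees that the player controlling the final literal-checking state is never the deviator, so she obediently sends the play to the sink.

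A second, more local problem is your use of the compact encoding: you let ``a single polynomial-size state inspect'' the partial valuation recorded ``by the history of states entered after the deviation (equivalently, by the surviving suspect set)''. A transition function, however succinct, is a function of the current state and the current joint move only; it cannot read the history, and the suspect set is an object of the suspect-game construction, not of $\calG$ itself. Since the arena has only polynomially many states, the reconstructed valuation cannot be stored in the state either, so neither the evaluation of $\varphi$ nor the consistency check with the frozen variables can be delegated to the transition function as you propose. What the compact encoding legitimately buys (and what the paper uses it for) is a single state whose successor depends on a Boolean formula over the simultaneous actions of polynomially many players --- e.g.\ the XOR gadgets above, or evaluating a clause on values announced in one concurrent move; clause checking on a valuation spread over the history is instead done in the paper by \Eve pointing to a witnessing literal and by an incentive argument about who controls the final state, i.e.\ at the level of strategies, not of the transition table.
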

\end{remark}

\begin{remark}
\label{simplification}
It is first interesting to notice that given two plays $\pi$
and~$\pi'$ the suspect games $\calH(\calG,\pi)$
and~$\calH(\calG,\pi')$ only differ in their winning conditions.
In~particular, the structure of the game only depends on~$\calG$, and
has polynomial size (see Proposition~\ref{lem:polynomial-size}).
We~denote it with~$\calJ(\calG)$. Moreover, as~each
relation~$\prefrel_A$ is given by a single objective~$\Omega_A$, the
winning condition for \Eve in $\calH(\calG,\pi)$ rewrites~as: for
every $A \in \limitpi2(\rho) \cap \Losers(\pi)$, $\Sproj_1(\rho)$~is
losing (in~$\calG$) for player~$A$, where $\Losers(\pi)$ is the set of
players losing along $\pi$ in $\calG$.
This winning condition only depends on $\Losers(\pi)$ (not on the
precise value of play~$\pi$). Therefore in this section, the suspect
game is denoted with~$\calH(\calG,L)$, where $L \subseteq \Agt$, and
\Eve wins play~$\rho$ if, for~every $A \in \limitpi2(\rho) \cap L$,
$A$~loses along $\Sproj_1(\rho)$ in~$\calG$. In~many cases we will be
able to simplify this winning condition, and to obtain simple
algorithms to the corresponding problems.
\end{remark}

We now distinguish between the winning objectives of the
players. There are some similarities in some of the cases (for
instance safety and co-B\"uchi objectives), but they nevertheless all
require specific techniques and proofs.

\subsection{Reachability objectives}
\label{subsec:reachability}

The value problem for a reachability winning condition is \P-complete. Below,
we design a non-deterministic algorithm that runs in polynomial time for
solving the constrained NE existence problem. We~then end this subsection with a
\NP-hardness proof of the constrained NE existence problem and NE existence
problem. In the end, we prove the following result:

\begin{theorem}
  For finite concurrent games with single reachability objectives, the NE
  existence problem and the constrained NE existence problem are \NP-complete.
\end{theorem}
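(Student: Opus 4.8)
The plan is to establish membership in \NP\ and \NP-hardness separately: the upper bound via the suspect game of Section~\ref{sec:suspect} together with the lasso characterisation of Proposition~\ref{lem:play-length}, and the lower bound by a reduction from \textsf{SAT}.

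\textbf{Membership in \NP.} It suffices to treat the constrained NE existence problem, the plain one being the special case with trivial thresholds. By Proposition~\ref{lem:play-length}, if a Nash equilibrium satisfying the thresholds exists, then one exists whose outcome is a lasso $\pi = u\cdot v^\omega$ with $|u|,|v|\le|\Stat|^2$, and since reachability objectives depend only on the set of visited states, this lasso still satisfies the thresholds. The algorithm guesses such $u$ and $v$, checks that $\pi$ is a genuine play from~$s$ (consecutive states are edges of~$\calG$), and checks the threshold constraints, which for reachability just amounts to checking, for each player~$A$, whether $\pi$ visits~$T_A$. Let $L=\Losers(\pi)$. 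By Theorem~\ref{thm:eq-win}, together with Remark~\ref{simplification} (which turns \Eve's winning condition into one depending only on~$L$), it remains to check condition~\ref{cond:win}: that for every index~$i$, \Eve\ has a strategy all of whose continuations of the prefix $\rho_{\le i}$ of the obedient lift of~$\pi$ in $\calH(\calG,L)$ are winning. The key step is that \Eve's winning region in~$\calH(\calG,L)$ is computable in polynomial time; I would show that \Eve's objective there is equivalent, at the level of winning regions, to the \emph{safety} objective ``never reach a configuration $(s,P)$ (or $(s,P,m_\Agt)$) with $s\in T_A$ for some $A\in P\cap L$''. Indeed, a play avoiding these configurations keeps every player that remains a suspect forever and lies in~$L$ off its target; conversely, from any configuration from which \Adam\ can force reaching such a configuration, \Adam\ wins by reaching it and then obeying forever, which freezes the suspect set and hence keeps the offending player a suspect while its target has been visited. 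Since $\calH(\calG,L)$ has polynomially many reachable configurations (Proposition~\ref{lem:polynomial-size}) and safety games are solvable in polynomial time, this region is computable in~\P. Finally, since no target of a player of~$L$ is visited along~$\pi$, no prefix of the obedient lift of~$\pi$ reaches a ``bad'' configuration, so condition~\ref{cond:win} collapses to checking that every state occurring in~$u\cdot v$ lies in \Eve's winning region of~$\calH(\calG,L)$. The algorithm accepts iff all these checks succeed; correctness follows from Theorem~\ref{thm:eq-win} and Proposition~\ref{lem:play-length}.

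\textbf{\NP-hardness.} I would reduce from \textsf{SAT}. Given a CNF formula $\varphi=\bigwedge_{k=1}^{m}C_k$ over $x_1,\dots,x_n$, build a turn-based game $\calG_\varphi$ with players $V$ and $A_1,\dots,A_n$ in two phases. The first phase visits states $q_1,\dots,q_n$ where, at~$q_i$, player~$A_i$ chooses between two actions leading to distinct successors $q_i^{\mathsf{tt}}$ and $q_i^{\mathsf{ff}}$ (both continuing to~$q_{i+1}$), thereby recording in the visited states a truth value for~$x_i$; the routing through states is needed because strategies only observe states (Section~\ref{ssec-gendef}). In the second phase, player~$V$ walks through the clauses: in the state for~$C_k$ she picks one of its literals, the game moves to a checking state that, reading off the recorded assignment, either proceeds to~$C_{k+1}$ if that literal is satisfied or falls into a sink otherwise; after~$C_m$ the game reaches a state~$\mathsf{good}$. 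Player~$V$ has objective $\Omega^{\text{Reach}}_{\{\mathsf{good}\}}$, while each~$A_i$ has a reachability objective to a state visited by every play, so that $A_i$ is indifferent and never wishes to deviate. Then $\varphi$ is satisfiable iff $\calG_\varphi$ has a Nash equilibrium in which~$V$ wins: a satisfying assignment played by the~$A_i$'s, with~$V$ picking satisfied literals, yields such an equilibrium, whereas if~$\varphi$ is unsatisfiable then~$V$ loses along every play of~$\calG_\varphi$. Hence the constrained NE existence problem is \NP-hard. To derive \NP-hardness of the plain NE existence problem itself, I would apply Proposition~\ref{lem:link-constr-exist} with $\rho$ a play winning for~$V$ and~$A_j$ any other player, so that ``$\calG_\varphi$ has a Nash equilibrium in which~$V$ wins'' becomes equivalent to ``$E(\calG_\varphi,V,A_j,\rho)$ has a Nash equilibrium'', the latter being again a finite concurrent game with single reachability objectives.

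\textbf{Main obstacle.} The lasso guess and the use of Theorem~\ref{thm:eq-win} are routine; the real content is the polynomial-time solvability of the suspect game for~$L$, i.e.\ recognising \Eve's objective as a plain safety condition. The delicate point is that, play-by-play, avoiding the ``bad'' configurations is \emph{strictly stronger} than \Eve's actual winning condition (a player may drop out of the suspect set only after its target has been visited), so the equivalence holds only for winning regions and relies crucially on the fact that \Adam\ can refuse to let a suspect drop out by obeying. On the hardness side, the care goes into making the~$A_i$'s genuinely indifferent — so that the equilibrium is free to encode any assignment — and into checking that the matching-penny module of Proposition~\ref{lem:link-constr-exist} preserves the class of single reachability objectives.
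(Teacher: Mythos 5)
Your proposal is correct, and its \NP\ upper bound is essentially the paper's own argument: guess a short lasso (Proposition~\ref{lem:play-length}), pass to the suspect game, and observe that \Eve's objective in $\calH(\calG,L)$ coincides, at the level of winning regions, with the safety condition avoiding $\{(s,P)\mid \exists A\in P\cap L.\ s\in T_A\}$ — including the same converse argument where \Adam\ first forces a bad configuration and then obeys to freeze the suspect set (this is exactly Lemma~\ref{lemma:reach-to-safety}), and the same collapse of condition~\ref{cond:win} to membership of the obedient lasso in the winning region. The only genuine divergence is in the \NP-hardness gadget. The paper's reduction from \SAT\ lets a single player~$A$ choose the assignment and introduces one clause player $C_i$ per clause, with target the literals of $c_i$, so the constraint is ``all $C_i$ win''; lifting this to plain NE existence then requires $n$ successive applications of the construction $E(\cdot)$ of Proposition~\ref{lem:link-constr-exist} (one per constrained player, with an auxiliary player $C_{n+1}$). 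Your gadget instead distributes the assignment over variable players and adds a verifier $V$ whose reachability objective encodes ``all clauses checked'', so the constraint concerns a single player and one application of Proposition~\ref{lem:link-constr-exist} suffices — a small but real simplification, at the cost of the clause-verification phase and of checking (as you flag) that the preferences in $E(\calG_\varphi,V,A_j,\rho)$ are again single reachability objectives; this does go through, exactly as in the paper's own use of the construction, by adding the sink $s_1$ to the targets of the players for whom $s_0\cdot s_1^\omega$ must be winning-equivalent (e.g.\ $\{\mathsf{good},s_1\}$ for $V$) and giving the punished player the initial state of $\calG_\varphi$ as target. Both routes yield the same bounds; yours trades $n$ applications of the matching-penny module for one, the paper's trades the verifier machinery for a purely structural clause-player encoding.
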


\subsubsection{Reduction to a safety game}
We assume that for every player $A$, $\Omega_A$ is a single
reachability objective given by target set~$T_A$.  Given $L \subseteq
\Agt$, in the suspect game~$\calH(\calG,L)$, we show that the
objective of \Eve reduces to a safety objective.  We define the safety
objective~$\Omega_L$ in $\calH(\calG,L)$ by the set $T_L = \{(s,P)
\mid \exists A\in P \cap L.\ s \in T_A\}$ of target states.

\begin{lemma}
  \label{lemma:reach-to-safety}
  \Eve has a winning strategy in game $\calH(\calG,L)$ iff \Eve has a
  winning strategy in game $\calJ(\calG)$ with safety
  objective~$\Omega_L$.
\end{lemma}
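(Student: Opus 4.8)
The plan is to use the observation from Remark~\ref{simplification} that $\calH(\calG,L)$ and the safety game on $\calJ(\calG)$ with objective $\Omega_L$ are played on the \emph{same} arena $\calJ(\calG)$ and differ only in their winning conditions; hence it is enough to compare those two conditions on plays, and a strategy of \Eve can be reused verbatim from one game in the other. The single structural ingredient I will use throughout is that along any play $\rho$ of $\calH(\calG,L)$ the sequence of suspect sets is non-increasing and stays unchanged on every transition on which \Adam obeys \Eve; consequently $A\in\limitpi2(\rho)$ holds iff $A$ is a suspect at \emph{every} position of $\rho$, and if \Adam obeys \Eve from some position on, then $\limitpi2(\rho)$ equals exactly the suspect set reached at that position.

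For the implication from the safety game to $\calH(\calG,L)$: I take a strategy $\sigma$ of \Eve all of whose outcomes avoid $T_L$, fix such an outcome $\rho$, write $(s_0,P_0),(s_1,P_1),\dots$ for the \Eve-configurations it visits, and pick any $A\in\limitpi2(\rho)\cap L$. Then $A\in P_i$ for every $i$, so $(s_i,P_i)\notin T_L$ together with $A\in P_i\cap L$ forces $s_i\notin T_A$ for every $i$; hence $\Occ(\Sproj_1(\rho))\cap T_A=\varnothing$, i.e.\ $\Sproj_1(\rho)$ is losing for $A$. Since $A$ was an arbitrary element of $\limitpi2(\rho)\cap L$, the outcome $\rho$ is winning for \Eve, so $\sigma$ is winning in $\calH(\calG,L)$. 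This direction is entirely routine.

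For the converse, from $\calH(\calG,L)$ to the safety game: given a winning strategy $\sigma$ for \Eve in $\calH(\calG,L)$, I claim that no outcome of $\sigma$ ever visits a configuration in $T_L$, which is precisely the statement that $\sigma$ is winning for $\Omega_L$. Suppose not: some outcome $\rho$ of $\sigma$ reaches an \Eve-configuration $(s,P)\in T_L$, witnessed by a player $A\in P\cap L$ with $s\in T_A$. I then consider the outcome $\rho'$ of $\sigma$ in which \Adam plays exactly as in $\rho$ until that configuration and obeys \Eve from then on --- a legitimate outcome of $\sigma$, since obeying is always a legal choice for \Adam. By the structural fact, the suspect set stays equal to $P$ after that point and contains $P$ before it, so $A\in\limitpi2(\rho')$; and $\rho'$ coincides with $\rho$ up to that configuration, so $s\in\Occ(\Sproj_1(\rho'))\cap T_A$, which means $A$ is \emph{winning} along $\Sproj_1(\rho')$. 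Thus $\rho'$ is an outcome of $\sigma$ that is losing for \Eve, contradicting that $\sigma$ is winning. The only steps that need attention --- and the closest thing to an obstacle here --- are checking that $\rho'$ is genuinely an outcome of $\sigma$ (immediate, as \Adam obeying is always legal) and that $A$ indeed belongs to $\limitpi2(\rho')$ (immediate from monotonicity of the suspect sets before the deviation point and their constancy afterwards); once these are in place the argument is short.
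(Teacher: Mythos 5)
Your proof is correct. The first direction is exactly the paper's: you show that any play avoiding $T_L$ is winning for \Eve in $\calH(\calG,L)$, since a player $A\in\limitpi2(\rho)\cap L$ lies in every suspect set along $\rho$, forcing $\Sproj_1(\rho)$ to avoid $T_A$. The second direction differs in its logical packaging. The paper argues by contraposition: it invokes determinacy of turn-based safety games to obtain a spoiling strategy for \Adam when \Eve cannot win the safety game, and then modifies that strategy so that \Adam switches to obeying \Eve as soon as a state of $T_L$ is reached, producing outcomes that are losing for \Eve in $\calH(\calG,L)$. You instead argue directly that any \Eve-winning strategy in $\calH(\calG,L)$ already avoids $T_L$: if some outcome visited $(s,P)\in T_L$, letting \Adam copy that outcome up to this configuration and obey thereafter would freeze the suspect set at $P$, so the witnessing player $A\in P\cap L$ would be in $\limitpi2(\rho')$ with $T_A$ visited, contradicting that the strategy is winning. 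Both arguments hinge on the same key observation (obedience preserves the suspect set, so a visit to $T_L$ can be turned into a losing outcome), but your version is slightly more elementary --- it needs no determinacy --- and gives the marginally stronger conclusion that the \emph{same} strategy of \Eve is winning in both games, whereas the paper's contrapositive only yields equivalence of the existence of winning strategies.
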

\begin{proof}
  We first show that any play in $\Omega_L$ is winning in
  $\calH(\calG,L)$. Let $\rho \in \Omega_L$, and let $A \in
  \limitpi2(\rho) \cap L$. Toward a contradiction assume that
  $\Occ(\Sproj_1(\rho)) \cap T_A \ne \emptyset$: there is a state
  $(s,P)$ along $\rho$ with $s \in T_A$. Obviously $\limitpi2(\rho)
  \subseteq P$, which implies that $A \in P \cap L$. This contradicts
  the fact that $\rho \notin \Omega_L$.  We have shown so far that any
  winning strategy for \Eve in $\calJ(\calG)$ with safety objective
  $\Omega_L$ is a winning strategy for \Eve in $\calH(\calG,L)$.

  Now assume that \Eve has no winning strategy in game $\calJ(\calG)$
  with safety objective~$\Omega_L$. Turn-based games with safety
  objectives being determined, \Adam has a
  strategy~$\sigma_\shortAdam$ which ensures that no outcome of
  $\sigma_\shortAdam$ is in $\Omega_L$. If $\rho \in
  \Out(\sigma_\shortAdam)$, there is a state~$(s,P)$ along $\rho$ such
  that there is $A\in P \cap L$ with $s\in T_A$.  We now modify the
  strategy of \Adam such that as soon as such a state is reached we
  switch from $\sigma_\shortAdam$ to the strategy that always obeys
  \Eve. This ensures that in every outcome~$\rho'$ of the new
  strategy, we reach a state $(s,P)$ such that there is $A\in P \cap
  L$ with $s\in T_A$, and $\limitpi2(\rho') = P$. This \Adam's
  strategy thus makes \Eve lose the game~$\calH(\calG,L)$, and \Eve
  has no winning strategy in game~$\calH(\calG,L)$.
\end{proof}

\subsubsection{Algorithm}
The algorithm for solving the constrained NE existence problem in a game
where each player has a single reachability objective relies on
Theorem~\ref{thm:eq-win} and Proposition~\ref{lem:play-length}, and on
the above analysis:
\begin{enumerate}[label=(\roman*)]
\item\label{step1} guess a lasso-shaped play $\rho = \tau_1 \cdot
  \tau_2^\omega$ (with $|\tau_i| \le 2 |\Stat|^2$) in $\calJ(\calG)$,
  such that \Adam obeys \Eve along $\rho$, and $\pi = \Sproj_1 (\rho)$
  satisfies the constraint on the payoff;
\item\label{step3} compute the set $W(\calG,\Losers(\pi))$ of states that are
  winning for~\Eve in the suspect game $\calH(\calG,\Losers(\pi))$, where
  $\Losers(\pi)$ is the set of losing players along~$\pi$;
\item\label{step4} check that $\rho$ stays in $W(\calG,\Losers(\pi))$.
\end{enumerate}
First notice that this algorithm is non-deterministic and runs in
polynomial time: the witness~$\rho$ guessed in step~\ref{step1} has
size polynomial; the suspect game $\calH(\calG,\Losers(\pi))$ has also
polynomial size (Proposition~\ref{lem:polynomial-size});
Step~\ref{step3} can be done in polynomial time using a standard
attractor computation~\cite[Sect.~2.5.1]{GTW02} as the game under
analysis is equivalent to a safety game
(Lemma~\ref{lemma:reach-to-safety}); finally step~\ref{step4} can
obviously be performed in polynomial time.

Step~\ref{step1} ensures that conditions~\ref{cond:obey}
and~\ref{cond:proj} of Theorem~\ref{thm:eq-win} hold for $\rho$ and
step~\ref{step4} ensures condition~\ref{cond:win}.  Correctness of the
algorithm then follows from Theorem~\ref{thm:eq-win} and
Proposition~\ref{lem:play-length}.

\subsubsection{Hardness}
We prove \NP-hardness of the constrained NE existence problem by encoding an
instance of \SAT as follows. We assume set of atomic propositions $\AP =
\{x_1,\dots,x_k\}$, and we let $\phi = \bigwedge_{i=1}^n c_i$ where $c_i =
\ell_{i,1} \lor \ell_{i,2} \lor \ell_{i,3}$ where $\ell_{i,j} \in \{ x_k, \lnot
x_k \mid 1\leq k\leq p\}$. We~build the turn-based game $\calG_\phi$ with $n+1$
players $\Agt = \{ A, C_1,\dots , C_n\}$ as follows: for every $1 \le k \le
p$, player~$A$ chooses to visit either location $x_k$ or location $\lnot x_k$.
Location~$x_k$ is winning for player~$C_i$ if, and only~if, $x_k$~is one of
the literals in~$c_i$, and similarly location $\lnot x_k$ is winning for~$C_i$
if, and only~if, $\lnot x_k$ is one of the literals of~$c_i$. The construction
is illustrated on Figure~\ref{fig-jeuNP}, with the reachability objectives
defined as $\Omega_{C_i} = \{\ell_{i,1}, \ell_{i,2}, \ell_{i,3}\}$ for $1 \le
i \le n$. Now, it is easy to check that this game has a Nash equilibrium with
payoff~1 for all players $(C_i)_{1 \le i \le n}$ if, and only~if, $\phi$ is
satisfiable.

We prove hardness for the NE existence problem by using the transformation
described in Section~\ref{sec:link-constr-exist} once for each player. We
define the game $\calG_0$ similar to $\calG$ but with an extra
player~$C_{n+1}$ who does not control any state for now. For $1\leq i \leq n$,
we define $\calG_i = E(\calG_{i-1},C_i, C_{n+1},\rho)$, where $\rho$ is a
winning path for~$C_i$. The preference relation can be expressed in any
$\calG_i$ by a reachability condition, by giving to $C_{n+1}$ a target which
is the initial state of~$\calG$. According to
Proposition~\ref{lem:link-constr-exist} there is a Nash equilibrium
in~$\calG_i$ if, and only~if, there is one in~$\calG_{i-1}$ where $C_i$~wins.
Therefore there is a Nash equilibrium in $\calG_n$ if, and only~if, $\phi$~is
satisfiable. This entails \NP-hardness of the NE existence problem.

\begin{figure*}[ht]
  \centering
  \begin{tikzpicture}[thick,xscale=.8,yscale=.8]
    \tikzset{rond/.style={circle,draw=black,thick,fill=black!10,minimum
        height=6.5mm,inner sep=0pt}}
    \tikzset{carre/.style={draw=black,thick,fill=black!10,minimum
        height=5mm,minimum width=5mm,inner sep=0pt}}
    \everymath{\scriptstyle}
    \path (-.3,1) node {$\displaystyle \calG_\phi$};
    \draw (0,0) node [carre](choix-p1) {$A$};
    \draw [latex'-] (choix-p1.180) -- ++(-.4,0);
    \draw (1.5,1) node [rond] (p1) {$x_1$};
    \draw (1.5,-1) node [rond] (nonp1) {$\neg x_1$};
    \draw [-latex'] (choix-p1) -- (p1);
    \draw [-latex'] (choix-p1) -- (nonp1);
    \draw (3,0) node [carre] (choix-p2) {$A$};
    \draw (4.5,1) node [rond] (p2) {$x_2$};
    \draw (4.5,-1) node [rond] (nonp2) {$\neg x_2$};
    \draw [-latex'] (p1) -- (choix-p2);
    \draw [-latex'] (nonp1) -- (choix-p2);
    \draw [-latex'] (choix-p2) -- (p2);
    \draw [-latex'] (choix-p2) -- (nonp2);
    \draw (6,0) node [carre] (choix-p3) {$A$};
    \draw [-latex'] (p2) -- (choix-p3);
    \draw [-latex'] (nonp2) -- (choix-p3);
    \draw[dashed] (choix-p3) -- +(.75,.5);
    \draw[dashed] (choix-p3) -- +(.75,-.5);
    \draw (7.75,0) node {\Large\dots};
    \draw (9.5,1) node [rond] (ph) {$x_p$};
    \draw (9.5,-1) node [rond] (nonph) {$\neg x_p$};
    \draw[dashed,latex'-] (ph) -- +(-.75,-.5);
    \draw[dashed,latex'-] (nonph) -- +(-.75,.5);
    \draw (11,0) node [rond] (final) {};
    \draw [-latex'] (final) .. controls +(120:36pt) and +(60:36pt) .. (final);
    \draw [-latex'] (ph) -- (final);
    \draw [-latex'] (nonph) -- (final);
  \end{tikzpicture}
  \caption{Reachability game for the reduction of \SAT}
  \label{fig-jeuNP}
\end{figure*}
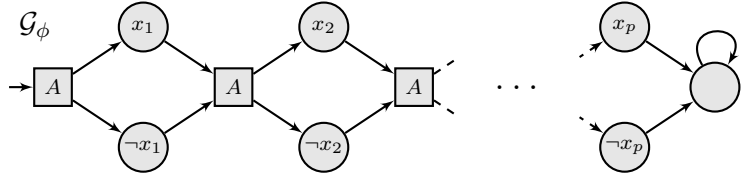

\subsection{Safety objectives}\label{subsec:safety}
The value problem for safety objectives is \P-complete. We next show
that the constrained NE existence problem can be solved in \NP, and
conclude with \NP-hardness of both the constrained NE existence problem and 
the NE existence problem. We hence prove:

\begin{theorem}
  For finite games with single safety objectives, the NE existence problem and
  the constrained NE existence problem are \NP-complete.
\end{theorem}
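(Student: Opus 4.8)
The plan is to prove \NP-membership of the constrained NE existence problem via the suspect-game construction of Section~\ref{sec:suspect}, then to obtain \NP-hardness by a reduction from \SAT, and finally to transfer both bounds to the (unconstrained) NE existence problem. Fix $L\subseteq\Agt$ and recall from Remark~\ref{simplification} that in $\calH(\calG,L)$ player~\Eve wins a play $\rho$ if, and only if, for every $A\in\limitpi2(\rho)\cap L$ the play $\Sproj_1(\rho)$ visits $T_A$ (where $\Omega_A=\Omega^{\text{Safety}}_{T_A}$). The algorithm for the constrained NE existence problem is then: by Proposition~\ref{lem:play-length}, guess a lasso-shaped play $\pi=\tau_1\cdot\tau_2^\omega$ in $\calG$ with $|\tau_1|,|\tau_2|\le|\Stat|^2$ satisfying the prescribed thresholds; let $L=\Losers(\pi)$, hence $\Occ(\pi)\cap T_A\neq\varnothing$ for every $A\in L$, and let $\rho$ be the lift of $\pi$ into $\calH(\calG,L)$ along which \Adam always obeys \Eve; then verify condition~\ref{cond:win} of Theorem~\ref{thm:eq-win} for~$\rho$.

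The heart of the proof is that this last verification can be carried out in polynomial time, the difficulty being that, safety objectives being prefix-dependent, condition~\ref{cond:win} does \emph{not} reduce to ``$\rho$ stays in \Eve's winning region'' (see Remark~\ref{rem:prefix-independant}). Two observations are used. First, for every index $i\ge|\tau_1|+|\tau_2|$ one has $\Occ(\Sproj_1(\rho_{\le i}))=\Occ(\pi)$, so $T_A$ has already been visited for each $A\in L$ and condition~\ref{cond:win} is vacuous at such~$i$; thus only polynomially many indices matter. Second, from such an~$i$ let \Eve follow $\rho$ as long as \Adam obeys: the resulting play is then either $\rho$ itself (which is winning for \Eve, since $\limitpi2(\rho)=\Agt$ and $\pi$ visits $T_A$ for all $A\in L$) or a play that branches off $\rho$ at some position~$j$ where \Adam deviates --- and, crucially, a player can be a suspect for a transition $(s,s')$ with $s'\neq\Tab(s,m_\Agt)$ only if she has at least two available actions at~$s$, which (as in the proof of Proposition~\ref{lem:polynomial-size}) forces $|\Susp((s,s'),m_\Agt)|\le\log_2|\Tab|$. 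Hence after \Adam's first deviation the suspect component has size $O(\log|\calG|)$ and stays so forever, so the part of $\calH(\calG,L)$ reachable after one deviation, enriched with a component $R\subseteq P\cap L$ recording the losers whose target is still to be visited, has only polynomially many configurations; moreover $R$ is non-increasing along every play, so \Eve's objective on this ``branch game'' is the reachability objective ``reach a configuration with $R=\varnothing$'', solvable in polynomial time by attractor computation. Putting this together, condition~\ref{cond:win} holds for $\rho$ if, and only if, for each of the polynomially many positions $j$ along $\rho$ before all states of $\pi$ have been visited, and each deviation successor $(s',\Susp((\rho_{=j},s'),m_j))$ of $\rho_{=j}$, that successor --- with $R$ initialised according to the states visited along $\rho_{\le j}$ together with the deviation target --- lies in \Eve's winning region of the corresponding branch game; that \Eve's ``follow~$\rho$ until \Adam deviates, then play a winning branch-strategy'' indeed witnesses condition~\ref{cond:win}, and is available whenever a Nash equilibrium exists, follows from Lemma~\ref{lemma-suspectgame}. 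Correctness of the whole algorithm is then a consequence of Theorem~\ref{thm:eq-win} and Proposition~\ref{lem:play-length}, exactly as in the reachability case.

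For \NP-hardness of the constrained NE existence problem I would encode \SAT by a turn-based game with one ``assignment'' player (given a trivially satisfied objective, so that she never deviates) and one player per clause, each clause player carrying a single safety objective, arranged so that a Nash equilibrium meeting the prescribed thresholds exists if, and only if, the formula is satisfiable; the clause gadget has to be organised differently from the one used for reachability objectives, because a safety objective naturally expresses a conjunction, rather than a disjunction, of literals. \NP-hardness then transfers to the NE existence problem by applying the transformation $E(\calG,A_i,A_j,\rho)$ of Section~\ref{sec:link-constr-exist} once per clause player: one checks that $E(\calG,A_i,A_j,\rho)$ stays in the class of games with single safety objectives --- the new objectives being of the form $\Omega^{\text{Safety}}_{T_A}$ or $\Omega^{\text{Safety}}_{T_A\cup\{s_1\}}$ depending on whether $\rho$ is winning or losing for~$A$ --- and Proposition~\ref{lem:link-constr-exist} yields the claim. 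Finally, the NE existence problem lies in \NP because it is the special case of the constrained problem in which, for each player~$A$, one takes a losing threshold for $\pi_A^-$ and a winning threshold for $\pi_A^+$, so that every equilibrium is admissible.

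The step I expect to be the main obstacle is the polynomial-time verification of condition~\ref{cond:win}: unlike for reachability objectives, where Lemma~\ref{lemma:reach-to-safety} reduces \Eve's objective in the suspect game directly to a plain safety objective over $\calJ(\calG)$, here \Eve's winning region cannot simply be precomputed, and the naive ``enrich the suspect game with an obligation record~$R$'' approach blows the game up exponentially (in the number of players). The resolution is precisely the pair of observations above --- that a losing player's obligation becomes vacuous as soon as the lasso prefix has visited all its target states, and that suspect sets collapse to logarithmic size after the first deviation --- which together confine the genuinely game-theoretic part of the check to polynomially many subgames, each of polynomial size.
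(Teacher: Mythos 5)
Your \NP-membership argument is sound and is essentially the paper's own proof in different clothing: you guess the lasso, observe that only deviation points before the prefix has exhausted $\Occ(\pi)$ matter, and exploit the bound $2^{|P|}\le|\Tab|$ on post-deviation suspect sets so that the ``remaining obligations'' record $R$ (the paper instead records the set of already-visited targets and phrases \Eve's task as a conjunction of reachability objectives) yields a polynomial-size product solvable by attractor computation; this matches the paper's refined algorithm for safety, including the restriction of the obligations to $(\Losers(\pi)\setminus\Losers(\rho_{\le i}))\cap P$.

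The genuine gap is in the hardness transfer to the unconstrained NE existence problem. You propose to apply the transformation $E(\calG,A_i,A_j,\rho)$ of Section~\ref{sec:link-constr-exist} once per clause player, but Proposition~\ref{lem:link-constr-exist} only enforces \emph{lower} bounds: any equilibrium of $E(\calG,A_i,A_j,\rho)$ projects to an equilibrium of $\calG$ whose outcome $\pi$ satisfies $\rho\prefrel_{A_i}\pi$, because $A_i$ can always deviate to the sink and secure a payoff equivalent to $\rho$. With single safety objectives, the natural \SAT encoding (player $A$ chooses literals, clause player $C_i$ has safety target $\{\ell_{i,1},\ell_{i,2},\ell_{i,3}\}$) makes satisfiability correspond to \emph{all clause players losing}, an upper-bound constraint: choosing $\rho$ losing for $C_i$ makes the enforced constraint vacuous, and choosing it winning enforces the opposite of what is needed. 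This is exactly the obstruction the paper flags before giving an ad-hoc construction: it appends concurrent $A/C_i$ modules leading to states $t$ and $u$ (Figure~\ref{fig-safety-hardnes}), so that in any equilibrium the play ends in $t$, while a clause player whose target was never visited could profitably redirect the play to $u$ --- forcing every equilibrium of the extended game to project to one of $\calG_\phi$ in which all $C_i$ lose. Your escape hatch --- redesigning the clause gadget so that satisfiability corresponds to the clause players \emph{winning}, which would make the constraint a lower bound amenable to $E(\cdot)$ --- is only asserted, and it is unclear it can work in this class: in such assignment games a clause is satisfied iff at least one of three literal states is visited, a disjunctive (reachability-type) condition whose complement, not itself, is a single safety objective. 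So as written the reduction for the NE existence problem does not go through and needs to be replaced by an argument of the ad-hoc kind used in the paper.
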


\subsubsection{Reduction to  a conjunction of reachability objectives}
We assume $\Omega_A$ is a single safety objective given by set~$T_A$.
In the corresponding suspect game, we show that the goal of \Eve is
equivalent to a conjunction of reachability objectives.  Let $L
\subseteq \Agt$. In suspect game $\calH(\calG,L)$, we define several
reachability objectives as follows: for each $A \in L$, we define
$T'_A = T_A \times \{P \mid P \subseteq \Agt\} \cup \Stat \times \{ P
\mid A \not\in P\}$, and we write $\Omega'_A$ for the corresponding
reachability objectives.
\begin{lemma}
  A play $\rho$ is winning for \Eve in $\calH(\calG,L)$ iff $\rho \in
  \bigcap_{A \in L} \Omega'_A$.
\end{lemma}

\begin{proof}
  Let $\rho$ be a play in $\calH(\calG,L)$, and assume it is winning
  for \Eve.  Then, for each $A \in \limitpi2(\rho)\cap L$, $\rho
  \notin \Omega_A$, which means that the target set $T_A$ is visited
  along $\Sproj_1(\rho)$, and therefore $T'_A$ is visited along $\rho$.
  If $A \notin \limitpi2(\rho)$, then a state $(s,P)$ with $A \notin
  P$ is visited by $\rho$: the target set $T'_A$ is visited. This
  implies that $\rho \in \bigcap_{A \in L} \Omega'_A$.

  Conversely let $\rho \in \bigcap_{A \in L} \Omega'_A$. For every $A
  \in L$, $T'_A$ is visited by~$\rho$. Then, either $T_A$ is visited
  by $\Sproj_1(\rho)$ (which means that $\rho \notin \Omega_A$) or
  $A\not\in \limitpi2(\rho)$.  In~particular, $\rho$~is a winning play
  for \Eve in~$\calH(\calG,L)$.
\end{proof}

\subsubsection{Algorithm for solving finite zero-sum turn-based games
  with a conjunction of reachability objectives} \label{conj-reach} We
now give a simple algorithm for solving zero-sum games with a
conjunction of reachability objectives. This algorithm works in
exponential time with respect to the size of the conjunction (we~will
see in Subsection~\ref{subsec:boo-reach} that the problem is
\PSPACE-complete). However for computing Nash equilibria in safety
games we will only use it for small (logarithmic size) conjunctions.

Let $\overline\calG$ be a two-player turn-based game with a winning
objective for \Eve given as a conjunction of $k$~reachability objectives
$\Omega_1,\dots,\Omega_k$. We assume vertices of \Eve and \Adam in
$\overline\calG$ are $V_\shortEve$ and $V_\shortAdam$ respectively,
and that the initial vertex is $v_0$. The idea is to construct a new
game $\overline\calG'$ that remembers the objectives that have been
visited so far.  The vertices of game $\overline\calG'$ controlled by
\Eve and \Adam are $V'_\shortEve = V_\shortEve \times 2^{\lsem 1 ,
  k\rsem}$ and $V'_\shortAdam = V_\shortAdam \times 2^{\lsem
  1,k\rsem}$ respectively.  There is a transition from~$(v,S)$ to
$(v',S')$ iff there is a transition from $v$ to $v'$ in the original
game and $S' = S\cup \{i \mid v'\in \Omega_i\}$.  The reachability
objective $\Omega$ for \Eve is given by target set $\Stat \times \lsem
1,k\rsem$.  It is clear that there is a winning strategy in
$\overline\calG$ from $v_0$ for the conjunction of reachability
objectives $\Omega_1,\dots,\Omega_k$ iff there is a winning strategy
in game $\overline\calG'$ from $(v_0, \{i \mid v_0 \in \Omega_i\})$
for the reachability objective $\Omega$.  The number of vertices of
this new game is $|V'_\shortEve \cup V'_\shortAdam| = |V_\shortEve
\cup V_\shortAdam| \cdot 2^k$, and the size of the new transition
table $\Tab'$ is bounded by $|\Tab| \cdot 2^k$, where $\Tab$ is the
transition table of $\overline\calG$.  An attractor computation on
$\overline\calG'$ is then done in time $\mathcal{O}(|V'_\shortEve \cup
V'_\shortAdam| \cdot |\Tab'|)$, we obtain an algorithm for solving
zero-sum games with a conjunction of reachability objectives, running
in time $\mathcal{O}(2^{2k}\cdot (|V_\shortEve \cup V_\shortAdam|
\cdot |\Tab|))$.

\subsubsection{Algorithm}
The algorithm for solving the constrained NE existence problem for single
reachability objectives could be copied and would then be correct. It~would
however not yield an \NP upper bound. We therefore propose a refined
algorithm:
\begin{enumerate}[label=(\roman*)]
\item guess a lasso-shaped play $\rho = \tau_1 \cdot \tau_2^\omega$
  (with $|\tau_i| \le |\Stat|^2$) in $\calJ(\calG)$ such that \Adam
  obeys \Eve along $\rho$, and $\pi = \Sproj_1(\rho)$ satisfies the
  constraint on the payoff.
  Note that if $\Losers(\pi)$ is the set of players
  losing in~$\pi$, computing $W(\calG,\Losers(\pi))$ would require
  exponential time. We will avoid this expensive computation.
\item check that any \Adam-deviation along~$\rho$, say at position~$i$
  (for~any~$i$), leads to a state from which \Eve has a
  strategy~$\sigma^i_\shortEve$ to ensure that any play in $\rho_{\le
    i}\cdot \Out(\sigma^i_\shortEve)$ is winning for her.
\end{enumerate}
Step~$(ii)$ can be done as follows: pick an \Adam-state $(s,\Agt,m_\Agt)$
along $\rho$ and a successor $(t,P)$ such that $t \ne \Tab(s,m_\Agt)$; we only
need to show that $(t,P) \in W(\calG,(\Losers(\pi) \setminus \Losers(\rho_{\le
  i})) \cap P)$. We~can compute this set efficiently (in polynomial time)
using the algorithm of the previous paragraph since $2^{|P|} \le |\Tab|$
(using the same argument as in Proposition~\ref{lem:polynomial-size}).

This non-deterministic algorithm, which runs in polynomial time,
precisely implements Theorem~\ref{thm:eq-win}, and therefore correctly
decides the constrained NE existence problem.

\subsubsection{Hardness}
The \NP-hardness for the constrained NE existence problem can be proven
by encoding an instance of \SAT using a game similar to that for
reachability objectives, see Section~\ref{subsec:reachability}. We
only change the constraint which is now that all players~$C_i$ should
be losing, and we get the same equivalence.

The reduction of Lemma~\ref{sec:link-constr-exist} cannot be used to
deduce the hardness of the NE existence problem, since it assumes a lower
bound on the payoff. Here the constraint is an upper bound (``each
player should be losing''). We therefore provide an ad-hoc reduction
in this special case, which is illustrated on
Figure~\ref{fig-safety-hardnes}. We add some module at the end of the
game to enforce that in an equilibrium, all players are losing.  We
add concurrent states between $A$ and each~$C_i$ (named~$A/C_i$).  All players~$C_i$
are trying to avoid~$t$, and $A$ is trying to avoid~$u$.

Since $A$ has no target in $\calG_\phi$ she cannot lose before seeing $u$, and
then she can always change her strategy in the concurrent states in order to
go to $t$. Therefore an equilibrium always ends in $t$. A~player~$C_i$ whose
target was not seen during game~$\calG_\phi$, can change her strategy in order
to go to~$u$ instead of~$t$. That means that if there is an equilibrium, there
was one in $\calG_\phi$ where all $C_i$ are losing. Conversely, if there was
such an equilibrium in~$\calG_\phi$, we can extend this strategy profile by
one whose outcome goes to $t$ and it is an equilibrium in the new game. This
concludes the \NP-hardness of the NE existence problem.

\begin{figure*}[t]
  \centering
  \begin{tikzpicture}[thick]
    \draw[dashed, rounded corners=2mm] (-.3,0) .. controls +(90:5mm)
    .. (1,.5) -- (2,.9) -- (3,.6) -- (3.5,0) -- (3,-.8) -- (2,-.9)
    -- (1,-.7) .. controls +(150:5mm) and +(-90:5mm) .. (-.3,0);
    \draw (1.85,0) node {\begin{minipage}{2.2cm}\small\raggedright
        Copy of $\calG_\phi$
    \end{minipage}};
    \tikzstyle{rond}=[draw,circle,minimum size=6mm,inner sep=0mm]
    \tikzstyle{oval}=[draw,minimum height=6mm,inner sep=0mm,rounded corners=2mm]
    \draw (0.2,0) node [rond] (I) {$s$};
    \draw (3,0) node [rond] (A) { };
    \draw (4.3,0) node [oval] (B) {$A/C_1$};
    \draw (6.5,0) node [oval] (C) {$A/C_2$};
    \draw (8.7,0) node [oval] (D) {$A/C_3$};
    \draw (11,0) node [rond] (T) {$t$};

    \draw (7,-1.7) node [rond] (U) {$u$};

    \draw[-latex'] (A) -- (B);
    \draw[-latex'] (B) -- (C) node[midway,above] {$\scriptstyle \tuple{1,1}, \tuple{2,2}$};
    \draw[-latex'] (B) -- (U) node[midway,left] {$\scriptstyle \tuple{1,2}, \tuple{2,1}$};
    \draw[-latex'] (C) -- (D) node[midway,above] {$\scriptstyle \tuple{1,1}, \tuple{2,2}$};
    \draw[-latex'] (C) -- (U) node[midway] {$\scriptstyle \tuple{1,2}, \tuple{2,1}$};
    \draw[-latex'] (D) -- (T) node[midway,above] {$\scriptstyle \tuple{1,1}, \tuple{2,2}$};
    \draw[-latex'] (D) -- (U) node[midway,right] {$\scriptstyle \tuple{1,2}, \tuple{2,1}$};

    \draw[-latex'] (T) .. controls +(30:10mm) and +(-30:10mm) .. (T);
    \draw[-latex'] (U) .. controls +(-50:10mm) and +(-120:10mm) .. (U);
    \draw[dashed] (I) -- +(40:5mm);
    \draw[dashed] (I) -- +(0:4mm);
    \draw[dashed] (I) -- +(-40:5mm);
    \draw[dashed] (A) -- +(140:5mm);
    \draw[dashed] (A) -- +(180:4mm);
    \draw[dashed] (A) -- +(220:5mm);

  \end{tikzpicture}
  \caption{Extending game~$\calG_\phi$ with final concurrent modules}
  \label{fig-safety-hardnes}
\end{figure*}
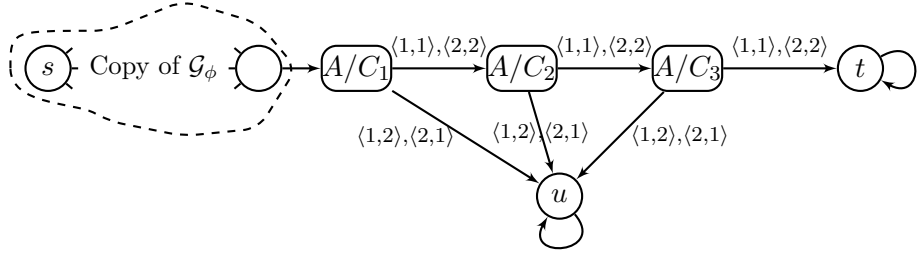

\subsection{B\"uchi objectives}
\label{subsec:buchi}
The value problem for B\"uchi objectives is \P-complete. In this
subsection we design a polynomial-time algorithm for solving the
constrained NE existence problem for B\"uchi objectives. The \P-hardness
of the NE existence problem can then be inferred from the \P-hardness of
the value problem, applying Propositions~\ref{lem:link-value-constr}
and~\ref{lem:link-value-exist}. Globally we prove the following
result:

\begin{theorem}
  For finite games with single B\"uchi objectives, the NE existence problem and
  the constrained NE existence problem are \P-complete.
\end{theorem}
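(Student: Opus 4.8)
The plan is to prove the two matching bounds separately.

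\textbf{\P-hardness.} The value problem for two-player turn-based B\"uchi games is \P-hard, by the standard encoding of the circuit-value problem (recalled in Section~\ref{ptime-hard}). A single B\"uchi objective induces a preference relation that is a finite \emph{total} preorder --- it has only the two classes ``losing'' and ``winning'' --- hence it is Noetherian and almost-well-founded, and B\"uchi games are determined. So the hypotheses of Propositions~\ref{lem:link-value-constr} and~\ref{lem:link-value-exist} are met, and the games these propositions build from a B\"uchi game are again finite games with single B\"uchi objectives: the initial matching-penny module of Figure~\ref{fig-init-module} only adds two states, and the auxiliary preferences ``end up looping forever in the sink~$s_1$'' are themselves B\"uchi conditions, with target~$\{s_1\}$. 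Hence the constrained NE existence problem and the NE existence problem are each at least as hard as the complement of the value problem; since \P\ is closed under complementation, both are \P-hard.

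\textbf{\P-membership: the suspect game becomes a co-B\"uchi game.} Assume $\Omega_A = \Omega^{\text{B\"uchi}}_{T_A}$ for every player~$A$. By Proposition~\ref{lem:polynomial-size} the reachable suspect arena $\calJ(\calG)$ has polynomial size, and by Remark~\ref{simplification} the only parameter the winning condition of~\Eve in $\calH(\calG,L)$ depends on is the set $L = \Losers(\pi)$ of players losing along the candidate outcome: \Eve wins a play~$\rho$ iff, for every player $A \in \limitpi2(\rho) \cap L$, the projection $\Sproj_1(\rho)$ visits~$T_A$ only finitely often. First I would recast this as a single co-B\"uchi condition. For fixed~$A$, the set $R_A = \{(s,P) \mid A \notin P\}$ is absorbing (suspect sets only shrink), so ``$A \in \limitpi2(\rho)$'' is equivalent to ``$\rho$ never enters~$R_A$''; therefore ``$\rho$ reaches~$R_A$, or $\Sproj_1(\rho)$ visits~$T_A$ only finitely often'' is equivalent to the single co-B\"uchi condition ``$\Inf(\rho) \cap \{(s,P) \mid s \in T_A,\ A \in P\} = \varnothing$''. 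Taking the conjunction over $A \in L$, \Eve's objective becomes the co-B\"uchi objective with forbidden set $\{(s,P) \mid \exists A \in P \cap L.\ s \in T_A\}$ (together with the matching \Adam-states). Co-B\"uchi games are solvable in polynomial time by an attractor computation, so the winning region $W(\calG,L)$ is computable in polynomial time for each fixed~$L$.

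\textbf{Reducing to a search for a short lasso.} Since B\"uchi objectives are prefix-independent, so is this co-B\"uchi objective of~\Eve, and Remark~\ref{rem:prefix-independant} turns condition~\ref{cond:win} of Theorem~\ref{thm:eq-win} into ``$\rho$ stays inside $W(\calG,L)$''. Combining Theorem~\ref{thm:eq-win} with Proposition~\ref{lem:play-length}, the (constrained) NE existence problem reduces to: is there a lasso-shaped play~$\pi$ from~$s$ of length $O(|\Stat|^2)$ whose payoff meets the constraint --- which for single B\"uchi objectives just prescribes a set~$U$ of players forced to win and a disjoint set~$D$ forced to lose --- such that the obey-lift of~$\pi$ stays in $W(\calG,L)$ for $L = \Losers(\pi)$? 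For any \emph{fixed} valid candidate~$L$ (i.e. $D \subseteq L$ and $L \cap U = \varnothing$) this is a polynomial-time test: inside the subgraph of~$\calG$ supported by $\{s \mid (s,\Agt) \in W(\calG,L)\}$ and reachable from~$s$, one searches for a reachable cycle visiting~$T_A$ for every winner $A \notin L$ and avoiding~$T_A$ for every loser $A \in L$.

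\textbf{The main obstacle and its resolution.} There are a priori exponentially many candidate loser sets~$L$, and $L = \Losers(\pi)$ is determined only by the full outcome~$\pi$ one is building, so the above merely yields membership in \NP\ --- exactly the situation for reachability, safety and co-B\"uchi objectives, which are indeed \NP-complete. The hard part will be to descend to \P\ by folding the dependence on~$L$ into the arena itself. The key observation is that along the obeyed play the suspect set stays~$\Agt$, and there the co-B\"uchi requirement ``visit $\bigcup_{A \in L} T_A$ only finitely often'' is automatically satisfied, precisely because $L$ is by definition the set of players whose targets are visited finitely often along~$\pi$; the only genuinely nontrivial queries to $W(\calG,L)$ are those at configurations $(t,Q)$ that \Adam can reach by a deviation, and for such configurations the relevant data is only $Q \cap L$ --- and $Q$ is small (of size at most $\log|\Tab|$) whenever the arena is of polynomial size, by the counting argument behind Proposition~\ref{lem:polynomial-size}. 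Hence only polynomially many ``views'' of~$L$ are ever consulted, and the search for~$\pi$ together with its loser set collapses to solving a single polynomial-size game. Making this reduction precise and proving its correctness is the technical core; everything else is a direct assembly of Theorem~\ref{thm:eq-win}, Propositions~\ref{lem:play-length} and~\ref{lem:polynomial-size}, and the co-B\"uchi reduction above.
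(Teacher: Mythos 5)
Your hardness argument and your reduction of \Eve's objective in $\calH(\calG,L)$ to a single co-B\"uchi condition are correct and coincide with the paper's (Lemma~\ref{lem:red-co-buchi} and Section~\ref{ptime-hard}); your observation that, by prefix-independence, condition~\ref{cond:win} of Theorem~\ref{thm:eq-win} reduces to checking each one-step \Adam-deviation $(t,P)$ against $W(\calG,\Losers(\pi)\cap P)$ is also sound. The gap is in the last step, which you yourself flag as ``the technical core'' and then only sketch: the claim that the search for the lasso $\pi$ together with its loser set ``collapses to solving a single polynomial-size game'' is not established, and the heuristic offered does not establish it. The fact that deviation states $(t,Q)$ have $|Q|\le\log_2|\Tab|$ only bounds the number of \emph{distinct winning regions} $W(\calG,M)$, $M\subseteq Q$, one might ever need to precompute; it does not resolve the circularity you correctly identified. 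The set $L=\Losers(\pi)$ is a global object in $2^{\Agt}$, determined by $\Inf(\pi)$, and which subset $Q\cap L$ must be used at each deviation point depends on this global $L$ --- which in turn depends on which cycle $\pi$ is allowed to traverse, i.e.\ on the very winning-region checks being performed. Choosing $L$ first puts you back at $2^{|\Agt|}$ candidates and hence only in \NP (exactly the situation for reachability, safety and co-B\"uchi, which are \NP-complete), and no construction of the alleged single polynomial-size game is given; tracking ``which losers' targets are visited infinitely often'' cannot be folded into a polynomial arena by any routine product.

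The paper closes precisely this gap with a dedicated argument: it characterises admissible outcomes by strongly connected transition systems $\langle K,E\rangle$ whose payoff is the function $v(K)$ of the infinity set $K$ itself (Lemma~\ref{lem:characterization}), and then computes all maximal such $\langle K,E\rangle$ by the recursive \SSG decomposition into strongly connected components, pruning states and edges that are not supported by $W(\calG,v(K'))$ for the \emph{current} component $K'$ and recursing (Lemmas~\ref{lem:character2} and~\ref{lem:compute-sol}). Correctness and the polynomial bound rest on monotonicity: $K\subseteq K'$ implies $v(K)\preorder v(K')$ and hence $W(\calG,v(K))\subseteq W(\calG,v(K'))$, so the top-down refinement never discards a valid $\langle K,E\rangle$ and terminates after at most $|\Stat|^2$ recursive calls. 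Some mechanism of this kind (a fixed-point or SCC-refinement argument exploiting monotonicity of the payoff in the infinity set) is what your proof is missing; without it you have proved \NP-membership and \P-hardness, but not the claimed \P upper bound.
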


\subsubsection{Reduction to a co-B\"uchi game}
We assume that for every player $A$, $\Omega_A$ is a B\"uchi objective
given by target set $T_A$.  Given $L \subseteq \Agt$, in the suspect
game~$\calH(\calG,L)$, we show that the objective of \Eve is
equivalent to a single co-B\"uchi objective.  We define the co-B\"uchi
objective~$\Omega_L$ in $\calH(\calG,L)$ given by the target set $T_L
= \{(s,P) \mid \exists A\in P \cap L.\ s \in T_A\}$. Notice that the
target set is defined in the same way as for reachability objectives.

\begin{lemma}\label{lem:red-co-buchi}
  A play $\rho$ is winning for \Eve in $\calH(\calG,L)$ iff $\rho \in
  \Omega_L$.
\end{lemma}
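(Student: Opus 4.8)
The plan is to translate \Eve's winning condition in $\calH(\calG,L)$ into the stated co-B\"uchi condition by exploiting the fact that, along any play of $\calH(\calG,L)$, the second component (the set of remaining suspects) can only decrease, and therefore stabilises. Recall from Remark~\ref{simplification} that \Eve wins a play $\rho$ of $\calH(\calG,L)$ exactly when every player $A\in\limitpi2(\rho)\cap L$ is losing along $\Sproj_1(\rho)$, i.e.\ $\Inf(\Sproj_1(\rho))\cap T_A=\varnothing$, whereas $\rho\in\Omega_L$ means $\Inf(\rho)\cap T_L=\varnothing$ with $T_L=\{(s,P)\mid\exists A\in P\cap L.\ s\in T_A\}$.

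The first step is a simple observation about $\Inf$ in the suspect game. Since $\Sproj_2(\rho)$ is non-increasing over the finite set $2^\Agt$, it is eventually equal to $\limitpi2(\rho)$. As $T_L$ consists only of \Eve-states and $\Sproj_1$ records exactly the \Eve-states visited along $\rho$, this yields: for an \Eve-state $(s,P)$, one has $(s,P)\in\Inf(\rho)$ if, and only if, $P=\limitpi2(\rho)$ and $s\in\Inf(\Sproj_1(\rho))$. Indeed, if $(s,P)$ occurs infinitely often then $P$ occurs infinitely often as a value of $\Sproj_2(\rho)$, forcing $P=\limitpi2(\rho)$, and $s$ occurs infinitely often in $\Sproj_1(\rho)$; conversely, once $\Sproj_2(\rho)$ has stabilised to $\limitpi2(\rho)=P$, every (of which there are infinitely many) further occurrence of $s$ in $\Sproj_1(\rho)$ comes from the \Eve-state $(s,P)$, so $(s,P)\in\Inf(\rho)$.

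Both implications then follow by contraposition. For the "if" direction, assume $\Inf(\rho)\cap T_L=\varnothing$ and take $A\in\limitpi2(\rho)\cap L$; if some $s$ were in $\Inf(\Sproj_1(\rho))\cap T_A$, the observation would give $(s,\limitpi2(\rho))\in\Inf(\rho)$, and since $A\in\limitpi2(\rho)\cap L$ and $s\in T_A$ this state lies in $T_L$, a contradiction. Hence $\Inf(\Sproj_1(\rho))\cap T_A=\varnothing$ for every such $A$, so $\rho$ is winning for \Eve. For the "only if" direction, assume $\rho$ is winning and suppose $(s,P)\in\Inf(\rho)\cap T_L$; the observation gives $P=\limitpi2(\rho)$ and $s\in\Inf(\Sproj_1(\rho))$, and by definition of $T_L$ there is $A\in P\cap L=\limitpi2(\rho)\cap L$ with $s\in T_A$, so $s\in\Inf(\Sproj_1(\rho))\cap T_A$, contradicting that $A$ loses along $\Sproj_1(\rho)$. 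Hence $\Inf(\rho)\cap T_L=\varnothing$, i.e.\ $\rho\in\Omega_L$.

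The argument is short, and I do not expect a genuine obstacle: the only points needing care are the stuttering convention in $\calH(\calG,L)$ (all projections are taken over \Eve-states only), and the degenerate case $\limitpi2(\rho)=\varnothing$, which needs no separate treatment since membership in $T_L$ already forces $P\cap L\neq\varnothing$ and hence $\limitpi2(\rho)\neq\varnothing$.
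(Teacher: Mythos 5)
Your proof is correct and follows essentially the same route as the paper: both arguments hinge on the observation that an \Eve-state $(s,P)$ is visited infinitely often along $\rho$ exactly when $P=\limitpi2(\rho)$ and $s\in\Inf(\Sproj_1(\rho))$, and both directions are then obtained by the same contradiction arguments. The only cosmetic difference is that you isolate this observation as a standalone step, whereas the paper inlines it in each implication.
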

\begin{proof}
  Assume that $\rho$ is winning for \Eve in $\calH(\calG,L)$. Then for every $A
  \in \limitpi2(\rho) \cap L$, it~holds ${\Inf(\Sproj_1(\rho)) \cap T_A =
    \varnothing}$.  Toward a contradiction, assume that ${\Inf(\rho)
    \cap T_L \ne \varnothing}$. There exists $(s,P)$ such that there
  is $A \in P \cap L$ with $s \in T_A$, which appears infinitely often
  along $\rho$. In particular, $P = \limitpi2(\rho)$ (otherwise it
  would not appear infinitely often along $\rho$). Hence, we have
  found $A \in \limitpi2(\rho) \cap L$ such that $\Inf(\Sproj_1(\rho))
  \cap T_A \ne \varnothing$, which is a contradiction. Therefore,
  $\rho \in \Omega_L$.

  Assume $\rho \in \Omega_L$: for every $(s,P)$ such that there exists
  $A \in P \cap L$ with $s \in T_A$, $(s,P)$ appears finitely often
  along $\rho$. Let $A \in \limitpi2(\rho) \cap L$, and assume towards
  a contradiction that there is $s \in T_A$ such that $s$ appears
  infinitely often along $\Sproj_1(\rho)$. This means that
  $(s,\limitpi2(\rho))$ appears infinitely often along $\rho$, which
  contradicts the above condition. Therefore, $\rho$ is winning for
  \Eve in $\calH(\calG,L)$.
\end{proof}

\subsubsection{Algorithm}
\label{subsubsec:algo2}
As for reachability objectives, the winning region for \Eve in
$\calH(\calG,L)$ can be computed in polynomial time (since this is the
winning region of a co-B\"uchi game, see Lemma~\ref{lem:red-co-buchi}
above). A non-deterministic algorithm running in polynomial time
similar to the one for reachability objectives can therefore be
inferred. However we can do better than guessing an appropriate
lasso-shaped play $\rho$ by looking at the strongly connected
components of the game: a strongly connected component of the game
uniquely defines a payoff, which is that of all plays that visit
infinitely often all the states of that strongly connected
component. Using a clever partitioning of the set of strongly
connected components of the game, we obtain a polynomial-time
algorithm.

\medskip From now on and until the end of
Subsection~\ref{subsubsec:algo2} we relax the hypotheses on the
preference relations (that they are all single-objective with a
B\"uchi condition). We present an algorithm in a more general context,
since the same techniques will be used in
Subsection~\ref{subsec:co-reducible} (and we chose to only present
once the construction).
For the rest of this subsection we therefore make the following
assumptions on the preference relations $(\mathord\prefrel_A)_{A \in
  \Agt}$. For every player $A \in \Agt$: \label{hyp:star}
\begin{enumerate}[label=(\alph*)]
\item $\prefrel_A$ only depends on the set of states which is visited
  infinitely often: if $\rho$ and $\rho'$ are two plays such that
  $\Inf(\rho) = \Inf(\rho')$ then $\rho \prefrel_A \rho'$ and $\rho'
  \prefrel_A \rho$;
\item $\prefrel_A$ is given by an ordered objective $\omega_A$ with
  preorder $\preorder_A$, and $\preorder_A$ is supposed to be
  monotonic;
\item for every threshold $w^A$, we can compute in polynomial time
  $S^A \subseteq \Stat$ such that $\Inf(\rho) \subseteq S^A
  \Leftrightarrow \rho \prefrel_A w^A$.
\end{enumerate}
Obviously preferences given by single B\"uchi objectives do satisfy
those hypotheses. At every place where it is relevant, we will explain
how the particular case of single B\"uchi objectives is handled.
Next we write $(\star)$ for the above assumptions, and $(\star)_{a}$
(resp. $(\star)_b$, $(\star)_c$) for only the first (resp. second,
third) assumption.

\medskip We first characterise the `good' plays in $\calJ(\calG)$ in
terms of the strongly connected components they define: the strongly
connected component defined by a play is the set of states that are
visited infinitely often by the play.
We fix for each player $A$, equivalence classes of plays $u^A$ and
$w^A$, that represent lower- and upper-bounds for the constrained
NE existence problem. Both can be represented as finite sets,
representing the set of states which are visited infinitely often.
For each $K \subseteq \Stat$, we write $v^A(K)$ for the equivalence
class of all paths~$\pi$ that visits infinitely often exactly~$K$,
\ie: $\Inf(\pi) = K$.  We also write $v(K)=(v^A(K))_{A \in \Agt}$.  We
look for a transition system~$\langle K,E\rangle$, with $K\subseteq
\Stat$ and $E \subseteq K \times K$, for which the following
properties hold:
\begin{enumerate}
\item $u^A \preorder_A v^A(K) \preorder_A w^A$ for
  all~$A\in\Agt$; \label{cond:1}
\item $\langle K,E\rangle$ is strongly connected;\label{cond:2}
\item $\forall k\in K.\ (k,\Agt) \in W(\calG,v(K))$;\label{cond:3} 
\item $\forall (k,k') \in E.\ \exists (k,\Agt,m_\Agt)\in
  W(\calG,v(K)).\ \Tab(k,m_\Agt) = k'$;\label{cond:4}
\item $(K\times\{\Agt\})$ is reachable from $(s,\Agt)$ in
  $W(\calG,v(K))$;\label{cond:5}
\end{enumerate}
where $W(\calG,v(K))$ is the winning region of \Eve in suspect
game~$\calH(\calG,v(K))$.\footnote{Formally the suspect game has been
  defined with a play as reference, and not a equivalence
  class. However, in this subsection, if $\pi$ and $\pi'$ are
  equivalent, the games $\calH(\calG,\pi)$ and $\calH(\calG,\pi')$ are
  identical.}

If one can find one such transition system $\langle K,E \rangle$, then
we will be able to build a lasso-play $\rho$ from $(s,\Agt)$ in the
suspect-game that will satisfy the conditions of
Theorem~\ref{thm:eq-win}. Formally, we have the following lemma:

\begin{lemma}\label{lem:characterization}
  Under hypothesis $(\star)_{a}$, there is a transition system
  $\langle K,E\rangle$ satisfying
  conditions~\ref{cond:1}--\ref{cond:5} if, and only if, there is a
  path $\rho$ from $(s, \Agt)$ in $\calH(\calG,v(K))$ that never gets
  out of $W(\calG,v(K))$, along which \Adam always obeys \Eve, $u^A
  \preorder_A v^A(K) \preorder_A w^A$ for all~$A\in\Agt$, and
  $\Sproj_1(\Inf(\rho) \cap V_\shortEve)=K$ (which implies that
  $\rho\in v^A(K)$ for all $A$).
\end{lemma}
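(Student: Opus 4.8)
The plan is to prove the two directions of the equivalence separately, keeping the set $K\subseteq\Stat$ fixed throughout (so that $v(K)=(v^A(K))_{A\in\Agt}$ and the game $\calH(\calG,v(K))$ are fixed objects; this is also the only place where hypothesis $(\star)_{a}$ is needed, since it guarantees that $v^A(K)$ is a genuine equivalence class of $\prefrel_A$ and hence that $\calH(\calG,v(K))$ does not depend on the representative). I will use two elementary facts about the suspect game: the suspect-set component of a configuration is non-increasing along every play and equals $\Agt$ along a prefix exactly when \Adam obeys \Eve there; and, since \Adam controls the \Adam-configurations, every successor of a configuration of the form $(t,\Agt,m_\Agt)$ that lies in $W(\calG,v(K))$ again lies in $W(\calG,v(K))$.

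For the implication from $\langle K,E\rangle$ to $\rho$, I would build $\rho$ as a lasso $\pi_0\cdot\tau^\omega$. For the loop, strong connectedness of $\langle K,E\rangle$ (condition~\ref{cond:2}) provides a closed walk $k_0\,k_1\cdots k_\ell$ with $k_\ell=k_0$ that uses only edges of $E$ and visits every state of $K$ (concatenate $E$-paths between consecutive elements of an enumeration of $K$). Each edge $(k_i,k_{i+1})\in E$ is lifted, using condition~\ref{cond:4}, to a segment $(k_i,\Agt)\cdot(k_i,\Agt,m^i_\Agt)\cdot(k_{i+1},\Agt)$ of $\calH(\calG,v(K))$ along which \Adam obeys \Eve (legal because $\Tab(k_i,m^i_\Agt)=k_{i+1}$); its \Adam-configuration lies in $W(\calG,v(K))$ by condition~\ref{cond:4}, its \Eve-configurations by condition~\ref{cond:3} (or, alternatively, as successors of an \Adam-configuration of $W(\calG,v(K))$). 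Concatenating these segments gives a loop $\tau$ from $(k_0,\Agt)$ to $(k_0,\Agt)$, entirely inside $W(\calG,v(K))$, with \Adam obeying, and visiting exactly the \Eve-configurations $\{(k,\Agt)\mid k\in K\}$ since it uses only edges of $E\subseteq K\times K$. For the prefix, condition~\ref{cond:5} yields a path inside $W(\calG,v(K))$ from $(s,\Agt)$ to some $(k_0,\Agt)$ with $k_0\in K$; as its two endpoints carry the full suspect set $\Agt$ and this component never grows, \Adam obeys \Eve all along it. Gluing $\pi_0$ and $\tau^\omega$ gives a play $\rho$ staying in $W(\calG,v(K))$, along which \Adam obeys \Eve, with $\Sproj_1(\Inf(\rho)\cap V_\shortEve)=K$, and condition~\ref{cond:1} is exactly the required payoff bounds.

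For the converse, given such a $\rho$, write $\eta=\Sproj_1(\rho)$; since \Adam obeys, $\eta$ is a play of $\calG$ and $\Inf(\eta)=\Sproj_1(\Inf(\rho)\cap V_\shortEve)=K$. Take $E$ to be the set of edges $(k,k')\in K\times K$ traversed infinitely often by $\eta$. Condition~\ref{cond:1} is assumed. Condition~\ref{cond:2} is the standard fact that for an infinite walk in a finite graph the states seen infinitely often, together with the edges used infinitely often, form a strongly connected subgraph: choose a suffix of $\eta$ using only such states and edges, and join any two of its states by a factor of that suffix. Condition~\ref{cond:3} holds since each $(k,\Agt)$ with $k\in K$ is an \Eve-configuration occurring infinitely often along $\rho$, hence in $W(\calG,v(K))$ because $\rho$ never leaves it. Condition~\ref{cond:4} holds since an edge $(k,k')\in E$ is realised infinitely often along $\rho$ and, there being finitely many moves, some \Adam-configuration $(k,\Agt,m_\Agt)$ with $\Tab(k,m_\Agt)=k'$ occurs infinitely often and is therefore in $W(\calG,v(K))$. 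Condition~\ref{cond:5} holds because $\rho$ itself is a path of $W(\calG,v(K))$ from $(s,\Agt)$ that eventually reaches $K\times\{\Agt\}$.

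The step I expect to require the most care is matching condition~\ref{cond:5} with the requirement that \Adam obey \Eve along the prefix of $\rho$: the point is that configurations in $K\times\{\Agt\}$ carry the full suspect set $\Agt$ and that the suspect-set component is non-increasing, so that reachability of $K\times\{\Agt\}$ from $(s,\Agt)$ within $W(\calG,v(K))$ can be witnessed by a path on which the suspect set stays $\Agt$, i.e.\ one on which \Adam obeys \Eve. A minor additional point is ensuring the loop $\tau$ visits \emph{exactly} $K$, which follows from $E\subseteq K\times K$ together with strong connectedness. Once the lemma is established, combining it with Theorem~\ref{thm:eq-win} and Proposition~\ref{lem:play-length} reduces the constrained NE existence problem to searching for a transition system $\langle K,E\rangle$ meeting conditions~\ref{cond:1}--\ref{cond:5}.
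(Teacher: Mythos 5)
Your proof follows the same route as the paper's: a lasso construction for the forward direction (loop obtained by lifting the edges of a closed walk through $K$ via conditions~\ref{cond:3} and~\ref{cond:4}, prefix from condition~\ref{cond:5}) and, for the converse, extraction of $K$ and $E$ from the states and edges traversed infinitely often; the converse direction and the loop part of the forward direction are correct. The genuine problem sits exactly at the step you singled out. You claim that a path inside $W(\calG,v(K))$ from $(s,\Agt)$ to a configuration of $K\times\{\Agt\}$ is automatically one along which \Adam obeys \Eve, because its suspect-set component must stay equal to $\Agt$. But keeping the full suspect set is only \emph{necessary}, not sufficient, for obedience: from $(t,\Agt)$ with \Eve proposing $m_\Agt$, \Adam may move to a state $t'\neq\Tab(t,m_\Agt)$ and still keep suspect set $\Agt$, as soon as \emph{every} player $A$ has an action $m'$ with $\Tab(t,m_\Agt[A\mapsto m'])=t'$ (already with two players: $\Tab(t,(a,a))=x$ and $\Tab(t,(b,a))=\Tab(t,(a,b))=t'$). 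Consequently condition~\ref{cond:5}, read as plain graph reachability inside $W(\calG,v(K))$, does not yield an obedient prefix, and this is not a mere formality: one can build a small two-player game with single B\"uchi objectives (targets $T_{A_1}=T_{A_2}=\{y\}$, a state $s$ with $\Tab(s,(a,a))=x$, $\Tab(s,(a,b))=\Tab(s,(b,a))=k$, $\Tab(s,(b,b))=y$, and self-loops on $x,y,k$) in which, for $K=\{k\}$, conditions~\ref{cond:1}--\ref{cond:5} all hold — \Eve's only winning proposal at $(s,\Agt)$ is $(a,a)$, whose non-obedient successor $(k,\Agt)$ lies in $W(\calG,v(K))$ — yet every path from $(s,\Agt)$ that stays in $W(\calG,v(K))$ and on which \Adam obeys ends up looping at $x$, never realising $\Sproj_1(\Inf(\rho)\cap V_\shortEve)=K$: the moves $(a,b)$ and $(b,a)$ that lead obediently into $K$ are losing proposals for \Eve, since they expose her to a deviation to $(y,\{A_1\})$ or $(y,\{A_2\})$.

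The repair is to take condition~\ref{cond:5} in the stronger sense of reachability of $K\times\{\Agt\}$ from $(s,\Agt)$ along a path of $W(\calG,v(K))$ on which \Adam obeys \Eve (equivalently, a path $(s_0,\Agt)\,(s_0,\Agt,m^0_\Agt)\,(s_1,\Agt)\cdots$ with $s_{i+1}=\Tab(s_i,m^i_\Agt)$ and all its configurations in the winning region). This is the reading under which the equivalence goes through, it is how the witness is used afterwards, and your own converse direction already produces exactly such a witness, namely $\rho$ itself; note that the paper's proof is equally terse at this point and does not justify obedience of the prefix either. So either adopt that stronger reading of condition~\ref{cond:5} explicitly, or replace your ``full suspect set implies obedience'' inference by a genuine argument — as it stands, that inference is false and the forward direction has a hole precisely there.
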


\begin{proof}
  The first implication is shown by building a path in $W(\calG,v(K))$
  that successively visits all the states in~$K\times\{\Agt\}$
  forever. Thanks to~\ref{cond:5}, \ref{cond:2} and~\ref{cond:4}
  (and the fact that \Adam obeys~\Eve), such a path exists, and
  from~\ref{cond:3} and~\ref{cond:4}, this path remains in the
  winning region.  From~\ref{cond:1}, we~have the condition on the
  preferences.
  Conversely, consider such a path~$\rho$, and let $K =
  \Sproj_1(\Inf(\rho)\cap V_\shortEve)$ and $E = \{ (k,k')\in K^2 \mid
  \exists (k,\Agt,m_\Agt) \in \Inf(\rho).\ \Tab(k,m_\Agt) =
  k'\}$. Condition~\ref{cond:5} clearly holds.
  Conditions~\ref{cond:1}, \ref{cond:3} and~\ref{cond:4} are
  easy consequences of the hypotheses and construction.  We~prove that
  $\langle K, E \rangle$ is strongly connected. First, since \Adam
  obeys~\Eve and $\rho$ starts in~$(k,\Agt)$, we~have
  $\limitpi2(\rho)=\Agt$. Now, take any two states~$k$ and~$k'$ in~$K$: then
  $\rho$ visits~$(k,\Agt)$ and~$(k',\Agt)$ infinitely often, and there
  is a subpath of~$\rho$ between those two states, all of which states
  appear infinitely often along~$\rho$. Such a subpath gives rise to a
  path between~$k$ and~$k'$, as required.
\end{proof}

As a consequence, if $\langle K,E \rangle$ satisfies the five previous
conditions, by~Theorem~\ref{thm:eq-win}, there is a Nash equilibrium
whose outcome lies between the bounds~$u^A$ and~$w^A$.  Our aim is to
compute efficiently all maximal pairs $\langle K, E \rangle$ that
satisfy the five conditions.

To that aim we define a recursive function \SSG (standing for ``solve
sub-game''), working on transition systems, that will decompose
efficiently any transition system that does not satisfy the five
conditions above into polynomially many disjoint sub-transition
systems \textit{via} a decomposition into strongly connected
components.
\begin{itemize}
\item if $K\times\{\Agt\} \subseteq W(\calG,v(K))$, and if for all
  $(k,k') \in E$ there is a $(k,\Agt,m_\Agt)$ in $W(\calG,v(K))$
  s.t. $\Tab(k,m_\Agt) = k'$, and finally if $\langle K,E\rangle$ is
  strongly connected, then we set $\SSG(\langle K,E\rangle) =
  \{\langle K,E\rangle\}$. This means that conditions (2)-(4) are
  satisfied by $\langle K,E\rangle$.
\item otherwise, we let
  \[ 
  \SSG(\langle K,E\rangle) = \bigcup_{\langle K',E'\rangle \in
    \SCC(\langle K,E\rangle)} \SSG ( T (\langle K',E'\rangle))
  \]
  where $\SCC(\langle K,E \rangle)$ is the set of strongly connected
  components of~$\langle K,E\rangle$ (which can be computed in linear
  time), and where $T(\langle K',E'\rangle)$ is the transition system
  whose set of states is $\{ k \in K' \mid (k,\Agt) \in
  W(\calG,v(K'))\}$ and whose set of edges is
  \[
  \{(k,k') \in E'\mid \exists (k,\Agt,m_\Agt) \in W(\calG,v(K')).\
  \Tab(k,m_\Agt) = k' \}.
  \]
  Notice that this set of edges is never empty, but $T(\langle
  K',E'\rangle)$ might not be strongly connected anymore, so that this
  is really a recursive definition.
\end{itemize}
The recursive function $\SSG$ decomposes any (sub-)transition system
of the game into a list of disjoint transition systems which all
satisfy conditions (2)-(4) above.

So far the computation does not take into account the bounds for the
payoffs of the players (lower bound $u^A$ and upper bound $w^A$ for
player $A$). For each upper bound $w^A$, we assume condition
$(\star)_c$ holds . In the particular case of a single B\"uchi
objective for each player define by target $T_A$, this is simply done
by setting $S^A = \Stat \setminus T_A$, if this player has to be
losing (that is, if $w^A$ does not satisfy the B\"uchi objective).
Now assuming we have found the appropriate set $S^A$, we define
\[ 
\Sol = \SSG\bigl(\langle \bigcap_{A\in\Agt} S^A , \Edg' \rangle \bigr)
\cap \bigl\{ \langle K,E\rangle \mid \forall A\in \Agt.\ u^A \preorder
v^A(K) \bigr\}
\]
where $\Edg'$ restricts $\Edg$ to $\bigcap_{A\in\Agt} S^A$.

We now show that the set $\Sol$ computes (in a sense that we make
clear) the transition systems that are mentioned in
Lemma~\ref{lem:characterization}).

\begin{lemma}\label{lem:character2}
  We suppose condition $(\star)$ holds.
  If $\langle K,E\rangle \in \Sol$ then it satisfies
  conditions~\ref{cond:1} to~\ref{cond:4}. Conversely, if $\langle
  K,E\rangle$ satisfies conditions~\ref{cond:1} to~\ref{cond:4}, then
  there exists $\langle K',E'\rangle \in \Sol$ such that $\langle K,E
  \rangle \subseteq \langle K',E'\rangle$.
\end{lemma}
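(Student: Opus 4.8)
The plan is to prove the two implications separately, after isolating one monotonicity fact that does most of the work: \emph{if $K \subseteq K'$ then $v^A(K) \preorder_A v^A(K')$ for every $A\in\Agt$, and consequently $W(\calG,v(K)) \subseteq W(\calG,v(K'))$.} The first half follows directly from $(\star)_c$: applied to the threshold $v^A(K')$ it produces a set $S\subseteq\Stat$ with $\Inf(\rho)\subseteq S \iff \rho\prefrel_A v^A(K')$; taking $\rho$ to visit exactly $K'$ infinitely often and using reflexivity gives $K'\subseteq S$, hence $K\subseteq K'\subseteq S$, so any play visiting exactly $K$ satisfies $\prefrel_A v^A(K')$. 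For the winning regions, recall that in $\calH(\calG,v(K))$ a play $\rho$ is winning for \Eve iff $\Sproj_1(\rho)\prefrel_A v^A(K)$ for every $A\in\limitpi2(\rho)$; since $v^A(K)\preorder_A v^A(K')$, transitivity shows that every $\rho$ winning w.r.t.\ $v(K)$ is also winning w.r.t.\ $v(K')$, so the same \Eve-strategy remains winning and $W(\calG,v(K))\subseteq W(\calG,v(K'))$. I expect this to be the main technical point, precisely because the reference payoff used to define the winning region changes at every level of the recursion defining $\SSG$.

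I will also use the routine observation that every transition system occurring in $\SSG(\langle K_0,E_0\rangle)$, with $K_0=\bigcap_{A}S^A$, is a sub-transition system of $\langle K_0,E_0\rangle$: an SCC is an induced sub-transition system and the operator $T$ only deletes states and edges, so this propagates along the recursion; moreover every such output is produced in a base case and hence, by the base-case test, is strongly connected and satisfies conditions~\ref{cond:3}--\ref{cond:4} with respect to $W(\calG,v(K))$, i.e.\ conditions~\ref{cond:2}--\ref{cond:4}.

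For the forward implication, let $\langle K,E\rangle\in\Sol$. By the preceding remark it satisfies conditions~\ref{cond:2}--\ref{cond:4}, and $K\subseteq\bigcap_A S^A\subseteq S^A$ for every $A$, so $(\star)_c$ applied to the upper bound $w^A$ gives $v^A(K)\preorder_A w^A$. Together with the conjunct $\forall A.\ u^A\preorder_A v^A(K)$ that defines $\Sol$, this establishes condition~\ref{cond:1}, so all of conditions~\ref{cond:1}--\ref{cond:4} hold.

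For the converse, assume $\langle K,E\rangle$ satisfies conditions~\ref{cond:1}--\ref{cond:4}. Condition~\ref{cond:1} and $(\star)_c$ give $K\subseteq S^A$ for all $A$, so $\langle K,E\rangle$ is a sub-transition system of $\langle\bigcap_A S^A,\Edg'\rangle$. I then prove, by well-founded induction following the recursive calls of $\SSG$, that for every transition system $\langle K'',E''\rangle$ with $\langle K,E\rangle\subseteq\langle K'',E''\rangle$, the set $\SSG(\langle K'',E''\rangle)$ contains some $\langle K',E'\rangle$ with $\langle K,E\rangle\subseteq\langle K',E'\rangle$. If $\langle K'',E''\rangle$ is a base case, take $\langle K',E'\rangle=\langle K'',E''\rangle$. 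Otherwise, since $\langle K,E\rangle$ is strongly connected (condition~\ref{cond:2}) and is a subgraph of $\langle K'',E''\rangle$, it lies inside a single strongly connected component $\langle\widetilde K,\widetilde E\rangle$ of $\langle K'',E''\rangle$; using conditions~\ref{cond:3}--\ref{cond:4} for $\langle K,E\rangle$ and the monotonicity fact ($K\subseteq\widetilde K$ yields $W(\calG,v(K))\subseteq W(\calG,v(\widetilde K))$), every vertex and every edge of $\langle K,E\rangle$ survives in $T(\langle\widetilde K,\widetilde E\rangle)$, hence $\langle K,E\rangle\subseteq T(\langle\widetilde K,\widetilde E\rangle)$, and since $T(\langle\widetilde K,\widetilde E\rangle)$ is strictly smaller than $\langle K'',E''\rangle$ in the recursion order the induction hypothesis applies. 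Instantiating this at $\langle K_0,E_0\rangle=\langle\bigcap_A S^A,\Edg'\rangle$ yields $\langle K',E'\rangle\in\SSG(\langle K_0,E_0\rangle)$ with $K\subseteq K'$; finally $u^A\preorder_A v^A(K)\preorder_A v^A(K')$ by condition~\ref{cond:1} and monotonicity, so $\langle K',E'\rangle$ meets the second defining conjunct of $\Sol$ as well, i.e.\ $\langle K',E'\rangle\in\Sol$, which completes the proof.
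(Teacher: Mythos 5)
Your proof is correct and follows essentially the same route as the paper's: the forward direction reads off the base-case test of $\SSG$ together with the defining conjunct of $\Sol$ and the inclusion $K\subseteq\bigcap_A S^A$, while the converse is an induction along the recursive calls of $\SSG$, locating $\langle K,E\rangle$ inside a strongly connected component and using the monotonicity fact $K\subseteq K'\Rightarrow v^A(K)\preorder_A v^A(K')\Rightarrow W(\calG,v(K))\subseteq W(\calG,v(K'))$ to show that $\langle K,E\rangle$ survives the operator $T$. The only cosmetic deviations are that you derive $v^A(K)\preorder_A v^A(K')$ from $(\star)_c$ where the paper invokes monotonicity $(\star)_b$ directly, and that you induct on the recursion tree of $\SSG$ rather than on the size of the transition system, which lets you appeal to well-foundedness of the recursion where the paper explicitly checks that $T$ applied to the SCC is strictly smaller.
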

\begin{proof}
  Let $\langle K,E\rangle \in \Sol$.  By~definition of~\SSG, all
  $(k,\Agt)$ for $k \in K$ are in~$W(\calG,v(K))$, and for all
  $(k,k')\in E$, there is a state~$(k,\Agt,m_\Agt)$ in $W(\calG,v(K))$
  such that $\Tab(k,m_\Agt) = k'$, and $\langle K,E\rangle$ is
  strongly connected.  Also, for all $A$, $u^A \preorder v^A(K)$
  because $\Sol \subseteq \{ \langle K,E\rangle \mid u^A \preorder
  v^A(K) \}$.  Finally, for any $A \in \Agt$, $v^A(K) \preorder w^A$
  because the set $K$ is included in $S^A$.

  Conversely, assume that $\langle K,E \rangle$ satisfies the
  conditions.  We~show that if $\langle K,E\rangle \subseteq \langle
  K',E'\rangle$ then there is $\langle K'',E''\rangle$ in
  $\SSG(\langle K',E'\rangle)$ such that $\langle K,E\rangle \subseteq
  \langle K'',E''\rangle$.  The proof is by induction on the size of
  $\langle K',E' \rangle $.

  The basic case is when $\langle K',E'\rangle$ satisfies
  the conditions~\ref{cond:2}, \ref{cond:3}, and~\ref{cond:4}: in that case, 
  $\SSG(\langle K' ,E'\rangle) = \{\langle K',E'\rangle\}$, and by
  letting $\langle K'',E'' \rangle= \langle K',E' \rangle$ we get the
  expected result.

  We now analyze the other case.  There is a strongly connected
  component of $\langle K',E'\rangle$, say $\langle K'',E''\rangle$,
  which contains $\langle K,E\rangle$, because $\langle K,E\rangle$
  satisfies condition~\ref{cond:2}.  We have $v^A(K) \preorder_A
  v^A(K'')$ (because $K\subseteq K''$ and $\preorder_A$ is monotonic)
  for every~$A$, and thus $W(\calG,v(K))\subseteq W(\calG,v(K''))$.
  This ensures that $T(\langle K'',E''\rangle)$ contains $\langle K,E
  \rangle$ as a subgraph.  Since $\langle K'',E'' \rangle$ is a
  subgraph of $\langle K',E' \rangle$, the graph $T(\langle K'',E''
  \rangle)$ also~is.  We show that they are not equal, so~that we can
  apply the induction hypothesis to~$T(\langle K'',E''\rangle)$.  For
  this, we~exploit the fact that $\langle K',E'\rangle$ does not
  satisfy one of conditions~\ref{cond:2} to~\ref{cond:4}:
 \begin{itemize}
 \item first, if $\langle K',E'\rangle$ is not strongly connected while
   $\langle K'',E''\rangle$~is, they cannot be equal;
 \item if there is some $k \in K'$ such that $(k,\Agt)$ is not in
   $W(\calG,v(K'))$, then $k$ is not a vertex of $T(\langle K'' , E''
   \rangle)$;
 \item if there some edge $(k,k')$ in $E'$ such that there is no state
   $(k,\Agt,m_\Agt)$ in $W(\calG,v(K'))$ such that $\Tab(k,m_\Agt) =
   k'$, then the edge $(k,k')$ is not in $T(\langle K'' , E''
   \rangle)$.
 \end{itemize}
  We then apply the induction hypothesis to $T(\langle
  K'',E''\rangle)$, and get the expected result. 
  Now, because of condition~\ref{cond:1}, 
  $u^A \preorder v^A(K) \preorder w^A$.
  Hence, due to the previous analysis, there exists $\langle
  K',E'\rangle \in \SSG\left(\langle \bigcap_{A \in \Agt} S^A
    , \Edg'  \rangle \right)$
    such that $\langle
  K,E\rangle \subseteq \langle K',E'\rangle$.  This concludes the
  proof of the lemma.
\end{proof}

\begin{lemma}\label{lem:compute-sol}
  Under assumptions $(\star)$, if for every $K$, the set
  $W(\calG,v(K))$ can be computed in polynomial time, then the set
  $\Sol$ can also be computed in polynomial time.
\end{lemma}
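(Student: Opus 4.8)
The plan is to unfold the definition of $\Sol$ and check that every ingredient is polynomial-time computable, the only substantial point being a complexity analysis of the recursive procedure $\SSG$. Recall that
\[
\Sol = \SSG\bigl(\langle \bigcap_{A\in\Agt} S^A , \Edg' \rangle \bigr) \cap \bigl\{ \langle K,E\rangle \mid \forall A\in \Agt.\ u^A \preorder_A v^A(K) \bigr\}.
\]
The sets $S^A$ are computed in polynomial time by assumption $(\star)_c$ (for single B\"uchi objectives one simply takes $S^A=\Stat$ or $S^A=\Stat\setminus T_A$ according to whether the player must win or lose), hence so is the transition system $\langle \bigcap_{A}S^A,\Edg'\rangle$. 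Once the set $\SSG(\langle \bigcap_{A}S^A,\Edg'\rangle)$ has been computed, it will be shown below to contain at most $|\Stat|+1$ transition systems; for each of them and each player $A$ the test $u^A\preorder_A v^A(K)$ is then performed in polynomial time, since the payoff vector $v^A(K)$ is obtained by checking, for each component objective $\Omega^A_i$, whether a play $\rho$ with $\Inf(\rho)=K$ satisfies it (for instance, for a B\"uchi objective, whether its target set meets $K$) and comparing the resulting vectors in the polynomially-represented preorder $\preorder_A$. So it suffices to prove that $\SSG(\langle \bigcap_{A}S^A,\Edg'\rangle)$ is computed in polynomial time.

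To that end I would first bound the recursion depth. Set $m(\langle K,E\rangle)=|K|+|E|$, so that $0\le m\le |\Stat|+|\Stat|^2$ for every sub-transition-system of $\calG$. Whenever $\SSG$ takes its recursive branch on $\langle K,E\rangle$, every transition system $T(\langle K',E'\rangle)$ on which it then recurses satisfies $m(T(\langle K',E'\rangle))<m(\langle K,E\rangle)$: if $\langle K,E\rangle$ is not strongly connected then each strongly connected component $\langle K',E'\rangle$ has $K'\subsetneq K$, and $T$ only shrinks it; and if $\langle K,E\rangle$ is strongly connected then $\SCC(\langle K,E\rangle)=\{\langle K,E\rangle\}$ and, the base case not applying, $\langle K,E\rangle$ violates condition~\ref{cond:3} or~\ref{cond:4}, so $T$ deletes at least one state or at least one edge. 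Hence the recursion terminates and its tree has depth at most $|\Stat|+|\Stat|^2$.

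Next I would control the branching. The vertex sets of the transition systems passed to the sibling recursive calls of a single call are pairwise disjoint — strongly connected components partition the vertex set and $T$ retains a subset of them — and each is contained in the vertex set of the caller. It follows that the vertex sets occurring in the subtrees rooted at two distinct children of a common node are disjoint, and hence that any two distinct leaves of the recursion tree (the calls hitting the base case) carry disjoint vertex sets; as these vertex sets are nonempty (the edge set returned by $T$ is never empty, the only possible exception being the root of the recursion), there are at most $|\Stat|+1$ leaves. Consequently at most $|\Stat|$ internal nodes have two or more children (these are exactly the calls on a non-strongly-connected transition system), and the remaining internal nodes, each having a single child, lie on $O(|\Stat|)$ chains of length at most the recursion depth $|\Stat|+|\Stat|^2$; in total $\SSG$ performs $O(|\Stat|^3)$ recursive calls. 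Each call runs a strongly-connected-component decomposition in linear time and computes at most $|\Stat|+1$ winning regions $W(\calG,v(K'))$, each in polynomial time by hypothesis, together with a linear-time filtering of states and edges; so each call, and therefore the whole of $\SSG$ and hence $\Sol$, runs in polynomial time.

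The main obstacle is precisely this last count. Since $\SSG$ simultaneously branches (into strongly connected components) and iterates (through $T$), the naive bound obtained by multiplying the branching factor by the depth is exponential; what makes the recursion polynomial is the disjointness of the vertex sets of sibling calls, which caps the number of leaves — and hence the number of branching nodes — by $|\Stat|$, after which the one-child chains contribute only a polynomial factor.
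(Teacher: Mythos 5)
Your proof is correct and takes essentially the same route as the paper's: both reduce the lemma to bounding the number of recursive calls of $\SSG$, each call consisting of a linear-time SCC decomposition plus polynomially many polynomial-time computations of winning regions $W(\calG,v(K))$, followed by the polynomial filtering against the bounds. The paper merely asserts a polynomial bound ($|\Stat|^2$) on the number of calls, whereas you justify an $O(|\Stat|^3)$ bound via the strictly decreasing measure $|K|+|E|$ and the disjointness of sibling vertex sets; this is a more careful account of the same argument, and both bounds are polynomial, so the conclusion is unaffected.
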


\begin{proof}
  Each recursive call to \SSG applies to a decomposition in strongly
  connected components of the current transition system under
  consideration.  Hence the number of recursive calls is bounded by
  $|\Stat|^2$.  Computing the decomposition in SCCs can be done in
  linear time.  By assumption, each set 
% Furthermore, thanks to Lemma~\ref{lem:red-co-buchi},
  $W(\calG,v(K))$ can be computed in polynomial time.  $S^A$ is
  obtained by removing the target of the losers (for $w^A$) from
  $\Stat$.  Hence globally we can compute $\Sol$ in polynomial time.
\end{proof}

To conclude the algorithm, we need to check that
condition~\ref{cond:5} holds for one of the solutions $\langle K, E
\rangle$ in $\Sol$. It can be done in polynomial time by looking for a
path in the winning region of \Eve in $\calH(\calG,v(K))$ that reaches
$K \times \{\Agt\}$ from $(s,\Agt)$. The correctness of the algorithm
is ensured by the fact that if some $\langle K, E \rangle$ satisfies
the five conditions, there is a $\langle K',E' \rangle$ in $\Sol$ with
$K \subseteq K'$ and $E \subseteq E'$. Since $K \subseteq K'$ implies
$v^A(K) \preorder_A v^A(K')$, the winning region of \Eve in
$\calH(\calG,v(K'))$ is larger than that $\calH(\calG,v(K'))$, which
implies that the path from $(s,\Agt)$ to $K \times \{\Agt\}$ is also a
path from $(s,\Agt)$ to $K' \times \{\Agt\}$. Hence, $\langle K', E'
\rangle$ also satisfies condition~\ref{cond:5}, and therefore the five
expected conditions.

\bigskip We have already mentioned that single B\"uchi objectives do
satisfy the hypotheses~$(\star)$. Furthermore,
Lemma~\ref{lem:red-co-buchi} shows that, given $v(K)$, one can compute
the set~$W(\calG,v(K))$ as the winning region of a co-B\"uchi
turn-based game, which can be done in polynomial time (this is argued
at the beginning of the section). Therefore
Lemma~\ref{lem:compute-sol} and the subsequent analysis apply: this
concludes the proof that the constrained NE existence problem for finite
games with single B\"uchi objectives is in \P.

\subsubsection{Hardness}
\label{ptime-hard}
We recall a possible proof of \P-hardness for the value problem, from
which we will infer the other lower bounds.  The circuit-value problem
can be easily encoded into a deterministic turn-based game with
B\"uchi objectives: a circuit (which we assume w.l.o.g. has only \AND-
and \OR-gates) is transformed into a two-player turn-based game, where
one player controls the \AND-gates and the other player controls the
\OR-gates. We add self-loops on the leaves.  Positive leaves of the
circuit are the (B\"uchi) objective of the \OR-player, and negative
leaves are the (B\"uchi) objective of the \AND-player. Then obviously,
the circuit evaluates to true iff the \OR-player has a winning
strategy for satisfying his B\"uchi condition, which in turn is
equivalent to the fact that there is an equilibrium with payoff~$0$
for the \AND-player, by Proposition~\ref{lem:link-value-constr}.  We
obtain \P-hardness for the NE existence problem, using
Proposition~\ref{lem:link-value-exist}: the preference relations in
the game constructed in Proposition~\ref{lem:link-value-exist} are
B\"uchi objectives.

\subsection{Co-B\"uchi objectives}\label{subsec:cobuchi}

The value problem for co-B\"uchi objectives is \P-complete. We now prove
that the constrained NE existence problem is in \NP, and that the
constrained NE existence problem and the 
NE existence problem are \NP-hard.
We therefore deduce:

\begin{theorem}
  For finite games with single co-B\"uchi objectives, the NE existence problem 
  and the constrained NE existence problem are \NP-complete.
\end{theorem}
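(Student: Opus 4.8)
The plan is to establish the two halves of the \NP-completeness claim separately, using the suspect-game machinery exactly as was done for reachability and safety objectives. For membership in \NP, the idea is to show that, in the suspect game $\calH(\calG,L)$ associated with a set $L$ of players who must be losing, the objective of \Eve reduces to a suitable $\omega$-regular condition that can be checked efficiently. Since $\Omega_A$ is a co-B\"uchi objective given by a target set $T_A$ (player~$A$ wins iff $\Inf$ of the play avoids $T_A$), ``$A$ loses'' means $\Inf(\Sproj_1(\rho))\cap T_A\ne\varnothing$, i.e.\ some state of $T_A$ is visited infinitely often. So for each $A\in L$ one gets a B\"uchi-flavoured requirement on $\rho$; more precisely \Eve should ensure that for every $A$ that survives in $\limitpi2(\rho)\cap L$, the play $\Sproj_1(\rho)$ visits $T_A$ infinitely often, while players who drop out of the suspect set impose no constraint. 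This naturally packages as a \emph{generalised B\"uchi / Streett-like} condition on $\calJ(\calG)$: a disjunction over $A$ of ``$A\notin\limitpi2(\rho)$'' together with ``$T_A$ visited infinitely often''. Following the pattern of Lemma~\ref{lemma:reach-to-safety} and Lemma~\ref{lem:red-co-buchi}, I~would prove an equivalence lemma stating that a play $\rho$ is winning for \Eve in $\calH(\calG,L)$ iff it satisfies this explicit condition (one direction by noting that $\limitpi2(\rho)\subseteq P$ for every state $(s,P)$ occurring infinitely often, the other by the obey-switching trick for \Adam). Once that is in place, the nondeterministic polynomial-time algorithm mirrors the reachability case: guess a lasso $\rho=\tau_1\cdot\tau_2^\omega$ of polynomial length in $\calJ(\calG)$ (legal by Proposition~\ref{lem:play-length}) along which \Adam obeys \Eve and whose projection $\pi$ satisfies the payoff constraint; then verify, for every \Adam-deviation of $\rho$ at every position~$i$, that \Eve has a strategy from the resulting configuration ensuring the (co-B\"uchi-shaped) winning condition, exploiting $2^{|P|}\le|\Tab|$ as in Proposition~\ref{lem:polynomial-size} so that the relevant sub-games are polynomial. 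Correctness is then immediate from Theorem~\ref{thm:eq-win}.

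For \NP-hardness of the constrained NE existence problem, I~would reduce from \SAT using essentially the game $\calG_\phi$ of Figure~\ref{fig-jeuNP}: player~$A$ picks, for each variable $x_k$, a visit to $x_k$ or to $\lnot x_k$, and each clause-player $C_i$ is equipped with the co-B\"uchi objective of \emph{never} visiting (infinitely often, equivalently at all after the final loop) a literal-state falsifying $c_i$ --- or, dually, a co-B\"uchi objective whose target is the set of literal states that satisfy $c_i$, so that $C_i$ wins iff that clause is \emph{not} satisfied by the chosen assignment. One then arranges the constraint so that $\phi$ is satisfiable iff there is a Nash equilibrium with the appropriate payoff vector for the $C_i$'s; the argument is the same equivalence as in the reachability case, only with the roles of ``visited'' replaced by ``visited infinitely often along the absorbing self-loop''. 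Since $A$ has no objective inside $\calG_\phi$, in an equilibrium she cannot improve, and each $C_i$ can threaten a deviation exactly when her clause is unsatisfied, which forces the run of any equilibrium to encode a satisfying assignment.

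For \NP-hardness of the (unconstrained) NE existence problem, I~expect to reuse the transformation $E(\calG,A_i,A_j,\rho)$ of Section~\ref{sec:link-constr-exist}, applied once per clause-player as in the reachability case, adding an auxiliary player $C_{n+1}$ whose objective in every $E(\calG_{i-1},C_i,C_{n+1},\rho)$ is again expressible as a co-B\"uchi objective (target all states outside an absorbing sink), so that Proposition~\ref{lem:link-constr-exist} applies and the chain of games has a Nash equilibrium iff $\phi$ is satisfiable. The one point requiring care --- and the likely main obstacle --- is the upper-bound direction of membership, namely showing the ad-hoc sub-game solved in step~(ii) of the algorithm is genuinely solvable in polynomial time: co-B\"uchi (and the small generalised-B\"uchi disjunctions that arise) are polynomial for turn-based games, but one must argue that the winning condition induced on each $\calH$-successor of an \Adam-deviation is of bounded ``width'' (bounded by $\log|\Tab|$, as in the safety analysis) so that no exponential blow-up occurs; with that observation the whole argument goes through and yields \NP-completeness.
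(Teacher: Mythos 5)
Your \NP-membership argument is essentially the paper's: the key step is exactly the paper's reduction lemma, namely that \Eve's winning condition in $\calH(\calG,L)$ is the conjunction of the B\"uchi objectives with targets $T'_A=T_A\times\{P\mid P\subseteq\Agt\}\cup\Stat\times\{P\mid A\notin P\}$ for $A\in L$, and since this condition is prefix-independent the lasso-guessing scheme of the reachability case applies. Your ``main obstacle'' about bounding the width of the conjunctions is not actually an obstacle (nor needed): turn-based games with a conjunction of $k$ B\"uchi objectives are solvable in time polynomial in $k$ as well, via the product with a counter in $\lsem 0,k\rsem$, so one global computation of \Eve's winning region suffices; your fallback on $2^{|P|}\le|\Tab|$ is harmless but superfluous.

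The hardness part, however, has a genuine gap. In the game $\calG_\phi$ of Figure~\ref{fig-jeuNP} every literal state is visited at most once and the only state visited infinitely often is the absorbing final state; hence \emph{every} co-B\"uchi objective whose target consists of literal states is satisfied by every play, and the claimed equivalence with satisfiability of $\phi$ collapses. Your parenthetical ``infinitely often, equivalently at all after the final loop'' is false: after entering the final loop no literal state is visited at all, so ``visited'' and ``visited infinitely often'' are not interchangeable here --- co-B\"uchi conditions only see $\Inf(\rho)$, which in $\calG_\phi$ is just the sink. If you instead loop the game back so that literal states can occur in $\Inf(\rho)$, you lose the structural guarantee that exactly one of $x_k,\lnot x_k$ is chosen: player $A$ may alternate between them across rounds, and then ``each clause has a literal visited infinitely often'' no longer encodes satisfiability. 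This is precisely why the paper's reduction (Section~\ref{subsec:cobuchi}, after Ummels) uses a different gadget: the module $M(\phi)$ of Figure~\ref{fig-M} looped on itself, with one extra player $B_k$ per variable whose co-B\"uchi target is $\{x_k\}$ and who controls the $\lnot x_k$-states with an escape to $\bot$, so that in any equilibrium not both $x_k$ and $\lnot x_k$ are visited infinitely often; the constraint is then the single lower-bound requirement that $A_1$ (co-B\"uchi target $\{\bot\}$) wins. Your plan for the unconstrained problem inherits these issues and adds two more: the auxiliary player in $E(\calG,A_i,A_j,\rho)$ must have the co-B\"uchi target $\{s_1\}$ (avoid the sink infinitely often), not ``all states outside the sink'' as you propose, since she has to prefer plays entering $\calG$ over $s_0\cdot s_1^\omega$; and Proposition~\ref{lem:link-constr-exist} only removes lower-bound (``this player wins'') constraints, so it cannot be applied to your dual formulation in which satisfiability corresponds to the clause players \emph{losing}.
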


The proof of this Theorem is very similar to that for safety
objectives: instead of conjunction of reachability objectives, we
need to deal with conjunction of B\"uchi objectives. Of course
constructions and algorithms need to be adapted. That is what we
present now.

\subsubsection{Reduction to a conjunction of B\"uchi conditions}
We assume that for every player $A$, $\Omega_A$ is a single co-B\"uchi
objective~$\Omega_A$ given by~$T_A$.  In the corresponding suspect
game, we show that the goal of player \Eve is equivalent to a
conjunction of B\"uchi objectives.  Let $L \subseteq \Agt$. In suspect
game $\calH(\calG,L)$, we define several B\"uchi objectives as
follows: for each $A \in L$, we define $T'_A = T_A \times \{P \mid P
\subseteq \Agt\} \cup \Stat \times \{ P \mid A \not\in P\}$, and we
write $\Omega'_A$ for the corresponding B\"uchi objective.
\begin{lemma}
  A play $\rho$ is winning for \Eve in $\calH(\calG,L)$ iff $\rho \in
  \bigcap_{A \in L} \Omega'_A$.
\end{lemma}

\begin{proof}
  Let $\rho$ be a play in $\calH(\calG,L)$, and assume it is winning
  for \Eve.  Then, for each $A \in \limitpi2(\rho)\cap L$, $\rho
  \notin \Omega_A$, which means that the target set $T_A$ is visited
  along $\Sproj_1(\rho)$, and therefore $T'_A$ is visited infinitely
  often along $\rho$.  If $A \notin \limitpi2(\rho)$, then a state
  $(s,P)$ with $A \notin P$ is visited infinitely often by $\rho$: the
  target set $T'_A$ is visited infinitely often. This implies that
  $\rho \in \bigcap_{A \in L} \Omega'_A$.

  Conversely let $\rho \in \bigcap_{A \in L} \Omega'_A$. For every $A
  \in L$, $T'_A$ is visited infinitely often by $\rho$. Then, either
  $T_A$ is visited infinitely often by $\Sproj_1(\rho)$ (which means
  that $\rho \notin \Omega_A$) or $A\not\in \limitpi2(\rho)$.  In
  particular, $\rho$ is a winning play for \Eve in $\calH(\calG,L)$.
\end{proof}

\subsubsection{Algorithm for solving zero-sum games with a conjunction of
  B\"uchi objectives}
We adapt the algorithm for conjunctions of reachability
objectives (page~\pageref{conj-reach}) to conjunctions of B\"uchi
objectives.  Let $\calG$ be a two-player turn-based game with a
winning objective for \Eve given as a conjunction of B\"uchi
objectives $\Omega_1,\dots,\Omega_k$.  The idea is to construct a new
game $\calG'$ which checks that each objective~$\Omega_i$ is visited
infinitely often.  The vertices of~$\calG'$ controlled by \Eve and
\Adam are $V'_\shortEve = V_\shortEve \times \lsem 0 , k\rsem$ and
$V'_\shortAdam = V_\shortAdam \times \lsem 0,k \rsem$ respectively.
There is a transition from $(v,k)$ to $(v',0)$ iff there is a
transition from $v$ to $v'$ in the original game and for $0 \le i <
k$, there is a transition from~$(v,i)$ to $(v',i+1)$ iff there is a
transition from $v$ to $v'$ in the original game and $v'\in
\Omega_{i+1}$. In~$\calG'$, the objective for \Eve is the B\"uchi
objective~$\Omega$ given by target set $\Stat \times \{k\}$, where
$\Stat=V_\shortEve \cup V_\shortAdam$ is the set of vertices of
$\calG$. It is clear that there is a winning strategy in $\calG$ from
$v_0$ for the conjunction of B\"uchi objectives
$\Omega_1,\dots,\Omega_k$ iff there is a winning strategy in $\calG'$
from $(v_0, 0)$ for the B\"uchi objective~$\Omega$.  The number of
states of game $\calG'$ is $|\Stat'|= |\Stat| \cdot k$, and the size
of the transition table $|\Tab'|=|\Tab| \cdot k$.  Using the standard
algorithm for turn-based B\"uchi objectives~\cite{CHP08}, which works in time
$\mathcal{O}(|\Stat'| \cdot |\Tab'|)$, we~obtain an algorithm for
solving zero-sum games with a conjunction of B\"uchi objectives
running in time $\mathcal{O}(k^2\cdot |\Stat|\cdot |\Tab|)$ (hence in
polynomial time).

\subsubsection{Algorithm}
The algorithm is the same as for reachability objectives. Only the
computation of the set of winning states in the suspect game is
different.  Since we just showed that this part can be done in
polynomial time, the global algorithm still runs in
(non-deterministic) polynomial time.

\subsubsection{Hardness}
The hardness result for the constrained NE existence problem with
co-B\"uchi objectives was already proven in~\cite{ummels08}. The idea
is to encode an instance of \SAT into a game with co-B\"uchi
objectives.  For completeness we describe the reduction below, and
explain how it can be modified for proving \NP-hardness of the
NE existence problem.

Let us consider an instance $\phi = c_1 \land \dots \land c_n$ of
\SSAT, where $c_i = \ell_{i,1} \lor \ell_{i,2} \lor
\ell_{i,3}$, and $\ell_{i,j} \in \{ x_k, \lnot x_k \mid 1 \le k \le
p\}$. 
The game~$\calG$ is obtained from module~$M(\phi)$ depicted on
Figure~\ref{fig-M}, by joining the outgoing edge of~$c_{n+1}$ to~$c_1$. Each
module~$M(\phi)$ involves a set of players~$B_{k}$, one for each
variable~$x_k$, and a player $A_1$. Player $A_1$ controls the clause states.
Player~$B_{k}$ control the literal states~$\ell_{i,j}$ when $\ell_{i,j} = \neg
x_k$, then having the opportunity to go to state~$\bot$. There is no
transition to~$\bot$ for literals of the form~$x_k$. In~$M(\phi)$, assuming
that the players~$B_k$ will not play to~$\bot$, then $A_1$ has a strategy that
does not visit both~$x_k$ and~$\neg x_k$ for every $k$ if, and only~if,
formula~$\phi$ is satisfiable.
Finally, the co-B\"uchi objective of~$B_k$ is given by
$\{x_k\}$. In~other terms, the aim of $B_k$ is to visit~$x_k$ only a
finite number of times. This way, in a Nash equilibrium, it~cannot be
the case that both $x_k$ and~$\neg x_k$ are visited infinitely often:
it~would imply that $B_k$ loses but could improve her payoff by going
to~$\bot$ (actually, $\neg x_k$ should not be visited at all if~$x_k$
is visited infinitely often).  Therefore setting the objective of
$A_1$ to $\{\bot\}$, there is a Nash equilibrium where she wins iff
$\phi$ is satisfiable.  This shows \NP-hardness for the constrained
NE existence problem.

For the NE existence problem, we use the transformation described in
Section~\ref{sec:link-constr-exist}. We add an extra player~$A_2$ to~$\calG$
and consider the game $\calG' = E(\calG,A_1,A_2,\rho)$, where $\rho$ is a
winning path for~$A_1$. The objective of the players in~$\calG'$ can be
described by co-B\"uchi objectives: $A_2$~has to avoid seeing $T = \{ s_1\}$
infinitely often and keep the same target for~$A_1$. Applying
Proposition~\ref{lem:link-constr-exist}, there is a Nash equilibrium
in~$\calG'$ if, and only~if, there is one in~$\calG$ where $A_1$~wins, this
shows \NP-hardness for the NE existence problem.
\begin{figure}[t]
  \centering
  \begin{tikzpicture}[thick]
    \tikzstyle{rond}=[draw,circle,minimum size=6mm,inner sep=0mm]
    \tikzstyle{oval}=[draw,minimum height=6mm,inner sep=0mm,rounded corners=2mm]
    \draw (0,0) node [rond] (C1) {$c_1$};
    \draw (3,0) node [rond] (C2) {$c_2$};
    \draw (7,0) node [rond] (CN) {$c_{n+1}$};
    \draw (C1.90) node [above] {$A_1$};
    \draw (C2.90) node [above] {$A_1$};

    \draw (1.5,1) node [rond] (L11) {$\ell_{1,1}$};
    \draw (1.5,0) node [rond] (L12) {$\ell_{1,2}$};
    \draw (1.5,-1) node [rond] (L13) {$\ell_{1,3}$};
    \draw (3.5,2) node [rond] (B) {$\bot$};

    \draw[-latex'] (C1) -- (L11);
    \draw[-latex'] (C1) -- (L12);
    \draw[-latex'] (C1) -- (L13);
    \draw[-latex'] (L11) -- (C2);
    \draw[-latex'] (L12) -- (C2);
    \draw[-latex'] (L13) -- (C2);

    \draw[-latex'] (C2) -- +(1,0.5);
    \draw[-latex'] (C2) -- +(1,0);
    \draw[-latex'] (C2) -- +(1,-0.5);
    \draw (5,0) node {\dots} ;

    \draw[-latex'] (CN)+(-1,-0.5) -- (CN);
    \draw[-latex'] (CN)+(-1,0) -- (CN);
    \draw[-latex'] (CN)+(-1,0.5) -- (CN);

    \draw[-latex'] (CN) -- + (1,0); 
    \draw[-latex'] (-1,0) -- (C1) ;
    \draw[-latex',dotted] (L11) -- (B);
    \draw[-latex',dotted] (L12) -- (B);
    \draw[-latex',dotted] (L13) -- (B);
    \draw[-latex'] (B) .. controls +(.5,1) and +(-.5,1) .. (B);
  \end{tikzpicture}
  \caption{Module $M(\phi)$, where $\phi = c_1 \land \dots \land c_n$  
    and $c_i = \ell_{i,1} \lor \ell_{i,2} \lor \ell_{i,3}$}\label{fig-M}
  \end{figure}

\subsection{Objectives given as circuits}\label{subsec:circuits}

The value problem is known to be \PSPACE-complete for turn-based games
and objectives given as circuits~\cite{Hunter07}. The transformation
presented in the beginning of the section can be used to decide the
value problem for finite concurrent games with a single
circuit-objective, yielding \PSPACE-completeness of the value problem
in the case of finite concurrent games as well.

We now show that the (constrained) NE existence problem is also
\PSPACE-complete in this framework:
\begin{theorem}
  For finite games with single objectives given as circuits, the NE existence
  problem and the constrained NE existence problem are \PSPACE-complete.
\end{theorem}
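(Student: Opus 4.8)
The plan is to prove the two bounds separately, both via the suspect-game construction of Section~\ref{sec:suspect}.

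For the upper bound, I would first observe that a circuit objective depends only on the set $\Inf(\rho)$, so the preference relations of $\calG$ are prefix-independent and satisfy the hypothesis of Proposition~\ref{lem:play-length}. By Remark~\ref{simplification}, \Eve's winning condition in the suspect game depends only on $L=\Losers(\pi)$ and is prefix-independent, and by Remark~\ref{rem:prefix-independant} condition~\ref{cond:win} of Theorem~\ref{thm:eq-win} then reduces to requiring that the witness play stay in the winning region $W(\calG,L)$ of $\calH(\calG,L)$. The crux is that $W(\calG,L)$ is computable in polynomial space: the arena $\calJ(\calG)$ has polynomial size (Proposition~\ref{lem:polynomial-size}), and \Eve's objective on $\calJ(\calG)$ is itself (equivalent to) a circuit objective. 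Indeed, whether a play $\rho$ of $\calJ(\calG)$ is winning for \Eve depends only on $\Inf(\rho)$, which determines $\limitpi2(\rho)$ (the common second component of the \Eve-states in $\Inf(\rho)$) and $\Sproj_1(\Inf(\rho))$; then for each losing suspect $A\in\limitpi2(\rho)\cap L$ one evaluates the circuit $C_A$ defining $\Omega_A$. Assembling the $|\Agt|$ circuits $C_A$ with this bookkeeping yields a polynomial-size circuit computing \Eve's condition from $\Inf(\rho)$, so by the \PSPACE\ procedure for turn-based circuit games~\cite{Hunter07}, $W(\calG,L)$ is computable in polynomial space. The decision procedure is then: guess a lasso play $\rho=\tau_1\cdot\tau_2^\omega$ of polynomial length in $\calJ(\calG)$ (length bound from Proposition~\ref{lem:play-length}, as in Section~\ref{subsec:reachability}) along which \Adam obeys \Eve; set $\pi=\Sproj_1(\rho)$ and $L=\Losers(\pi)$; verify in polynomial time the threshold constraint on $\pi$ by evaluating the relevant circuits on $\Inf(\pi)$; compute $W(\calG,L)$ in polynomial space; and check that $\rho$ never leaves $W(\calG,L)$. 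Correctness follows from Theorem~\ref{thm:eq-win} together with Proposition~\ref{lem:play-length} and Remarks~\ref{simplification}--\ref{rem:prefix-independant}; the procedure runs in nondeterministic polynomial space, hence in \PSPACE\ by Savitch's theorem. For the unconstrained problem one drops the threshold check.

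For the lower bound, I would reduce from the value problem for finite turn-based games with a single circuit objective, which is \PSPACE-complete~\cite{Hunter07}, using the reductions of Section~\ref{sec-prelim}. A single circuit objective induces a total, Noetherian and almost-well-founded preference relation, and a turn-based game with a circuit objective, being $\omega$-regular and hence Borel, is determined; so the hypotheses of Propositions~\ref{lem:link-value-constr} and~\ref{lem:link-value-exist} are met. For the constrained NE existence problem, Proposition~\ref{lem:link-value-constr} turns the negation of ``player $A$ can ensure $\pi$'' into the existence of a Nash equilibrium of $\calG'$ in which $A$ loses; here $\calG'$ differs from $\calG$ only by replacing the irrelevant second player's preference with the trivial one (the single-objective relation with $\Omega=\Play$), so $\calG'$ still has single circuit objectives and ``$A$ loses'' is a legal threshold. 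Since \PSPACE\ is closed under complement, \PSPACE-hardness follows. For the NE existence problem, Proposition~\ref{lem:link-value-exist} produces the game $\calG_\pi$ obtained by prepending a matching-penny module with fresh states $s_0,s_1$; there, player $B$'s preference is the single circuit objective ``$s_1\in\Inf(\rho)$'', and player $A$'s preference is obtained from $C_A$ by adding two ignored input gates for $s_0,s_1$ and ruling out $s_0\cdot s_1^\omega$, so $\calG_\pi$ again has single circuit objectives; the reduction maps the complement of the value problem to NE existence in $\calG_\pi$, yielding \PSPACE-hardness.

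The main obstacle is the polynomial-space computation of \Eve's winning region in the suspect game: one must check that \Eve's derived objective, which mixes the ``limit suspect set'' bookkeeping with the players' circuit conditions, can be packaged as a single polynomial-size circuit objective on the polynomial-size arena $\calJ(\calG)$. Once that is in place, everything else is routine given the results of Sections~\ref{sec-prelim} and~\ref{sec:suspect}.
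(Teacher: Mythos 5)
Your proposal is correct and follows essentially the same route as the paper: the ``assembling'' step you flag as the main obstacle is exactly the paper's explicit construction of the polynomial-size circuits $D_{A,P}$, $E_A$ and $F_L$ encoding \Eve's winning condition in the suspect game, after which the winning region is computed via the \PSPACE\ algorithm for turn-based circuit games and the lasso-guessing algorithm of the reachability case is reused. The hardness argument via Propositions~\ref{lem:link-value-constr} and~\ref{lem:link-value-exist} from the turn-based value problem of~\cite{Hunter07} is also the paper's, with your extra care about circuit-definability of the preferences in $\calG'$ and $\calG_\pi$ matching the paper's brief remark.
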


\subsubsection{Reduction to a circuit objective}
We assume the preference relation of each player $A \in \Agt$ is given
by a circuit~$C_A$.  Let $L \subseteq \Agt$.  We  define a Boolean
circuit defining the winning condition of \Eve in the suspect game
$\calH(\calG,L)$.

We define for each player $A \in \Agt$ and each set $P$ of players
(such that $\Stat \times P$ is reachable in $\calH(\calG,L)$), a
circuit $D_{A,P}$ which outputs \true for the plays $\rho$ with
$\limitpi2(\rho)=P$ (\ie whose states that are visited infinitely
often are in $\Stat \times \{P\}$), and whose value by $C_A$ is \true.
We do so by making a copy of the circuit $C_A$, adding $|\Stat|$ \OR
gates $g_1 \cdots g_{|\Stat|}$ and one \AND gate $h$. There is an edge
from $(s_i,P)$ to $g_i$ and from $g_{i-1}$ to $g_i$ if $i<|\Stat|$
then there is an edge from the output gate of $C_A$ to $h$ and from
$h$ to the output gate of the new circuit. Inputs of $C_A$ are now the
$(s,P)$'s (instead of the $s$'s).  The circuit $D_{A,P}$ is given on
Figure~\ref{fig:DAP}.

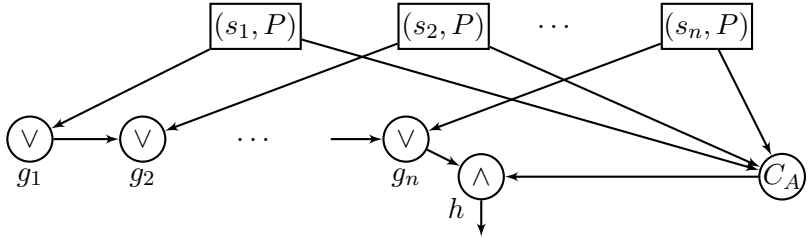
\begin{figure}[!ht]
  \centering
  \begin{tikzpicture}[thick]
    \tikzstyle{rond}=[draw,circle,minimum size=6mm,inner sep=0mm]
    \tikzstyle{oval}=[draw,minimum height=6mm,inner sep=0mm,rounded corners=2mm]
    \tikzstyle{carre}=[draw,minimum height=6mm,minimum width=12mm,inner sep=0mm]
    \draw (3,0.5) node [carre] (I1) {$(s_1,P)$};
    \draw (5.5,0.5) node [carre] (I2) {$(s_2,P)$};
    \draw (9,0.5) node [carre] (IN) {$(s_{n},P)$};
    \draw (7,0.5) node (ID) {$\dots$};
    \draw (10,-1.5) node[rond] (C) {$C_A$};
    \draw (0,-1) node [rond] (G1) {$\lor$} node [below=7pt] {$g_1$};
    \draw (1.5,-1) node [rond] (G2) {$\lor$} node [below=7pt] {$g_2$};
    \draw (3,-1) node {$\dots$};
    \draw (5,-1) node [rond] (GN) {$\lor$} node [below=7pt] {$g_n$};
    \draw (6,-1.5) node [rond] (H) {$\land$} node [label=-120:$h$] {};

    \draw[-latex'] (I1) -- (G1);
    \draw[-latex'] (I2) -- (G2);
    \draw[-latex'] (IN) -- (GN);
    \draw[-latex'] (G1) -- (G2);
    \draw[-latex'] (4,-1) -- (GN);
    \draw[-latex'] (GN) -- (H);
    \draw[-latex'] (C) -- (H);
    \draw[-latex'] (H) -- +(0,-.8);

    \draw[-latex'] (I1) -- (C);
    \draw[-latex'] (I2) -- (C);
    \draw[-latex'] (IN) -- (C);

  \end{tikzpicture}
  \caption{Circuit $D_{A,P}$}
  \label{fig:DAP}
\end{figure}

We then define a circuit $E_A$ which outputs \true for the plays
$\rho$ with $A \in \limitpi2(\rho)$ and whose output by $C_A$ is
\true.  We do so by taking the disjunction of the circuits $D_{A,P}$.
Formally, for each set of players $P$ such that $\Stat\times P$ is
reachable in the suspect game and $A\in P$, we include the circuit
$D_{A,P}$ and writing $o_{A,P}$ for its output gate, we add \OR gates
so that there is an edge from $o_{A,P}$ to $g_i$ and from $g_i$ to
$g_{i+1}$, and then from $g_{n+1}$ to the output gate.

Finally we define the circuit $F_L$, which outputs \true for the plays
$\rho$ such that there is no $A\in L$ such that $A\in \limitpi2(\rho)$
and the output of $\Sproj_1(\rho)$ by $C_A$ is \true.  This corresponds
exactly to the plays that are winning for \Eve in suspect game
$\calH(\calG,L)$.  We do so by negating the disjunction of all the
circuits $E_A$ for $A\in L$.

The next lemma follows from the construction:

\begin{lemma}
  \label{lem:circuit}
  A play $\rho$ is winning for \Eve in $\calH(\calG,L)$ iff $\rho$
  evaluates circuit $F_L$ to \true.
\end{lemma}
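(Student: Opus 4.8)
The plan is to unfold both sides of the equivalence and to check that they single out the same plays of $\calH(\calG,L)$, namely those $\rho$ for which every player in $\limitpi2(\rho)\cap L$ loses along $\Sproj_1(\rho)$ in $\calG$; by Remark~\ref{simplification} this is exactly \Eve's winning condition in $\calH(\calG,L)$.

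First I would record two structural facts about an arbitrary play $\rho$ of $\calH(\calG,L)$. Since $\Sproj_2(\rho)$ is non-increasing, it is eventually constant and equal to $P^\star:=\limitpi2(\rho)$; therefore a configuration of the form $(s,P)$ occurs infinitely often along $\rho$ if and only if $P=P^\star$ (once $\Sproj_2$ has dropped below $P$ it never comes back). Moreover $\Inf(\rho)\cap V_\shortEve = K\times\{P^\star\}$, where $K=\Sproj_1(\Inf(\rho)\cap V_\shortEve)$, and this set $K$ equals $\Inf(\Sproj_1(\rho))$ because $\Sproj_1$ records exactly the first components of the \Eve states along $\rho$.

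Next I would evaluate the sub-circuit $D_{A,P}$ on $\rho$, i.e.\ with each input $(s,P)$ set to \true precisely when $(s,P)\in\Inf(\rho)$. The chain of \OR-gates $g_1,\dots,g_{|\Stat|}$ computes $\bigvee_i \big[(s_i,P)\in\Inf(\rho)\big]$, which by the previous paragraph is \true iff $P=P^\star$; hence the \AND-gate $h$ outputs \false whenever $P\neq P^\star$, so $D_{A,P}$ is then \false. When $P=P^\star$, the embedded copy of $C_A$ receives on its input $(s,P^\star)$ the value $\big[s\in K\big]=\big[s\in\Inf(\Sproj_1(\rho))\big]$, so it outputs the same value as $C_A$ evaluated on $\Inf(\Sproj_1(\rho))$, i.e.\ \true iff $\Sproj_1(\rho)\in\Omega^{\text{Circuit}}_{C_A}$, that is, iff $\Sproj_1(\rho)$ is winning for $A$ in $\calG$; since $g_{|\Stat|}$ is then \true, the gate $h$, hence $D_{A,P^\star}$, takes that same value. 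Summarising: $D_{A,P}$ evaluates to \true on $\rho$ iff $P=P^\star=\limitpi2(\rho)$ and $\Sproj_1(\rho)$ is winning for $A$ in $\calG$.

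Finally I would assemble the outer disjunctions. The circuit $E_A$ is the disjunction of the $D_{A,P}$ over all reachable $P$ with $A\in P$, and $P^\star=\limitpi2(\rho)$ is reachable whenever it is non-empty (the play $\rho$ itself reaches a configuration with second component $P^\star$); hence $E_A$ evaluates to \true on $\rho$ iff $A\in\limitpi2(\rho)$ and $\Sproj_1(\rho)$ is winning for $A$. Therefore $F_L=\lnot\bigvee_{A\in L}E_A$ evaluates to \true on $\rho$ iff there is no $A\in L$ with $A\in\limitpi2(\rho)$ and $\Sproj_1(\rho)$ winning for $A$, equivalently iff every $A\in\limitpi2(\rho)\cap L$ loses along $\Sproj_1(\rho)$ in $\calG$ --- which is precisely \Eve's winning condition. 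I expect the only slightly delicate points to be the monotonicity argument identifying the sets $P$ with $(s,P)\in\Inf(\rho)$ for some $s$ as exactly $\{\limitpi2(\rho)\}$, the identity $\Inf(\Sproj_1(\rho))=\Sproj_1(\Inf(\rho)\cap V_\shortEve)$, and the degenerate case $\limitpi2(\rho)=\emptyset$ (where $\Sproj_1(\rho)$ need not be a play of $\calG$, but all $E_A$ are \false and $F_L$ is \true, matching the vacuous truth of \Eve's condition); everything else is a routine unfolding of the circuit construction.
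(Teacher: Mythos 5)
Your proposal is correct and is exactly the verification the paper has in mind when it says the lemma ``follows from the construction'': you unfold $D_{A,P}$, $E_A$ and $F_L$ and check they characterise the plays $\rho$ such that every suspect in $L$ surviving along $\rho$ loses in $\calG$, which is \Eve's winning condition as restated in Remark~\ref{simplification}. The three delicate points you flag (the OR-chain detecting that $P$ is the limit set of suspects via monotonicity of $\Sproj_2$, the identification of $\Inf(\Sproj_1(\rho))$ with the first components of the \Eve-states visited infinitely often, and the empty-limit case) are handled correctly, so nothing is missing.
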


We should notice that circuit $F_L$ has size polynomial in the size of
$\calG$, thanks to Proposition~\ref{lem:polynomial-size}.

\subsubsection{Algorithm and complexity analysis}
To solve the constrained NE existence problem we apply the same algorithm
as for reachability objectives (see
section~\ref{subsec:reachability}). For complexity matters, the only
difference stands in the computation of the set of winning states in
the suspect game. Thanks to Lemma~\ref{lem:circuit}, we know it
reduces to the computation of the set of winning states in a
turn-based game with an objective given as a circuit (of
polynomial-size). This can be done in \PSPACE~\cite{Hunter07}, which
yields a \PSPACE upper bound for the constrained NE existence problem
(and therefore for the NE existence problem and the value problem~--~see
Proposition~\ref{lem:link-value-constr}).  \PSPACE-hardness of all
problems follows from that of the value problem in turn-based
games~\cite{Hunter07}, and from Propositions~\ref{lem:link-value-constr}
and~\ref{lem:link-value-exist} (we notice that the preference
relations in the new games are easily definable by circuits).

\subsection{Rabin and parity objectives}
\label{subsec:rabin}

The value problem is known to be \NP-complete for Rabin
conditions~\cite{emerson1988complexity} and in \UP $\cap$ \co-\UP\ for
parity conditions~\cite{Jurdzinski98}.

We then notice that a parity condition is a Rabin condition with half
as many pairs as the number of priorities: assume the parity condition
is given by $p \colon \Stat \mapsto \lsem 0 , d \rsem$ with $d\in \N$;
take for $i$ in $\lsem 0, \frac{d}{2}\rsem$, $Q_i = p^{-1}\{2 i\}$ and
$R_i = p^{-1} \{ 2j+1 \mid j \ge i\}$. Then the Rabin objective
$(Q_i,R_i)_{0 \le i \le \frac{d}{2}}$ is equivalent to the parity
condition given by $p$.

We design an algorithm that solves the constrained NE existence
problem in $\P^\NP_\parallel$ for Rabin objectives (see
footnote~\ref{fn-pnp||} on page~\pageref{fn-pnp||} for an informal definition
of $\P^{\NP}_\parallel$). 

Our algorithm heavily uses non-determinism (via the oracle). We
then propose a deterministic algorithm which runs in exponential
time, but will be useful in Section~\ref{subsec:rabin-auto}. This
subsection ends with proving $\P^\NP_\parallel$-hardness of the
constrained NE existence problem and NE existence problem for parity
objectives. 
In the end, we will have proven the following theorem:

\begin{theorem}
  For finite games with single objectives given as Rabin or parity conditions,
  the NE existence problem and the constrained NE existence problem are
  $\P^\NP_\parallel$-complete.
\end{theorem}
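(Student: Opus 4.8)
\emph{Overall approach.} The plan is to feed the suspect-game construction of Section~\ref{sec:suspect} into the same guess-and-check procedure already used for reachability, safety, co-B\"uchi and circuit objectives, the only new ingredient being the complexity of computing \Eve's winning region in the suspect game. By Remark~\ref{simplification} the arena is $\calJ(\calG)$, of polynomial size (Proposition~\ref{lem:polynomial-size}), and \Eve's winning condition depends only on the set $L=\Losers(\pi)$ of players losing along the candidate outcome~$\pi$: \Eve loses a play~$\rho$ exactly when some $A\in L$ is still a suspect in the limit ($A\in\limitpi2(\rho)$) \emph{and} $\Sproj_1(\rho)$ satisfies $A$'s Rabin condition. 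So the first step is to rewrite this condition. Lifting $A$'s Rabin pairs $(Q^A_i,R^A_i)$ to configurations of $\calJ(\calG)$ and letting $Z_A$ be the set of configurations in which $A$ has dropped out of the suspect set, the event ``$A\in\limitpi2(\rho)$ and $A$ wins along $\Sproj_1(\rho)$'' is the Rabin condition with pairs $\{(Q^A_i\times 2^\Agt,\ Z_A\cup(R^A_i\times 2^\Agt))\mid i\}$; taking the union over $A\in L$ shows that \Adam has a polynomial-size Rabin objective in the suspect game, equivalently that \Eve has a polynomial-size \emph{Streett} objective. Since a parity condition is in particular a Rabin condition (group priorities two by two, as recalled at the start of the subsection), the parity case is subsumed by the Rabin one.

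\emph{Upper bound.} Hence $W(\calG,L)$ is the winning region of a polynomial-size turn-based Streett game for \Eve, and ``$v\in W(\calG,L)$'' is the complement of ``the Rabin player \Adam wins from~$v$'', a \coNP predicate. Membership of both problems in $\P^\NP_\parallel$ then follows from Theorem~\ref{thm:eq-win} together with Proposition~\ref{lem:play-length}: a (constrained) Nash equilibrium exists iff there is a lasso $\rho=\tau_1\cdot\tau_2^\omega$ with $|\tau_i|\le|\Stat|^2$ in $\calJ(\calG)$ along which \Adam obeys \Eve, with $\pi=\Sproj_1(\rho)$ meeting the payoff constraint (no constraint for the NE existence problem), and with $\rho$ remaining inside $W(\calG,\Losers(\pi))$ --- condition~\ref{cond:win} of Theorem~\ref{thm:eq-win} reducing to staying in the winning region because Rabin and parity conditions are prefix-independent (Remark~\ref{rem:prefix-independant}). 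The delicate point, and the main obstacle of the upper-bound proof, is that the Streett game to be inspected depends on $\Losers(\pi)$, which is not known before $\rho$ is chosen: the procedure must guess the loser set~$L$ together with the lasso and then verify, using a \emph{single non-adaptive round} of \coNP queries (one per configuration of $\calJ(\calG)$ visited by~$\rho$, each asking whether \Adam wins the Rabin suspect game with loser set~$L$), that $\rho$ stays in $W(\calG,L)$, the remaining checks ($L=\Losers(\pi)$, payoff bounds, \Adam obeys) being polynomial-time; arranging that one parallel batch of queries suffices --- rather than the adaptive queries of a naive $\P^\NP$ argument, or the extra alternation of a $\Sigma_2^p$ argument --- is exactly what places the problems in $\P^\NP_\parallel$.

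\emph{Lower bound.} By Proposition~\ref{lem:link-constr-exist} (which transfers hardness from the constrained problem to the NE existence problem, the class of finite games with parity objectives being closed under the $E(\cdot)$ construction) and because every parity condition is a Rabin condition, it suffices to prove $\P^\NP_\parallel$-hardness of the constrained NE existence problem for parity objectives. I would reduce from a canonical $\P^\NP_\parallel$-complete problem, say deciding whether the number of satisfiable formulas among $\phi_1,\dots,\phi_m$ is odd. The reduction places $m$ disjoint copies of the \SAT-gadget of Section~\ref{subsec:reachability} side by side --- one ``assignment'' player and one ``clause-checking'' player per copy, the latter equipped with a parity objective rather than a reachability one, so that whether copy~$i$ is satisfiable is mirrored by whether its clause-checking player can win --- and prefixes them with a small concurrent controller module; a few auxiliary players together with the prescribed per-player win/lose thresholds are then tuned so that a Nash equilibrium of the whole game exists precisely when the required parity of the $m$ answers holds. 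The technical heart here is the same as in the earlier \SAT reductions: forcing every Nash equilibrium to reveal, for each~$i$, the \emph{exact} satisfiability status of $\phi_i$ (the clause-checking player of copy~$i$ must have a profitable deviation exactly when her copy is satisfiable but she is constrained to lose), while keeping the global constraint expressible by win/lose thresholds alone.
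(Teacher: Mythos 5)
Your rewriting of \Eve's winning condition in the suspect game as a polynomial-size Streett objective is exactly the paper's Lemma~\ref{lemma:red-streett}, but your membership argument breaks down at precisely the point you yourself flag as delicate. A $\P^\NP_\parallel$ machine is a \emph{deterministic} polynomial-time machine with one non-adaptive round of oracle queries: it cannot ``guess the loser set $L$ together with the lasso'', and the queries it issues cannot depend on a guessed $\rho$. If you instead push the guessing into an \NP oracle, that oracle cannot verify the \coNP facts of the form ``$(t,P)\in W(\calG,L\cap P)$''; and if you ask all membership queries ``$s\in W(\calG,L)$?'' for the polynomially many pairs $(s,L)$ up front, you are still left with deciding whether a suitable lasso exists inside the (now known) winning regions --- but which region is relevant depends on $\Losers(\pi)$, i.e.\ on the payoff of the lasso itself, and there are exponentially many candidate payoffs and lassos, so this last step is again an \NP-type search, forcing a second, adaptive round (hence only $\P^\NP$, or $\Sigma_2^p$ if you guess first). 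The paper closes this gap with a census trick: one parallel batch of \coNP queries is used to determine $M=\sum_{P}\sum_{L\subseteq P}|W(\calG,L)|$, and \emph{in the same round} a second, \NP-type oracle is queried on every possible value of $M$; that oracle guesses all the sets $W(L)$, certifies every \emph{non}-membership claim (which is in \NP, since \Adam's objective is Rabin), uses the count $M$ to force the guessed sets to be exactly the winning regions, and then guesses the lasso and checks the deviation condition within the same computation. Without this (or an equivalent mechanism) your argument does not place the problems in $\P^\NP_\parallel$.

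The lower-bound sketch is also missing the essential construction. A reduction from the parity-of-satisfiable-count problem must let equilibria certify \emph{both} satisfiability and unsatisfiability of each $\phi_i$: an \NP-style gadget in which a satisfying valuation is exhibited and kept honest (the paper's modules $M(\phi^r)$, whose players $B^r_k$ deviate profitably as soon as both $x^r_k$ and $\lnot x^r_k$ are visited infinitely often), and a coNP-style gadget in which a wrong claim of unsatisfiability is punished by a profitable deviation (the concurrent gadget $N(\phi^r)$ with players $A_2,A_3$, Lemma~\ref{lemma-N}), the two being chained through the alternating $H/G$ modules so that $A_1$'s objective records the parity of the count. Your ``$m$ disjoint copies of the reachability \SAT-gadget'' only provides the first direction (a profitable deviation exactly when a copy is satisfiable but forced to lose); nothing in your sketch certifies unsatisfiability of the remaining formulas, so an equilibrium could under-report the number of satisfiable instances and the parity test is not sound. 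Finally, your appeal to Proposition~\ref{lem:link-constr-exist} for the unconstrained problem overlooks that the $E(\cdot)$ construction only eliminates lower-bound (``player wins'') constraints; the paper must first trade the ``$A_2$ and $A_3$ lose'' constraints for ``new players win'' by adding players with complemented parity objectives (priorities shifted by one) before applying the construction three times.
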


\subsubsection{Reduction to a Streett game}
We assume that the preference relation of each player $A \in \Agt$ is
given by the Rabin condition $(Q_{i,A},R_{i,A})_{i\in\lsem
  1,k_A\rsem}$. Let $L \subseteq \Agt$. In the suspect game
$\calH(\calG,L)$, we define the Streett objective
$(Q'_{i,A},R'_{i,A})_{i\in\lsem 1,k_A\rsem, A\in L}$, where $Q'_{i,A}
= (Q_{i,A} \times \{P \mid A \in P\}) \cup (\Stat \times \{ P \mid A
\not\in P\})$ and $R'_{i,A} = R_{i,A} \times \{P \mid A \in P\}$, and
we write $\Omega_L$ for the corresponding set of winning plays.

\begin{lemma}
  \label{lemma:red-streett}
  A play $\rho$ is winning for \Eve in $\calH(\calG,L)$ iff $\rho \in
  \Omega_L$.
\end{lemma}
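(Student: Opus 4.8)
The plan is to show that \Eve's winning condition in $\calH(\calG,L)$ — namely that every player $A\in\limitpi2(\rho)\cap L$ be \emph{losing} for her Rabin condition along $\Sproj_1(\rho)$ — coincides with the single Streett condition $\Omega_L$ on the state space of $\calH(\calG,L)$. The first ingredient is the classical duality between Rabin and Streett: the complement of $\Omega^{\mathrm{Rabin}}_{(Q_i,R_i)_i}$ is exactly $\Omega^{\mathrm{Streett}}_{(Q_i,R_i)_i}$ (just negate the existential-and-conjunction defining Rabin). Hence ``$\Sproj_1(\rho)$ is losing for $A$'' rewrites as ``for all $i$, $\Inf(\Sproj_1(\rho))\cap Q_{i,A}=\varnothing$ or $\Inf(\Sproj_1(\rho))\cap R_{i,A}\ne\varnothing$''.

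The second ingredient is the structural behaviour of the suspect game. Along any play $\rho$ the sequence of suspect sets is non-increasing, hence eventually constant and equal to $\limitpi2(\rho)$; since $\calH(\calG,L)$ strictly alternates \Eve- and \Adam-states, every \Eve-state occurring infinitely often has second component exactly $\limitpi2(\rho)$, and $\Inf(\Sproj_1(\rho))=\{s\mid (s,\limitpi2(\rho))\in\Inf(\rho)\}$. From this I would extract the two translations used throughout: for any $Y\subseteq\Stat$, $\Inf(\Sproj_1(\rho))\cap Y\ne\varnothing$ iff $\Inf(\rho)\cap(Y\times\{\limitpi2(\rho)\})\ne\varnothing$; and some state whose second component does not contain $A$ is visited infinitely often iff $A\notin\limitpi2(\rho)$.

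Then I would prove the equivalence pair by pair. Fixing $A\in L$ and an index $i$, I would show that the Streett requirement ``$\Inf(\rho)\cap Q'_{i,A}\ne\varnothing\Rightarrow\Inf(\rho)\cap R'_{i,A}\ne\varnothing$'' is equivalent to ``$A\notin\limitpi2(\rho)$, or $\Inf(\Sproj_1(\rho))\cap Q_{i,A}=\varnothing$, or $\Inf(\Sproj_1(\rho))\cap R_{i,A}\ne\varnothing$'': split on whether $A\in\limitpi2(\rho)$, use the translations above to rewrite each intersection in terms of the slice $\Stat\times\{P\mid A\in P\}$, and observe that the $\Stat\times\{P\mid A\notin P\}$ component of the construction precisely produces the vacuous case when $A$ has left the suspect set. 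Conjoining over all $i$ reconstitutes ``$A$ loses her Rabin condition'', and conjoining over all $A\in L$, together with the fact that only players in $\limitpi2(\rho)\cap L$ constrain \Eve, yields that the resulting formula is ``$\rho$ is winning for \Eve'' on one side and ``$\rho\in\Omega_L$'' on the other.

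The only delicate point — and the part I would write out in detail — is the bookkeeping around players that drop out of the suspect set: one must check that for $A\notin\limitpi2(\rho)$ all pairs $(Q'_{i,A},R'_{i,A})$ are discharged automatically, while for $A\in\limitpi2(\rho)$ no infinitely-visited state with $A$ still a suspect but a stale second component can spuriously trigger or discharge the pair, which is exactly what the stabilization of $\limitpi2$ rules out. Everything else is a routine unfolding of the definitions.
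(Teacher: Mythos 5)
Your route is the same as the paper's: unfold \Eve's winning condition player by player, use the Rabin/Streett duality on the projection, and transfer between $\Inf(\rho)$ and $\Inf(\Sproj_1(\rho))$ via the monotonicity and eventual stabilisation of the suspect sets, with the slice $\Stat\times\{P\mid A\notin P\}$ taking care of the players who drop out of $\limitpi2(\rho)$. The translations you state (infinitely visited \Eve-states have second component exactly $\limitpi2(\rho)$, and $\Inf(\Sproj_1(\rho))\cap Y\ne\varnothing$ iff $\Inf(\rho)\cap(Y\times\{\limitpi2(\rho)\})\ne\varnothing$) are exactly the ones needed, and the case $A\in\limitpi2(\rho)\cap L$ goes through verbatim.

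However, the one step you defer as ``delicate'' is precisely where the argument, as you state it, fails against the definitions as printed. The slice $\Stat\times\{P\mid A\notin P\}$ is placed inside $Q'_{i,A}$, and the Streett semantics is ``$\Inf(\rho)\cap Q'_{i,A}\ne\varnothing\Rightarrow\Inf(\rho)\cap R'_{i,A}\ne\varnothing$''. So for a player $A\in L\setminus\limitpi2(\rho)$, the infinitely repeated \Eve-states $(s,\limitpi2(\rho))$ lie in that slice, hence in $Q'_{i,A}$, whereas $R'_{i,A}=R_{i,A}\times\{P\mid A\in P\}$ is visited only in the finite prefix before $A$ leaves the suspect set (second components are non-increasing); the pair is therefore \emph{violated}, not ``discharged automatically'' as you claim, even though \Eve wins vacuously with respect to such an $A$. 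For your pairwise equivalence to hold, the slice must sit on the $R'$ side: take $Q'_{i,A}=Q_{i,A}\times\{P\mid A\in P\}$ and $R'_{i,A}=(R_{i,A}\times\{P\mid A\in P\})\cup(\Stat\times\{P\mid A\notin P\})$, and then your case split works word for word. To be fair, the paper's own text has the same wrinkle---its proof infers from ``all $Q'_{i,A}$ are seen infinitely often'' (for $A\notin\limitpi2(\rho)$) that those Streett pairs are satisfied, which only makes sense with the slice on the $R'$ side---so you have reconstructed the intended argument; but a complete write-up must either correct the placement of the slice (equivalently, exchange the roles of $Q'_{i,A}$ and $R'_{i,A}$) or the key equivalence you rely on is false for the players who leave the suspect set.
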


\begin{proof}
  Assume $\rho$ is winning for \Eve in $\calH(\calG,L)$.  For all $A
  \in \limitpi2(\rho) \cap L$, $\Sproj_1(\rho)$ does not satisfy the Rabin
  condition given by $(Q_{i,A},R_{i,A})_{i\in\lsem 1,k_A\rsem}$.  For
  all $1 \le i \le k_A$, $\Inf(\Sproj_1(\rho))\cap Q_{i,A} = \varnothing$
  or $\Inf(\Sproj_1(\rho)) \cap R_{i,A}\ne \varnothing$.  We infer that
  for all $1 \le i \le k_A$, $\Inf(\rho) \cap Q'_{i,A} = \varnothing$
  or $\Inf(\rho) \cap R'_{i,A} \ne \varnothing$.  Now, if $A \notin
  \limitpi2(\rho)$ then all $Q'_{i,A}$ are seen infinitely often along
  $\rho$. Therefore for every $A \in L$, the Streett
  conditions~$(Q'_{i,A},R'_{i,A})$ is satisfied along $\rho$ (that is,
  $\rho \in \Omega_L$).

  Conversely, if the Streett condition $(Q'_{i,A},R'_{i,A})_{i\in\lsem
    1,k_A\rsem, A\in L}$ is satisfied along $\rho$, then either the
  Rabin condition $(Q_{i,A}, R_{i,A})$ is not satisfied along
  $\Sproj_1(\rho)$ or $A\not\in \limitpi2(\rho)$. This means that \Eve is winning
  in $\calH(\calG,L)$.
\end{proof}

\subsubsection{Algorithm}
We now describe a $\P^\NP_\parallel$ algorithm for solving the
constrained NE existence problem in games where each player has a single
Rabin objective. As in the previous cases, our algorithm relies on the
suspect game construction.

Write $\calP$ for the set of sets of players of~$\Agt$ that appear as
the second item of a state of~$\calJ(\calG)$:
\[
\calP = \{P\subseteq \Agt \mid \exists s\in\Stat.\ (s,P)\text{ is a
  state of }\calJ(\calG)\}. 
\]
Since~$\calJ(\calG)$ has size polynomial, so has~$\calP$.  Also, for
any path~$\rho$, $\limitpi2(\rho)$ is a set of~$\calP$. Hence, for a
fixed~$L$, the~number of sets~$\limitpi2(\rho)\cap L$ is
polynomial. Now, as recalled on page~\pageref{simplification}, the
winning condition for~\Eve is that the players in~$\limitpi2(\rho)\cap
L$ must be losing along $\Sproj_1(\rho)$ in $\calG$ for their Rabin
objective. We have seen that this can be seen as a Streett objective
(Lemma~\ref{lemma:red-streett}).

Now, deciding whether a state is winning in a turn-based game for a
Streett condition can be decided in
\coNP~\cite{emerson1988complexity}.  Hence, given a state~$s\in\Stat$
and a set~$L$, we can decide in \coNP~whether $s$ is winning for \Eve
in~$\calH(\calG,L)$. This will be used as an oracle in our algorithm
below.

\smallskip Now, pick a set~$P\subseteq\Agt$ of suspects, \ie, for
which there exists $(s,t)\in\Stat^2$ and $m_\Agt$
s.t. $P=\Susp((s,t),m_\Agt)$. Using the same arguments as in the proof
of Proposition~\ref{lem:polynomial-size}, it~can be shown that $2^{\size P}
\leq \size\Tab$, so that the number of subsets of~$P$ is polynomial.
Now, for each set~$P$ of suspects and each~$L\subseteq P$, write
$w(L)$ for the size of the winning region of~\Eve
in~$\calH(\calG,L)$. Then the sum $\sum_{P\in\calP\setminus\{\Agt\}}
\sum_{L\subseteq P} w(L)$ is at most $\size\Stat\times \size\Tab^2$.

Assume that the exact value~$M$ of this sum is known, and consider the
following algorithm: 
\begin{enumerate}
\item for each $P\subseteq \calP\setminus\{\Agt\}$ and
  each~$L\subseteq P$, guess a set $W(L)\subseteq \Stat$, which we
  intend to be the exact winning region for~\Eve in~$\calH(\calG,L)$.
\item check that the sizes of those sets sum up to~$M$;
\item for each $s\notin W(L)$, check that \Eve does not have a winning
  strategy from~$s$ in~$\calH(\calG,L)$. This can be checked in
  non-deterministic polynomial time, as explained above.
\item guess a lasso-shaped path~$\rho=\pi\cdot \tau^\omega$ in~$\calH(\calG,L)$
  starting from~$(s,\Agt)$, with $\size \pi$ and $\size \tau$ less
  than $\size\Stat^2$ (following Proposition~\ref{lem:play-length})
  visiting only states where the second item is~$\Agt$. This path can
  be seen as the outcome of some strategy of~\Eve when \Adam
  obeys. For this path, we~then check the following:
  \begin{itemize}
  \item along~$\rho$, the sets of winning and losing players satisfy
    the original constraint (remember that we aim at solving the
    constrained NE existence problem);
  \item any deviation along~$\rho$ leads to a state that is winning
    for~\Eve.  In~other terms, pick a state~$h=(s,\Agt,m_\Agt)$
    of~\Adam along~$\rho$, and pick a successor~$h'=(t,P)$ of~$h$ such
    that $t\not=\Tab(s,m_\Agt)$. Then the algorithm checks that $t\in
    W(L\cap P)$.
  \end{itemize}
\end{enumerate}
The algorithm accepts the input~$M$ if it succeeds in finding the
sets~$W$ and the path~$\rho$ such that all the checks are
successful. This algorithm is non-deterministic and runs in polynomial
time, and will be used as a second oracle.

\medskip We now show that if $M$ is exactly the sum of the $w(L)$,
then the algorithm accepts~$M$ if, and only~if, there is a Nash
equilibrium satisfying the constraint, \ie, if, and only~if, \Eve has
a winning strategy from~$(s,\Agt)$ in~$\calH(\calG,L)$.

First assume that the algorithm accepts~$M$. This means that it~is
able, for each~$L$, to find sets~$W(L)$ of states whose complement
does not intersect the winning region of~$\calH(\calG,L)$. Since~$M$
is assumed to be the exact sum of~$w(L)$ and the size of the
sets~$W(L)$ sum up to~$M$, we~deduce that~$W(L)$ is exactly the
winning region of~\Eve in~$\calH(\calG,L)$. Now, since the algorithm
accepts, it~is also able to find a (lasso-shaped) path~$\rho$ only
visiting states having~$\Agt$ as the second component. This path has
the additional property that any ``deviation'' from a state of \Adam
along this path ends up in a state that is winning for~\Eve for
players in~$L\cap P$, where $P$ is the set of suspects for the present
deviation. This way, if during~$\rho$, \Adam deviates to a
state~$(t,P)$, then \Eve~will have a strategy to ensure that along any
subsequent play, the objectives of players in~$L\cap P$ (in~$\calG$)
are not fulfilled, so that along any run~$\rho'$, the players
in~$L\cap \limitpi2(\rho')$ are losing for their objectives
in~$\calG$, so that \Eve wins in~$\calH(\calG,L)$.

Conversely, assume that there is a Nash equilibrium satisfying the
constraint.  Following Proposition~\ref{lem:play-length}, we assume
that the outcome of the corresponding strategy profile has the form
$\pi\cdot\tau^\omega$.  From Lemma~\ref{lemma-suspectgame}, there is a
winning strategy for~\Eve in~$\calH(\calG, L)$ whose outcome when
\Adam obeys follows the outcome of the Nash equilibrium. As~a
consequence, the outcome when \Adam obeys is a path~$\rho$ that the
algorithm can guess. Indeed, it~must satisfy the constraints, and any
deviation from~$\rho$ with set of suspects~$P$ ends in a state where
\Eve wins for the winning condition of~$\calH(\calG,L)$, hence also
for the winning condition of~$\calH(\calG,L\cap P)$, since any
path~$\rho'$ visiting~$(t,P)$ has $\limitpi2(\rho')\subseteq P$.

\medskip Finally, our global algorithm is as follows: we run the first
oracle for all the states and all the sets~$L$ that are subsets of a
set of suspects (we know that there are polynomially many such
inputs). We~also run the second algorithm on all the possible values
for~$M$, which are also polynomially many. Now, from the answers of
the first oracle, we~compute the exact value~$M$, and return the value
given by the second on that input. This algorithm runs
in~$\P^{\NP}_{\parallel}$ and decides the constrained NE existence
problem.

\subsubsection{Deterministic algorithm}
\label{rabin:det-algo}
In the next section we will need a deterministic algorithm to solve
games with objectives given as deterministic Rabin automata. We
therefore present it right now. The deterministic algorithm works by
successively trying all the possible payoffs, there are $2^{|\Agt|}$
of them.  Then it computes the winning strategies of the suspect game
for that payoff.  In \cite{horn2005streett} an algorithm for Streett
games is given, which works in time $\mathcal{O}(n^k\cdot k!)$, where
$n$ is the number of vertices in the game, and $k$ the size of the
Streett condition.  The algorithm has to find, in the winning region
of \Eve in $\calJ(\calG)$, a lasso that satisfies the Rabin winning
conditions of the winners and do not satisfy whose of the losers.  To
do so it tries all the possible choices of elementary Rabin condition
that are satisfied to make the players win, there are at most
$\prod_{A \in \Agt} k_A$ possible choices.  And for the losers, we try
the possible choices for whether $Q_{i,A}$ is visited of not, there
are $\prod_{A\in\Agt} 2^{k_A}$ such choices.  It then looks for a
lasso cycle that, when $A$ is a winner, does not visit $Q_{i_A,A}$ and
visits $R_{i_A,A}$, and when $A$ is a loser, visits $R_{i_A,A}$ when
it has to, or does not visit $Q_{i_A,A}$.  This is equivalent to
finding a path satisfying a conjunction of B\"uchi conditions and can
be done in polynomial time $\mathcal{O}(n\times \sum_{A\in \Agt}
k_A)$.  The global algorithm works in
time \[\mathcal{O}\left(2^{|\Agt|} \cdot \left(|\Tab|^{3 \sum_A k_A}
    \cdot (\sum_A k_A)! + \left(\prod_{A \in \Agt} k_A \cdot
      2^{k_A}\right) \cdot |\Tab|^3 \cdot \sum_A k_A\right)\right)\]
Notice that the exponential does not come from the size of the graph
but from the number of agents and the number of elementary Rabin
conditions, this will be important when in the next subsection we will
reuse the algorithm on a game structure whose size is exponential.

\subsubsection{$\P^{\NP}_{\parallel}$-hardness}
We now prove $\P^{\NP}_{\parallel}$-hardness of the (constrained)
NE existence problem in the case of parity objectives.  The main
reduction is an encoding of the \PARITYSAT problem, where the aim is
to decide whether the number of satisfiable instances among a set of
formulas is even. This problem is known to be complete for
$\P^{\NP}_{\parallel}$~\cite{Got95}.

Before tackling the whole reduction, we first develop some
preliminaries on single instances of \SSAT, inspired
from~\cite{CHP07}.  Let us consider an instance $\phi = c_1 \land
\dots \land c_n$ of \SSAT, where $c_i = \ell_{i,1} \lor \ell_{i,2}
\lor \ell_{i,3}$, and $\ell_{i,j} \in \{ x_k, \lnot x_k \mid 1 \le k
\le p\}$.  With~$\phi$, we associate a three-player game~$N(\phi)$,
depicted on Figure~\ref{fig-N1} (where the first state of~$N(\phi)$ is
controlled by~$A_1$, and the first state of each~$N'(c_j)$ is
concurrently controlled by~$A_2$ and~$A_3$).  For each variable~$x_j$,
players~$A_2$ and~$A_3$ have the following target sets:
\begin{xalignat*}4
  T^{A_2}_{2j} &= \{x_j\} &
  T^{A_2}_{2j +1} &=\{\lnot x_j\} &\qquad
  T^{A_3}_{2j +1} &= \{ x_j\} &
  T^{A_3}_{2j} &= \{\lnot x_j\}
\end{xalignat*}

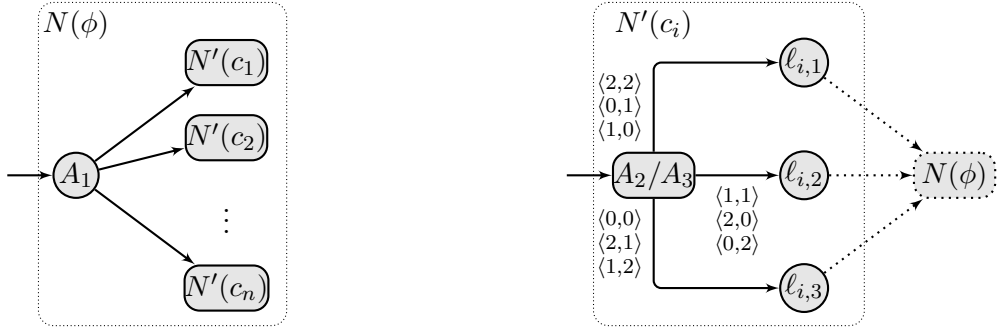
\begin{figure}[t]
  \begin{minipage}{0.45\textwidth}
    \centering
    \begin{tikzpicture}[thick]
    \tikzstyle{rond}=[draw,circle,minimum size=6mm,inner sep=0mm,fill=black!10]
    \tikzstyle{oval}=[draw,minimum height=6mm,inner sep=0mm,rounded corners=2mm,fill=black!10]
      \draw [thin,densely dotted,rounded corners=2mm] (-.5,2.3) --(2.8,2.3) -- (2.8,-2) -- (-.5,-2) --cycle;
      \path (0,2) node {$N(\phi)$};
      \draw (0,0) node[rond] (A) {$A_1$};
      \draw (2,1.5) node[oval] (P1) {$N'(c_1)$};
      \draw (2,.5) node[oval] (P2) {$N'(c_2)$};
      \draw (2,-.5) node {\vdots};
      \draw (2,-1.5) node[oval] (Pn) {$N'(c_n)$};
      \draw[latex'-] (A.-180) -- +(-.6,0);
      \draw[-latex'] (A) -- (P1); 
      \draw[-latex'] (A) -- (P2);
      \draw[-latex'] (A) -- (Pn);
    \end{tikzpicture}
  \end{minipage}
  \hfill
  \begin{minipage}{0.45\textwidth}
    \centering
    \begin{tikzpicture}[thick]
    \tikzstyle{rond}=[draw,circle,minimum size=6mm,inner sep=0mm,fill=black!10]
    \tikzstyle{oval}=[draw,minimum height=6mm,inner sep=0mm,rounded corners=2mm,fill=black!10]
      \draw [thin,densely dotted,rounded corners=2mm] (-.8,2.3) --(2.8,2.3) -- (2.8,-2) -- (-.8,-2) --cycle;
      \path (0,2) node {$N'(c_i)$};
      \draw (0,0) node[oval] (A) {$A_2/A_3$};
      \draw (2,1.5) node[rond] (P1) {$\ell_{i,1}$};
      \draw (2,0) node[rond] (P2) {$\ell_{i,2}$};
      \draw (2,-1.5) node[rond] (P3) {$\ell_{i,3}$};
      \draw (4,0) node[oval,dotted,inner sep=1mm] (MP) {$N(\phi)$};
      \draw[latex'-] (A.-180) -- +(-.6,0);
      \draw[-latex',rounded corners=1mm] (A) |- node[left,pos=.25]
           {\begin{minipage}{1cm}\baselineskip=7pt\flushright
               $\scriptstyle\langle 2,2\rangle$
               $\scriptstyle\langle 0,1\rangle$
               $\scriptstyle\langle 1,0\rangle$
           \end{minipage}}
           (P1);  
     \draw[-latex',rounded corners=1mm] (A) -- 
     node[below] {\begin{minipage}{1cm}\baselineskip=7pt\centering
         $\scriptstyle\langle 1,1\rangle$
         $\scriptstyle\langle 2,0\rangle$
         $\scriptstyle\langle 0,2\rangle$
     \end{minipage}}
     (P2); 
     \draw[-latex',rounded corners=1mm] (A) |- 
     node[left,pos=.25] {\begin{minipage}{1cm}\baselineskip=7pt\flushright
         $\scriptstyle\langle 0,0\rangle$
         $\scriptstyle\langle 2,1\rangle$
              $\scriptstyle\langle 1,2\rangle$
     \end{minipage}}
     (P3);
     \draw[-latex',dotted] (P1) -- (MP); 
     \draw[-latex',dotted] (P2) -- (MP); 
     \draw[-latex',dotted] (P3) -- (MP); 
    \end{tikzpicture}
  \end{minipage}
  \caption{The game~$N(\phi)$ (left), where $N'(c_i)$ is the module on the right.}\label{fig-N1}
\end{figure}

\noindent This construction enjoys interesting properties, given by the
following lemma:
\begin{lemma}\label{lemma-N}
  If the formula~$\phi$ is not satisfiable, then there is a strategy
  for player~$A_1$ in~$N(\phi)$ such that players~$A_2$ and~$A_3$
  lose.
  If the formula~$\phi$ is satisfiable, then for any strategy
  profile~$\sigma_\Agt$, one of~$A_2$ and~$A_3$ can change her
  strategy and win.
\end{lemma}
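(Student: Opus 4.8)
The plan is to treat $N(\phi)$ as an arena in which $A_1$ generates an infinite sequence of clauses while $A_2$ and $A_3$, through the concurrent modules $N'(c_i)$, jointly select one literal of each visited clause. The first thing I would record is the behaviour of the concurrent gate in $N'(c_i)$: it implements a ``matching pennies modulo~$3$'', so that if the action of $A_3$ at the entry of $N'(c_i)$ is fixed (by her strategy at the current history), then $A_2$ has a unique action sending the play to whichever of $\ell_{i,1},\ell_{i,2},\ell_{i,3}$ she wants, and symmetrically with the roles of $A_2$ and $A_3$ exchanged. Thus each of $A_2$ and $A_3$ can \emph{unilaterally} steer the literal selected at any module visit, but neither can force it against the other. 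This gives the dictionary I would use throughout: a play is described by the clauses visited by $A_1$ together with the literals selected in the modules, and the set $L$ of literal states visited infinitely often is the positive diagram of a partial valuation of the variables; moreover the objectives of $A_2$ and $A_3$ (obtained from the priority classes $T^{A_2}_\bullet$ and the mirrored $T^{A_3}_\bullet$) are designed so that (i) if $L$ contains a complementary pair $\{x_k,\lnot x_k\}$ then \emph{both} $A_2$ and $A_3$ lose, and (ii) if $L$ is nonempty and consistent then \emph{at least one} of $A_2$, $A_3$ wins. Deriving (i) and (ii) from the exact definition of the two conditions is the first technical point.

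For the first assertion ($\phi$ unsatisfiable), I would let $A_1$ play the stationary round-robin strategy $c_1,c_2,\dots,c_n,c_1,c_2,\dots$, so that every clause module is entered infinitely often. Against \emph{any} strategies of $A_2$ and $A_3$, the pigeonhole principle forces, for each $c_i$, at least one literal of $c_i$ to be selected infinitely often, hence to belong to $L$; so $L$ meets every clause of $\phi$. If $L$ were consistent it would be a partial valuation satisfying every clause of $\phi$, contradicting unsatisfiability; hence $L$ contains a complementary pair, and by~(i) both $A_2$ and $A_3$ lose. This is the required strategy for~$A_1$.

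For the second assertion ($\phi$ satisfiable), fix a satisfying valuation $v$ and, for each clause $c_i$, a literal $\lambda(i)\in c_i$ true under~$v$. Given an arbitrary profile $\sigma_\Agt$, consider the deviation in which $A_2$, at each entry of a module $N'(c_i)$, plays the unique action that — given the action prescribed by $\sigma_{A_3}$ at the current history — routes the play to the literal state $\lambda(i)$; define symmetrically a deviation for~$A_3$. The point is that these two deviated plays \emph{coincide}: arguing by induction along the play, at every $A_1$-state the two deviations have produced the same sequence of visited states, so $A_1$ (whose strategy depends only on visited states) picks the same clause, both deviations then route through the same literal state $\lambda(i)$, and the literal states have a unique successor back to the start of $N(\phi)$. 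Call the common play $\pi^\star$; its set of infinitely-visited literal states is contained in the literals true under~$v$, hence is nonempty and consistent, so by~(ii) either $A_2$ or $A_3$ wins along~$\pi^\star$, and the corresponding deviation is the required one.

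The main obstacles I expect are exactly the two bookkeeping points above. First, one must carefully check that the parity (resp.\ Rabin) conditions assembled from $T^{A_2}_\bullet$ and $T^{A_3}_\bullet$ really do yield properties~(i) and~(ii) of the infinitely-visited literal set — this is where the precise convention governing which variable index is ``decisive'' and how it interacts with the two mirrored priority assignments has to be pinned down, and it is what forces the use of two players $A_2,A_3$ rather than one. Second, in the satisfiable case one must justify that $A_1$'s behaviour is insensitive to which of $A_2,A_3$ is the deviator, so that the single play $\pi^\star$ (and hence a single application of~(ii)) suffices; this relies on strategies depending on visited states only, which is precisely the standing hypothesis of the paper.
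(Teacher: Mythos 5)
Your handling of the second statement is essentially the paper's own argument: fix a satisfying valuation $v$, have the deviating player answer the other one modulo~$3$ so as to steer each module $N'(c_i)$ to a literal of $c_i$ true under $v$, and use the fact that strategies observe only the visited states to conclude that the two unilateral deviations (by $A_2$ against $\sigma_{A_3}$, and by $A_3$ against $\sigma_{A_2}$) generate one and the same play, along which only literals true under $v$ recur, so that one of $A_2,A_3$ wins and the corresponding deviation is profitable. The first statement, however, has a genuine gap. Your property~(i) --- that both $A_2$ and $A_3$ lose as soon as the set $L$ of literal states visited infinitely often contains \emph{some} complementary pair --- is not what the objectives give. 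The conditions are parity conditions built from the priorities $p^{A_2}(x_j)=2j$ and the adjacent odd priority for $\lnot x_j$ (mirrored for $A_3$), so with the min-parity convention the outcome is decided \emph{only} by the least-index variable having a literal visited infinitely often: if only $x_{j_0}$ recurs there, $A_2$ wins; if only $\lnot x_{j_0}$ recurs, $A_3$ wins; both lose only when both polarities of that decisive variable recur. A complementary pair sitting at some higher-index variable is irrelevant.

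Because of this, your round-robin strategy for $A_1$ does not work. Take the unsatisfiable formula $(x_1\lor x_2\lor x_2)\land(x_1\lor\lnot x_2\lor\lnot x_2)\land(\lnot x_1\lor x_2\lor x_2)\land(\lnot x_1\lor\lnot x_2\lor\lnot x_2)$: against round-robin, $A_2$ and $A_3$ can jointly select $x_1$ in the first two modules and $x_2$, $\lnot x_2$ in the last two, so $L=\{x_1,x_2,\lnot x_2\}$ does contain a complementary pair, yet the decisive variable is $x_1$, visited only positively, and $A_2$ wins. What is needed --- and what the paper does --- is an \emph{adaptive} strategy for $A_1$: maintain along the history $h$ the valuation $v^h$ in which each variable carries the polarity of its last visited literal (default false), and always move to a clause that is false under $v^h$, which exists by unsatisfiability. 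Then every visit to a literal flips the current value of its variable, so for every $k$ the visits to $x_k$ and $\lnot x_k$ strictly alternate; hence each variable either has both polarities visited infinitely often or is eventually never visited, and in particular both polarities of the decisive variable recur, which is exactly the condition making both $A_2$ and $A_3$ lose.
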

  
\begin{proof}
  We begin with the first statement, assuming that $\phi$~is not
  satisfiable and defining the strategy for~$A_1$.  With a history~$h$
  in~$N(\phi)$, we~associate a valuation $v^h \colon \{x_k \mid k \in
  [1,p]\} \to \{\top,\bot\}$ (where $p$ is the number of distinct
  variables in~$\phi$), defined as follows:
  \[
  v^h(x_k) = \top \ \Leftrightarrow\ \exists m.\ h_m = x_k \land
  \forall m'>m .\ h_{m'} \ne \lnot x_k \qquad \text{for all
    $k\in[1,p]$}
  \]
  We also define $v^h(\neg x_k) = \neg v^h(x_k)$.  Under this
  definition, $v^h(x_k)=\top$ if the last occurrence of~$x_k$ or~$\neg
  x_k$ along~$h$ was~$x_k$.  We~then define a strategy~$\sigma_1$ for
  player~$A_1$: after a history~$h$ ending in an $A_1$-state,
  we~require~$\sigma_1(h)$ to go to $N'(c_i)$ for some $c_i$ (with
  least index,~say) that evaluates to false under~$v^h$ (such a~$c_i$
  exists since $\phi$ is not satisfiable). This strategy enforces that
  if $h\cdot\sigma_1(h)\cdot \ell_{i,j}$ is a finite outcome
  of~$\sigma_1$, then $v^h(\ell_{i,j}) = \bot$, because $A_1$ has
  selected a clause~$c_i$ whose literals all evaluate to~$\bot$.
  Moreover, $v^{h\cdot\sigma_1(h)\cdot \ell_{i,j}}(\ell_{i,j}) =
  \top$, so that for each~$j$, any outcome of~$\sigma_1$ will either
  alternate between~$x_k$ and~$\neg x_k$ (hence visit both of them
  infinitely often), or no longer visit any of them after some
  point. Hence both~$A_2$ and~$A_3$ lose.

  \medskip We now prove the second statement.  Let $v$ be a valuation
  under which~$\phi$ evaluates to~true, and $\sigma_\Agt$ be a
  strategy profile.  From~$\sigma_{A_2}$ and~$\sigma_{A_3}$, we~define
  two strategies $\sigma_{A_2}'$ and~$\sigma_{A_3}'$.  Consider a
  finite history~$h$ ending in the first state of~$N'(c_i)$, for
  some~$i$. Pick a~literal~$\ell_{i,j}$ of~$c_i$ that is true
  under~$v$ (the one with least index,~say). We~set
  \begin{xalignat*}2
    \sigma'_{A_2}(h) &= [j - \sigma_{A_3}(h) \text{ (mod 3)}]
    & 
    \sigma'_{A_3}(h) &= [j - \sigma_{A_2}(h) \text{ (mod 3)}].
  \end{xalignat*}
  It~is easily checked that, when $\sigma_{A_2}$ and~$\sigma'_{A_3}$
  (or $\sigma'_{A_2}$ and $\sigma_{A_3}$) are played simultaneously in
  the first state of some~$N'(c_i)$, then the game goes
  to~$\ell_{i,j}$. Thus under those strategies, any visited literal
  evaluates to~true under~$v$, which means that at most one of~$x_k$
  and $\neg x_k$ is visited (infinitely often). Hence one of~$A_2$
  and~$A_3$ is winning, which proves our claim.
\end{proof}

\medskip We now proceed by encoding an instance
\begin{xalignat*}1
  \exists x^1_{1}, \dots x^1_{k}.\ &\phi^1(x^1_{1},\dots,x^1_{k}) \\
  &\dots \\
  \exists x^m_{1}, \dots x^m_{k}.\ &\phi^m(x^m_{1},\dots,x^m_{k})
\end{xalignat*}
of \PARITYSAT into a parity game.  The game involves the three
players~$A_1$, $A_2$ and~$A_3$ of the game~$N(\phi)$ defined above,
and it~will contain a copy of~$N(\phi^r)$ for each~$1\leq r\leq m$.
The~objectives of~$A_2$ and~$A_3$ are the unions of their objectives
in each~$N(\phi^r)$, e.g. $p^{A_2}(x^1_j) = p^{A_2}(x^2_j) = \cdots =
p^{A_m} (x^m_j) = 2j$.

For each such~$r$, the game will also contain a copy of the
game~$M(\phi^r)$ depicted on Figure~\ref{fig-M}.  Each
game~$M(\phi^r)$ involves an extra set of players~$B^r_{k}$, one for
each variable~$x^r_k$.  As we have seen in
Section~\ref{subsec:cobuchi}, in a Nash equilibrium, it~cannot be the
case that both $x^r_k$ and~$\neg x^r_k$ are visited infinitely often.

In order to test the parity of the number of satisfiable formulas, we
then define two families of modules, depicted on Figure~\ref{fig-mod1}
to~\ref{fig-modn}. Finally, the whole game~$\calG$ is depicted on
Figure~\ref{fig-g}.  In~that game, the objective of~$A_1$ is to visit
infinitely often the initial state~$\textrm{init}$.

\begin{figure}[!ht]
  \begin{minipage}{0.5\textwidth}
    \centering
    \begin{tikzpicture}[thick,xscale=1.2]
     \tikzstyle{rond}=[draw,circle,minimum size=7mm,inner sep=1mm,fill=black!10]
     \tikzstyle{oval}=[draw,minimum width=10mm,inner sep=1mm,minimum height=7mm,rounded corners=2mm,fill=black!10]
     \draw (0,0) node[rond] (A1) {$A_1$};
      \draw (1,1) node[oval] (MP) {$M(\phi^r)$};
      \draw (3,1) node[oval] (GP) {$G(\phi^{r-1})$};
      \draw (1,-1) node[oval] (A2A3) {$A_2/A_3$};
      \draw (3,-2) node[oval] (NP) {$N(\phi^r)$};
      \draw (3,0) node[oval] (HP) {$H(\phi^{r-1})$};

      \draw[-latex'] (-0.8,0) -- (A1);
      \draw[-latex'] (A1) -- (MP);
      \draw[-latex'] (A1) -- (A2A3);
      \draw[-latex'] (MP) -- (GP);
      \draw[-latex'] (A2A3) -- node[above,sloped] {$\scriptstyle\langle
        1,0\rangle $} node[below,sloped] {$\scriptstyle\langle 0,1 \rangle$}
      (NP);
      \draw[-latex'] (A2A3) -- node[below,sloped] {$\scriptstyle\langle
        1,1\rangle $} node[above,sloped] {$\scriptstyle\langle 0,0 \rangle$}
      (HP);
    \end{tikzpicture}
    \caption{Module $H(\phi^r)$ for $r\ge 2$}\label{fig-mod1}  
  \end{minipage}%
  \begin{minipage}{0.5\textwidth}
    \centering
    \begin{tikzpicture}[thick,xscale=1.2]
      \tikzstyle{rond}=[draw,circle,minimum size=7mm,inner sep=1mm,fill=black!10]
     \tikzstyle{oval}=[draw,minimum width=10mm,inner sep=1mm,minimum height=7mm,rounded corners=2mm,fill=black!10]
     \draw (0,0) node[rond] (A1) {$A_1$};
      \draw (1,1) node[oval] (MP) {$M(\phi^r)$};
      \draw (3,1) node[oval] (GP) {$H(\phi^{r-1})$};
      \draw (1,-1) node[oval] (A2A3) {$A_2/A_3$}; 
      \draw (3,-2) node[oval] (NP) {$N(\phi^r)$};
      \draw (3,0) node[oval] (HP) {$G(\phi^{r-1})$};

      \draw[-latex'] (-0.8,0) -- (A1);
      \draw[-latex'] (A1) -- (MP);
      \draw[-latex'] (A1) -- (A2A3);
      \draw[-latex'] (MP) -- (GP);
      \draw[-latex'] (A2A3) -- node[above,sloped] {$\scriptstyle\langle
        1,0\rangle $} node[below,sloped] {$\scriptstyle\langle 0,1 \rangle$}
      (NP); 
      \draw[-latex'] (A2A3) -- node[below,sloped] {$\scriptstyle\langle
        1,1\rangle $} node[above,sloped] {$\scriptstyle\langle 0,0 \rangle$}
      (HP); 
    \end{tikzpicture}
    \caption{Module $G(\phi^r)$ for $r\ge 2$}    
  \end{minipage}

\medskip
\def\figurename{Fig.}
\captionindent=0pt
\noindent\begin{minipage}{\linewidth}
\noindent
 \begin{minipage}{.3\linewidth}
    \centering
    \begin{tikzpicture}[thick,xscale=1.2]
     \tikzstyle{rond}=[draw,circle,minimum size=7mm,inner sep=1mm,fill=black!10]
     \tikzstyle{oval}=[draw,minimum width=10mm,inner sep=1mm,minimum height=7mm,rounded corners=2mm,fill=black!10]
    \path[use as bounding box] (-1,.9) -- (1,-1.7);
   \draw (0,0) node[oval] (M) {$M(\phi^1)$};
      \draw[-latex'] (-1,0) -- (M);
      \draw[-latex'] (M) -- (1,0);
    \end{tikzpicture}
    \caption{Module $H(\phi^1)$}
  \end{minipage}\hfill
  \begin{minipage}{0.32\linewidth}
    \centering
    \begin{tikzpicture}[thick,xscale=1.2]
     \tikzstyle{rond}=[draw,circle,minimum size=7mm,inner sep=1mm,fill=black!10]
     \tikzstyle{oval}=[draw,minimum width=10mm,inner sep=1mm,minimum height=7mm,rounded corners=2mm,fill=black!10]
       \path[use as bounding box] (-.5,.9) -- (1.5,-1.7);
      \draw (0,0) node[oval] (M) {$A_2/A_3$};
      \draw (0,-1.5) node[oval] (N) {$N(\phi^1)$};
      \draw[-latex'] (-1,0) -- (M);
      \draw[-latex'] (M) -- node[above] {$\scriptstyle\langle
        0,0\rangle $} node[below] {$\scriptstyle\langle 1,1 \rangle$}
      (1.5,0);
      \draw[-latex'] (M) -- node[left] {$\scriptstyle\langle
        1,0\rangle $} node[right] {$\scriptstyle\langle 0,1 \rangle$}
      (N);
    \end{tikzpicture}
    \caption{\mbox{Module $G(\phi^1)$}}\label{fig-modn}
  \end{minipage}\hfill
  \begin{minipage}{0.32\linewidth}
    \centering
    \begin{tikzpicture}[thick,xscale=1.2]
     \tikzstyle{rond}=[draw,circle,minimum size=7mm,inner sep=1mm,fill=black!10]
     \tikzstyle{oval}=[draw,minimum width=10mm,inner sep=1mm,minimum height=7mm,rounded corners=2mm,fill=black!10]
       \path[use as bounding box] (-1,.9) -- (2.5,-1.7);
      \draw (0,0) node[oval] (A1) {$\textrm{init}\vphantom{G^n}$};
      \draw (2,0) node[oval] (MP) {$G(\phi^m)$};
      \draw[-latex'] (-1,0) -- (A1);
      \draw[-latex'] (A1) -- (MP);
      \draw[-latex',rounded corners=3mm]  (MP) |-  (1,-1) -| (A1) ;
    \end{tikzpicture}
    \caption{The game $\calG$}\label{fig-g}
  \end{minipage}
\end{minipage}
\end{figure}

\begin{lemma}
  There is a Nash equilibrium in the game $\calG$ where $A_2$ and
  $A_3$ lose and $A_1$ wins if, and only if, the number of satisfiable
  formulas is even.
\end{lemma}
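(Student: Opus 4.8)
\medskip
The plan is to prove the two implications separately, using Lemma~\ref{lemma-N} for the modules $N(\phi^r)$, the properties of the module $M(\phi)$ established in Section~\ref{subsec:cobuchi}, and Proposition~\ref{lem:play-length} to assume any Nash equilibrium has a lasso-shaped outcome. The starting observation is that once the play enters a copy of some $N(\phi^r)$ or reaches a sink $\bot$ of some $M(\phi^r)$, it stays there forever and never revisits $\textrm{init}$; hence in any equilibrium in which $A_1$ wins, the outcome avoids all these absorbing regions, so it cycles down through the nested modules of levels $m,m-1,\dots,1$ and back to $\textrm{init}$ infinitely often, passing through every level $r$ infinitely often. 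At each such level the only remaining choices are: $A_1$ decides whether to enter $M(\phi^r)$ or to move to the concurrent state $A_2/A_3$, and at the latter $A_2$ and $A_3$ must play the same action (otherwise the play falls into $N(\phi^r)$).

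\medskip
The heart of the argument is the claim that, in any such equilibrium and for each $r$, the play enters $M(\phi^r)$ infinitely often \emph{iff} $\phi^r$ is satisfiable. If the play traverses $M(\phi^r)$ infinitely often while $\phi^r$ is unsatisfiable, then by the property of $M(\phi)$ some variable has both its literal states $x^r_k$ and $\lnot x^r_k$ visited infinitely often; then $B^r_k$ is losing but, controlling the state $\lnot x^r_k$ which she visits infinitely often, she can deviate to $\bot$ and thereby win, contradicting that the profile is an equilibrium. Conversely, if $\phi^r$ is satisfiable but the play ever reaches the concurrent state $A_2/A_3$ of level $r$, then $A_2$ (resp. $A_3$) may change her action there to force entry into $N(\phi^r)$ and, by the second part of Lemma~\ref{lemma-N}, afterwards switch her strategy inside $N(\phi^r)$ so that she wins; since both $A_2$ and $A_3$ are losing in the equilibrium, at least one of these is a profitable unilateral deviation, which is impossible. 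Hence $A_1$ always enters $M(\phi^r)$; and whenever she does so infinitely often with $\phi^r$ satisfiable she must, again for the above co-Büchi reason, pick literals consistently, i.e. follow a single satisfying valuation.

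\medskip
It remains to track the alternation between $H$- and $G$-modules. From Figures~\ref{fig-mod1}--\ref{fig-modn}, going through $M(\phi^r)$ switches the type of the module seen at level $r-1$, while going through the concurrent state leaves it unchanged, and the outermost module is $G(\phi^m)$; so (using the claim above for $r=m,\dots,2$) the base module reached at level $1$ along the lasso is $H(\phi^1)$ exactly when an odd number of $\phi^2,\dots,\phi^m$ are satisfiable. At level $1$, if that module is $H(\phi^1)=M(\phi^1)$ then $A_1$ winning forces $\phi^1$ satisfiable (co-Büchi reason), and if it is $G(\phi^1)$ — which starts with a concurrent state leading out to $\textrm{init}$ or into $N(\phi^1)$ — then $A_1$ winning together with $A_2,A_3$ losing forces $\phi^1$ unsatisfiable, again by Lemma~\ref{lemma-N}. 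Thus a suitable equilibrium can exist only if either ($\phi^1$ satisfiable and an odd number of $\phi^2,\dots,\phi^m$ satisfiable) or ($\phi^1$ unsatisfiable and an even number of them satisfiable), i.e. only if the total number of satisfiable formulas is even. For the converse, when that number is even I would exhibit the equilibrium directly: $A_1$ enters $M(\phi^r)$ following a fixed satisfying valuation whenever $\phi^r$ is satisfiable and moves to the concurrent state otherwise, and inside any $N(\phi^r)$ reachable under an $A_2/A_3$-deviation (necessarily with $\phi^r$ unsatisfiable) she plays the strategy of the first part of Lemma~\ref{lemma-N} that makes $A_2,A_3$ lose; $A_2$ and $A_3$ always agree at every concurrent state; no $B^r_k$ ever moves to $\bot$. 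The parity computation shows the outcome returns to $\textrm{init}$ infinitely often, so $A_1$ wins, and it visits no literal state of any $N$-copy infinitely often, so (the relevant priorities being those of the $N$-copies only) $A_2$ and $A_3$ lose; and no deviation helps — $A_1$ already wins, a deviation of $A_2$ or $A_3$ either keeps all payoffs or sends the play into some $N(\phi^r)$ with $\phi^r$ unsatisfiable where $A_1$ keeps them losing, and each $B^r_k$ either already wins ($x^r_k$ is then never visited) or never gets the chance to move.

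\medskip
The main obstacle I anticipate is the bookkeeping in the two middle steps: combining, in the forward direction, the action-change that forces the play into $N(\phi^r)$ with the internal winning deviation supplied by Lemma~\ref{lemma-N} (noting that whether that winner is $A_2$ or $A_3$ is irrelevant, both being losers), and correctly propagating the $H$/$G$ module type through all the figures down to the two genuinely different base cases $H(\phi^1)$ and $G(\phi^1)$.
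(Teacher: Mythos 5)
Your proof is correct and takes essentially the same approach as the paper's: the same two deviation arguments (a losing $B^r_k$ escaping to $\bot$ when both literals of $x^r_k$ recur in an $M$-copy, and $A_2$ or $A_3$ forcing entry into $N(\phi^r)$ via Lemma~\ref{lemma-N} when the concurrent state is reached for a satisfiable $\phi^r$), and the same explicit equilibrium construction for the converse. The only difference is presentational: you propagate the $H$/$G$ module type top-down along the lasso cycle, whereas the paper phrases the same bookkeeping as a bottom-up induction on $r$.
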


\begin{proof}
  Assume that there is a Nash equilibrium in~$\calG$ where $A_1$ wins
  and both~$A_2$ and~$A_3$ lose. Let~$\rho$ be its outcome. As~already
  noted, if~$\rho$~visits module~$M(\phi^r)$ infinitely often, then
  it~cannot be the case that both~$x^r_k$ and~$\neg x^r_k$ are visited
  infinitely often in~$M(\phi^r)$, as otherwise $B^r_k$ would be
  losing and have the opportunity to improve her payoff. This implies
  that $\phi^r$ is satisfiable.
  Similarly, if~$\rho$ visits infinitely often the states
  of~$H(\phi^r)$ or $G(\phi^r)$ that is controlled by~$A_2$ and~$A_3$,
  then it~must be the case that $\phi^r$ is not satisfiable, since
  from Lemma~\ref{lemma-N} this would imply that $A_2$ or~$A_3$ could
  deviate and improve her payoff by going to~$N(\phi^r)$.

  \medskip We now show by induction on~$r$ that if $\rho$ goes
  infinitely often in module $G(\phi^r)$ then $\#\{ j \le r \mid
  \phi^r\text{ is satisfiable}\}$ is even, and that (if~$n>1$) this
  number is odd if $\rho$ goes infinitely in module $H(\phi^r)$.

  When~$r=1$, since $H(\phi^1)$ is $M(\phi^1)$, $\phi^1$ is
  satisfiable, as noted above. Similarly, if $\rho$ visits $G(\phi^1)$
  infinitely often, it~also visits its $A_2/A_3$-state infinitely
  often, so that $\phi^1$ is not satisfiable. This proves the base
  case.

  Assume that the result holds up to some~$r-1$, and assume
  that~$\rho$ visits~$G(\phi^{r})$ infinitely often. Two cases may
  occur:
  \begin{itemize}
  \item it~can be the case that $M(\phi^r)$ is visited infinitely
    often, as well as $H(\phi^{r-1})$. Then $\phi^r$ is satisfiable,
    and the number of satisfiable formulas with index less than or
    equal to~$r-1$ is~odd.  Hence the number of satisfiable formulas
    with index less than or equal to~$r$ is even.
  \item it~can also be the case that the state $A_2/A_3$
    of~$G(\phi^r)$ is visited infinitely often. Then $\phi^r$ is not
    satisfiable. Moreover, since~$A_1$ wins, the play will also
    visit~$G(\phi^{r-1})$ infinitely often, so that the number of
    satisfiable formulas with index less than or equal to~$r$ is even.
  \end{itemize}
  If $\rho$ visits~$H(\phi^r)$ infinitely often, using similar
  arguments we prove that the number of satisfiable formulas with
  index less than or equal to~$r$ is odd.

  To conclude, since~$A_1$ wins, the play visits~$G(\phi^m)$
  infinitely often, so that the total number of satisfiable formulas
  is even.

  \medskip Conversely, assume that the number of satisfiable formulas
  is even. We~build a strategy profile, which we prove is a Nash
  equilibrium in which~$A_1$ wins, and~$A_2$ and~$A_3$
  lose. The~strategy for~$A_1$ in the initial states of~$H(\phi^r)$
  and~$G(\phi^r)$ is to go to~$M(\phi^r)$ when $\phi^r$ is
  satisfiable, and to state~$A_2/A_3$ otherwise. In~$M(\phi^r)$, the
  strategy is to play according to a valuation
  satisfying~$\phi^r$. In~$N(\phi^r)$, it~follows a strategy along
  which~$A_2$ and~$A_3$ lose (this exists according to
  Lemma~\ref{lemma-N}). This defines the strategy for~$A_1$. Then
  $A_2$ and~$A_3$ are required to always play the same move, so that
  the play never goes to some~$N(\phi^r)$. In~$N(\phi^r)$, they can
  play any strategy (they lose anyway, whatever they~do). Finally, the
  strategy of~$B^r_k$ never goes to~$\bot$.

  We now explain why this is the Nash equilibrium we are after. First,
  as $A_1$ plays according to fixed valuations for the
  variables~$x^r_k$, either $B^r_k$ wins or she does not have the
  opportunity to go to~$\bot$.  It~remains to prove that $A_1$ wins,
  and that $A_2$ and~$A_3$ lose and cannot improve
  (individually). To~see this, notice that between two consecutive
  visits to~$\textrm{init}$, exactly one of~$G(\phi^r)$
  and~$H(\phi^r)$ is visited. More precisely, it~can be observed that
  the strategy of~$A_1$ enforces that $G(\phi^r)$ is visited if
  $\#\{r<r' \leq m \mid \phi^{r'} \text{ is satisfiable}\}$ is even,
  and that $H(\phi^r)$ is visited otherwise. Then if~$H(\phi_1)$ is
  visited, the number of satisfiable formulas with index between~$2$
  and~$m$ is odd, so that $\phi_1$ is satisfiable and $A_1$ can return
  to~\textrm{init}. If~$G(\phi^1)$ is visited, an even number of
  formulas with index between~$2$ and~$m$ is satisfiable, and $\phi^1$
  is~not. Hence $A_1$ has a strategy in~$N(\phi^1)$ to make $A_2$
  and~$A_3$ lose, so that $A_2$ and $A_3$ cannot improve their
  payoffs.
\end{proof}

This proves hardness for the constrained NE existence problem with parity
objectives. For the NE existence problem, we use the construction of
Section~\ref{sec:link-constr-exist}, but since it can only be used to get rid
of constraint of the type ``$A_1$ is winning'', we add to the game two
players, $A_4$~and~$A_5$, whose objectives are opposite to~$A_2$ and~$A_3$
respectively, and one player~$A_6$ that will be playing matching-penny games.
The objectives for $A_4$ and~$A_5$ are definable by parity objectives, by
adding $1$ to all the priorities. Then, we consider game $\calG'=
E(E(E(\calG,A_1,A_6,\rho_1),A_4,A_6,\rho_4),A_5,A_6,\rho_5)$ where $\rho_1$,
$\rho_4$ and $\rho_5$ are winning paths for $A_1$, $A_4$ and~$A_5$
respectively. Thanks to Proposition~\ref{lem:link-constr-exist}, there is a
Nash equilibrium in~$\calG'$ if, and only~if, there is a Nash equilibrium
in~$\calG$ where $A_1$ wins and $A_2$ and $A_3$ lose. We~deduce
$\P^{\NP}_{\parallel}$-hardness for the NE existence problem with parity
objectives.

\subsection{Objectives given as deterministic Rabin automata}
\label{sec:rabin-auto}\label{subsec:rabin-auto}
In order to find Nash equilibria when objectives are given as
deterministic Rabin automata, we first define the notion of
\newdef{game simulation}, which we show has the property that when
$\calG'$~game-simulates~$\calG$, then a Nash equilibrium in the latter
game gives rise to a Nash equilibrium in the former one.

We then define the product of a game with automata (defining the
objectives of the players), and show that it game-simulates the
original game. This reduces the case of games with objectives are
defined as Rabin automata to games with Rabin objectives, which we
handled at the previous section; the resulting algorithm is in
\EXPTIME. We then show a \PSPACE lower bound for the problem in the
case of objectives given as deterministic B\"uchi automata. This
proves the following theorem:

\begin{theorem}
  For finite games with single objectives given as deterministic Rabin
  automata or deterministic B\"uchi automata, the NE existence problem and
  the constrained NE existence problem are in \EXPTIME and \PSPACE-hard.
\end{theorem}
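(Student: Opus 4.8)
The proof splits into the \EXPTIME upper bound and the \PSPACE lower bound; I treat deterministic Rabin automata as the general case, deterministic B\"uchi automata being the special case where each automaton is a single‑pair Rabin condition.

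For the upper bound the plan is to eliminate the automata by a synchronous product. First I would introduce the announced notion of \emph{game simulation}: a relation between concurrent games guaranteeing that a (constrained) Nash equilibrium of the simulated game can be transferred, with the same payoff profile, to the simulating game; the transfer lemma is proved by lifting strategies and checking that single‑player deviations — hence the suspect/blame structure — are preserved. Then, given a game $\calG$ over $\Stat$ in which player~$A$ has objective $\Lang(\calA_A)$ for a deterministic Rabin automaton $\calA_A=\tuple{Q_A,\Stat,\delta_A,q^0_A,(E^A_i,F^A_i)_{i\in\lsem 1,k_A\rsem}}$, I would build the product game $\calG'$ on state space $\Stat\times\prod_{A\in\Agt}Q_A$, whose transitions follow $\calG$ on the first component and each automaton on its own component, and in which the objective of player~$A$ is the Rabin condition read off the $Q_A$‑component (with $k_A$ pairs). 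Determinism of the automata is exactly what makes $\calG'$ a legal (deterministic) game and makes the projection $\Stat\times\prod Q_A\to\Stat$ a bijection on plays from the chosen initial configuration; so $\calG'$ game‑simulates $\calG$ (one direction of the reduction) and the converse is immediate from the play bijection, whence there is a (constrained) Nash equilibrium in $\calG$ iff there is one with the corresponding payoff in $\calG'$ — the thresholds, being pairs of sets of automaton states, translating verbatim.

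It then remains to solve the (constrained) NE existence problem in $\calG'$. Although $\calG'$ has exponentially many states and an exponential transition table (the factor $\prod_A|Q_A|$ blows up), the number of players $|\Agt|$ and the total number of Rabin pairs $\sum_A k_A$ stay polynomial in the input. So I would run the deterministic algorithm of Section~\ref{rabin:det-algo}, whose running time has the form $\mathcal{O}\bigl(2^{|\Agt|}\cdot(|\Tab'|^{3\sum_A k_A}\cdot(\sum_A k_A)! + (\prod_A k_A\cdot 2^{k_A})\cdot|\Tab'|^3\cdot\sum_A k_A)\bigr)$; since $|\Tab'|=2^{\mathrm{poly}}$ while $|\Agt|,\sum_A k_A=\mathrm{poly}$, this evaluates to $2^{\mathrm{poly}}$, i.e.\ \EXPTIME. (For deterministic B\"uchi automata each $k_A=1$, and the same bound holds a fortiori.) This is precisely the point of the remark closing Section~\ref{rabin:det-algo}: the exponential blow‑up of that algorithm comes from the agents and the Rabin pairs, not from the arena, so feeding it the exponential‑size product arena is harmless.

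For the lower bound it suffices to prove \PSPACE‑hardness already for deterministic B\"uchi automata (a B\"uchi automaton being a one‑pair Rabin automaton, the Rabin case follows). The plan is a reduction from a canonical \PSPACE‑complete problem — the word problem for a polynomial‑space Turing machine, equivalently \textsc{Tqbf} — in the spirit of the compact‑encoding reduction of Proposition~\ref{proposition:explosion}. I would build a fixed‑size ``iterated‑choice'' arena in which, round after round, the players announce the bits of a configuration (resp.\ of a valuation), and equip each player with a \emph{polynomial‑size} deterministic B\"uchi automaton that reads the play and checks, infinitely often, a \emph{local} consistency condition (successive configurations respecting the transition relation, resp.\ the announced assignment respecting the quantifier structure and satisfying the matrix); a small automaton suffices exactly because only round‑to‑round checks are needed even though the encoded object is exponential, while the Nash condition is what forces the ``adversarial'' players to best‑respond. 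The delicate point, and the main obstacle of the whole proof, is calibrating these automaton objectives so that a \emph{single} unilateral deviation is detected and punished exactly — so that spurious equilibria are excluded and genuine ones survive — which is precisely the blame‑tracking phenomenon formalised by the suspect game, and making the automata implement it with only polynomially many states is where the care goes.
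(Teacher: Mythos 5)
Your \EXPTIME upper bound is essentially the paper's argument: introduce game simulation, show it transfers Nash equilibria, take the (deterministic) product of the arena with the automata, and run the deterministic Rabin algorithm of Section~\ref{rabin:det-algo}, whose exponential factor depends only on the number of agents and Rabin pairs, not on the arena size; taking the product over all automata at once rather than iteratively is an immaterial difference.

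The \PSPACE lower bound, however, is a genuine gap. What you offer is a plan, not a proof: you explicitly leave open the ``delicate point'' of calibrating the automata so that exactly the unilateral deviations are detected and punished, and that is precisely the content of the reduction, not a finishing touch. Worse, the plan as stated has an obstruction: a \emph{deterministic} B\"uchi automaton of polynomial size cannot check that consecutive Turing-machine configurations (of polynomial length) succeed each other, since doing so deterministically requires remembering an entire configuration, i.e.\ exponentially many states. Making this work would force the \emph{game} to implement a challenge mechanism (as in the compact-encoding reduction of Proposition~\ref{proposition:explosion}), and then the whole burden of excluding spurious equilibria under single-player deviations reappears --- none of which you construct. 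The paper avoids all of this by reducing from the intersection-emptiness problem for deterministic finite automata \cite{kozen1977lower}: each DFA $\calA_j$ is turned into a deterministic B\"uchi automaton for $\init\cdot\calL_j\cdot\final^\omega$, and the game has one player $A_0$ who owns every state (and always wins) plus passive players $A_1,\dots,A_n$ with these automata as objectives; a Nash equilibrium in which all players win exists iff the intersection is nonempty, so no deviation-detection machinery is needed at all, and closure of \PSPACE under complement gives hardness of the constrained problem. Finally, you do not address the unconstrained NE existence problem separately; the paper obtains it from the constrained case via the $E(\calG,A_i,A_j,\rho)$ construction of Section~\ref{sec:link-constr-exist}, applied with objectives still expressible by (slightly modified) deterministic B\"uchi automata. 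As it stands, your proposal establishes the upper bound but not the hardness part of the statement.
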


It~must be noticed that game simulation can be used in other contexts:
in particular, in~\cite{BBM10a} (where we introduced this notion),
it~is shown that a region-based abstraction of timed games game
simulates its original timed game, which provides a way of computing
Nash equilibria in timed games.

\subsubsection{Game simulation}
\label{sec:simulation}

We~define game simulation and show how that can be used to compute
Nash equilibria.  We then apply it to objectives given as
deterministic Rabin automata.

\begin{definition}\label{def-gsim}
  Consider two games $\calG=\tuple{\Stat,\Agt,\Act, \Allow,\Tab,
    (\mathord\prefrel_A)_{A\in\Agt}}$ and
  $\calG'=\tuple{\Stat',\Agt,\Act',\Allow',\Tab',
    (\mathord\prefrel'_A)_{A\in\Agt}}$ with the same set~$\Agt$ of players.
  A~relation $\mathord{\simulrel} \subseteq \Stat \times \Stat'$ is a
  \emph{game simulation} if $s \simulrel s'$ implies that for each
  move $\indicebis {m}A\Agt$ in~$\calG$ there exists a move
  $\indicebis{m'}A\Agt$ in~$\calG'$ \st:
  \begin{enumerate}\raggedright
  \item\label{cond:sim22} $\Tab(s,\indicebis {m}A\Agt) \simulrel
    \Tab'(s',\indicebis{m'}A\Agt)$, and
  \item\label{cond:sim21} for each $t'\in\Stat'$ there exists
    $t\in\Stat$ with $t\simulrel t'$ and
    $\Susp((s',t'),\indicebis{m'}A\Agt) \subseteq
    \Susp((s,t),\indicebis {m}A\Agt)$.
  \end{enumerate}
  If $\simulrel$~is a game simulation and $(s_0,s'_0)\in {\simulrel}$,
  we~say that $\calG'$ \emph{game-simulates} (or simply
  \emph{simulates})~$\calG$.  When there are two paths $\rho$ and
  $\rho'$ such that $\rho_{=i}\simulrel\rho'_{=i}$ for all $i\in\N$,
  we will simply write $\rho \simulrel \rho'$.

  A~game simulation~$\simulrel$ is \emph{preference-preserving} from
  $(s_0,s'_0) \in \Stat \times \Stat'$ if for all $\rho_1 , \rho_2 \in
  \Play_{\calG}(s_0)$ and $\rho'_1 , \rho'_2 \in \Play_{\calG'}(s'_0)$
  with $\rho_1 \simulrel \rho'_1$ and $\rho_2 \simulrel \rho'_2$, for
  all $A\in\Agt$ it~holds that $\rho_1 \prefrel_A \rho_2$ iff
  $\rho'_1 \prefrel'_A \rho'_2$.
\end{definition}

As we show now, Nash equilibria are preserved by game simulation, in
the following sense:
\begin{proposition}
  \label{prop:sim}
  Let $\calG=\tuple{\Stat,\Agt,\Act, \Allow,\Tab,
    (\mathord\prefrel_A)_{A\in\Agt}}$ and
  $\calG'=\tuple{\Stat',\Agt,\Act',\Allow',\Tab',
    (\mathord\prefrel'_A)_{A\in\Agt}}$ be two games involving the same set of
  players. Fix two states $s_0$ and~$s_0'$ in $\calG$ and~$\calG'$
  respectively, and let $\simulrel$~be a preference-preserving game
  simulation from $(s_0,s_0')$.  If there exists a Nash equilibrium
  $\sigma_\Agt$ in~$\calG$ from~$s_0$, then there exists a Nash
  equilibrium $\sigma'_\Agt$ in~$\calG'$ from~$s_0'$ with
  $\Out_{\calG}(s_0,\sigma_\Agt) \simulrel
  \Out_{\calG'}(s'_0,\sigma'_\Agt)$.
\end{proposition}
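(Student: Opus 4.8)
The plan is to build $\sigma'_\Agt$ in $\calG'$ by "chasing" the game simulation along $\calG$-strategies, turning moves of $\calG$ into moves of $\calG'$ using condition~\ref{cond:sim22} of Definition~\ref{def-gsim}, and using condition~\ref{cond:sim21} to handle deviations. First I would fix the Nash equilibrium $\sigma_\Agt$ in $\calG$ from $s_0$ and its outcome $\pi = \Out_\calG(s_0,\sigma_\Agt)$. For every history $h'$ of $\calG'$ starting in $s_0'$ that the construction can actually produce, I will maintain (by induction on $|h'|$) a corresponding history $h$ of $\calG$ starting in $s_0$ with $h \simulrel h'$ (pointwise). Given such a pair, $\sigma_\Agt$ prescribes a move $m_\Agt = \sigma_\Agt(h)$ in $\calG$; since $\last(h) \simulrel \last(h')$, condition~\ref{cond:sim22} gives a move $m'_\Agt$ in $\calG'$ with $\Tab(\last(h),m_\Agt)\simulrel\Tab'(\last(h'),m'_\Agt)$, and I set $\sigma'_\Agt(h') = m'_\Agt$. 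Because the relation is a function of the (chosen) corresponding $\calG$-history, this is well-defined, and along the "obedient" play it yields $\Out_{\calG'}(s_0',\sigma'_\Agt) \simulrel \Out_\calG(s_0,\sigma_\Agt)$, as required by the statement.

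The key point is then to show $\sigma'_\Agt$ is a Nash equilibrium. Suppose some player $A$ deviates to $\tau'_A$ in $\calG'$, producing an outcome $\rho' = \Out_{\calG'}(s_0', \replaceter{\sigma'}A{\tau'_A})$. I need to exhibit a $\calG$-deviation $\tau_A$ of the same player whose outcome $\rho$ satisfies $\rho \simulrel \rho'$; then preference-preservation gives $\rho \prefrel_A \pi' \iff \rho' \prefrel'_A \Out_{\calG'}(s_0',\sigma'_\Agt)$, where $\pi' := \Out_\calG(s_0,\sigma_\Agt)$ is the $\calG$-outcome corresponding (via $\simulrel$) to the honest $\calG'$-outcome. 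Wait — more precisely: by the equilibrium property of $\sigma_\Agt$ we have $\rho \prefrel_A \pi$, and since $\pi \simulrel \Out_{\calG'}(s_0',\sigma'_\Agt)$ and $\rho \simulrel \rho'$, preference-preservation converts this into $\rho' \prefrel'_A \Out_{\calG'}(s_0',\sigma'_\Agt)$, which is exactly what is needed. To construct the matching $\tau_A$ and $\rho$, I process $\rho'$ prefix by prefix: at each step, \Eve/player~$A_k$'s decisions in $\calG'$ along $\rho'$ decompose into "the move the profile would have played" versus "the action $A$ actually substituted". Condition~\ref{cond:sim21} is the crucial ingredient: given the current corresponding state $t \simulrel t'$ and the $\calG'$-successor $t''$ actually reached, it provides a $\calG$-successor $t$-candidate $u \simulrel t''$ with $\Susp((t',t''),m'_\Agt)\subseteq\Susp((t,u),m_\Agt)$; this inclusion guarantees that the single deviating player $A$ is still a suspect for the $\calG$-transition $(t,u)$, so there is a legal action for $A$ in $\calG$ realizing $(t,u)$ while the other players follow $\sigma_\Agt$. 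Collecting these actions along $\rho'$ defines $\tau_A$, and the induction yields $\rho := \Out_\calG(s_0, \replaceter{\sigma}A{\tau_A}) \simulrel \rho'$.

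The main obstacle I expect is the bookkeeping around \emph{which} $\calG$-history corresponds to a given $\calG'$-history — the relation $\simulrel$ is not functional, so the construction must fix a choice at every step and be careful that, when player $A$ deviates in $\calG'$, the corresponding $\calG$-play it reconstructs is consistent with the choices baked into $\sigma'_\Agt$ for the non-deviating players. The clean way to do this is to define $\sigma'_\Agt$ and the correspondence $h \mapsto h'$ \emph{simultaneously} by induction, threading the chosen witnesses from conditions~\ref{cond:sim22} and~\ref{cond:sim21} through the recursion, so that for the honest play the witnesses from~\ref{cond:sim22} are used, and when a deviation is detected (an unexpected $\calG'$-transition), the witnesses from~\ref{cond:sim21} take over and certify that $A$ remains a suspect. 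A second, minor subtlety is that condition~\ref{cond:sim21} is stated for arbitrary $t' \in \Stat'$ including ones \Adam-style deviations might not normally reach; but since we only ever apply it to successors $t''$ that are actually produced by some $\calG'$-move, this causes no problem. Once the correspondence and the suspect-inclusion are in place, the verification that $\sigma'_\Agt$ is a Nash equilibrium is a direct application of preference-preservation as sketched above.
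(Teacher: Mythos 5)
Your proposal is correct and follows essentially the same route as the paper's proof: define $\sigma'_\Agt$ by induction on histories while maintaining, for every $\calG'$-history, a corresponding $\calG$-history related by $\simulrel$ with inclusion of suspect sets, then pull any unilateral deviation in $\calG'$ back to a unilateral deviation in $\calG$ via the suspect inclusion and conclude with preference-preservation. The only blemish is the reversed relation in one place (``$\Out_{\calG'}(s_0',\sigma'_\Agt) \simulrel \Out_\calG(s_0,\sigma_\Agt)$'' should read $\Out_\calG(s_0,\sigma_\Agt) \simulrel \Out_{\calG'}(s_0',\sigma'_\Agt)$, as you write correctly later), which is a notational slip rather than a gap.
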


\begin{proof}
  We fix a strategy profile~$\sigma_\Agt$ in $\calG$ and $\rho$ 
  the outcome of $\sigma_\Agt$ from $s_0$. We~derive a strategy
  profile~$\sigma'_\Agt$ in~$\calG'$ and its outcome~$\rho'$
  from $s'_0$, such that:
  \begin{enumerate}[label=(\alph*)]
  \item\label{eq-sima} for every $\overline\rho' \in
    \Play_{\calG'}(s'_0)$, there exists $\overline\rho \in
    \Play_{\calG}(s_0)$ s.t. $\overline\rho \simulrel \overline\rho'$
    and $\Susp(\overline\rho',\sigma'_\Agt) \subseteq
    \Susp(\overline\rho,\sigma_\Agt)$;
  \item\label{eq-simb} $\rho \simulrel \rho'$.
  \end{enumerate}

  Assume we have done the construction, and that $\sigma_\Agt$ is a
  Nash equilibrium in~$\calG$. We~prove that $\sigma'_\Agt$ is a Nash
  equilibrium in~$\calG'$.  Towards a contradiction, assume that some
  player~$A$ has a strategy $\overline\sigma'_A$ in $\calG'$ such that
  $\overline\rho' \not\prefrel_A \rho'$, where $\overline\rho' =
  \Out_{\calG'}(s', \replaceter{\sigma'}{A}{\overline\sigma'_A})$.
  Note that $A \in \Susp(\overline\rho',\sigma'_\Agt)$.
  Applying~\eqref{eq-sima} above, there exists $\overline\rho \in
  \Play_{\calG}(s_0)$ such that $\overline\rho \simulrel
  \overline\rho'$ and $\Susp(\overline\rho',\sigma'_\Agt) \subseteq
  \Susp(\overline\rho,\sigma_\Agt)$. In~particular, $A \in
  \Susp(\overline\rho,\sigma_\Agt)$, and there exists a strategy
  $\overline\sigma_A$ for~$A$ such that $\overline\rho =
  \Out_{\calG}(s_0,
  \replaceter{\sigma}{A}{\overline\sigma})$. As~${\rho \simulrel
    \rho'}$ (by~\eqref{eq-simb}) and $\simulrel$~is
  preference-preserving from~$(s_0,s'_0)$, $\overline\rho
  \not\prefrel_A \rho$, which contradicts the fact that
  $\sigma_\Agt$~is a Nash equilibrium. Hence, $\sigma'_\Agt$~is a Nash
  equilibrium in~$\calG'$ from~$s'_0$.

  \medskip It remains to show how we construct $\sigma'_\Agt$
  (and~$\rho'$). We~first build~$\rho'$ inductively, and define
  $\sigma'_\Agt$ along that~path.
  \begin{itemize}
  \item Initially, we let~$\rho'_{=0}=s'_0$. Since~$\simulrel$ is a
    game simulation containing~$(s_0,s'_0)$, we~have $s_0\simulrel
    s'_0$, and there is a move~$m'_\Agt$ associated with
    $\sigma_\Agt(s_0)$ satisfying the conditions of
    Definition~\ref{def-gsim}. Then $\rho_{=0}\simulrel \rho'_{=0}$,
    and $\Susp(\rho'_{=0},\sigma'_\Agt(\rho'_{=0}))\subseteq
    \Susp(\rho_{=0},\sigma_\Agt(\rho_{=0}))$.
  \item Assume we have built $\rho'_{\le i}$ and $\sigma'_\Agt$ on all
    the prefixes of~$\rho'_{\le i}$, and that they are such that
    $\rho_{\leq i}\simulrel \rho'_{\leq i}$ and $\Susp(\rho'_{\leq
      i},\sigma'_\Agt)\subseteq \Susp(\rho_{\leq i},\sigma_\Agt)$
    (notice that $\Susp(\rho'_{\leq i},\sigma'_\Agt)$ only depends on
    the value of~$\sigma'_\Agt$ on all the prefixes of~$\rho_{\leq
      i}$). In~particular, we~have $\rho_{=i}\simulrel \rho'_{=i}$, so
    that with the move~$\sigma_\Agt(\rho_{\leq i})$, we~can associate
    a move~$m'_{\Agt}$ (to~which we set~$\sigma'_{\Agt}(\rho'_{\leq
      i})$) satisfying both conditions of Definition~\ref{def-gsim}.
    This defines~$\rho'_{=i+1}$ in such a way that $\rho_{\leq i+1}
    \simulrel \rho'_{\leq i+1}$; moreover, $\Susp(\rho'_{\leq
      i+1},\sigma'_\Agt) = \Susp(\rho'_{\leq i},\sigma'_\Agt) \cap
    \Susp((\rho'_{=i}, \rho'_{=i+1}),m'_\Agt)$ is indeed a subset of
    $\Susp(\rho_{\leq i+1},\sigma_\Agt)$.
  \end{itemize}
  It~remains to define~$\sigma'_\Agt$ outside its
  outcome~$\rho'$. Notice that, for our purposes, it~suffices to
  define~$\sigma'_\Agt$ on histories starting from~$s'_0$. We~again
  proceed by induction on the length of the histories,
  defining~$\sigma'_\Agt$ in order to satisfy~\eqref{eq-sima} on
  prefixes of plays of~$\calG'$ from~$s'_0$.  At~each step, we also
  make sure that for every $h' \in \Hist_{\calG'}(s'_0)$, there exists
  $h \in \Hist_{\calG}(s)$ such that $h \simulrel h'$,
  $\Susp(h',\sigma'_\Agt) \subseteq \Susp(h,\sigma_\Agt)$, and
  $\sigma_\Agt(h)$ and $\sigma'_\Agt(h')$ satisfy the conditions of
  Definition~\ref{def-gsim} in the last states of~$h$ and~$h'$, resp.

  As we only consider histories from~$s'_0$, the case of histories of
  length~zero was already handled.  Assume we have defined
  $\sigma'_\Agt$ for histories $h'$ of length~$i$, and fix a new
  history $h'\cdot t' \in \Hist_{\calG'}(s'_0)$ of length~$i+1$ (that
  is not a prefix of~$\rho$).  By~induction hypothesis, there is $h
  \in \Hist_{\calG}(s_0)$ such that $h \simulrel h'$, and
  $\Susp(h',\sigma'_\Agt) \subseteq \Susp(h,\sigma_\Agt)$, and
  $\sigma_\Agt(h)$ and $\sigma_\Agt(h')$ satisfy the required
  properties.  In~particular, with~$t'$, we~can associate~$t$
  s.t. $t\simulrel t'$ and $\Susp((\last(h'),t'),\sigma'_\Agt(h'))
  \subseteq \Susp((\last(h),t),\sigma_\Agt(h))$. Then $(h \cdot t)
  \simulrel (h' \cdot t')$. Since~$t\simulrel t'$, there is a
  move~$m'_\Agt$ associated with $\sigma_\Agt(h\cdot t)$ and
  satisfying the conditions of
  Definition~\ref{def-gsim}. Letting~$\sigma'_\Agt(h'\cdot
  t')=m'_\Agt$, we~fulfill all the requirements of our induction
  hypothesis.
 
  We now need to lift the property from histories to infinite paths.
  Consider a play $\overline\rho'\in\Play_{\calG'}(s'_0)$, we will
  construct a corresponding play $\overline\rho$ in $\calG$.  Set
  $\overline\rho_{0} = s_0$. If $\overline\rho$ has been defined up to
  index $i$ and $\overline\rho_{i} \simulrel \overline\rho'_{i}$ (this
  is true for $i = 0$), thanks to the way $\sigma'_\Agt$ is
  constructed, $\sigma_\Agt(\overline\rho_{\le i})$ and
  $\sigma'_\Agt(\overline\rho'_{\le i})$ satisfy the conditions of
  Definition~\ref{def-gsim} in $\overline\rho_{\le i}$ and
  $\overline\rho'_{i}$, respectively.  We then pick
  $\overline\rho_{i+1}$ such that $\overline\rho_{i+1} \simulrel
  \overline\rho'_{i+1}$ and
  $\Susp((\overline\rho_{i},\overline\rho_{i+1}),
  \sigma_\Agt(\overline\rho_{i})) \subseteq
  \Susp((\overline\rho'_{i},\overline\rho'_{i+1}),
  \sigma'_\Agt(\overline\rho'_{i}))$.  This being true at each step,
  the path $\overline\rho$ that is obtained, is such that
  $\overline\rho \simulrel \overline\rho'$ and
  $\Susp(\overline\rho',\sigma'_\Agt) \subseteq
  \Susp(\overline\rho,\sigma_\Agt)$.  This is the desired property.
\end{proof}

\subsubsection{Product of a game with deterministic Rabin automata}
\label{sec:productB}
After this digression on game simulation, we come back to the game
$\calG = \tuple{\Stat, \Agt, \Act, \Allow, \Tab,
  (\mathord\prefrel_A)_{A\in\Agt}}$, where we assume that some player~$A$ has
her objective given by a deterministic Rabin automaton
$\calA=\tuple{Q,\Stat,\delta,q_0,(Q_{i},R_{i})_{i \in \lsem
    1,n\rsem}}$ (recall that this automaton reads sequences of states
of~$\calG$, and accepts the paths that are winning for player~$A$).
We show how to compute Nash equilibria in~$\calG$ by building a
product~$\calG'$ of~$\calG$ with the automaton~$\calA$ and by
computing the Nash equilibria in the resulting game, with a Rabin
winning condition for~$A$.

We define the product of the game~$\calG$ with the automaton~$\calA$
as the game $\calG \ltimes \calA = \tuple{\Stat',
  \Agt, \Act, \Allow', \Tab',(\mathord\prefrel'_A)_{A\in\Agt}}$,
where: 
\begin{itemize}
\item $\Stat' = \Stat \times Q$;
\item $\Allow'((s,q),\pl j) = \Allow(s,\pl j)$
  for every $\pl j \in \Agt$; 
\item $\Tab'((s,q), m_\Agt) = (s',q')$ where
  $\Tab(s,m_\Agt)= s'$ and $\delta (q,s) = q'$; 
\item If $B=A$ then $\prefrel'_B$ is given by the internal Rabin
  condition $Q_i' = \Stat \times Q_i$ and $R_i' = \Stat \times R'_i$.
  Otherwise $\prefrel'_B$ is derived from~$\prefrel_B$, defined by
  $\rho \prefrel'_B \overline\rho$ if, and only if, $\Rproj(\rho)
  \prefrel_B \Rproj(\overline\rho)$ (where $\Rproj$ is the projection of
  $\Stat'$ on~$\Stat$).  Notice that if $\prefrel_B$ is an internal
  Rabin condition, then so is $\prefrel'_B$.
\end{itemize}

\begin{lemma}
  \label{lemma:product1}
  $\calG \ltimes \calA$ game-simulates~$\calG$, with game simulation
  defined according to the projection: $s \simulrel (s',q)$ iff
  $s=s'$. This game simulation is preference-preserving.

  Conversely, $\calG$ game-simulates~$\calG \ltimes \calA$, with game
  simulation defined by $(s,q) \simulrel' s'$ iff $s=s'$, which is
  also preference-preserving.
\end{lemma}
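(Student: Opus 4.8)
The plan is to verify the two clauses of Definition~\ref{def-gsim} by hand in each direction, exploiting the fact that $\calG$ and $\calG\ltimes\calA$ have the same action alphabet and the same availability map ($\Allow'((s,q),\pl j)=\Allow(s,\pl j)$), and then to read off preference-preservation from the observation that the projection $\Rproj\colon\Stat\times Q\to\Stat$ identifies plays of the product with plays of $\calG$. Because the action alphabets coincide, in both directions the move-correspondence can be taken to be the identity: to a move $m_\Agt$ in one game we associate the very same tuple $m_\Agt$ in the other. Clause~\ref{cond:sim22} is then immediate from $\Tab'((s,q),m_\Agt)=(\Tab(s,m_\Agt),\delta(q,s))$, since the first component of the result is $\Tab(s,m_\Agt)$ and both relations only constrain the $\Stat$-component. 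The substance of clause~\ref{cond:sim21} is the elementary computation that $\Susp(((s,q),(t,q')),m_\Agt)$ equals $\Susp((s,t),m_\Agt)$ when $q'=\delta(q,s)$, and equals $\varnothing$ otherwise, which holds because the $Q$-component of a transition of $\calG\ltimes\calA$ is forced by the source state and involves no choice of action. For ``$\calG\ltimes\calA$ game-simulates $\calG$'', given a successor $t'=(t,q')$ in the product we pick the witness $t$ (its $\Stat$-component) in $\calG$; by the above, $\Susp(((s,q),t'),m_\Agt)\subseteq\Susp((s,t),m_\Agt)$, as required. For ``$\calG$ game-simulates $\calG\ltimes\calA$'', given a successor $t$ in $\calG$ we pick the witness $(t,\delta(q,s))$ in the product, and the same identity yields $\Susp((s,t),m_\Agt)=\Susp(((s,q),(t,\delta(q,s))),m_\Agt)$, hence the inclusion. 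In both cases the distinguished pair $s_0$, $(s_0,q_0)$ lies in the relation, so each is a game simulation.

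For preference-preservation, taken from the pair of initial states $s_0$ and $(s_0,q_0)$, I would first note that $\Rproj$ restricts to a bijection between $\Play_{\calG\ltimes\calA}((s_0,q_0))$ and $\Play_\calG(s_0)$, and that $\rho\simulrel\rho'$ (resp.\ $\rho'\simulrel'\rho$) holds exactly when $\rho=\Rproj(\rho')$. Hence it suffices to show, for each player $B$ and all $\rho'_1,\rho'_2\in\Play_{\calG\ltimes\calA}((s_0,q_0))$ with projections $\rho_1=\Rproj(\rho'_1)$ and $\rho_2=\Rproj(\rho'_2)$, that $\rho_1\prefrel_B\rho_2\iff\rho'_1\prefrel'_B\rho'_2$. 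For $B\neq A$ this is literally the definition of $\prefrel'_B$; the same remark shows that $\prefrel'_B$ is again an internal Rabin condition whenever $\prefrel_B$ is one, lifting each pair $(\widehat Q_i,\widehat R_i)$ to $(\widehat Q_i\times Q,\widehat R_i\times Q)$, which is what makes iterating the product over several Rabin-automaton players legitimate. For $B=A$ the key point is that the $Q$-component of a play $\rho'$ issued from $(s_0,q_0)$ is precisely the run of $\calA$ on the word $\Rproj(\rho')$; since $Q'_i=\Stat\times Q_i$ and $R'_i=\Stat\times R_i$, the play $\rho'$ satisfies the internal Rabin condition $\Omega'_A$ of $\calG\ltimes\calA$ iff that run meets some $Q_i$ but not $R_i$ infinitely often, i.e.\ iff $\calA$ accepts $\Rproj(\rho')$, i.e.\ iff $\Rproj(\rho')\in\Lang(\calA)=\Omega_A$. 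Therefore $\rho'_1\prefrel'_A\rho'_2$ iff $\rho'_1\notin\Omega'_A$ or $\rho'_2\in\Omega'_A$, iff $\rho_1\notin\Omega_A$ or $\rho_2\in\Omega_A$, iff $\rho_1\prefrel_A\rho_2$.

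There is no genuine obstacle here: everything rests on the prefix-free nature of the suspect-set computation in the product and on the $Q$-component tracking the automaton run. The only points requiring care are clerical: keeping straight which game plays the role of the ``simulated'' one in Definition~\ref{def-gsim} so that the suspect-set inclusion points the right way in each clause, and anchoring preference-preservation at the initial automaton state $q_0$, since it is the identification of the product's internal Rabin condition with $\Lang(\calA)$ — valid for plays starting from $(s_0,q_0)$ — on which the whole statement for player $A$ depends.
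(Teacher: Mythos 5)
Your proposal is correct and follows essentially the same route as the paper's proof: the identity move-correspondence with the key identity $\Susp(((s,q),(t,q')),m_\Agt)=\Susp((s,t),m_\Agt)$ when $q'=\delta(q,s)$ (and $\varnothing$ otherwise) for both simulation directions, and preference-preservation via the bijection induced by $\Rproj$ on plays anchored at $(s_0,q_0)$, with the $Q$-component tracking the run of $\calA$.
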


\begin{proof}
  We~begin with proving that both relations are
  preference-preserving. First notice that if $((s_n,q_n))_{n\ge 0}$ is
  a play in ${\calG \ltimes \calA}$, then its $\Rproj$-projection
  $(s_n)_{n\ge 0}$ is a play in~$\calG$. Conversely, if $\rho=(s_n)_{n
    \ge 0}$ is a play in~$\calG$, then there is a unique
  path~$(q_n)_{n \ge 0}$ from initial state~$q_0$ in~$\calA$ which
  reads~it, and $((s_n,q_n))_{n\ge 0}$ is then a path in $\calG
  \ltimes \calA$ that we write $\Rproj^{-1}(\rho) = ((s_n,q_n))_{n\ge
    0}$. That way, $\Rproj$~defines a one-to-one correspondence between
  plays in $\calG$ and plays in~$\calG \ltimes \calA$ where the second
  component starts in~$q_0$. For a player~$B \ne A$, the objective is
  defined so that $\Rproj(\rho)$ has the same payoff as $\rho$. Consider
  now player $A$, she is winning in~$\calG$ for $\rho = (s_n)_{n\ge
    0}$ iff $(s_n)_{n \ge 0} \in \Lang(\calA)$ iff the unique path
  $(q_n)_{n \ge 0}$ from initial state~$q_0$ that reads $(s_n)_{n\ge
    0}$ satisfies the Rabin condition $(Q_i,R_i)_{i\in\lsem 1,n\rsem}$
  in $\calA$ iff $\Rproj^{-1}(\rho)$ satisfies the internal Rabin
  condition $(Q'_i,R'_i)_{i\in\lsem 1,n\rsem}$ in $\calG \ltimes
  \calA$. This proves that~$\simulrel$ is winning-preserving.

  \medskip It remains to show that both relations are game
  simulations.  Assume $s \simulrel (s,q)$ and pick a move $m_\Agt$ in
  $\calG$. It~is also a move in~$\calG \ltimes \calA$, and
  $\Tab'((s,q),m_\Agt) = (\Tab(s,m_\Agt),\delta(q,s))$. By definition
  of $\simulrel$ it then holds that $\Tab(s,m_\Agt) \simulrel
  \Tab'((s,q),m_\Agt)$, which proves condition~\eqref{cond:sim22} of
  the definition of a game simulation. It remains to show
  condition~\eqref{cond:sim21}. Pick a state $(s',q') \in \Stat'$. We
  distinguish two cases
  \begin{itemize}
  \item If $\delta(q,s) \ne q'$ then
    $\Susp(((s,q),(s',q')),m_\Agt)=\varnothing$, and
    condition~\eqref{cond:sim21} trivially holds.
  \item Otherwise $\delta(q,s)=q'$. In that case, 
     for any move $m'_\Agt$, we have
    that $\Tab(s,m'_\Agt)=s'$ if, and only~if,
    $\Tab'((s,q),m'_\Agt)=(s',q')$. It~follows that
    $\Susp(((s,q),(s',q')),m_\Agt) = \Susp((s,s'),m_\Agt)$, which
    implies condition~\eqref{cond:sim21}.
  \end{itemize}
  This proves that $\calG \ltimes \calA$ game-simulates $\calG$.

  We now assume $(s,q) \simulrel' s$ and pick a move $m_\Agt$ in
  $\calG \ltimes \calA$. It is also a move in~$\calG$, and as
  previously, condition~\eqref{cond:sim22} obviously holds. Pick now
  $s' \in \Stat$. We define $q' = \delta(q,s)$, and we have $(s',q')
  \simulrel s'$ by definition of $\simulrel'$.  As before, we get
  condition~\eqref{cond:sim21} because $\Susp(((s,q),(s',q')),m_\Agt)
  = \Susp((s,s'),m_\Agt)$.
\end{proof}

We will solve the case where each player's objective is given by a
deterministic Rabin automaton by applying the above result
inductively. We will obtain a game where each player has an internal
Rabin winning condition.  Applying Proposition~\ref{prop:sim} each
time, we~get the following result:

\begin{proposition}
  Let $\calG=\tuple{\Stat,\Agt,\Act, \Allow,\Tab,
    (\mathord\prefrel_A)_{A\in\Agt}}$ be a finite concurrent game, where for
  each player $A$, the preference relation $\prefrel_A$ is
  single-objective given by a deterministic Rabin
  automaton~$\calA$. Write $\Agt = \{A_1,\dots,A_n\}$.  There is a
  Nash equilibrium $\sigma_\Agt$ in $\calG$ from some state~$s$ with
  outcome~$\rho$ iff there is a Nash equilibrium $\sigma'_\Agt$
  in~$\calG' = (((\calG \ltimes \calA_1) \ltimes \calA_2) \dots \times
  \calA_n)$ from $(s,q_{01},\dots,q_{0n})$ with outcome~$\rho'$, where
  $q_{0i}$~is the initial state of~$\calA_i$ and $\rho$ is the
  projection of $\rho'$ on $\calG$.
\end{proposition}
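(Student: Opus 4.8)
The plan is to prove the proposition by induction on $n$, peeling off one automaton at a time and invoking Lemma~\ref{lemma:product1} together with Proposition~\ref{prop:sim} at each step. For $0\le k\le n$, set $\calG^0=\calG$ and $\calG^{k+1}=\calG^k\ltimes\calA_{k+1}$, so that $\calG^n$ is exactly the game $\calG'$ of the statement and the initial state of $\calG^k$ is $(s,q_{01},\dots,q_{0k})$. I would show that for every $k$ there is a Nash equilibrium in $\calG^k$ from $(s,q_{01},\dots,q_{0k})$ with outcome $\rho^{(k)}$ if, and only if, there is one in $\calG^{k+1}$ from $(s,q_{01},\dots,q_{0(k+1)})$ whose $\calG^k$-projection is $\rho^{(k)}$. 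Chaining these $n$ equivalences and composing the successive projections then yields the proposition, the $\calG$-projection of the outcome of $\calG'$ being precisely the composite of the $n$ projections.

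Before the induction goes through, I would first check that the product $\calG^k\ltimes\calA_{k+1}$ is legitimate, i.e.\ that player~$A_{k+1}$'s preference in $\calG^k$ is still single-objective and given by a deterministic Rabin automaton \emph{reading states of $\calG^k$}. By the (iterated) definition of the product, the preference of $A_{k+1}$ in $\calG^k$ satisfies $\rho\prefrel\overline\rho$ iff $\Rproj(\rho)\prefrel_{A_{k+1}}\Rproj(\overline\rho)$ in $\calG$, where $\Rproj$ projects onto the $\Stat$-component; since the original objective of $A_{k+1}$ is $\Lang(\calA_{k+1})$, this objective is recognised by the deterministic Rabin automaton $\calA'_{k+1}$ obtained from $\calA_{k+1}$ by letting each transition read only the $\Stat$-component of a state of $\calG^k$ and ignore the $Q_1,\dots,Q_k$ components. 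Hence $\calG^{k+1}=\calG^k\ltimes\calA'_{k+1}$ is well defined, and — after identifying a state $((\cdots),q_{k+1})$ with the tuple $(s,q_1,\dots,q_{k+1})$ — it is literally the $(k{+}1)$-fold iterated product in the statement.

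With this in place the inductive step is a direct assembly. Lemma~\ref{lemma:product1} applied to $\calG^k$ and $\calA'_{k+1}$ supplies two preference-preserving game simulations: the relation $\simulrel_k$ defined by $s\simulrel_k(s,q)$, which witnesses that $\calG^{k+1}$ game-simulates $\calG^k$ (with $(s,q_{01},\dots,q_{0k})$ related to $(s,q_{01},\dots,q_{0(k+1)})$), and its converse $\simulrel'_k$, which witnesses that $\calG^k$ game-simulates $\calG^{k+1}$. Feeding $\simulrel_k$ into Proposition~\ref{prop:sim} turns a Nash equilibrium of $\calG^k$ from the initial state into one of $\calG^{k+1}$ whose outcome is $\simulrel_k$-related to it; unfolding the definition of $\simulrel_k$ on paths, this says exactly that the $\calG^k$-projection of the new outcome is the old one. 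Feeding $\simulrel'_k$ into the same proposition gives the converse implication. The base case $k=0$ is trivial, and it only remains to note that the composite of the $n$ state-projections is the projection of $\calG'$ onto $\calG$ and that the initial states line up as claimed.

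I expect the only genuine obstacle to be the point isolated in the second paragraph: verifying that the ``residual'' objectives of players $A_{k+1},\dots,A_n$ remain deterministic-Rabin-automaton objectives over the growing state space, so that the product can be iterated and Lemma~\ref{lemma:product1} applies verbatim at every stage. Everything else is bookkeeping, plus the two quoted results used as black boxes.
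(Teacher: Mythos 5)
Your proposal is correct and follows essentially the same route as the paper, which obtains the proposition precisely by applying Lemma~\ref{lemma:product1} and Proposition~\ref{prop:sim} inductively, one automaton at a time. Your second paragraph merely makes explicit a bookkeeping detail the paper leaves implicit (lifting each $\calA_{k+1}$ to read states of the intermediate product by ignoring the extra automaton components, so that its pulled-back preference is again a deterministic-Rabin-automaton objective), which is a harmless and valid refinement rather than a different argument.
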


\subsubsection{Algorithm}
Assume that the objective of player $A_i$ is given by a deterministic
Rabin automaton~$\calA_i$. The algorithm for solving the constrained
NE existence problem starts by computing the product of the game with the
automata: $\calG' = (((\calG \ltimes \calA_1) \ltimes \calA_2) \dots
\times \calA_n)$.  The resulting game has size $|\calG| \times
\prod_{j\in \lsem 1,n\rsem} | \calA_j|$, which is exponential in the
number of players. For each player~$A_j$ ($1 \le j \le n$), the number
of Rabin pairs in the product game is that of the original
specification $\calA_j$, say $k_j$. We then apply the deterministic
algorithm that we have designed for Rabin objectives (see
Subsection~\ref{rabin:det-algo} page~\pageref{rabin:det-algo}), which
yields an exponential-time algorithm in our framework.

\subsubsection{Hardness}

We prove \PSPACE-hardness in the restricted case of deterministic
B\"uchi automata, by a reduction from (the~complement~of) the problem
of the emptiness of the intersection of several language given by
deterministic finite automata. This problem is known to be
\PSPACE-complete~\cite[Lemma~3.2.3]{kozen1977lower}.

We fix finite automata~$\calA_1, \dots , \calA_n$ over alphabet~$\Sigma$.
Let~$\Sigma'=\Sigma\cup\{\init,\final\}$, where $\init$ and~$\final$ are
two special symbols not in~$\Sigma$. For
every $j\in\lsem 1,n\rsem$, we construct a B\"uchi automaton~$\calA'_j$ from
$\calA_j$ as follows. We~add a state~$F$ with a self-loop
labelled by~$\final$ and an initial state~$I$ with a transition labelled
by~$\init$ to the original initial state. We~add transitions labelled
by~$\final$ from every terminal state to~$F$. We~set the B\"uchi condition
to~$\{F\}$. If~$\calL_j$ is the language recognised by~$\calA_j$, then the
language recognised by the B\"uchi automaton~$\calA'_j$ is $\calL'_j= \init
\cdot \calL_j \cdot \final^\omega$. The~intersection of the languages
recognised by the automata~$\calA_j$ is empty if, and only~if, the
intersection of the languages recognised by the automata~$\calA'_j$ is empty.

We construct the game~$\calG$, with $\Stat = \Sigma'$. For each $j\in
\lsem 1 ,n \rsem$, there is a player~$A_j$ whose objective is given by
$\calA'_j$ and one special player~$A_0$ whose objective is
$\Stat^\omega$ (she~is always winning).  Player~$A_0$ controls all the
states and there are transitions from any state to the states of
$\Sigma \cup \{\final\}$.  Formally $\Act= \Sigma\cup\{\final\}\cup \bot$,
for all state~$s\in \Stat$, $\Mov(s, A_0) = \Act$, and if $j\ne 0$ then
$\Mov(s,A_j) = \{\bot\}$ and for all $\alpha\in \Sigma \cup \{\final\}$,
$\Tab(s,(\alpha,\bot,\dots,\bot)) = \alpha$.

\begin{lemma}
  There is a Nash equilibrium in game~$\calG$ from~$\init$ where every
  player wins if, and only~if, the intersection of the languages
  recognised by the automata~$\calA'_j$ is not empty.
\end{lemma}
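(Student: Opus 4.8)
The plan is to exploit the fact that in $\calG$ only player $A_0$ has any real choice, so that everything reduces to whether $A_0$ can produce a single play lying in all the Büchi languages $\Lang(\calA'_j)$.

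First I would record two structural observations. (i) For every $j\in\lsem 1,n\rsem$ and every state $s$ we have $\Mov(s,A_j)=\{\bot\}$; hence $A_j$ has a unique strategy and can never deviate. (ii) The objective of $A_0$ is $\Stat^\omega$, so in the single‑objective preorder $\prefrel_{A_0}$ every play is winning and all plays are equivalent; thus $A_0$ can never strictly improve her outcome by deviating. Combining (i) and (ii), \emph{every} strategy profile of $\calG$ is a Nash equilibrium. Consequently, a Nash equilibrium from $\init$ in which every player wins exists if, and only if, there is a strategy profile whose outcome $\rho$ from $\init$ satisfies $\rho\in\Lang(\calA'_j)$ for all $j$ (the requirement on $A_0$ being automatically met).

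Next I would pin down the outcomes of profiles from $\init$. Since every transition leads into $\Sigma\cup\{\final\}=\Stat\setminus\{\init\}$, any play from $\init$ has the shape $\init\cdot s_1\cdot s_2\cdots$ with $s_i\in\Sigma\cup\{\final\}$. Conversely, for any $w\in(\Sigma\cup\{\final\})^\omega$, the profile in which $A_0$ follows, along the prefix $\init\cdot w_{\le k}$, the move $(w_{k+1},\bot,\dots,\bot)$ — and the other players play their only action $\bot$ — has outcome exactly $\init\cdot w$, because $\Tab(s,(\alpha,\bot,\dots,\bot))=\alpha$. Hence the set of outcomes of strategy profiles from $\init$ is precisely $\{\init\}\cdot(\Sigma\cup\{\final\})^\omega$, which contains $\Lang(\calA'_j)$ for every $j$ since each $\calL'_j\subseteq\init\cdot\Sigma^*\cdot\final^\omega$. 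Putting this together with the previous paragraph: a Nash equilibrium with all players winning exists iff $\bigcap_{j=1}^{n}\Lang(\calA'_j)\neq\varnothing$ — any member of that intersection is a realizable play from $\init$, and conversely any all‑winning outcome lies in the intersection — which is the claimed equivalence (and, as already argued in the text, $\bigcap_j\Lang(\calA'_j)\neq\varnothing$ iff $\bigcap_j\calL_j\neq\varnothing$).

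I do not expect a genuine obstacle here; the content is essentially the observation that $A_0$ controls the whole run and no one else can deviate. The only point requiring a little care is, when unwinding $\Lang(\calA'_j)=\init\cdot\calL_j\cdot\final^\omega$, to check that a word lying in all the $\init\cdot\calL_j\cdot\final^\omega$ has a single position beyond which it is constantly $\final$ (so that the split point is common to all $j$ and the common finite factor lies in every $\calL_j$); this holds because $\calL_j\subseteq\Sigma^*$, so $\final$ cannot appear inside the $\calL_j$‑part. Everything else is routine bookkeeping.
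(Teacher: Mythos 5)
Your proof is correct and follows essentially the same route as the paper: an all-winning equilibrium outcome is a word accepted by every $\calA'_j$, and conversely a word in the intersection can be played out letter by letter by $A_0$, the resulting profile being an equilibrium because $A_0$ is always winning and the other players have no alternative strategy. Your explicit observation that \emph{every} strategy profile of $\calG$ is a Nash equilibrium is just a slightly more general phrasing of the paper's remark that $A_0$ controls everything and cannot improve her payoff.
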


\begin{proof}
  If there is such a Nash equilibrium, let~$\rho$ be its outcome.
  The~path~$\rho$ forms a word of~$\Sigma'$, it~is accepted by every
  automata~$\calA'_j$ since every player wins. Hence the intersection of the
  languages~$\calL_j$ is not empty.
  
  Conversely, if a word~$w = \init \cdot w_1 \cdot w_2 \cdots$ is accepted
  by all the automata, player~$A_0$ can play in a way such that everybody is
  winning: if at each step~$j$ she plays~$w_j$, then the outcome is~$w$
  which is accepted by all the automata. It~is a Nash equilibrium
  since $A_0$ controls everything and cannot improve her payoff.
\end{proof}

Since \PSPACE is stable by complementation, this proves that the
constrained NE existence problem is \PSPACE-hard for objectives described
by B\"uchi automata.

In order to prove hardness for the NE existence problem we use results
from Section~\ref{lem:link-constr-exist}.  Winning conditions in
$E(E(\dots(E(\calG,A_n,A_0,\rho_n),\dots,A_2,A_0,\rho_2),A_1,A_0,\rho_1)$,
where $\rho_j$ is a winning play for~$A_i$, can be defined by slightly
modifying automata $\calA'_1,\dots,\calA'_n$ to take into account the
new states.  By Proposition~\ref{lem:link-constr-exist}, there exists
a Nash equilibrium in this game if, and only,~if there is one in~$\calG$
where all the players~win.  Hence \PSPACE-hardness also holds for the
NE existence problem.

\section{Ordered B\"uchi objectives}
\label{sec:buchi}

In this Section we assume that preference relations of the players are
given by ordered B\"uchi objectives (as~defined in
Section~\ref{sec:prefrel}), and we prove the results listed in
Table~\ref{table-buchi} (page~\pageref{table-buchi}).  We~first
consider the general case of preorders given as Boolean circuits, and
then exhibit several simpler cases.

For the rest of this section, we fix a game $\calG=\tuple{\Stat,\Agt,
  \Act,\Allow, \Tab,(\mathord\prefrel_A)_{A\in\Agt}}$, and assume that
$\prefrel_A$ is given by an ordered B\"uchi objective $\omega_A =
\langle (\Omega_i^A)_{1 \le i \le n_A}, (\mathord\preorder_A)_{A \in \Agt}
\rangle$.

\subsection{General case: preorders are given as circuits}
\label{subsec:general}

\begin{theorem}\label{thm:buchi-pspace}
  For finite games with ordered B\"uchi objectives where preorders are
  given as Boolean circuits, the value problem, the NE existence problem and
  the constrained NE existence problem are \PSPACE-complete.
\end{theorem}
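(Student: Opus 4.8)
The plan is to prove the two bounds separately, both through the suspect-game construction of Section~\ref{sec:suspect}, and to exploit the following basic observation: for an ordered B\"uchi objective, the $i$-th coordinate of $\payoff_A(\rho)$ equals~$1$ iff $\Inf(\rho)$ meets the target set $T^A_i$ of $\Omega^A_i$, so $\payoff_A(\rho)$, and hence $\prefrel_A$, depends only on $\Inf(\rho)$. Consequently each $\prefrel_A$ is prefix-independent; the suspect game $\calH(\calG,\pi)$ depends on $\pi$ only through the payoff tuple $p=\payoff(\pi)\in\prod_{A\in\Agt}\{0,1\}^{n_A}$ (write $\calH(\calG,p)$, whose reachable underlying structure $\calJ(\calG)$ has polynomial size by Proposition~\ref{lem:polynomial-size}); and by Remark~\ref{rem:prefix-independant} condition~\ref{cond:win} of Theorem~\ref{thm:eq-win} amounts to requiring that $\rho$ stay inside \Eve's winning region. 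Thus the only missing ingredient for a \PSPACE\ procedure is that \Eve's winning condition in $\calH(\calG,p)$ is expressible as a \emph{polynomial-size circuit objective} over the turn-based arena $\calJ(\calG)$.

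For the upper bound on the (constrained) NE existence problem, I would first build that circuit. Fix a target payoff tuple $p=(p_A)_{A\in\Agt}$. Reusing the pattern of the circuits $D_{A,P}$, $E_A$, $F_L$ of Section~\ref{subsec:circuits}: for each suspect set $P$ occurring in $\calJ(\calG)$ and each $A\in P$, construct a sub-circuit that, reading which \Eve-states of $\calJ(\calG)$ are visited infinitely often, (i)~tests $\limitpi2(\rho)=P$, (ii)~computes $\payoff_A(\Sproj_1(\rho))$ — its $i$-th coordinate being an \OR\ over the states of $T^A_i$ — and (iii)~feeds $\bigl(\payoff_A(\Sproj_1(\rho)),\,p_A\bigr)$ into a fresh copy of the Boolean circuit defining $\preorder_A$; taking the disjunction of the negations of these outputs over all pairs $(A,P)$ yields a circuit $F_p$ with $\rho$ winning for \Eve in $\calH(\calG,p)$ iff $\rho$ evaluates $F_p$ to \true. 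By Proposition~\ref{lem:polynomial-size} there are polynomially many sets $P$, so $F_p$ has polynomial size. The algorithm then runs as for reachability in Section~\ref{subsec:reachability}: guess a lasso-shaped play $\rho=\tau_1\cdot\tau_2^\omega$ of polynomial length (Proposition~\ref{lem:play-length}) from $(s,\Agt)$ in $\calJ(\calG)$ along which \Adam obeys \Eve; let $p=\payoff(\Sproj_1(\rho))$ and check it satisfies the prescribed constraint by evaluating the circuits $\preorder_A$ on the given thresholds; solve the turn-based circuit game $(\calJ(\calG),F_p)$, which by~\cite{Hunter07} can be done in \PSPACE, deciding for each state of $\rho$ whether it is winning for \Eve; accept if they all are. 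Correctness is Theorem~\ref{thm:eq-win} together with Remark~\ref{rem:prefix-independant} and Proposition~\ref{lem:play-length}. The procedure adds nondeterminism on top of polynomial space, hence lies in \PSPACE\ by Savitch's theorem. The value problem is handled more directly: it is a two-player zero-sum concurrent game in which player~$A$'s objective is $\{\rho\mid\payoff_A(\pi)\preorder_A\payoff_A(\rho)\}$, again a polynomial-size circuit objective (hardwire $\payoff_A(\pi)$ into a copy of $\preorder_A$ fed by the payoff sub-circuit), so it lies in \PSPACE\ by the analysis at the beginning of Section~\ref{subsec:circuits}.

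For \PSPACE-hardness of all three problems I would note that ordered B\"uchi objectives with circuit preorders already subsume single objectives given as circuits. Indeed, given a circuit objective $\Omega^{\text{Circuit}}_C$ over $\calG$, take one B\"uchi objective $\Omega_s$ with target $\{s\}$ for each state $s\in\Stat$, so that $\payoff(\rho)=\One_{\Inf(\rho)}$, and take the preorder $v\preorder w$ iff $C(w)=1$ or $C(v)=0$, which is computed by a polynomial-size circuit made of two copies of $C$ (one negated). Under this encoding the induced ordered-B\"uchi preference relation coincides with the single-objective preference relation attached to $\Omega^{\text{Circuit}}_C$; hence the \PSPACE-hardness of the value, NE existence and constrained NE existence problems for single circuit objectives, established in Section~\ref{subsec:circuits}, carries over unchanged.

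The main obstacle is the circuit construction in the upper bound: one must be confident that \Eve's winning condition in the suspect game — ``every player who remains a suspect realises payoff $\preorder_A p_A$'' — is genuinely captured by a circuit of polynomial size, combining the detection of $\limitpi2(\rho)$ with the per-player payoff computation and the preorder circuit. This relies on Proposition~\ref{lem:polynomial-size} to keep the number of relevant suspect sets $P$ polynomial, and on the fact that, although the candidate payoff tuples $p$ are exponentially many, only polynomial space is needed to enumerate (or guess) them and to store the intermediate winning-region queries; the remaining verifications are routine.
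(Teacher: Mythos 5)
Your proposal is correct and takes essentially the same route as the paper: guess the payoff tuple, hardwire it into the preorder circuits so that \Eve's winning condition in the (polynomial-size) suspect game becomes a polynomial-size circuit objective, solve the resulting turn-based circuit game in \PSPACE\ via~\cite{Hunter07} together with Theorem~\ref{thm:eq-win} and the lasso bound of Proposition~\ref{lem:play-length}, and obtain hardness by viewing a single circuit objective as an ordered B\"uchi objective (one target per state) with a circuit preorder; the paper merely packages the first step as a reduction to single circuit objectives $D_A$ and then invokes the algorithm of Section~\ref{subsec:circuits} wholesale, whereas you inline the $F_L$-style construction directly on the suspect game. One trivial slip to fix: as literally phrased, your $F_p$ has the wrong polarity --- it should be the negation of the disjunction, over pairs $(A,P)$ with $A\in P$, of the conjunctions ``$\limitpi2(\rho)=P$ and $\payoff_A(\Sproj_1(\rho))\not\preorder_A p_A$'', exactly as in the paper's $F_L$.
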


\begin{proof}
  We explain the algorithm for the constrained NE existence problem.  We
  assume that for each player $A$, the preorder $\preorder_A$ is given
  by a Boolean circuit $C_A$. The algorithm proceeds by trying all the
  possible payoffs for the players.

  Fix such a payoff $(v^A)_{A\in\Agt}$, with $v^A \in \{0,1\}^{n_A}$
  for every player $A$. We~build a circuit $D_A$ which 
  represents a single objective for player $A$. Inputs to circuit $D_A$
  will be states of the game. This circuit is constructed from $C_A$
  as follows: We set all input gates $w_1 \cdots w_n$ of circuit~$C_A$
  to the value given by payoff $v^A$; The former input $v_i$ receives
  the disjunction of all the states in~$\Omega_i$; We negate the
  output. It is not hard to check that the new circuit $D_A$ is such
  that for every play $\rho$, $D_A[\Inf(\rho)]$ evaluates to \true \iff
  $\payoff_A(\rho) \not\preorder_A v^A$, \ie if $\rho$ is
  an improvement for player~$A$.

  Circuit $D_A$ is now viewed as a single objective for player~$A$, we
  write $\calG'$ for the new game. We look for Nash equilibria in
  this new game, with payoff~$0$ for each player. Indeed, a Nash
  equilibrium $\sigma_\Agt$ in~$\calG$ with payoff $(v^A)_{A \in
    \Agt}$ is a Nash equilibrium in game $\calG'$ with payoff
  $(0,\dots,0)$. Conversely a Nash equilibrium $\sigma_\Agt$ in game
  $\calG'$ with payoff $(0,\dots,0)$ is a Nash equilibrium in
  $\calG$ as soon as the payoff of its outcome (in~$\calG$) is
  $(v^A)_{A \in \Agt}$.

  We use the algorithm described in Section~\ref{subsec:circuits}.
  for computing Nash equilibria with single objectives given as
  Boolean circuits, and we slightly modify it to take into account the
  constraint that it has payoff~$v^A$ for each player~$A$.  This can
  be done in polynomial space, thanks to
  Proposition~\ref{lem:play-length}: it is sufficient to look for
  plays of the form $\pi \cdot \tau^\omega$ with $|\pi| \le |\Stat|^2$
  and $|\tau|\le |\Stat|^2$.

  \PSPACE-hardness was proven for single objectives given as a Boolean
  circuit (the circuit evaluates by setting to \texttt{true} all
  states that are visited infinitely often, and to \texttt{false} all
  other states) in Section~\ref{subsec:circuits}.  This kind of
  objective can therefore be seen as an ordered B\"uchi objective with
  a preorder given as a Boolean circuit. 
\end{proof}

\subsection{When the ordered objective can be (co-)reduced to a single
  B\"uchi objective}
\label{subsec:reducible}
For some ordered objectives, the preference relation can (efficiently)
be reduced to a single objective. For instance, a~disjunction of
several B\"uchi objectives can obviously be reduced to a single
B\"uchi objective, by considering the union of the target sets.
Formally, we~say that an ordered B\"uchi objective $\omega = \langle
(\Omega_i)_{1 \le i \le n},\preorder \rangle$ is \newdef{reducible} to
a single B\"uchi objective if, given any payoff vector~$v$, we~can
construct in polynomial time a target set~$\widehat T(v)$ such that
for all paths~$\rho$, $v \preorder \payoff_\omega(\rho)$ \iff
$\Inf(\rho) \cap \widehat T(v) \neq \emptyset$.  It~means that
\emph{securing} payoff~$v$ corresponds to ensuring infinitely many
visits to the new target set.  Similarly, we~say that $\omega$ is
\newdef{co-reducible} to a single B\"uchi objective if for any
vector~$v$ we can construct in polynomial time a target set~$\widehat
T(v)$ such that $\payoff_\omega(\rho) \not\preorder v$ \iff
$\Inf(\rho) \cap \widehat T(v) \neq \emptyset$.  It~means that
\emph{improving} on payoff~$v$ corresponds to ensuring infinitely many
visits to the new target

We prove the following proposition, which exploits
(co-)reducibility for efficiently solving the various problems.
\begin{proposition}\hfill
  \label{prop:buchi-reducible} 
  \begin{itemize}
  \item For finite games with ordered B\"uchi objectives which are
    reducible to single B\"uchi objectives, and in which the preorders
    are non-trivial\footnote{That is, there is more than one class in
      the preorder.} and monotonic, the value problem is \P-complete.
  \item For finite games with ordered B\"uchi objectives which are
    co-reducible to single B\"uchi objectives, and in which the
    preorders are non-trivial and monotonic the NE existence problem and
    the constrained NE existence problem are \P-complete.
  \end{itemize}
\end{proposition}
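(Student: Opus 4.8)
The plan is to prove each of the two items by a separate upper bound (membership in \P) and the matching \P-hardness, the latter obtained by noticing that a single B\"uchi objective is a very special case.

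For the \emph{value problem} (first item) the upper bound is the easy direction. A value-problem instance consists of a player~$A$ and a threshold~$\pi$, which (as agreed in Section~\ref{sec:prefrel}) is encoded by the set $\Inf(\pi)$ of states visited infinitely often. From this I would compute in polynomial time the payoff vector $v = \payoff_A(\pi)$ and, using reducibility, the target set $\widehat T(v)$. By definition of reducibility, for every outcome~$\rho$ of a strategy of~$A$ one has $\pi\prefrel_A\rho \iff v\preorder_A\payoff_A(\rho) \iff \Inf(\rho)\cap\widehat T(v)\neq\emptyset$, so the value problem becomes a two-player zero-sum B\"uchi game ($A$ against the coalition of the other players); turning it turn-based while preserving the B\"uchi target as in Section~\ref{sec:single}, this is solvable in polynomial time.

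For the \emph{NE existence} and \emph{constrained NE existence} problems (second item) the upper bound relies on the machinery of Section~\ref{subsubsec:algo2}. First I would check that co-reducible ordered B\"uchi objectives with monotonic preorders satisfy the hypotheses $(\star)$: $(\star)_a$ holds since the payoff, hence the value, of a play depends only on $\Inf$; $(\star)_b$ is immediate; and for $(\star)_c$, given a threshold~$w^A$ with payoff $v^A=\payoff_A(w^A)$, co-reducibility gives $\rho\prefrel_A w^A \iff \payoff_A(\rho)\preorder_A v^A \iff \Inf(\rho)\cap\widehat T_A(v^A)=\emptyset$, so one may take $S^A=\Stat\setminus\widehat T_A(v^A)$, computable in polynomial time. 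The remaining ingredient for Lemma~\ref{lem:compute-sol} is that $W(\calG,v(K))$ is computable in polynomial time for every~$K$. Here I would adapt Lemma~\ref{lem:red-co-buchi}: by co-reducibility, $\payoff_A(\Sproj_1(\rho))\not\preorder_A v^A(K) \iff \Inf(\Sproj_1(\rho))\cap\widehat T_A(v^A(K))\neq\emptyset$, so \Eve loses exactly when some $A\in\limitpi2(\rho)$ has $\Inf(\Sproj_1(\rho))$ meeting $\widehat T_A(v^A(K))$; by the argument of Lemma~\ref{lem:red-co-buchi} this is the co-B\"uchi condition on $\calJ(\calG)$ with target $\{(s,P)\mid\exists A\in P.\ s\in\widehat T_A(v^A(K))\}$, and $\calJ(\calG)$ is of polynomial size by Proposition~\ref{lem:polynomial-size}, so $W(\calG,v(K))$ is computable in polynomial time. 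Lemma~\ref{lem:compute-sol} then produces $\Sol$ in polynomial time; checking condition~\ref{cond:5} for its elements is polynomial; and correctness for the constrained NE existence problem follows from Lemmas~\ref{lem:characterization} and~\ref{lem:character2} together with Theorem~\ref{thm:eq-win}. For the unconstrained NE existence problem I would run the same procedure with no payoff bounds, i.e.\ with $S^A=\Stat$ for every~$A$ and condition~\ref{cond:1} dropped.

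For the lower bounds, observe that a single B\"uchi objective $\Omega^{\text{B\"uchi}}_T$ is an ordered B\"uchi objective ($n=1$, order $0\preorder1$) which is both reducible and co-reducible, with a monotonic and non-trivial preorder; since the preorders in the proposition are assumed non-trivial, such an objective can in general be embedded along any pair of comparable payoffs. Hence all three problems inherit \P-hardness from the single-B\"uchi case: for the value problem, this is the circuit-value reduction recalled in Section~\ref{ptime-hard}; for the constrained NE existence problem it follows via Proposition~\ref{lem:link-value-constr} (the B\"uchi game built there is turn-based, hence determined, and its single-objective preference for~$A$ is total, Noetherian and almost-well-founded); and for the NE existence problem it follows via Proposition~\ref{lem:link-value-exist}, noting that the game $\calG_\pi$ it produces still carries single-B\"uchi objectives, so it lies in our class, and that \P\ is closed under complement. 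The only genuinely new step is the reduction of \Eve's winning condition in the suspect game to a single co-B\"uchi objective in the ordered setting; as this condition is prefix-independent, condition~\ref{cond:win} of Theorem~\ref{thm:eq-win} reduces to staying in \Eve's winning region (Remark~\ref{rem:prefix-independant}) and the argument is a direct variant of Lemma~\ref{lem:red-co-buchi}. I expect the main thing to get right to be the interaction of monotonicity with the recursion defining~$\SSG$, rather than any single hard idea.
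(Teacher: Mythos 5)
Your proposal is correct and follows essentially the same route as the paper: the value problem is reduced via reducibility to a zero-sum single-B\"uchi game solvable in polynomial time, the (constrained) NE existence problems are handled by verifying hypotheses $(\star)$ and invoking the algorithm of Section~\ref{subsubsec:algo2}, with the key step (the paper's Lemma~\ref{lem:w-poly}) being that co-reducibility turns \Eve's winning condition in the suspect game into a co-B\"uchi condition on the polynomial-size arena $\calJ(\calG)$, and hardness is inherited from the single-B\"uchi case ($n=1$) exactly as the paper does.
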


\noindent Note that the hardness results follow from the hardness of the same
problems for single B\"uchi objectives (see
Section~\ref{subsec:buchi}).  We now prove the two upper bounds.

\subsubsection{Reducibility to single B\"uchi objectives and the value problem.}
\label{lem:value-buchi-reducible}

We transform the ordered B\"uchi objectives of the considered player
into a single B\"uchi objective, and use a polynomial-time
algorithm~\cite[Chapter~2]{GTW02} to solve the resulting zero-sum
(turn-based) B\"uchi game.

\subsubsection{Co-reducibility to single B\"uchi objectives and the
  (constrained) NE existence problem.}
\label{subsec:co-reducible}

We assume that the ordered objectives $(\omega_A)_{A \in \Agt}$ are
all co-reducible to single B\"uchi objectives.  We show that we
can use the algorithm presented in Section~\ref{subsubsec:algo2} to
solve the constrained NE existence problem in polynomial time.

We first notice that the preference relations $\prefrel_A$ satisfy the
hypotheses $(\star)$ (see page~\pageref{hyp:star}): $(\star)_a$ and
$(\star)_b$ are obvious, and $(\star)_c$ is by co-reducibility of the ordered
objectives. It means that we can apply the results of
Lemmas~\ref{lem:characterization} and~\ref{lem:character2} to the current
framework. To be able to conclude and apply Lemma~\ref{lem:compute-sol}, we
need to show that for every payoff~$v$, we~can compute in polynomial time the
set~$W(\calG,v)$ in the suspect game~$\calH(\calG,v)$.

\begin{lemma}\label{lem:w-poly}
  Fix a threshold $v$. The set $W(\calG,v)$ can be computed in
  polynomial time.
\end{lemma}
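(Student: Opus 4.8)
The plan is to use the co-reducibility hypothesis to turn the winning objective of \Eve in $\calH(\calG,v)$ into a plain co-B\"uchi condition on a turn-based arena of polynomial size, and then to invoke the fact that co-B\"uchi games are solvable in polynomial time (as already used in Section~\ref{subsubsec:algo2}).

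First I would invoke co-reducibility of each $\omega_A$: writing $v=(v^A)_{A\in\Agt}$ with $v^A\in\{0,1\}^{n_A}$, for every player~$A$ I can compute in polynomial time a target set $\widehat T_A(v^A)\subseteq\Stat$ such that, for every play~$\eta$ of~$\calG$, $\payoff_A(\eta)\not\preorder_A v^A$ iff $\Inf(\eta)\cap\widehat T_A(v^A)\neq\varnothing$; equivalently, $\payoff_A(\eta)\preorder_A v^A$ iff $\Inf(\eta)\cap\widehat T_A(v^A)=\varnothing$. Next I would recall that $\calH(\calG,v)$ has arena $\calJ(\calG)$ (Remark~\ref{simplification}), whose reachable part from $\Stat\times\{\Agt\}$ has polynomial size by Proposition~\ref{lem:polynomial-size}; since all the configurations that matter for the later use of $W(\calG,v)$ (states $(k,\Agt)$, \Adam-states, and deviation targets $(t,P)$) lie in that reachable part, it suffices to compute $W(\calG,v)$ there. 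Now a play~$\rho$ of $\calH(\calG,v)$ is winning for~\Eve iff $\payoff_A(\Sproj_1(\rho))\preorder_A v^A$ for all $A\in\limitpi2(\rho)$, that is, by the equivalence above, iff $\Inf(\Sproj_1(\rho))\cap\widehat T_A(v^A)=\varnothing$ for all $A\in\limitpi2(\rho)$. This is exactly the winning condition of \Eve in the suspect game $\calH(\calG'',\Agt)$, where $\calG''$ is obtained from~$\calG$ by replacing the preference relation of each player~$A$ by the single B\"uchi objective with target~$\widehat T_A(v^A)$ --- and $\calG''$ has the same arena $\calJ(\calG)$, since the arena does not depend on the preference relations. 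By Lemma~\ref{lem:red-co-buchi} applied to~$\calG''$ with $L=\Agt$, this winning condition is the single co-B\"uchi objective on $\calJ(\calG)$ with target
\[
\widehat{\mathcal T}=\{(s,P)\mid \exists A\in P.\ s\in\widehat T_A(v^A)\},
\]
which is computable in polynomial time from the sets $\widehat T_A(v^A)$.

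It then remains to observe that computing $W(\calG,v)$ amounts to computing the winning region of \Eve in a turn-based co-B\"uchi game on a polynomial-size arena, which can be done in polynomial time (co-B\"uchi games are parity games with two priorities; see the discussion at the beginning of Section~\ref{sec:single}). This gives the claimed bound.

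There is no genuine obstacle here; the only point requiring care is that the per-player co-reduction must interact correctly with the suspect-game bookkeeping of the limit set $\limitpi2(\rho)$ of suspects, so that the aggregated condition really is a co-B\"uchi condition and its target $\widehat{\mathcal T}$ stays of polynomial size. This is precisely what Lemma~\ref{lem:red-co-buchi} (with $L=\Agt$) and Proposition~\ref{lem:polynomial-size} deliver, respectively.
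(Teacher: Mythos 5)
Your proof is correct and follows essentially the same route as the paper: co-reduce each ordered objective to a single B\"uchi target $\widehat T^A(v)$, observe that \Eve's winning condition in the suspect game then becomes the co-B\"uchi condition with target $\{(s,P)\mid \exists A\in P.\ s\in\widehat T^A(v)\}$ (the paper states this directly, you derive it via Lemma~\ref{lem:red-co-buchi} on an auxiliary game with the same arena), and solve the resulting polynomial-size turn-based co-B\"uchi game in polynomial time.
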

\begin{proof}
  As the ordered objectives are co-reducible to single B\"uchi
  objectives, we can construct in polynomial time target sets
  $\widehat T^A(v)$ for each player $A$. The objective of $\Eve$ in
  the suspect game $\calH(\calG,K)$ is then equivalent to a co-B\"uchi
  objective with target set $\{{(\widehat T^A(v,P)} \mid {A \in P}\}$.
  The winning region $W(\calG,v)$ can then be determined using a
  polynomial time algorithm of~\cite[Sect.~2.5.3]{GTW02}.
  % Subsection~\ref{lem:value-buchi-reducible} for B\"uchi  games.
\end{proof}

\subsubsection{Applications.}
We will give preorders to which the above applies, allowing to infer
several \P-completeness results in Table~\ref{table-buchi} (those
written with reference ``Section~\ref{subsec:reducible}'').

We first show that reducibility and co-reducibility coincide 
when the preorder is total.

\begin{lemma}\label{lemma-total}
  Let $\omega = \langle (\Omega_i)_{1 \le i \le n},\preorder \rangle$
  be an ordered B\"uchi objective, and assume that $\preorder$ is
  total.  Then, $\omega$~is reducible to a single B\"uchi objective
  \iff $\omega$ is co-reducible to a single B\"uchi objective.
\end{lemma}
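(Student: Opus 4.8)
The plan is to prove both implications by manipulating the defining condition and exploiting totality of~$\preorder$. Recall that reducibility says: for every payoff vector~$v$, there is a polynomial-time computable target set~$\widehat T(v)$ with
\[
v \preorder \payoff_\omega(\rho) \iff \Inf(\rho) \cap \widehat T(v) \neq \emptyset,
\]
and co-reducibility says the same with the left-hand side replaced by $\payoff_\omega(\rho) \not\preorder v$.

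First I would observe that, since $\preorder$ is total, for any two payoff vectors $a,b \in \{0,1\}^n$ we have the equivalence $a \not\preorder b \iff b \prec a$, and in particular for a \emph{fixed} $v$ the condition $\payoff_\omega(\rho) \not\preorder v$ is equivalent to $v \prec \payoff_\omega(\rho)$, hence to ``$v \preorder \payoff_\omega(\rho)$ and $\payoff_\omega(\rho) \not\sim v$''. The key point is that the payoff $\payoff_\omega(\rho)$ ranges over the finite set $\{0,1\}^n$, which is partitioned by~$\sim$ into finitely many equivalence classes that are totally ordered by~$\prec$. So there is a well-defined ``successor class'' of $[v]_\preorder$: let $v^+$ be any vector whose class is the immediate successor of $[v]_\preorder$ (if $[v]$ is maximal, take $\widehat T = \emptyset$, since then no $\rho$ can satisfy $\payoff_\omega(\rho) \not\preorder v$). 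Then for every play $\rho$,
\[
\payoff_\omega(\rho) \not\preorder v \iff v^+ \preorder \payoff_\omega(\rho).
\]

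Now I would conclude. For the forward direction: if $\omega$ is reducible, apply reducibility to the vector~$v^+$ to obtain a target set $\widehat T(v^+)$, computable in polynomial time, with $v^+ \preorder \payoff_\omega(\rho) \iff \Inf(\rho)\cap \widehat T(v^+) \neq \emptyset$; combining with the displayed equivalence shows that $\widehat T(v) := \widehat T(v^+)$ witnesses co-reducibility at~$v$. The only thing to check is that $v^+$ can itself be computed in polynomial time from~$v$: this is where one must be slightly careful, but since $n$ is (for the objectives we consider) the number of B\"uchi objectives and the preorder is part of the input encoded as a circuit (or of constant size for the named preorders), one can compute the at-most-$2^n$ classes, or more cheaply just test each candidate vector against~$v$ using the circuit — and the statement is invoked in contexts where this is polynomial; alternatively, note that $v^+$ need only be \emph{some} representative, and for all the concrete monotonic total preorders in the paper (counting, maximise, lexicographic, etc.) a representative of the successor class is trivially computable. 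The converse direction is entirely symmetric: given co-reducibility, use the fact that (by totality, for fixed $v$) $v \preorder \payoff_\omega(\rho) \iff \payoff_\omega(\rho) \not\prec v \iff \payoff_\omega(\rho) \not\preorder v^-$ where $v^-$ is a representative of the immediate predecessor class of $[v]$ (or the answer is ``always true'', $\widehat T$ covering all states reachable infinitely often, if $[v]$ is minimal), and apply the co-reducibility target set for~$v^-$.

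The main obstacle — and it is a mild one — is making precise that passing to the successor/predecessor equivalence class is a polynomial-time operation and that a representative vector is available; everything else is a routine rewriting using totality. I expect the author's proof to handle this by the same ``successor class'' trick, possibly phrased in terms of choosing $\widehat T(v)$ to be $\widehat T(v')$ for the appropriate neighbouring class~$v'$, and relying on the fact that for the preorders of interest this class (and a representative) is easy to compute.
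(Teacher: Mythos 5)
Your proof is correct and follows essentially the same route as the paper: for the forward direction the paper also picks a minimal vector strictly above~$v$ (your~$v^+$), uses totality to rewrite $\payoff_\omega(\rho)\not\preorder v$ as $v^+\preorder\payoff_\omega(\rho)$, applies reducibility at that vector, and handles the maximal class with the empty target set, with the converse being symmetric. Your side remark about the polynomial-time computability of a successor representative is a point the paper itself leaves implicit (it only invokes the lemma for concrete total preorders where this is trivial), so it does not mark a divergence in approach.
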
 
\begin{proof}
  Let~$u \in \{0,1\}^n$ be a vector. If $u$~is a maximal element, the
  new target set is empty, and thus satisfies the property for
  co-reducibility.  Otherwise we pick a vector~$v$ among the smallest
  elements that is strictly larger than~$u$.  Since the preorder is
  reducible to a single B\"uchi objective, there is a target
  set~$\widehat T$ that is reached infinitely often whenever the
  payoff is greater than~$v$.  Since the preorder is total and by
  choice of~$v$, we~have $w \not\preorder u \Leftrightarrow v
  \preorder w$. Thus the target set~$\widehat T$ is visited infinitely
  often when~$u$ is not larger than the payoff.  Hence $\omega$ is
  co-reducible to a single B\"uchi objective.

  The proof of the other direction is similar.
\end{proof}

\begin{lemma}
  \label{lemma:examples}
  Ordered B\"uchi objectives with disjunction or maximise preorders
  are reducible to single B\"uchi objectives. Ordered B\"uchi
  objectives with disjunction, maximise or subset preorders are
  co-reducible to single B\"uchi objectives.
\end{lemma}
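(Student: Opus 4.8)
The plan is to treat reducibility and co‑reducibility separately, handling the disjunction and maximise preorders by explicit (obviously polynomial‑time) constructions of the new target set, and then using Lemma~\ref{lemma-total} together with the fact that these two preorders are \emph{total} to get the other direction for free; the subset case is then done by a direct construction. Throughout I write $T_i\subseteq\Stat$ for the target set defining $\Omega_i$, recall that $\rho\in\Omega_i$ iff $\Inf(\rho)\cap T_i\neq\varnothing$, and use that $\Inf(\rho)\neq\varnothing$ for every play since the game is finite, so that taking a target set equal to $\Stat$ expresses the condition ``visited infinitely often by every play''.

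First I would establish reducibility for disjunction and maximise. For the disjunction preorder: if $v$ is the all‑zero vector then $v\preorder w$ holds for every $w$, so I set $\widehat T(v)=\Stat$; otherwise $v\preorder\payoff_\omega(\rho)$ holds iff $\payoff_\omega(\rho)$ has some nonzero coordinate, i.e.\ iff $\Inf(\rho)\cap\bigcup_{1\le i\le n}T_i\neq\varnothing$, so I set $\widehat T(v)=\bigcup_{1\le i\le n}T_i$. For the maximise preorder: let $k=\max\{i\mid v_i=1\}$ (with $\max\varnothing=-\infty$); if $k=-\infty$, again $v\preorder w$ for all $w$ and I take $\widehat T(v)=\Stat$, and if $k\ge 1$ then $v\preorder\payoff_\omega(\rho)$ iff $\max\{i\mid\rho\in\Omega_i\}\ge k$, i.e.\ iff $\rho\in\Omega_i$ for some $i\ge k$, i.e.\ iff $\Inf(\rho)\cap\bigcup_{i\ge k}T_i\neq\varnothing$, so I take $\widehat T(v)=\bigcup_{i\ge k}T_i$. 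In both cases the correctness check is a one‑line unfolding of the definitions, and the construction is clearly polynomial.

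Next, for co‑reducibility of disjunction and maximise, I would observe that both preorders are total: for disjunction the quotient by $\sim$ is the two‑element chain (all‑zero vector below all the others), and for maximise the relation $v\preorder w$ is exactly the comparison of the integers $\max\{i\mid v_i=1\}$ and $\max\{i\mid w_i=1\}$ in a linear order. Hence Lemma~\ref{lemma-total} applies and, since we have just shown these objectives reducible, they are also co‑reducible to a single B\"uchi objective.

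Finally I would prove co‑reducibility for the subset preorder directly (here Lemma~\ref{lemma-total} is unavailable since the preorder is not total; in fact subset is not reducible in general, as securing a payoff requires a conjunction of B\"uchi conditions). For a vector $v$, the condition $\payoff_\omega(\rho)\not\preorder v$ means $\{i\mid\rho\in\Omega_i\}\not\subseteq\{i\mid v_i=1\}$, i.e.\ there is an index $i$ with $v_i=0$ and $\rho\in\Omega_i$, i.e.\ $\Inf(\rho)\cap\bigcup_{i\colon v_i=0}T_i\neq\varnothing$; so I take $\widehat T(v)=\bigcup_{i\colon v_i=0}T_i$, again computable in polynomial time. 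There is no genuine obstacle in this lemma; the only points needing a little care are the degenerate all‑zero threshold in the reducibility arguments (handled by $\widehat T(v)=\Stat$ using $\Inf(\rho)\neq\varnothing$) and checking that disjunction and maximise are really total so that Lemma~\ref{lemma-total} is legitimately applicable.
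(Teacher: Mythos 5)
Your proposal is correct and follows essentially the same route as the paper: identical target-set constructions for disjunction ($\bigcup_i T_i$, with $\Stat$ for the all-zero threshold), maximise ($\bigcup_{i\ge i_0} T_i$), and subset ($\bigcup_{i:v_i=0} T_i$), with co-reducibility of the two total preorders obtained via Lemma~\ref{lemma-total}. Your explicit treatment of the degenerate all-zero threshold for the maximise preorder is a minor point the paper leaves implicit, but the argument is otherwise the same.
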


\proof%begin{proof}
  Let $\omega = \langle (\Omega_i)_{1 \le i \le n},\preorder \rangle$
  be an ordered B\"uchi objective. Assume $T_i$ is the target set
  for $\Omega_i$.

  Assume $\preorder$ is the disjunction preorder. If the payoff~$v$ is
  different from $\Zero$ then we define~$\widehat T(v)$ as the union of
  all the target sets: $\widehat T (v) = \bigcup_{i=1}^n T_i$.  Then, for
  every run $\rho$, 
  \begin{eqnarray*}
    v \preorder \payoff_\omega(\rho) & \Leftrightarrow &
    \text{there is some}\ i\ \text{for which}\ \Inf(\rho) \cap T_i \ne \varnothing \\
    & \Leftrightarrow & \Inf(\rho)\cap \widehat T(v) \ne \varnothing
  \end{eqnarray*}
  If the payoff~$v$ is~$\Zero$ then we get the expected result with
  $\widehat T(v) = \Stat$.  Disjunction being a total preorder, it is
  also co-reducible (from Lemma~\ref{lemma-total}).
  
  We assume now that $\preorder$ is the maximise preorder.  Given a
  payoff~$v$, consider the index $ i_0=\max\{i\mid v_i = 1 \}$.  We
  then define $\widehat T(v)$ as the union of the target sets that are
  above~$i_0$: $\widehat T(v) = \bigcup_{i\geq i_0} T_i$.  The following
  four statements are then equivalent, if $\rho$ is a run:
  \begin{eqnarray*}
    v \preorder \payoff_\omega(\rho) & \Leftrightarrow &
    v \preorder \One_{\{i\mid \Inf(\rho)\cap T_i\ne \varnothing\}} \\
    & \Leftrightarrow &
    i_0  \leq \max\{i\mid \Inf(\rho) \cap T_i \ne \varnothing\} \\
    & \Leftrightarrow &
    \exists i\geq i_0.\ \Inf(\rho) \cap T_i \ne  \varnothing
  \end{eqnarray*}
  Hence $\omega$ is reducible, and also co-reducible as it~is total,
  to a single B\"uchi objective.

  Finally, we assume that $\preorder$ is the subset preorder, and we
  show that $\omega$ is then co-reducible to a single
  B\"uchi objective.  Given a payoff~$v$, the new target is the
  union of the target sets that are not reached infinitely often for
  that payoff: 
  $\widehat T(v) = \bigcup_{\{i \mid v_i = 0\}} T_i$.  Then the following
  statements are equivalent, if $\rho$ is a run:
  \begin{eqnarray*}
    \payoff_\omega(\rho) \not\preorder u & \Leftrightarrow & 
    \One_{\{i\mid \Inf(\rho)\cap T_i\ne \varnothing\}} \not\preorder u \\
    & \Leftrightarrow & 
    \exists i.\ \Inf(\rho)\cap T_i \ne \varnothing \text{ and }u_i = 0 \\
    &\Leftrightarrow& \Inf(\rho)\cap \widehat T(v) \ne
                      \varnothing
  \rlap{\hbox to 155 pt{\hfill\qEd}}
   \end{eqnarray*}
%\end{proof}

As a corollary, we get the following result:
\begin{corollary}
  For finite games with ordered B\"uchi objectives, with either the
  disjunction or the maximise preorder, the value problem is
  \P-complete.
  For finite games with ordered B\"uchi objectives, with either the
  disjunction, the maximise or the subset preorder, the NE existence problem and
  the constrained NE existence problem are \P-complete.
\end{corollary}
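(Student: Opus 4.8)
The plan is to obtain this corollary simply by instantiating Proposition~\ref{prop:buchi-reducible} with the reducibility and co-reducibility facts established in Lemma~\ref{lemma:examples}, after checking that the preorders involved meet the side conditions of that proposition (non-triviality and monotonicity).

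First I would handle the upper bounds. For the value problem, Lemma~\ref{lemma:examples} tells us that ordered B\"uchi objectives with the disjunction or the maximise preorder are reducible to a single B\"uchi objective. Both preorders are monotonic (they appear in the list of monotonic preorders in Section~\ref{sec:prefrel}) and non-trivial as soon as $n\ge 1$ (for $n=1$ both amount to $0\prec 1$, and for larger $n$ they still have at least two classes). Hence the first bullet of Proposition~\ref{prop:buchi-reducible} applies and gives membership in~\P\ for the value problem. For the NE existence and constrained NE existence problems, Lemma~\ref{lemma:examples} gives co-reducibility to a single B\"uchi objective for the disjunction, maximise \emph{and} subset preorders; these are again monotonic and non-trivial, so the second bullet of Proposition~\ref{prop:buchi-reducible} yields membership in~\P\ for both problems.

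Next I would deal with the lower bounds, which are already stated (at the end of the excerpt just before the corollary) to follow from the single-B\"uchi case. Concretely: a single B\"uchi objective with target $T$ is exactly the ordered B\"uchi objective $\langle (\Omega^{\mathrm{B\ddot uchi}}_T), \preorder\rangle$ with $n=1$, and on $\{0,1\}^1$ the disjunction, maximise and subset preorders all coincide with the order $0\prec 1$. Therefore the reduction establishing \P-hardness of the value problem, the NE existence problem and the constrained NE existence problem for finite games with single B\"uchi objectives (Section~\ref{subsec:buchi}) is literally a reduction into the ordered-B\"uchi setting for each of the three preorders, giving \P-hardness in all cases.

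Combining the two directions gives \P-completeness exactly as stated: disjunction and maximise for the value problem, and disjunction, maximise and subset for the NE existence and constrained NE existence problems. There is no real obstacle here, since all the work has been done in Proposition~\ref{prop:buchi-reducible} and Lemma~\ref{lemma:examples}; the only point requiring a word of care is checking that the relevant preorders are non-trivial and monotonic (so that Proposition~\ref{prop:buchi-reducible} is applicable) and that single B\"uchi objectives indeed embed as ordered objectives for each of the three preorders (so that the hardness transfers), both of which are immediate.
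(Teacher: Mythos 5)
Your proposal is correct and follows essentially the same route as the paper: the corollary is obtained by instantiating Proposition~\ref{prop:buchi-reducible} with the (co-)reducibility facts of Lemma~\ref{lemma:examples}, with the \P-hardness inherited from the single-B\"uchi case since a single B\"uchi objective is an ordered objective with $n=1$ under each of these preorders. Your explicit checks of monotonicity and non-triviality are exactly the side conditions the paper relies on implicitly.
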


\begin{remark}
  Note that we cannot infer \P-completeness of the
  value problem for the subset preorder since the subset preorder is
  not total, and ordered objectives with subset preorder are not
  reducible to single B\"uchi objectives. Such an ordered objective
  is actually reducible to a generalised B\"uchi objective (several
  B\"uchi objectives should be satisfied).
\end{remark}

\subsection{When the ordered objective can be reduced to a
  deterministic B\"uchi automaton objective.}
% Reduction to a deterministic B\"uchi automaton objective.}
\label{sec:reduc-buchi-auto}

For some ordered objectives, the preference relation can (efficiently)
be reduced to the acceptance by a deterministic B\"uchi automaton.
Formally, we~say that an ordered objective $\omega = \langle
(\Omega_i)_{1 \le i \le n},\preorder \rangle$ is \newdef{reducible} to
a deterministic B\"uchi automaton whenever, given any payoff
vector~$u$, we~can construct in polynomial time a deterministic
B\"uchi automaton over $\Stat$ which accepts exactly all plays $\rho$
with $u \preorder \payoff_\omega(\rho)$.  
For such preorders, we will see that the value problem can be solved
efficiently by constructing the product of the deterministic B\"uchi
automaton and the arena of the game. This construction does however
not help for solving the (constrained) NE existence problems since the
number of players is a parameter of the problem, and the size of the
resulting game will then be exponential.

\begin{proposition}\label{prop:buchi-value-p}
  For finite games with ordered B\"uchi objectives which are reducible
  to deterministic B\"uchi automata, the value problem is \P-complete.
\end{proposition}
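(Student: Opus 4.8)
\textbf{Proof plan for Proposition~\ref{prop:buchi-value-p}.}
The plan is to reduce the value problem for finite concurrent games with such ordered B\"uchi objectives to the value problem for finite turn-based B\"uchi games, which is known to be solvable in polynomial time~\cite[Chapter~2]{GTW02}. Fix a player~$A$, a state~$s$, and a threshold~$v$ (a payoff vector, encoded as usual by a set of states). By the reducibility hypothesis, from~$v$ we can construct in polynomial time a deterministic B\"uchi automaton~$\calB$ over the alphabet~$\Stat$ that accepts exactly those plays~$\rho$ with $v \preorder_A \payoff_{\omega_A}(\rho)$; that is, a play of~$\calG$ is ``at least as good as~$v$'' for player~$A$ precisely when its sequence of states, read as a word over~$\Stat$, is in $\Lang(\calB)$.

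Next I would form the product of the concurrent game with~$\calB$. The cleanest way to proceed is to first use the standard transformation recalled at the beginning of Section~\ref{sec:single} to turn the two-player concurrent zero-sum game ``$A$ against the coalition of all other players'' into an equivalent two-player turn-based game preserving player~$A$'s objective; then take the synchronous product of that turn-based arena with the (deterministic) automaton~$\calB$. Since~$\calB$ is deterministic and reads the current state, the product is again a turn-based arena, its size is polynomial in $|\calG|$ and $|\calB|$ (hence polynomial in the size of the input, as $|\calB|$ is polynomial by reducibility), and the winning condition for~$A$ becomes the B\"uchi condition ``visit the accepting states of~$\calB$ infinitely often''. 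Alternatively one can phrase this directly on the concurrent game in the style of Section~\ref{sec:productB}, building $\calG \ltimes \calB$ and observing as in Lemma~\ref{lemma:product1} that it is a game-simulation, but for the \emph{value} problem (a two-player zero-sum question) the simpler turn-based product suffices. In either case, player~$A$ can ensure a play of value at least~$v$ from~$s$ in~$\calG$ if and only if \Eve wins the resulting turn-based B\"uchi game from the copy of~$s$ with the initial state of~$\calB$; this equivalence is immediate from the definition of~$\calB$ and the fact that the turn-based transformation preserves~$A$'s objective.

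Solving the turn-based B\"uchi game takes polynomial time~\cite{CHP08}, and there are only polynomially many things to do beforehand (build~$\calB$, build the product); hence the value problem is in~\P. For \P-hardness, I~would simply invoke the \P-hardness of the value problem for single B\"uchi objectives established in Section~\ref{subsec:buchi} (via the circuit-value encoding of Section~\ref{ptime-hard}): a single B\"uchi objective is a trivial special case of an ordered B\"uchi objective that is reducible to a deterministic B\"uchi automaton (the automaton just checks the target set), so the lower bound transfers.

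The only genuinely delicate point is bookkeeping on sizes and on the determinism of~$\calB$: one must make sure the automaton produced by the reducibility procedure is deterministic (so the product stays turn-based and of polynomial size) and that ``polynomial time to construct'' indeed gives a polynomial-size automaton. Everything else is routine: the product construction, the transfer of the winning condition, and the appeal to the classical polynomial-time algorithm for turn-based B\"uchi games. I~do not expect any real obstacle beyond this careful size accounting.
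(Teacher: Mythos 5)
Your proposal is correct and follows essentially the same route as the paper: build the deterministic B\"uchi automaton for the threshold (polynomial by reducibility), take the product with the game as in Section~\ref{sec:productB}, solve the resulting value problem with a single B\"uchi objective in polynomial time, and inherit \P-hardness from single B\"uchi objectives. The only cosmetic difference is your ``turn-based first, then product'' variant, where you would have to let the automaton advance only on the original states of $\calG$ (the intermediate states introduced by the turn-based transformation are not in its alphabet), but your alternative of taking $\calG \ltimes \calB$ directly is exactly the paper's argument.
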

\begin{proof}
  Given the payoff~$v^A$ for player~$A$, the algorithm proceeds by
  constructing the automaton that recognises the plays with payoff
  higher than~$v^A$.  By performing the product with the game as
  described in Section~\ref{sec:productB}, we obtain a new game, in
  which there is a winning strategy \iff there is a strategy in the
  original game to ensure payoff~$v^A$.  In~this new game, player~$A$
  has a single B\"uchi objective, so that the NE existence of a winning
  strategy can be decided in polynomial time.

  Hardness follows from that of games with single B\"uchi objectives.
\end{proof}

\subsubsection*{Applications}
We now give preorders to which the above result applies, that is,
which are reducible to deterministic B\"uchi automata objectives.

\begin{lemma}
  \label{lemma:conj}
  An ordered objective where the preorder is either the conjunction,
  the subset or the lexicographic preorder is reducible to a
  deterministic B\"uchi automaton objective.
\end{lemma}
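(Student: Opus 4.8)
The plan is to reduce all three preorders to a single routine fact: for any finite family $B_1,\dots,B_m$ of subsets of~$\Stat$, the set of plays~$\rho$ with $\Inf(\rho)\cap B_\ell\neq\varnothing$ for all $\ell\in\lsem 1,m\rsem$ is recognised by a deterministic B\"uchi automaton over the alphabet~$\Stat$ of size polynomial in~$m$ and~$|\Stat|$ (the standard counter construction: a counter $c\in\lsem 1,m\rsem$ that is incremented modulo~$m$ whenever a state of~$B_c$ is read, duplicated with a one-bit flag recording that the counter has just wrapped around, the flagged states being the accepting ones). Granting this, it is enough, for each payoff vector~$u$, to express $\{\rho\mid u\preorder\payoff_{\omega}(\rho)\}$ as such a generalised B\"uchi condition, with the sets~$B_\ell$ computable in polynomial time from~$u$ and the target sets $T_1,\dots,T_n$ of $\Omega_1,\dots,\Omega_n$.

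For the conjunction preorder this is immediate: if~$u$ has a zero coordinate it is minimal, so $\{\rho\mid u\preorder\payoff_\omega(\rho)\}=\Play$ and a one-state automaton accepting~$\Stat^\omega$ works; if $u=\One$ then $u\preorder\payoff_\omega(\rho)$ iff $\payoff_\omega(\rho)=\One$, i.e. iff $\Inf(\rho)\cap T_i\neq\varnothing$ for every~$i$, so we take $B_i=T_i$. For the subset preorder, $u\preorder\payoff_\omega(\rho)$ holds iff $\Inf(\rho)\cap T_i\neq\varnothing$ for every~$i$ with $u_i=1$, which is directly the generalised B\"uchi condition for the subfamily $\{T_i\mid u_i=1\}$ (vacuously true, hence a one-state automaton, when $u=\Zero$).

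The lexicographic preorder is the one that needs an argument. Fix~$u$, let $i_1<\dots<i_m$ be the indices with $u_{i_\ell}=1$, and set
\[
B_\ell \;=\; T_{i_\ell}\ \cup\ \bigcup\{\,T_j \mid j<i_\ell,\ u_j=0\,\}\qquad(1\le \ell\le m),
\]
each computable in polynomial time. I would prove that $u\preorder\payoff_\omega(\rho)$ if and only if $\Inf(\rho)\cap B_\ell\neq\varnothing$ for every~$\ell$ (the empty conjunction, when $u=\Zero$, again giving ``accept everything''). For the ``if'' direction, argue contrapositively: if $\payoff_\omega(\rho)$ is strictly below~$u$ in the lexicographic order, its first disagreement with~$u$ is at some position $i=i_\ell$ with $u_i=1$ and $\payoff_\omega(\rho)_i=0$; agreement on the earlier positions forces $T_j$ to be visited only finitely often for every $j<i_\ell$ with $u_j=0$, and $\payoff_\omega(\rho)_{i_\ell}=0$ forces the same for $T_{i_\ell}$, so $\Inf(\rho)\cap B_\ell=\varnothing$. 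For the ``only if'' direction, fix~$\ell$ and let~$j_0$ be the first position where $\payoff_\omega(\rho)$ and~$u$ differ ($j_0=n+1$ if they coincide). If $j_0>i_\ell$ the two vectors agree at~$i_\ell$, so $\payoff_\omega(\rho)_{i_\ell}=1$ and $T_{i_\ell}\subseteq B_\ell$ is visited infinitely often; if $j_0\le i_\ell$ then, since $u\preorder\payoff_\omega(\rho)$ and $j_0$ is the first difference, $u_{j_0}=0$ and $\payoff_\omega(\rho)_{j_0}=1$, whence $j_0\neq i_\ell$ (as $u_{i_\ell}=1$), so $j_0<i_\ell$, $T_{j_0}\subseteq B_\ell$, and $T_{j_0}$ is visited infinitely often. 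Either way $\Inf(\rho)\cap B_\ell\neq\varnothing$.

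The main (indeed the only) obstacle is this lexicographic equivalence; the conjunction and subset cases and the counter automaton are standard, and I would keep their treatment short. Plugging the appropriate family $B_1,\dots,B_m$ (with $m\le n$) into the preliminary construction then yields, in each of the three cases, a deterministic B\"uchi automaton of size polynomial in~$n$ and~$|\Stat|$ accepting exactly the plays~$\rho$ with $u\preorder\payoff_\omega(\rho)$, which is what the definition of reducibility to a deterministic B\"uchi automaton requires.
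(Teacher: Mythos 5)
Your proof is correct. For the conjunction and subset preorders it coincides with the paper's own treatment: the cycling counter automaton for a conjunction of B\"uchi conditions, with the subset case reduced to the conjunction of the targets $T_i$ having $u_i=1$, and the degenerate payoffs handled by the trivial automaton. The genuine difference lies in the lexicographic case, which is also the only nontrivial one. The paper builds a bespoke automaton, with an accepting state $q_0$ and one state $q_i$ per index with $u_i=1$, which advances on $T_i$ and \emph{resets} to $q_0$ upon reading any $T_k$ with $k<i$ and $u_k=0$; its correctness is argued operationally, by inspecting which targets must have been read between two consecutive visits to $q_0$. You instead isolate a purely combinatorial lemma: $u\preorder\payoff_\omega(\rho)$ for the lexicographic preorder if, and only if, $\Inf(\rho)\cap B_\ell\neq\varnothing$ for every $\ell$, where $B_\ell=T_{i_\ell}\cup\bigcup\{T_j\mid j<i_\ell,\ u_j=0\}$, and you then feed these sets into the same generic counter construction used for conjunction. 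Your two-direction argument for that equivalence is sound (totality of the lexicographic preorder is what licenses arguing the ``if'' direction by contraposition, and you use the agreement-before-the-first-difference correctly in both directions), the sets $B_\ell$ are computable in polynomial time, and the resulting automaton is deterministic and of polynomial size, which is exactly what the definition of reducibility demands. What your route buys is uniformity---a single automaton construction serves all three preorders---together with the slightly stronger structural fact that the lexicographic upward closure is itself a generalised B\"uchi (conjunction of B\"uchi) condition, in the same spirit as the paper's remark about the subset preorder; the paper's direct automaton gains essentially nothing over this, the two machines having comparable size ($m+1$ versus roughly $2m$ states).
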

\begin{proof}
  We first focus on the \textbf{conjunction preorder}.  Let $\omega =
  \langle (\Omega_i)_{1 \le i \le n},\preorder \rangle$ be an ordered
  B\"uchi objective, where $\preorder$ is the conjunction.  For every
  $1 \le i \le n$, let $T_i$ be the target set defining the B\"uchi
  condition $\Omega_i$.  There are only two possible payoffs: either
  all objectives are satisfied, or one objective is not satisfied. For
  the second payoff case, any play has a larger payoff: hence the
  trivial automaton (which accepts all plays) witnesses the
  property. For the first payoff case, we~construct a deterministic
  B\"uchi automaton~$\calB$ as follows.  There is one state for each
  target set, plus one accepting state: $Q = \{q_0, q_1, \dots, q_n
  \}$; the initial state is $q_0$, and the unique repeated state is
  $q_n$.
  For all $1 \le i \le n$, the transitions are $q_{i-1}
  \xrightarrow{s} q_{i}$ when $s\in T_i$ and $q_{i-1} \xrightarrow{s}
  q_{i-1}$ otherwise. There are also transitions $q_{n}
  \xrightarrow{s} q_0$ for every $s \in \Stat$.  Automaton $\calB$
  describes the plays that goes through each set $T_i$ infinitely
  often, hence witnesses the property. It can furthermore be computed
  in polynomial time. The construction is illustrated in
  Figure~\ref{fig:conjunction-automaton}.

  \medskip We now turn to the \textbf{subset preorder}.  Let $\omega =
  \langle (\Omega_i)_{1 \le i \le n},\preorder \rangle$ be an ordered
  B\"uchi objective, where $\preorder$ is the subset preorder.  For
  every $1 \le i \le n$, let $T_i$ be the target set defining the
  B\"uchi condition $\Omega_i$. Fix a payoff~$u$. A~play $\rho$ is
  such that $u \preorder \payoff_\omega(\rho)$ \iff $\rho$ visits
  infinitely often all sets $T_i$ with $u_i=1$. This is then
  equivalent to the conjunction of all $\Omega_i$'s with $u_i=1$. We
  therefore apply the previous construction for the conjunction and get
  the expected result.

  \medskip We finish this proof with the \textbf{lexicographic
    preorder}.  Let $\omega = \langle (\Omega_i)_{1 \le i \le
    n},\preorder \rangle$ be an ordered B\"uchi objective, where
  $\preorder$ is the lexicographic preorder.  For every $1 \le i \le
  n$, let $T_i$ be the target set defining the B\"uchi condition
  $\Omega_i$. Let $u \in \{0,1\}^n$ be a payoff vector. We construct
  the following deterministic B\"uchi automaton which recognises the
  runs whose payoff is greater than or equal to~$u$.

  In this automaton there is a state~$q_i$ for each $i$ such that
  $u_i=1$, and a state~$q_0$ that is both initial and repeated: $Q =
  \{q_0\} \cup \{q_i \mid u_i=1\}$. We write $I = \{0\} \cup \{i \mid
  u_i=1\}$. For every $i \in I$, we write $\mathsf{succ}(i) = \min (I
  \setminus \{j \mid j \le i\})$, with the convention that $\min
  \emptyset = 0$.  The transition relation is defined as follows:
  \begin{itemize}
  \item for every $s \in \Stat$, there is a transition $q_0
    \xrightarrow{s} q_{\mathsf{succ}(0)}$;
  \item for every $i \in I \setminus \{0\}$, we have the following
    transitions:
    \begin{itemize}
    \item $q_{i} \xrightarrow{T_i} q_{\mathsf{succ}(i)}$;
    \item $q_i \xrightarrow{T_k \setminus T_i} q_0$
      with $k<i$ and $u_k=0$;
    \item $q_i \xrightarrow{s} q_i$ for every $s \in \Stat \setminus
      (T_i \cup \bigcup_{k<i,u_k=0}T_k)$.
    \end{itemize}
  \end{itemize}

 \noindent An example of the construction is given in
  Figure~\ref{fig:lexico-automaton}.

  \smallskip We now prove correctness of this construction.  Consider
  a path that goes from~$q_0$ to~$q_0$: if the automaton is currently
  in state~$q_i$, then since the last occurrence of~$q_0$, at~least
  one state for each target set~$T_j$ with $j<i$ and $u_j = 1$ has
  been visited.  When $q_0$ is reached again, either it~is because we
  have seen all the $T_j$ with $u_j = 1$, or it~is because the run
  visited some target~$T_i$ with $u_i=0$ and all the $T_j$ such that
  $u_j=1$ and $j<i$; in both cases, the set of targets that have been
  visited between two visits to~$q_0$ describes a payoff greater
  than~$u$.  Assume the play~$\pi$ is accepted by the automaton; then
  there is a sequence of~$q_i$ as above that is taken infinitely
  often, therefore $\payoff_\omega(\pi)$ is greater than or equal
  to~$u$ for the lexicographic order.

  Conversely assume $v = \payoff_\omega(\pi)$ is greater than or equal
  to~$u$, that we already read a prefix $\pi_{\le k}$ for some~$k$,
  and that the current state is~$q_0$.  Reading the first symbol
  in~$\pi$ after position~$k$, the run goes to the state~$q_i$ where
  $i$ is the least integer such that $u_i=1$.  Either the path
  visits~$T_i$ at some point, or it~visits a state in a target $T_j$,
  with $j$ smaller than~$i$ and $v_j=0$, in which case the automaton
  goes back to~$q_0$.  Therefore from~$q_0$ we can again come back
  to~$q_0$ while reading the following of~$\pi$, and the automaton
  accepts.
\end{proof}

\begin{figure}[!ht]
  \begin{minipage}{.48\textwidth}
  \centering
  \begin{tikzpicture}[scale=1,thick]
%    \tikzset{every node/.style={font=\scriptsize}}
    \tikzstyle{rond}=[draw,circle,minimum size=6mm,inner sep=0mm,fill=black!10]
    \draw (0,0) node[rond] (A) {$q_0$};
    \draw (1.5,0) node[rond] (B) {$q_1$};
    \draw (3,0) node[rond] (C) {$q_2$};
    \draw (1,-1) node[rond,double] (D) {$q_3$};
    \draw[-latex'] (-0.8,0) -- (A);
    \draw[-latex'] (A) -- node[above] {$T_1$} (B);
    \draw[-latex'] (B) -- node[above] {$T_2$} (C);
    \draw[-latex',rounded corners=4mm] (C) |- node[below,pos=0.7]
         {$T_3$} (D.0); 
    \draw[-latex',rounded corners=4mm] (D) -| node[left] {$\Stat$}
    (A);
    \draw[-latex'] (A) .. controls  +(.5,1) and +(-.5,1) .. (A);
    \draw[-latex'] (B) .. controls  +(.5,1) and +(-.5,1) .. (B);
    \draw[-latex'] (C) .. controls  +(.5,1) and +(-.5,1) .. (C);
  \end{tikzpicture}
  \caption{The automaton for the conjunction preorder, $n =  3$}
  \label{fig:conjunction-automaton}
  \end{minipage}\hskip-.05\textwidth
  \begin{minipage}{.57\textwidth}
  \centering
  \begin{tikzpicture}[scale=1,thick]
 %   \tikzset{every node/.style={font=\scriptsize}}
    \tikzstyle{rond}=[draw,circle,minimum size=6mm,inner sep=0mm,fill=black!10]
    \draw (0,0) node[rond] (A) {$q_2$};
    \draw (3,0) node[rond] (B) {$q_5$};
    \draw (5,0) node[rond] (C) {$q_6$};
    \draw (1,-1) node[rond,double] (E) {$q_0$};
    \draw[-latex'] (A) -- node[above] {$T_2$} (B);
    \draw[-latex'] (B) -- node[above] {$T_5$} (C);
    \draw[-latex'] (.2,-1.2) -- (E);
    \draw[-latex',rounded corners=4mm] (C) |- node[below,pos=0.7]
         {$T_1,T_3,T_4,T_6$} (E.0); 
    \draw[-latex',rounded corners=4mm] (B) |- node[above,pos=0.78]
         {$T_1,T_3,T_4$} (E.20); 
    \draw[-latex'] (A) -- node[above,pos=0.7] {\ $T_1$} (E);
    \draw[-latex',rounded corners=4mm] (E) -| node[left] {$\Stat$}
    (A);
    \draw[-latex'] (A) .. controls  +(.5,1) and +(-.5,1) .. (A);
    \draw[-latex'] (B) .. controls  +(.5,1) and +(-.5,1) .. (B);
    \draw[-latex'] (C) .. controls  +(.5,1) and +(-.5,1) .. (C);
  \end{tikzpicture}
  \caption{The automaton for the lexicographic order, $n = 7$ and $u=
    (0,1,0,0,1,1,0)$}\label{fig:lexico-automaton}
  \end{minipage}
\end{figure}

We conclude with the following corollary:
\begin{corollary}
  For finite games with ordered B\"uchi objectives with either of the
  conjunction, the lexicographic or the subset preorders, the value
  problem is \P-complete.
\end{corollary}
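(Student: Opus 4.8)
The plan is to derive this corollary directly by combining Lemma~\ref{lemma:conj} with Proposition~\ref{prop:buchi-value-p}, together with the \P-hardness of the value problem for single B\"uchi objectives already recalled in Section~\ref{subsec:buchi}. There is no genuinely new content; the work is entirely in assembling the pieces.

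For the upper bound, I would argue as follows. Lemma~\ref{lemma:conj} states that an ordered B\"uchi objective whose preorder is the conjunction, the subset, or the lexicographic preorder is reducible to a deterministic B\"uchi automaton, i.e.\ for every payoff vector $u$ one can build, in polynomial time, a deterministic B\"uchi automaton over $\Stat$ accepting exactly the plays $\rho$ with $u \preorder \payoff_\omega(\rho)$. This is exactly the hypothesis of Proposition~\ref{prop:buchi-value-p}. Applying that proposition to the (single) player for which the value problem is posed yields membership in \P: given the threshold $v^A$, construct the corresponding deterministic B\"uchi automaton, form the product with the arena as described in Section~\ref{sec:productB}, and solve the resulting turn-based game with a single B\"uchi objective in polynomial time (e.g.\ via~\cite[Chapter~2]{GTW02}).

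For the lower bound, I would note that a single B\"uchi objective is the special case $n=1$ of each of these three preorders, the preorder on $\{0,1\}$ being the obvious one with $0 \preorder 1$. Hence the \P-hardness of the value problem for single B\"uchi objectives (obtained in Section~\ref{subsec:buchi} by encoding the circuit-value problem into a turn-based B\"uchi game) applies verbatim, giving \P-hardness here.

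The only point deserving attention—and thus the closest thing to an obstacle—is checking that the reduction invoked through Proposition~\ref{prop:buchi-value-p} stays polynomial in the size of the input. This holds because the automaton produced in Lemma~\ref{lemma:conj} has size polynomial in $\calG$ (it has at most $n+1$ states in each of the three cases), and since the value problem concerns a single designated player, the product in Section~\ref{sec:productB} multiplies the arena by only one such automaton, so no exponential blow-up in the number of players occurs. Once this is observed, the corollary follows immediately.
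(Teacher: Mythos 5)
Your proposal is correct and follows essentially the same route as the paper: the corollary is obtained by combining Lemma~\ref{lemma:conj} (reducibility of the conjunction, subset and lexicographic preorders to deterministic B\"uchi automata) with Proposition~\ref{prop:buchi-value-p}, whose statement already includes the matching \P-hardness via single B\"uchi objectives. Your extra remarks on the polynomial size of the automata and on hardness are sound but merely make explicit what is already contained in those two results.
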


\subsection{Preference relations with monotonic preorders}
\label{subsec:monotonic}
We will see in this part that monotonic preorders lead to more
efficient algorithms. More precisely we prove the following result:

\begin{proposition}\label{prop:buchi-np}
  \begin{itemize}
  \item For finite games with ordered B\"uchi objectives where the preorders
    are given by monotonic Boolean circuits, the value problem is in \coNP,
    and the NE existence problem and the constrained NE existence problem are
    in \NP.
  \item Completeness holds in both cases for finite games with ordered
    B\"uchi objectives where the preorders are given by monotonic
    Boolean circuits or with the counting preorder. 
  \item \NP-completeness also holds for the constrained NE existence
    problem for finite games with ordered B\"uchi objectives where the
    preorders admit an element~$v$ such that for every $v'$, it~holds
    $v' \ne \One \Leftrightarrow v' \preorder v$.\footnote{To be fully
      formal, a preorder~$\preorder$ is in fact a family
      $(\mathord\preorder_n)_{n\in\N}$ (where $\preorder_n$ compares two
      vectors of size~$n$), and this condition should be stated as
      ``\textit{for all~$n$, there is an element~$v_n\in\{0,1\}^n$
        such that for all~$v'\in\{0,1\}^n$, it~holds $v' \ne \One
        \Leftrightarrow v' \preorder v_n$}''.}
  \end{itemize}
\end{proposition}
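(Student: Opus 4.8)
The plan is to derive the three upper bounds of the first item from the suspect-game construction, and the hardness parts of the second and third items from reductions from \SAT{} and its complement.

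\emph{Upper bound for the value problem (membership in \coNP).} Using the turn-based transformation from the beginning of Section~\ref{sec:single} (which preserves player~$A$'s objective), the value problem for~$A$ against the coalition of the other players reduces to a polynomial-size two-player turn-based game in which $A$ wants to ensure $w^A \preorder_A \payoff_A(\rho)$ for the input threshold $w^A$. As $\preorder_A$ is monotonic, $\{u \mid w^A \preorder_A u\}$ is upward-closed under inclusion, so the complement $D = \{u \mid w^A \not\preorder_A u\}$ is downward-closed; hence ``$\payoff_A(\rho)\in D$'' is equivalent to ``$\Inf(\rho)$ avoids $\bigcup_{\{i\mid m_i=0\}} T^A_i$ for some maximal $m\in D$'', a disjunction of co-Büchi conditions, i.e.\ a Rabin condition for the opponent. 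Since the opponent is the Rabin player, if she can prevent $A$ from ensuring $w^A$ she can do so with a memoryless strategy. This yields an \NP{} algorithm for the complement of the value problem: guess a memoryless strategy for the opponent, compute the (polynomially many) bottom strongly connected components of the restricted graph, and check for each of them in polynomial time, by evaluating the monotonic circuit, that the corresponding payoff is not $\succeq_A w^A$. Hence the value problem is in \coNP.

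\emph{Upper bound for the (constrained) NE existence problems (membership in \NP).} By Theorem~\ref{thm:eq-win} and Proposition~\ref{lem:play-length}, and since ordered Büchi objectives are prefix-independent (so condition~\ref{cond:win} amounts, by Remark~\ref{rem:prefix-independant}, to staying inside \Eve's winning region), a Nash equilibrium satisfying the constraint exists from~$s$ iff there are a payoff vector $v=(v^A)_{A\in\Agt}$ satisfying the constraint and a lasso play $\rho=\pi\cdot\tau^\omega$ (with $|\pi|,|\tau|\le|\Stat|^2$) in $\calJ(\calG)$ from $(s,\Agt)$ that visits only states whose suspect component is~$\Agt$ (so \Adam{} obeys), has $\payoff(\Sproj_1(\rho))=v$, and stays inside the winning region $W(\calG,v)$. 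The algorithm guesses $v$ and $\rho$, checks the payoff and constraint conditions in polynomial time, and must verify, for each of the polynomially many one-step \Adam-deviations $(t,Q)$ off~$\rho$, that $(t,Q)\in W(\calG,v)$. For this we adapt the decomposition \SSG{} of Section~\ref{subsubsec:algo2}: $(\star)_a$ and $(\star)_b$ hold (monotonicity is $(\star)_b$), but $(\star)_c$ fails for general monotonic circuits; we reinstate it nondeterministically by guessing, whenever \SSG{} needs the winning region $W(\calG,v(K))$ of a candidate strongly connected component~$K$, a maximal element $m^{A,K}$ of the downward-closed set $\{u\mid u\preorder_A v^A(K)\}$ for each player~$A$, thereby replacing $A$'s objective by the co-Büchi objective ``avoid $\bigcup_{\{i\mid m^{A,K}_i=0\}} T^A_i$'', whose associated suspect-game winning region is computable in polynomial time (Section~\ref{subsec:cobuchi}) and underapproximates $W(\calG,v(K))$. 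Soundness is immediate; completeness holds because, given a Nash equilibrium, the associated \Eve-strategies from each deviated configuration reach bottom strongly connected components of the suspect arena, and $m^{A,K}$ may be chosen to match the payoff realized on each such component. Since the recursion involves only polynomially many SCCs (as in Lemma~\ref{lem:compute-sol}), the guesses have polynomial total size and the procedure runs in nondeterministic polynomial time. The delicate point — the main obstacle — is precisely this last argument: \Eve's winning condition in $\calH(\calG,v)$ is a succinctly presented Muller condition for which \Eve{} may need superpolynomial memory, so $W(\calG,v)$ cannot be computed directly in polynomial time, and a single global co-Büchi underapproximation per player fails because the sets $\{u\mid u\preorder_A v^A(K)\}$ need not be closed under join; the co-Büchi approximations must be guessed one SCC at a time.

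\emph{Hardness for the second item.} \coNP-hardness of the value problem and \NP-hardness of the two existence problems already hold for the counting preorder. For the value problem one reduces from \textsc{Tautology}: a single player has one Büchi objective per clause (``some true literal of the clause is seen infinitely often''), the remaining players cycle the play through variable-selection gadgets, and the counting threshold is set to the number of clauses, so the player can secure this threshold iff the formula is a tautology. For the (constrained) NE existence problems one reduces from \SAT{} by a construction in the spirit of Sections~\ref{subsec:reachability} and~\ref{subsec:cobuchi}, with the counting threshold encoding that a distinguished player must satisfy all her objectives; \NP-hardness of the plain NE existence problem then follows from that of the constrained one via the $E(\cdot)$ construction of Proposition~\ref{lem:link-constr-exist} with a lower-bound constraint. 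As the counting preorder is monotonic and computable by a polynomial-size monotonic Boolean circuit, all these lower bounds transfer to monotonic Boolean circuits; combined with the upper bounds above, this gives \coNP-completeness of the value problem and \NP-completeness of the NE existence and constrained NE existence problems.

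\emph{Hardness for the third item.} The conjunction and lexicographic preorders both admit an element~$v$ with $v'\ne\One\Leftrightarrow v'\preorder v$ (any $v\ne\One$ for conjunction, and $v=(1,\dots,1,0)$ for the lexicographic order), so the statement specializes to them. Using such a~$v$ as the lower threshold of the constraint for a distinguished player forces her into the top equivalence class, i.e.\ to satisfy all her Büchi objectives simultaneously; a \SAT-reduction as above, in which this is arrangeable in a Nash equilibrium exactly when the input formula is satisfiable, gives \NP-hardness of the constrained NE existence problem. Together with the \NP{} upper bound of the first item — such a preorder, when taken to be conjunction or lexicographic, is monotonic and hence expressible as a monotonic Boolean circuit — this yields \NP-completeness of the constrained NE existence problem.
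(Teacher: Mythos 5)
Your upper-bound arguments miss the idea the paper's proof actually rests on, namely that monotonicity yields \emph{memoryless} winning strategies for \Eve in the suspect game: Lemma~\ref{lem:memoryless} turns any \Eve-strategy into a memoryless one whose outcomes visit infinitely often only subsets of what outcomes of the original strategy visit, and with monotonic preorders this gives a memoryless winning strategy (Lemma~\ref{lem:suspect-based}), which can be checked winning in polynomial time by evaluating the circuits on \emph{all reachable} strongly connected components of the induced subgraph (Lemma~\ref{lem:check-threat}). The \NP{} certificate is then simply (payoff, lasso, a set $W$, a memoryless strategy witnessing $W$). Your proposal instead asserts that \Eve{} ``may need superpolynomial memory'' --- false under monotonicity, and precisely the point of Lemma~\ref{lem:suspect-based} --- and substitutes a certification of $(t,Q)\in W(\calG,v)$ by an adaptation of $\SSG$ with per-component guesses $m^{A,K}$. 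This step does not go through: $\SSG$ decomposes candidate equilibrium cycles $\langle K,E\rangle$ and \emph{consumes} $W(\calG,v(K))$ as a subroutine, it is not a procedure for establishing membership in a winning region, and the threshold relevant here is $v$ (the payoff of the guessed lasso), not $v(K)$; moreover you give no argument that co-B\"uchi regions certified with different $m^{A,K}$ in different parts of the arena can be glued into one \Eve-strategy winning for the actual condition (a play may migrate between regions before settling, and your completeness sketch assumes each deviation strategy settles in a single bottom component, which is unjustified since \Adam{} still plays). A smaller but real flaw in the \coNP{} argument: after guessing the memoryless coalition strategy, checking only \emph{bottom} SCCs is unsound, because $A$ controls all remaining branching and can settle in any reachable SCC; you must check every reachable SCC, as Lemma~\ref{lem:check-threat} does.

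The hardness part also has gaps. For the counting preorder, your value-problem reduction (``threshold $=$ number of clauses, securable iff the formula is a tautology'') does not work as stated: with opponents choosing valuations, a CNF is satisfied under every valuation iff each clause contains a complementary pair, which is checkable in polynomial time; the paper's Lemma~\ref{prop:buchi-counting-nphard} instead reduces from unsatisfiability using $2n$ literal objectives and threshold $n+1$. Your route to plain NE existence via $E(\cdot)$ fails because the natural \SAT{} reduction needs \emph{upper-bound} constraints (the players $B_k$ must not get payoff $\One$), and Proposition~\ref{lem:link-constr-exist} only removes lower bounds; the paper goes through Proposition~\ref{lem:link-value-exist} and the \coNP-hardness of the value problem. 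Finally, the third item asserts \NP-hardness for \emph{every} preorder admitting an element $v$ with $v'\ne\One\Leftrightarrow v'\preorder v$, whereas you only treat conjunction and lexicographic; and the mechanism you describe is backwards: imposing $v$ as a \emph{lower} bound does not force the top class (a payoff equivalent to $v$ also satisfies it). In the paper's Lemma~\ref{prop:buchi-nphard}, $v$ is used as an \emph{upper} bound on the payoffs of the auxiliary players $B_k$, which is what forces the play to define a valuation, while the distinguished player is forced to the top class by the lower bound $\One$.
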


\noindent We first show that monotonicity of the preorders imply some
memorylessness property in the suspect game. We then give algorithms
witnessing the claimed upper bounds, and show the various lower bounds.

\subsubsection{When monotonicity implies memorylessness.}
We say that a strategy $\sigma$ is \emph{memoryless} (resp. memoryless
from state $s_0$) if there exists a function $f\colon \Stat \to \Act$
such that $\sigma(h\cdot s) = f (s)$ for every $h \in \Hist$
(resp. for every $h \in \Hist(s_0)$).  A strategy profile is said
memoryless whenever all strategies of single players are
memoryless. We show that when the preorders used in the ordered
B\"uchi objectives are monotonic, the three problems are also easier
than in the general case. This is because we can find memoryless
trigger profiles (recall Definition~\ref{def:trigger}).

We first show this lemma, that will then be applied to the suspect
game.

\begin{lemma}\label{lem:memoryless}
  Let $\calH$ be a turn-based two-player game. Call \Eve one player,
  and let $\sigma_\shortEve$ be a strategy for~\Eve, and $s_0$ be a
  state of~$\calH$.  There is a memoryless
  strategy~$\sigma'_\shortEve$ such that for every $\rho' \in
  \Out_{\calH}(s_0,\sigma'_\shortEve)$, there exists $\rho \in
  \Out_{\calH}(s_0,\sigma_{\shortEve})$ such that $\Inf(\rho')
  \subseteq\Inf(\rho)$.
\end{lemma}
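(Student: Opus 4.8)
The plan is to obtain $\sigma'_\shortEve$ by ``folding'' $\sigma_\shortEve$: for each vertex of~\Eve we fix one representative history and let $\sigma'_\shortEve$ always imitate $\sigma_\shortEve$ on that history, the representative being chosen so that the set of vertices still reachable afterwards is as small as possible. Concretely, I~would first restrict attention to the vertices visited by some outcome of $\sigma_\shortEve$ from~$s_0$ (on the others $\sigma'_\shortEve$ may be defined arbitrarily, since no outcome of $\sigma'_\shortEve$ from~$s_0$ will ever reach them). For a vertex $s\in V_\shortEve$, call a history \emph{$s$-anchored} if it starts in~$s_0$, ends in~$s$ and is compatible with~$\sigma_\shortEve$; such histories exist after the above restriction. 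For an $s$-anchored history~$h$, let $R(h)$ be the set of vertices occurring in some outcome of $\sigma_\shortEve$ that extends~$h$; then $R$ is non-increasing along the prefix ordering. Pick for each $s\in V_\shortEve$ an $s$-anchored history $h_s$ with $|R(h_s)|$ minimal, and set $\sigma'_\shortEve(w\cdot s)=\sigma_\shortEve(h_s)$ for every history $w\cdot s$ ending in~$s$; this $\sigma'_\shortEve$ is memoryless.

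Next, fix an outcome $\rho'$ of $\sigma'_\shortEve$ from~$s_0$ and put $F=\Inf(\rho')$. The easy case is when $F\cap V_\shortEve=\varnothing$, i.e. $\rho'$ visits only finitely many \Eve-states. If it visits none, then $\rho'$ is an outcome of \emph{every} strategy of~\Eve, in particular of~$\sigma_\shortEve$, and $\rho:=\rho'$ works. Otherwise, let $\hat s$ be the last \Eve-state visited, at position~$p$; since $\sigma'_\shortEve$ plays $\sigma_\shortEve(h_{\hat s})$ there, the vertex $\rho'_{=p+1}$ is the successor of~$\hat s$ prescribed by~$h_{\hat s}$, so $\rho:=h_{\hat s}\cdot\rho'_{\ge p+1}$ is compatible with~$\sigma_\shortEve$ (after $h_{\hat s}$ the only move of~\Eve is the one dictated by~$h_{\hat s}$, and no \Eve-state occurs later), and $\Inf(\rho)=\Inf(\rho')=F$.

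The real work is the case $F\cap V_\shortEve\ne\varnothing$, so that $\rho'$ visits \Eve-states infinitely often. The plan is: (i)~note that along $\rho'$ two consecutive visits to \Eve-states $u$ and~$u'$ are separated only by \Adam-states, so concatenating $h_u$ with the intervening segment of~$\rho'$ gives an $u'$-anchored history~$H$ with $R(H)\subseteq R(h_u)$; together with the minimality of~$h_{u'}$ this makes the quantity $|R(h_u)|$ non-increasing as~$u$ ranges over the successive \Eve-states of~$\rho'$, hence eventually constant; (ii)~on that stable suffix one argues that the reachable sets can be taken to coincide with a single set~$C$ with $F\subseteq C$ (this is where the choice of representatives must be made carefully, possibly by an iterated minimisation), and then, using that every vertex of $F\subseteq C$ is still reachable under~$\sigma_\shortEve$ from a stable-suffix visit, a K\"onig/Ramsey-type pumping argument inside~$C$ produces a single outcome $\rho$ of $\sigma_\shortEve$ that visits all of~$F$ infinitely often, i.e. with $F\subseteq\Inf(\rho)$.

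The main obstacle is step~(ii) and, more broadly, the hard case: because $\sigma'_\shortEve$ is memoryless it ``forgets'', each time $\rho'$ re-enters an \Eve-state, which representative history it was emulating, so neither $\rho'$ nor a naive concatenation of its loop segments need be an outcome of~$\sigma_\shortEve$. The whole purpose of minimising $R(h_s)$ is to force the reachable sets to stabilise so that $\sigma_\shortEve$ still ``funnels'' all of~$F$ into one outcome on the stable suffix; making this stabilisation precise — in particular ensuring that no vertex of~$F$ is lost when the representative changes — is the delicate point, whereas the memorylessness of~$\sigma'_\shortEve$ and the easy case are routine.
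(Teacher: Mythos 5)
Your setup, the easy case, and step~(i) are fine, but the heart of the lemma --- the case where $\rho'$ visits \Eve-states infinitely often --- is left as a plan, and the plan cannot be completed for the construction as you specify it. Two concrete problems: first, minimising the cardinality $|R(h_s)|$ only makes the sequence of cardinalities along $\rho'$ eventually constant; it does not make the sets $R(h_{u_i})$ coincide, so the ``single set $C$ with $F\subseteq C$'' of step~(ii) is unjustified. Second, even granted such a $C$, the pumping needs the invariant that \emph{every} element of $F$ stays reachable under $\sigma_\shortEve$ from the auxiliary history built so far; extending that history to visit one element of $F$ may shrink its reachable set below $F$, and the minimality of the per-state representatives says nothing about these auxiliary histories.

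In fact your recipe can output a memoryless strategy that violates the lemma. Let \Eve own $a,b$ and \Adam own $d,x,x',y,y'$, with $s_0=d$; from $d$ and from each of $x,x',y,y'$ \Adam may move to $a$ or $b$; in $a$ \Eve chooses between $x$ and $x'$, in $b$ between $y$ and $y'$. Let $\sigma_\shortEve$ play, at a history ending in $a$, towards $x'$ if the history contains $y$ and towards $x$ otherwise, and at a history ending in $b$, towards $y'$ if the history contains $x$ and towards $y$ otherwise. Every outcome of $\sigma_\shortEve$ from $d$ visits at most one of $x,y$ (Adam's first move decides which, and afterwards the other one is never reached). Moreover every anchored history ending in $a$ has reachable set (up to the prefix) equal to $\{x,a,b,y'\}$ or to $\{x',a,b,y\}$, all of the same size, and similarly at $b$; so your minimisation puts no constraint here and permits choosing $h_a=d\,a$ (play to $x$) and $h_b=d\,b$ (play to $y$). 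The resulting memoryless strategy has the outcome $\rho'=d\,(a\,x\,b\,y)^\omega$, with $\Inf(\rho')=\{a,x,b,y\}$, which is contained in $\Inf(\rho)$ for no outcome $\rho$ of $\sigma_\shortEve$. Note that the only safe memoryless strategies in this game are the two ``matched'' ones ($a\mapsto x, b\mapsto y'$ or $a\mapsto x', b\mapsto y$): the choices at different states must be correlated, which no per-state selection of representatives (iterated or not) of the kind you describe can guarantee. The paper's proof avoids exactly this pitfall by proceeding differently: it inducts on the number of (state,action) pairs used by the strategy, removing one superfluous action at one state at a time, and in the key case re-splices histories through an explicit extension map $e$ with $\sigma'(h)=\sigma(e(h))$, so that after each step every outcome of the new strategy maps to an outcome of the previous one with at least the same set of states visited infinitely often; since each such surgery rewrites the residual behaviour at \emph{all} states simultaneously, cross-state correlations are preserved, and memorylessness is only reached at the end of the induction.
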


\begin{proof}
  This proof is by induction on the size of the set 
  \[
  S(\sigma_1) =
  \{(s,m) \mid \exists h \in\Hist(\sigma_1) .\ \sigma_1(h) = m\
  \text{and}\ \last(h)=s\}.  
  \]
  If its size is the same as that of $\{ s \mid \exists h \in
  \Hist(\sigma_1).\ \last(h) = s \}$ then the strategy is memoryless.
  Otherwise, let~$s$ be a state at which $\sigma_1$ takes several
  different actions (\ie, $|(\{s\}\times\Act) \cap S(\sigma_1)| > 1$).

  We will define a new strategy~$\sigma'_1$ that takes fewer different
  actions in $s$ and such that for every outcome of~$\sigma_1'$, there
  is an outcome of~$\sigma_1$ that visits (at~least) the same states
  infinitely often.

  If $\sigma$ is a strategy and $h$~is ~a~history, we let $\sigma\circ
  h\colon h'\mapsto \sigma(h \cdot h')$ for any history~$h'$. Then for
  every~$m$ such that $(s,m) \in S(\sigma_1)$ we let ${H_m
    = \{h \in \Hist(\sigma_1) \mid \last(h)=s\ \text{and}\
    \sigma_1(h)=m\}}$, and for every~$h$, $h^{-1} \cdot H_m = \{h'
  \mid h \cdot h' \in H_m\}$.
  We pick $m$ such that $H_m$ is not empty.
  \begin{itemize}
  \item Assume that there is $h_0 \in \Hist(\sigma_1)$ with
    $\last(h_0) = s$, such that $h_0^{-1} \cdot H_m$ is empty.  We
    define a new strategy $\sigma'_1$ as follows.  If $h$ is an
    history which does not visit $s$, then $\sigma'_1(h)=\sigma_1(h)$.
    If $h$ is an history which visits $s$, then decompose $h$ as $h'
    \cdot h''$ where $\last(h') = s$ is the first visit to $s$ and
    define $\sigma'_1(h) = \sigma_1(h_0 \cdot h'')$.  Then,
    strategy~$\sigma'_1$ does not use~$m$ at state~$s$, and therefore
    at least one action has been ``removed'' from the strategy.  More
    precisely, $|(\{s\}\times\Act) \cap S(\sigma'_1)| \le
    |(\{s\}\times\Act) \cap S(\sigma_1)| - 1$.  Furthermore the
    conditions on infinite states which are visited infinitely often
    by outcomes of $\sigma'_1$ is also satisfied.
  \item Otherwise for any $h \in \Hist(\sigma_1)$ with $\last(h) = s$,
    $h^{-1}\cdot H_m$ is not empty.  We will construct a strategy
    $\sigma'_1$ which plays~$m$ at~$s$.  Let $h$ be an history, we
    first define the extension~$e(h)$ inductively in that way:
    \begin{itemize}
    \item $e(\varepsilon) = \varepsilon$, where $\varepsilon$ is the
      empty history;
    \item $e(h\cdot s) = e (h) \cdot h'$ where $h' \in (e(h))^{-1}
      \cdot H_m$;
    \item $e(h \cdot s') = e(h) \cdot s'$ if $s' \ne s$.
    \end{itemize}
    We extend the definition of $e$ to infinite outcomes in the
    natural way: $e(\rho)_{i} = e(\rho_{\le i})_{i}$.  We then define
    the strategy $\sigma'_1 \colon h \mapsto \sigma_1(e(h))$.  We show
    that if $\rho$ is an outcome of $\sigma'_1$, then $e(\rho)$ is an
    outcome of $\sigma_1$. Indeed assume $h$ is a finite outcome
    of~$\sigma'_1$, that $e(h)$ is an outcome of~$\sigma_1$ and
    $\last(h) = \last(e(h))$. If $h\cdot s$ is an outcome of
    $\sigma'_1$, by construction of $e$, $e(h\cdot s) = e(h) \cdot
    h'$, such that $\last(h') = s$, and $h'$ is an outcome of
    $\sigma_1\circ e(h)$ and as $e(h)$ is an outcome of $\sigma_1$ by
    hypothesis, that means that $e(h\cdot s)$ is an outcome of
    $\sigma_1$. If $h\cdot s'$ with $s'\ne s$ is an outcome of
    $\sigma'_1$, $e(h \cdot s') = e(h) \cdot s'$, $s' \in \Tab
    (\last(h), \sigma'_1(h))$, and $\sigma'_1(h) =
    \sigma_1(e(h))$. Using the hypothesis $\last(h) = \last(e(h))$,
    and $e(h)$ is an outcome of~$\sigma_1$, therefore $e(h\cdot s')$
    is an outcome of~$\sigma_1$. This shows that if $\rho$ is an
    outcome of~$\sigma'_1$ then $e(\rho)$ is an outcome
    of~$\sigma_1$. The property on states visited infinitely often
    follows. Several moves have been removed from the strategy at $s$
    (since the strategy is now memoryless at~$s$, playing~$m$).
  \end{itemize}
  In all cases we have $S(\sigma'_1)$ strictly included in
  $S(\sigma_1)$, and an inductive reasoning entails the result.
\end{proof}

\begin{lemma}\label{lem:suspect-based}
  If for every player $A$, $\preorder_A$ is monotonic, and if there is
  a trigger profile for some play $\pi$ from~$s$, then there is a
  memoryless winning strategy for~$\Eve$ in $\calH(\calG,\pi)$ from
  state~$(s,\Agt)$.
\end{lemma}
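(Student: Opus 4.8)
The plan is to combine the two preceding lemmas. By Lemma~\ref{lemma-suspectgame}, the existence of a trigger profile for $\pi$ from~$s$ with some outcome~$\rho$ gives a winning strategy $\sigma_\shortEve$ for~\Eve in $\calH(\calG,\pi)$ from $(s,\Agt)$ (we may drop any condition on the outcome when \Adam obeys, since we only need \emph{some} winning strategy here). First I would apply Lemma~\ref{lem:memoryless} to this turn-based game $\calH(\calG,\pi)$, the player~\Eve, the strategy $\sigma_\shortEve$, and the initial state $(s,\Agt)$: this produces a memoryless strategy $\sigma'_\shortEve$ with the property that every outcome $\rho'$ of $\sigma'_\shortEve$ from $(s,\Agt)$ is "dominated" by some outcome $\rho$ of $\sigma_\shortEve$ in the sense that $\Inf(\rho')\subseteq \Inf(\rho)$. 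It then remains to argue that $\sigma'_\shortEve$ is still winning for~\Eve.

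The heart of the argument is therefore to show that \Eve's winning condition in $\calH(\calG,\pi)$ is preserved under shrinking the set of infinitely-visited states, which is exactly where monotonicity of the preorders enters. Recall that a play $\rho'$ is winning for \Eve iff for every $A\in\limitpi2(\rho')$ it holds $\Sproj_1(\rho')\prefrel_A\pi$. I would fix an outcome $\rho'$ of $\sigma'_\shortEve$ and the dominating outcome $\rho$ of $\sigma_\shortEve$ with $\Inf(\rho')\subseteq\Inf(\rho)$. Two things need checking. First, since $\Sproj_2$ is non-increasing along any play and its limit is determined by the tail, $\Inf(\rho')\subseteq\Inf(\rho)$ forces $\limitpi2(\rho')\subseteq\limitpi2(\rho)$ (the limit set of suspects only shrinks when fewer states are revisited — this is a small combinatorial point about the product structure that I should spell out carefully). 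Second, for a player $A\in\limitpi2(\rho')\subseteq\limitpi2(\rho)$: as $\sigma_\shortEve$ is winning, $\Sproj_1(\rho)\prefrel_A\pi$, i.e.\ $\payoff_A(\Sproj_1(\rho))\preorder_A\payoff_A(\pi)$. Now $\Inf(\Sproj_1(\rho'))\subseteq\Inf(\Sproj_1(\rho))$ (projections preserve the inclusion), and since each $\prefrel_A$ is an ordered B\"uchi objective, the payoff depends only on which B\"uchi target sets are hit infinitely often; a smaller $\Inf$-set can only hit fewer of them, so $\{i\mid \Sproj_1(\rho')\in\Omega^A_i\}\subseteq\{i\mid\Sproj_1(\rho)\in\Omega^A_i\}$, hence by monotonicity of $\preorder_A$ we get $\payoff_A(\Sproj_1(\rho'))\preorder_A\payoff_A(\Sproj_1(\rho))\preorder_A\payoff_A(\pi)$, i.e.\ $\Sproj_1(\rho')\prefrel_A\pi$. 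Thus $\rho'$ is winning for~\Eve.

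I expect the main obstacle to be bookkeeping rather than conceptual: precisely relating $\Inf$ in the suspect game to $\Inf$ of the $\Sproj_1$-projection and to $\limitpi2$, and making sure the "dominated by an outcome of $\sigma_\shortEve$" quantification from Lemma~\ref{lem:memoryless} is used correctly (it is $\forall\rho'\,\exists\rho$, which is exactly what we need). One mild subtlety worth noting explicitly: Lemma~\ref{lem:memoryless} is stated for outcomes from the \emph{fixed} state $s_0=(s,\Agt)$, and memorylessness is only guaranteed "from $s_0$"; this is harmless here because Lemma~\ref{lemma-suspectgame} and the subsequent use of the trigger profile only concern plays starting at $(s,\Agt)$. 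With these points in place the proof is short: invoke Lemma~\ref{lemma-suspectgame} to get a winning $\sigma_\shortEve$, invoke Lemma~\ref{lem:memoryless} to get a memoryless $\sigma'_\shortEve$, and use monotonicity as above to conclude $\sigma'_\shortEve$ is winning.
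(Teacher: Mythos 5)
Your proposal is correct and follows essentially the same route as the paper's proof: invoke Lemma~\ref{lemma-suspectgame} to obtain a winning strategy for \Eve, apply Lemma~\ref{lem:memoryless} to extract a memoryless strategy whose outcomes visit infinitely often only states visited infinitely often by outcomes of the original strategy, and conclude via monotonicity of the preorders that the memoryless strategy is still winning. Your explicit justification that $\Inf(\rho')\subseteq\Inf(\rho)$ forces $\limitpi2(\rho')\subseteq\limitpi2(\rho)$ (in fact equality, since the second components are eventually constant) is a point the paper leaves implicit, so no gap remains.
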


\begin{proof}
  Assume there is a trigger profile for~$\pi$. We have seen in
  Lemma~\ref{lem:suspect-game} that there is then a winning
  strategy~$\sigma_\shortEve$ in game $\calH(\calG,\pi)$ for $\Eve$.
  Consider the memoryless strategy $\sigma'_\shortEve$ constructed as
  in Lemma~\ref{lem:memoryless}.  Let~$\rho'$ be an outcome of
  $\sigma'_\shortEve$, there is an outcome~$\rho$
  of~$\sigma_\shortEve$ such that $\Inf(\rho') \subseteq
  \Inf(\rho)$. As $\sigma_\shortEve$ is winning in~$\calH(\calG,\pi)$,
  for every $A \in \limitpi2(\rho)$, $\Sproj_1(\rho) \prefrel_A \pi$.
  We assume the B\"uchi conditions are given by the target sets
  $(T_i^A)_{A,i}$.  For each player~$A$, $\{i \mid \Inf
  (\Sproj_1(\rho'))\cap T^A_i\} \subseteq \{i \mid \Inf
  (\Sproj_1(\rho))\cap T^A_i\}$. As the preorder is monotonic the
  payoff of $\Sproj_1(\rho')$ is smaller than that of~$\Sproj_1(\rho)$:
  $\Sproj_1(\rho') \prefrel_A \Sproj_1(\rho)$.  So~the play is winning
  for any player~$A$ and $\sigma'_\shortEve$ is a memoryless winning
  strategy in game $\calH(\calG,\pi)$ for $\Eve$.
\end{proof}

\begin{lemma}\label{lem:check-threat}
  If for every player $A$, $\preorder_A$ is given by monotonic Boolean
  circuits, then given a path~$\pi$, we can decide in polynomial time
  if a memoryless strategy for~$\Eve$ in $\calH(\calG,\pi)$ is
  winning.
\end{lemma}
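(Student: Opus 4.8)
The plan is to reduce the question to a reachability-plus-SCC computation on a polynomial-size graph. Fix the memoryless strategy $\sigma_\shortEve$ of \Eve. By Proposition~\ref{lem:polynomial-size} the suspect game $\calH(\calG,\pi)$ has only polynomially many reachable configurations, so restricting each \Eve-configuration $(s,P)$ to the unique move prescribed by $\sigma_\shortEve$ yields a polynomial-size finite graph $\calH'$ in which all remaining choices belong to \Adam; moreover $\calH'$ has no deadlocks, since every configuration of $\calH(\calG,\pi)$ has a successor. Then $\sigma_\shortEve$ is winning from a configuration $(s,\Agt)$ iff every infinite path of $\calH'$ starting in $(s,\Agt)$ is winning for \Eve. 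Since B\"uchi objectives are prefix-independent, by Remark~\ref{rem:prefix-independant} whether such a path $\rho$ is winning depends only on $\Inf(\rho)$: because $\Sproj_2$ is non-increasing along every path, the second component is constant on $\Inf(\rho)$, equal to $P:=\limitpi2(\rho)$, and writing $K:=\Sproj_1(\Inf(\rho)\cap V_\shortEve)$, the path $\rho$ is winning iff $v^A(K)\preorder_A v^A(\Inf(\pi))$ for every $A\in P$ (where $v^A(K)$ has payoff $\One_{\{i\mid T_i^A\cap K\neq\varnothing\}}$, computable from $K$ and the target sets of $A$).

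The key step, where monotonicity enters, is to replace the exponentially many strongly connected subsets of $\calH'$ by its at most $|V|$ maximal SCCs. Call a reachable maximal SCC $U$ of $\calH'$ \emph{winning} if $v^A(K_U)\preorder_A v^A(\Inf(\pi))$ for every $A\in P_U$, where $P_U$ is the (constant) second component on $U$ and $K_U=\Sproj_1(U\cap V_\shortEve)$. If $\rho$ is any outcome of $\sigma_\shortEve$ then $\Inf(\rho)$ is a strongly connected subset of some reachable maximal SCC $U$, with the same suspect set $P_U$ and with $K\subseteq K_U$; by monotonicity of $\preorder_A$ one gets $v^A(K)\preorder_A v^A(K_U)$, so transitivity shows that if $U$ is winning then $\rho$ is winning, while if some outcome is losing then the maximal SCC above its $\Inf$-set is losing too. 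Conversely every reachable maximal SCC $U$ that contains a cycle (i.e. $|U|\ge 2$, or $|U|=1$ with a self-loop) is realized as $\Inf(\rho)$ for an ultimately-periodic outcome from $(s,\Agt)$: reach $U$ and then loop through all its vertices, using that $\calH'$ has no deadlocks. Hence $\sigma_\shortEve$ is winning iff every non-trivial reachable maximal SCC of $\calH'$ is winning.

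This yields the algorithm: build $\calH'$, compute the configurations reachable from $(s,\Agt)$ and the SCC decomposition (both linear time), and for each non-trivial reachable SCC $U$ read off $K_U$ and $P_U$, form for each $A\in P_U$ the payoff vector of $v^A(K_U)$, and test $v^A(K_U)\preorder_A v^A(\Inf(\pi))$ by evaluating the monotonic Boolean circuit defining $\preorder_A$; answer ``yes'' iff all these tests succeed. Every step is polynomial, so the whole procedure runs in polynomial time. The only real work — and hence the main obstacle — is justifying the monotonicity reduction, i.e. the two facts about SCCs of $\calH'$ (constancy of the second component inside an SCC, and realizability of a cyclic reachable SCC as an $\Inf$-set) together with the transitivity argument relating $v^A(K)$ and $v^A(K_U)$; the algorithmic content itself is routine.
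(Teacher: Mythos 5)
Your proposal is correct and follows essentially the same route as the paper's proof: restrict the suspect game to the subgraph induced by the memoryless strategy, decompose it into SCCs, and check each reachable (non-trivial) component against \Eve's objective by evaluating the monotonic circuit, with monotonicity plus transitivity justifying that checking only the maximal SCCs suffices. You merely spell out the details (constancy of the suspect set inside an SCC, realizability of a reachable cyclic SCC as an $\Inf$-set, the $K\subseteq K_U$ comparison) that the paper leaves implicit.
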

\begin{proof}
  Let $\sigma_\shortEve$ be a memoryless strategy in
  $\calH(\calG,\pi)$ for $\Eve$.  By keeping only the edges that are
  taken by $\sigma_\shortEve$, we define a subgraph of the game.  We
  can compute in polynomial time the strongly connected components of
  this graph.  If one component is reachable and does not satisfy the
  objective of~$\Eve$, then the strategy is not winning.  Conversely
  if all the reachable strongly connected components satisfy the
  winning condition of $\Eve$, since the preorder is monotonic,
  $\sigma_\shortEve$ is a winning strategy.  Notice that since the
  preorder is given as a Boolean circuit, we can check in polynomial
  time whether a strongly connected component is winning or
  not. Globally the algorithm is therefore polynomial-time.
\end{proof}

We now turn to the proof of the claimed upper bounds.

\subsubsection{Proofs for the upper bounds.}

We show that the value problem is in \coNP~for finite games with
ordered B\"uchi objectives, when preorders are given by monotonic
Boolean circuits.

As already mentioned at the beginning of Section~\ref{sec:single}, for
the value problem, we can make the concurrent game turn-based: since
player $A$ must win against any strategy of the coalition $P = \Agt
\setminus \{A\}$, she must also win in the case where the opponents'
strategies can adapt to what $A$ plays. In other terms, we can make
$A$ play first, and then the coalition.  This turn-based game is
determined, so~that there is a strategy~$\sigma$ whose outcomes are
always better (for~$A$) than $v^{A}$ \iff for any strategy~$\sigma'$
of coalition $P$, there is an outcome with payoff (for~$A$) better
than~$v^{A}$.  If there is a counterexample to this fact, then thanks
to Lemma~\ref{lem:memoryless} there is one with a memoryless
strategy~$\sigma'$.  The \coNP\ algorithm proceeds by checking that
all the memoryless strategies of coalition~$P$ have an outcome better
than~$v^{A}$, which is achievable in polynomial time, with a method
similar to Lemma~\ref{lem:check-threat}.

\medskip We show now that the constrained NE existence problem is in
\NP~for finite games with ordered B\"uchi objectives, when preorders
are given by monotonic Boolean circuits.

The algorithm for the constrained NE existence problem proceeds by
guessing:
\begin{itemize}
\item the payoff for each player,  
\item a~play of the form $\pi \cdot \tau^\omega$, where
  $|\pi|\le|\Stat|^2$ and $|\tau|\le|\Stat|^2$,
\item an under-approximation $W$ of the set of winning states
  in~$\calH(\calG,\pi \cdot \tau^\omega)$
\item a memoryless strategy profile~$\sigma_\Agt$ in~$\calH(\calG,\pi
  \cdot \tau^\omega)$.
\end{itemize}
We check that $\sigma_\Agt$ is a witness for the fact that the states
in $W$ are winning; thanks to Lemma~\ref{lem:check-threat}, this can
be done in polynomial time.  We also verify that the play $\pi \cdot
\tau^\omega$ has the expected payoff, that the payoff satisfies the
constraints, and that it never gets out of $W$.  If these conditions
are fulfilled, then the play $\pi \cdot \tau^\omega$ meets the
conditions of Theorem~\ref{thm:eq-win}, and there is a Nash
equilibrium with outcome $\pi \cdot \tau^\omega$.
Lemma~\ref{lem:suspect-based} and Proposition~\ref{lem:play-length}
ensure that if there is a Nash equilibrium, we~can find it this way.

\subsubsection{Proofs for the hardness results.}

We first prove the hardness results for the counting preorder.

\begin{lemma}\label{prop:buchi-counting-nphard}
  For finite games with ordered B\"uchi objectives that use the
  counting preorder, the value problem is \co\NP-hard.
\end{lemma}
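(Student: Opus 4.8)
The plan is to reduce from the complement of the decision version of \textsc{Max-Sat}, which is \coNP-complete: given a $3$-CNF formula $\psi = c_1 \land \dots \land c_m$ over variables $x_1,\dots,x_p$ together with an integer $k$, $1 \le k \le m$, decide whether \emph{every} truth assignment of $x_1,\dots,x_p$ falsifies at least $k$ of the clauses~$c_i$. (Its complement, ``some assignment satisfies at least $m-k+1$ clauses'', is in \NP\ and contains \SSAT\ as the case $k=1$, hence is \NP-complete.)

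From such an instance I would build a finite \emph{turn-based} two-player game $\calG_\psi$, between player~$A$ (for whom the value problem is posed) and an adversary~$B$ (whose preference is irrelevant and can be taken trivial); player~$A$ is given $m$ B\"uchi objectives $\Omega_i = \Omega^{\text{B\"uchi}}_{\{g_i\}}$, one distinguished reward state $g_i$ per clause, compared with the counting preorder, and the threshold is taken with exactly $k$ satisfied objectives (say $\Inf = \{g_1,\dots,g_k\}$). The arena has, for each variable~$x_k$, a single \emph{context-free} state $\hat y_k$ controlled by~$B$ whose two successors $x_k^\top, x_k^\bot$ encode the value chosen for $x_k$; making these states shared ensures that \emph{a memoryless strategy of~$B$ is exactly a consistent assignment~$\nu$}. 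From the revelations, player~$A$ navigates freely among the states $\hat y_k$ and per-clause verification chains: clause~$c_i$ has three ``checkpoints'', and the checkpoint for a literal $\ell_{i,j}$ can be entered only right after the revelation falsifying $\ell_{i,j}$ (after $x_k^\bot$ if $\ell_{i,j}=x_k$, after $x_k^\top$ if $\ell_{i,j}=\lnot x_k$); completing all three checkpoints of $c_i$ leads to $g_i$ and then back to the start, while any other move returns the play to the start without visiting any $g_i$. Thus against a fixed consistent assignment~$\nu$, the best player~$A$ can do is to certify, round-robin, exactly the clauses falsified by~$\nu$, obtaining $\Omega_i$ for precisely those~$i$; so she can secure a payoff at least as good as the threshold against~$\nu$ iff $\nu$ falsifies at least $k$ clauses.

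The steps are: (1)~fix the source problem and the arena as above, all omitted edges going back to the start; (2)~prove the equivalence ``$A$ can ensure the threshold from the start of $\calG_\psi$ $\iff$ every assignment of $\psi$ falsifies at least $k$ clauses''. One direction is immediate: if some assignment~$\nu$ falsifies fewer than $k$ clauses, the memoryless strategy of~$B$ playing~$\nu$ forces every outcome to satisfy fewer than $k$ of~$A$'s objectives. For the converse one must rule out \emph{arbitrary} (in particular adaptive, ``lying'') strategies of~$B$; here I would combine determinacy of turn-based $\omega$-regular games with Lemma~\ref{lem:memoryless}. Determinacy turns ``$A$ cannot ensure the threshold'' into ``$B$ has a strategy $\sigma_B$ forcing every outcome to satisfy fewer than $k$ objectives of~$A$''; Lemma~\ref{lem:memoryless}, applied with~$B$ in the role of~\Eve, yields a \emph{memoryless} strategy $\sigma'_B$ each of whose outcomes $\rho'$ satisfies $\Inf(\rho') \subseteq \Inf(\rho)$ for some outcome $\rho$ of $\sigma_B$; and since the number of B\"uchi objectives satisfied by a play, hence its counting payoff, is monotone in its set of infinitely-visited states, $\sigma'_B$ still forces every outcome below the threshold. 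So ``$A$ cannot ensure the threshold'' is equivalent to the existence of a \emph{consistent} assignment falsifying fewer than $k$ clauses, i.e. the complement of the source instance.

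The main obstacle is the joint design of the clause-verification chains: player~$A$ must never reach a reward state $g_i$ ``for free'' (this is what forces the chains to genuinely re-route through the shared revelation states $x_k^\top, x_k^\bot$ with the correct polarity, and prevents $A$ from simply cycling through all the $g_i$), while the states $\hat y_k$ must stay context-free with the continuation after a revelation chosen by~$A$ rather than encoded in the state (so that memoryless $B$-strategies remain in bijection with consistent assignments) --- without this, the soundness argument collapses, since an adaptive~$B$ answering ``true'' to whatever literal $A$ is currently checking could spoil every play regardless of~$\psi$. Once these two constraints are reconciled, the remaining checks --- that against a consistent~$\nu$ every gadget of a falsified clause can be replayed infinitely often, and that gadgets of satisfied clauses can never be completed --- are routine. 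Note that Proposition~\ref{lem:play-length} is not needed here, since the value problem is decided directly on the turn-based game, as recalled at the beginning of Section~\ref{sec:single}.
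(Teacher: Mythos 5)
Your overall skeleton (reduce from the complement of a maximum-satisfiability question, handle adaptive adversaries via determinacy plus Lemma~\ref{lem:memoryless} and monotonicity of the counting payoff in $\Inf(\rho)$) is sound, but the gadget at the heart of the reduction does not work, and the tension you yourself flag as ``the main obstacle'' is not a detail to be reconciled: it is an impossibility in the architecture you describe. Concretely, consider any path from the initial state to a reward state $g_i$, and look at the last revelation-successor state $x_k^\top$ or $x_k^\bot$ occurring on it. All states after that point are controlled by $A$ (the only $B$-states are the shared $\hat y_k$'s, and visiting one of them would produce a later revelation-successor). Hence from that single shared state $A$ can force $g_i$ on her own, so reaching $g_i$ infinitely often only requires \emph{one} suitable revelation to occur infinitely often, never the conjunction of three. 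In particular, for a consistent memoryless assignment $\nu$ that satisfies $c_i$ but falsifies, say, its third literal, $A$ still collects $g_i$ by going straight to $\hat y_{k_3}$ and jumping into the last checkpoint; this breaks the direction ``some assignment falsifies fewer than $k$ clauses $\Rightarrow$ $A$ cannot ensure the threshold''. The obstruction is structural: the progress through a clause chain would have to be remembered \emph{across} a revelation, but the state reached immediately after a revelation is shared among all clauses (this sharing is exactly what you need for memoryless $B$-strategies to be consistent assignments), and all other states belong to $A$. Any repair that adds memory (per-occurrence or per-context copies of the revelation states, or $B$-controlled chain states) destroys the correspondence between memoryless $B$-strategies and assignments on which your converse direction relies, and with per-occurrence revelation states the reduction becomes unsound for a different reason (an inconsistent memoryless $B$ can block all clauses, e.g.\ for the two clauses $x\lor x\lor x$ and $\lnot x\lor\lnot x\lor\lnot x$).

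The paper's proof sidesteps this difficulty by not attaching one reward per clause at all. It builds a layered loop $s_0\to\{x_1,\lnot x_1\}\to\cdots\to\{x_p,\lnot x_p\}\to\{t_{1,1},t_{1,2},t_{1,3}\}\to\cdots\to s_0$ entirely controlled by the adversary, and gives $A$ two B\"uchi targets per variable, $\{x_k\}\cup\{t_{j,p}\mid \ell_{j,p}=x_k\}$ and $\{\lnot x_k\}\cup\{t_{j,p}\mid \ell_{j,p}=\lnot x_k\}$. Every infinite play automatically satisfies at least $n$ of these objectives, and the count stays at exactly $n$ if and only if the play eventually fixes a valuation and only traverses clause-states corresponding to literals made true by it, i.e.\ if and only if the formula is satisfiable; so $A$ can ``ensure'' $n{+}1$ objectives (vacuously, since she controls nothing) exactly when the formula is unsatisfiable. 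If you want to keep your setup, you essentially have to adopt this counting-at-the-literal-level idea rather than per-clause verification chains.
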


\begin{proof}
  We~reduce (the complement of) \SAT into the value problem for
  two-player turn-based games with B\"uchi objectives with the
  counting preorder.  Consider an instance
  \[\phi = C_1 \land \cdots \land C_m\]
  with $C_j=\ell_{j,1} \lor \ell_{j,2} \lor \ell_{j,3}$, over a set of
  variables~$\{x_1,\ldots,x_n\}$.  With~$\phi$, we~associate a
  two-player turn-based game~$\calG$.  Its set of states is made of
  \begin{itemize}
  \item a set containing the unique initial state~$V_0=\{s_0\}$,
  \item a set of two states~$V_k=\{x_k, \lnot x_k\}$ for each~$1\leq
    k\leq n$,
  \item and a set of three states~$V_{n+j}=\{t_{j,1}, t_{j,2},
    t_{j,3}\}$ for each~$1\leq j\leq m$.
  \end{itemize}
  Then, for each~$0\leq l\leq n+m$, there is a transition between any
  state of~$V_l$ and any state of~$V_{l+1}$ (assuming
  $V_{n+m+1}=V_0$).

  The game involves two players: player~$B$ owns all the states, but
  has no objectives (she~always loses). Player~$A$ has a set of
  B\"uchi objectives defined by $T^A_{2\cdot k} = \{x_k\} \cup
  \{t_{j,p} \mid \ell_{j,p} = x_k \}$, $T^A_{2\cdot k+1} = \{ \lnot
  x_k\} \cup \{t_{j,p} \mid \ell_{j,p} = \lnot x_k \}$, for $1\leq
  k\leq n$.  Notice that at least $n$ of these objectives will be
  visited infinitely often along any infinite play. We prove that if
  the formula is not satisfiable, then at least $n+1$ objectives will
  be fulfilled, and conversely.

  \smallskip Assume the formula is satisfiable, and pick a witnessing
  valuation~$v$.  We define a strategy~$\sigma_B$ for~$B$ that
  ``follows'' valuation~$v$: from states in~$V_{k-1}$, for any~$1\leq
  k\leq n$, the strategy plays towards~$x_k$ if $v(x_k)=\texttt{true}$
  (and to~$\lnot x_k$ otherwise).  Then, from a state in~$V_{n+l-1}$
  with~$1\leq l\leq m$, it~plays towards one of the~$t_{j,p}$ that
  evaluates to true under~$v$ (the one with least
  index~$p$,~say). This way, the number of targets of player~$A$ that
  are visited infinitely often is~$n$.

  Conversely, pick a play in~$\calG$ s.t. at most (hence exactly) $n$
  objectives of~$A$ are fulfilled. In particular, for any~$1\leq k\leq
  n$, this play never visits one of~$x_k$ and~$\lnot x_k$, so that it
  defines a valuation~$v$ over $\{x_1,\ldots, x_n\}$. Moreover, any
  state of~$V_{n+l}$, with~$1\leq l\leq p$, that is visited infinitely
  often must correspond to a literal that is made true by~$v$, as
  otherwise this would make one more objective that is fulfilled
  for~$A$. As~a consequence, each clause of~$\phi$ evaluates to true
  under~$v$, and the result follows.
\end{proof}

\begin{figure}[!ht]
  \centering
    \begin{tikzpicture}[xscale=1.4,thick]
    \tikzstyle{rond}=[draw,circle,minimum size=7mm,inner sep=0mm,fill=black!10]
      \draw (0,0) node [rond] (A1) {$s_0$}; 
      \draw (1,1) node [rond] (B1) {$x_1$};
      \draw (1,-1) node [rond] (C1) {$\lnot
        x_1$}; 
      \draw (2,1) node [rond](B2){$x_2$};
      \draw (2,-1) node [rond](C2){$\lnot x_2$};
      \draw (3,1) node [rond] (B3) {$x_3$};
      \draw (3,-1) node [rond] (C3){$\lnot
          x_3$}; 
      \draw (4.5, 1.2) node [rond] 
        (B4) {$t_{1,1}$};
      \draw (4.5, 0  ) node [rond] 
      (C4) {$t_{1,2}$};
      \draw (4.5,-1.2) node [rond]
      (D4) {$t_{1,3}$};
      
      \draw (6, 1.2) node [rond] 
      (B5) {$t_{2,1}$};
      \draw (6, 0  ) node [rond] 
      (C5) {$t_{2,2}$};
      \draw (6,-1.2) node [rond] 
      (D5){$t_{2,3}$};
      \draw (7,0) node (A6) {};

      \draw [-latex'] (A1) -- (B1);
      \draw [-latex'] (A1) -- (C1);
      \draw [-latex'] (B1) -- (B2);
      \draw [-latex'] (C1) -- (B2);
      \draw [-latex'] (B1) -- (C2);
      \draw [-latex'] (C1) -- (C2);

      \draw [-latex'] (B2) -- (B3);
      \draw [-latex'] (C2) -- (B3);
      \draw [-latex'] (B2) -- (C3);
      \draw [-latex'] (C2) -- (C3);

      \draw [-latex'] (B3) -- (B4);
      \draw [-latex'] (C3) -- (B4);
      \draw [-latex'] (B3) -- (C4);
      \draw [-latex'] (C3) -- (C4);
      \draw [-latex'] (B3) -- (D4);
      \draw [-latex'] (C3) -- (D4);
      \draw [-latex'] (B4) -- (B5);
      \draw [-latex'] (C4) -- (B5);
      \draw [-latex'] (D4) -- (B5);
      \draw [-latex'] (B4) -- (C5);
      \draw [-latex'] (C4) -- (C5);
      \draw [-latex'] (D4) -- (C5);
      \draw [-latex'] (B4) -- (D5);
      \draw [-latex'] (C4) -- (D5);
      \draw [-latex'] (D4) -- (D5);
      \draw[rounded corners=4mm] (B5) -| (7,-1.6);
      \draw[rounded corners=4mm] (C5) -| (7,-1.6);
      \draw[rounded corners=4mm] (D5) -| (7,-1.6);
      \draw [rounded corners=4mm,-latex'] (7,-1.6) |- (5,-2) -| (A1);
    \end{tikzpicture}
  \caption{The game~$\calG$ associated with formula~$\phi$ of~\ref{eq:counting}}
  \label{fig:counting}
\end{figure}
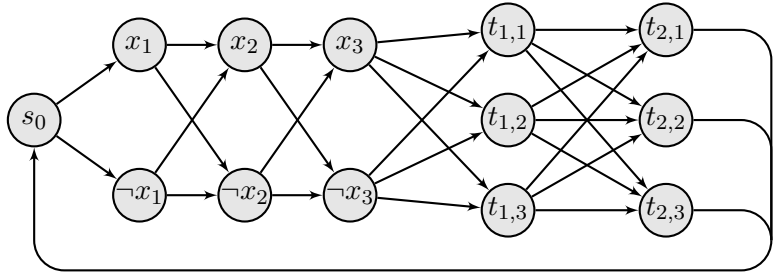
\begin{example}
  We illustrate the construction of the previous proof in
  Figure~\ref{fig:counting} for the formula
  \begin{equation}\label{eq:counting}
    \varphi = (x_1 \lor x_2 \lor \lnot x_3) \land (\lnot x_1 \lor
    x_2 \lor \lnot x_3)\,.
  \end{equation}
  The targets for player $A$ are $T_1 = \{ x_1, t_{1,1}\}$, $T_2 = \{
  \lnot x_1, t_{2,1}\}$, $T_3 = \{ x_2, t_{1,2}, t_{2,2}\}$, $T_4 =
  \{ \lnot x_2\}$, $T_5 = \{ x_3\}$, $T_6 = \{ \lnot x_3, t_{1,3},
  t_{2,3}\}$.  Player~$A$ cannot ensure visiting infinitely often four
  target sets, therefore the formula is satisfiable.
\end{example}

\begin{lemma}
  For finite games with ordered B\"uchi objectives that use the
  counting preorder, the NE existence problem is \NP-hard.
\end{lemma}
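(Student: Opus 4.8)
The plan is to derive the result from the $\coNP$-hardness of the value problem (Lemma~\ref{prop:buchi-counting-nphard}) together with the value-to-existence reduction of Proposition~\ref{lem:link-value-exist}. Recall that, from a two-player zero-sum determined game in which player~$A$'s preference relation is total, Noetherian and almost-well-founded, Proposition~\ref{lem:link-value-exist} builds a game $\calG_\pi$ --- by prepending to the original game (with initial state~$s$) a matching-pennies module between $A$ and~$B$, with a fresh initial state and leading either to an absorbing sink~$s_1$ or to~$s$ --- such that $\calG_\pi$ has a Nash equilibrium from its new initial state \emph{iff} player~$A$ cannot ensure~$\pi$ from~$s$. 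So, starting from the game $\calG$ built in the proof of Lemma~\ref{prop:buchi-counting-nphard} from a \SAT{} instance~$\phi$ over $n$ variables (where $A$ owns no state and has $2n$ Büchi objectives compared with the counting preorder), and taking for the threshold~$\pi$ a set of states whose induced payoff has weight $n+1$, the argument in the proof of Lemma~\ref{prop:buchi-counting-nphard} shows that $A$ can ensure~$\pi$ in~$\calG$ iff every play of~$\calG$ fulfils at least $n+1$ objectives iff $\phi$ is \emph{unsatisfiable}. Hence $\calG_\pi$ has a Nash equilibrium iff $\phi$ is satisfiable, and since $\calG_\pi$ is finite of polynomial size, $\phi\mapsto\calG_\pi$ is the desired polynomial-time reduction.

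It then remains to check that both ends of the reduction stay within the relevant class and meet the hypotheses of Proposition~\ref{lem:link-value-exist}. For the hypotheses on~$\calG$: the counting preorder on $\{0,1\}^{2n}$ is total, hence so is the preference $\prefrel_A$ it induces via~$\payoff_A$; being a total preorder with finitely many classes (the weights $0,\dots,2n$), it is Noetherian and almost-well-founded by the remark following Proposition~\ref{lem:link-value-constr}. We make $\calG$ zero-sum by declaring $\prefrel_B$ to be the reverse of $\prefrel_A$ --- this leaves the value problem for~$A$ untouched --- and $\calG$ is determined because ``at least $n+1$ objectives are fulfilled'' is an $\omega$-regular (Muller) winning condition, so the associated zero-sum game is determined. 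Note that, as fixed in Section~\ref{sec:prefrel}, thresholds for ordered Büchi objectives are given as sets of states, so the relevant datum is just the induced payoff class; a set inducing weight $n+1$ always exists (e.g.\ one state taken from $n+1$ distinct target sets of~$A$), independently of~$\phi$, which is why we need not exhibit an actual play of that weight.

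For the target end, one checks that $\calG_\pi$ is again a finite game all of whose players have an ordered Büchi objective with the counting preorder. Player~$A$ keeps her $2n$ target sets; to realise the requirement that $s_0\cdot s_1^\omega$ be $\prefrel^\pi_A$-equivalent to $s_0\cdot\pi^+$, where $\pi^+$ is the class of weight~$n$ (one below the threshold), it suffices to put the sink~$s_1$ into exactly~$n$ of $A$'s target sets: then $\Inf(s_0\cdot s_1^\omega)=\{s_1\}$ gives that play weight exactly~$n$, while plays entering the original game are unaffected since they never visit~$s_1$. Player~$B$'s preference $\prefrel^\pi_B$ (all plays reaching~$s_1$ form the top class, all others the bottom class) is exactly the single-objective preference given by the Büchi condition $\{s_1\}$, which is an ordered Büchi objective with the counting preorder on $\{0,1\}^1$. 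Thus $\calG_\pi$ lies in the required class, completing the reduction. There is no real obstacle here: the value-problem lower bound transfers essentially for free to the NE existence problem via Proposition~\ref{lem:link-value-exist}, and the only work is the bookkeeping above, together with the standard fact that finite games with $\omega$-regular winning conditions are determined.
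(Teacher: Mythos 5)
Your proof is correct and takes essentially the same route as the paper: it reuses the game of Lemma~\ref{prop:buchi-counting-nphard} and the construction of Section~\ref{sec:link-value-exist} (Proposition~\ref{lem:link-value-exist}), giving $B$ the single B\"uchi target $\{s_1\}$ and adding $s_1$ to $n$ of $A$'s target sets so that the sink play falls in the class $\pi^+$ of weight $n$, exactly as in the paper's argument. The additional checks you carry out (totality, Noetherian-ness and almost-well-foundedness of the counting preference, determinacy of the underlying turn-based game, and the observation that only the payoff class of the threshold matters, so no actual play of weight $n+1$ need be exhibited) are details the paper leaves implicit.
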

\begin{proof}
  Let $\calG$ be the game we constructed for
  Lemma~\ref{prop:buchi-counting-nphard}.  We construct the game
  $\calG''$ from $\calG$ as described in
  Section~\ref{sec:link-value-exist}.  The preference in $\calG'$ can
  still be described with ordered B\"uchi objectives and the counting
  preorder: the only target set of $B$ is $\{s_1\}$ and we add $s_1$
  to $n$ different targets of $A$, where $n$ is the number of
  variables as in Lemma~\ref{prop:buchi-counting-nphard}.  From
  Proposition~\ref{lem:link-value-exist} there is a Nash equilibrium
  in~$\calG''$ from~$s_0$ \iff $A$ cannot ensure visiting at least
  $n+1$ targets infinitely often. Hence the NE existence problem is
  \NP-hard.
\end{proof}

This proves also \NP-hardness for the constrained NE existence problem
for ordered B\"uchi objectives with the counting preorder. Hardness
results for preorders given by monotonic Boolean circuits follow from
the above since the counting preorder is a special case of preorder
given as a monotonic Boolean circuit (and the counting preorder can be
expressed as a polynomial-size monotonic Boolean circuit).

We now show hardness in the special case of preorders with (roughly)
at most one maximal element below $\One$.

\begin{lemma}\label{prop:buchi-nphard}
  For finite turn-based games with ordered B\"uchi objectives with a
  monotonic preorder for which there is an element~$v$ such that for
  every $v'$, $v' \ne \One \Leftrightarrow v' \preorder v$, the
  constrained NE existence problem is \NP-hard.
\end{lemma}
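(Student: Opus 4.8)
The plan is to reduce \SAT{} to the constrained NE existence problem. Fix an instance $\phi = C_1\wedge\dots\wedge C_m$ over variables $x_1,\dots,x_n$, with $C_j=\ell_{j,1}\vee\ell_{j,2}\vee\ell_{j,3}$; we may assume (after adding always-true clauses $(x_k\vee\lnot x_k\vee\lnot x_k)$, which do not change satisfiability) that each variable occurs both positively and negatively. Let $v_\ell\in\{0,1\}^\ell$ be the element guaranteed by the hypothesis, so that $w\preorder v_\ell \iff w\ne\One$ for every $w\in\{0,1\}^\ell$.

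First I would build the turn-based game $\calG_\phi$: its states are clause states $c_1,\dots,c_m$ and literal states $\ell_{j,p}$ ($1\le j\le m$, $1\le p\le 3$); from $c_j$ there is an edge to each $\ell_{j,1},\ell_{j,2},\ell_{j,3}$, and from each $\ell_{j,p}$ a single edge to $c_{(j\bmod m)+1}$; the initial state is $c_1$. A controller $D$ owns all states (so the only genuine choice is which literal $D$ selects at each $c_j$); $D$ has the single B\"uchi objective $\Stat$, hence constant payoff $1$, and is therefore indifferent among all plays. For each variable $x_k$ there is a player $B_k$ with an ordered B\"uchi objective consisting of two B\"uchi objectives $\Omega^{B_k}_1,\Omega^{B_k}_2$ whose targets are the literal states that are positive resp.\ negative occurrences of $x_k$, using the preorder $\preorder_2$; $B_k$ has no move choice, so $\calG_\phi$ is turn-based, has polynomial size, and every player uses a component of the fixed preorder family. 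The constraint is: for $D$, both thresholds induce payoff $1$ (vacuous); for each $B_k$, $\pi^-_{B_k}$ induces payoff $\Zero$ (vacuous, since $\Zero$ is the minimum of a monotonic preorder) and $\pi^+_{B_k}$ induces payoff $v_2$ (realisable: both targets of $B_k$ are now non-empty).

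For correctness, note that every strategy profile of $\calG_\phi$ is a Nash equilibrium ($D$ is indifferent and the $B_k$'s have no choice), so a constrained equilibrium from $c_1$ exists iff there is a play $\pi$ with $\payoff_{B_k}(\pi)\preorder_2 v_2$ for all $k$, i.e.\ --- by the defining property of $v$ --- with $\payoff_{B_k}(\pi)\ne\One$ for all $k$. Every play cycles through $c_1,\ell,c_2,\ell,\dots,c_m,\ell,c_1,\dots$, so all $c_j\in\Inf(\pi)$ and the set $S$ of literals whose states lie in $\Inf(\pi)$ contains at least one literal of each clause; moreover $\payoff_{B_k}(\pi)\ne\One$ says exactly that $S$ does not contain both literals of $x_k$. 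Hence such a $\pi$ exists iff $\phi$ has a consistent set of literals meeting every clause, i.e.\ iff $\phi$ is satisfiable: from a satisfying valuation $\nu$, let $D$ always pick in $C_j$ a literal true under $\nu$; conversely extend such an $S$ to a valuation.

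The main obstacle --- and the only place the hypothesis on $\preorder$ is used --- is expressing the upper bound ``$\payoff_{B_k}\ne\One$'' as the single inequality $\payoff_{B_k}\preorder v_2$, which is precisely what the existence of $v$ buys and which fails for, e.g., the subset preorder (consistent with that case being in \P). The remainder is routine bookkeeping: checking $\calG_\phi$ is turn-based and polynomial, that the thresholds are realisable after the padding, and that the players only ever use components of the fixed preorder family; no reduction from the value problem or via Section~\ref{sec:link-constr-exist} is needed (and indeed, since the binding constraint is an \emph{upper} bound, one should not expect to internalise it into an unconstrained NE existence instance).
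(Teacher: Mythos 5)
Your reduction is correct and, at its core, uses the same trick as the paper: one player $B_k$ per variable with two B\"uchi objectives (positive vs.\ negative occurrences of $x_k$), and the hypothesised element $v$ to turn the consistency requirement ``$\payoff_{B_k}(\pi)\ne\One$'' into a single upper-bound constraint $\payoff_{B_k}(\pi)\preorder v$. Where you genuinely diverge is in how clause satisfaction is enforced: the paper keeps a separate assignment gadget (states $s_k$ with successors $x_k,\lnot x_k$, and targets $\{x_k\},\{\lnot x_k\}$ for $B_k$) and gives the all-controlling player one B\"uchi objective per clause, so the constraint combines a \emph{lower} bound (that player gets payoff $\One$) with the upper bounds on the $B_k$'s; you instead hard-wire the clause traversal into the arena (one literal occurrence per clause per round), which makes the controller completely indifferent and leaves only upper-bound constraints. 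Your variant is leaner on the constraint side (and your closing remark about why the upper bound cannot be internalised via Section~\ref{sec:link-constr-exist} is exactly right); the paper's variant pays for its extra lower bound with cleaner thresholds, since the controller chooses the polarity of each $x_k$ freely every round, so $B_k$-payoffs $(1,0)$ and $(0,1)$ are always realisable by actual plays of the game, independently of the clause structure.

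The one step you should tighten is the sentence ``realisable: both targets of $B_k$ are now non-empty''. Non-emptiness of the two targets does not by itself yield a play whose $B_k$-payoff is (equivalent to) $v$: for the lexicographic preorder $v=(1,0)$, and if some clause consists of three copies of $\lnot x_k$, then every play visits a negative occurrence of $x_k$ infinitely often, so no play has payoff $(1,0)$ and your intended threshold play does not exist. This is harmless under the reading the paper itself implicitly adopts, where the thresholds are given as payoff bounds (encoded by a set of states visited infinitely often); under the stricter reading where thresholds must be plays of the game, add the standard normalisation that every clause contains at least two distinct variables (or fall back on the paper's separate assignment gadget), after which your padding argument, and hence the whole reduction, goes through.
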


\begin{proof}
  Let us consider a formula $\phi = C_1 \land \cdots \land C_m$
  For~each variable~$x_i$, our~game has one player~$B_i$ and three
  states~$s_i$, $x_i$ and~$\lnot x_i$. The objectives of~$B_i$ are the
  sets~$\{x_i\}$ and~$\{\lnot x_i\}$.  Transitions go from each~$s_i$
  to~$x_i$ and~$\lnot x_i$, and from~$x_i$ and~$\lnot x_i$
  to~$s_{i+1}$ (with $s_{n+1}=s_0$). Finally, an~extra player~$A$ has
  full control of the game (\ie, she~owns all the states) and has $n$
  objectives, defined by $T^A_i = \{\ell_{i,1}, \ell_{i,2},
  \ell_{i,3}\}$ for $1\leq i\leq n$.  The construction is illustrated
  in Figure~\ref{fig:buchi-nphard}.
\begin{figure}[!ht]
  \centering
    \begin{tikzpicture}[xscale=1.4,thick]
    \tikzstyle{rond}=[draw,circle,minimum size=7mm,inner sep=0mm,fill=black!10]
      \draw (0,0) node [rond] (A1) {$s_1$}; 
      \draw (1,.8) node [rond] (B1) {$x_1$};
      \draw (1,-.8) node [rond] (C1) {$\lnot
        x_1$}; 
      \draw (2,0) node [rond] (A2) {$s_2$};
        \draw (3,.8) node [rond](B2){$x_2$};
        \draw (3,-.8) node [rond](C2){$\lnot x_2$};
        \draw (4,0) node [rond] (A3) {$s_3$}; 
        \draw (5,.8) node [rond] (B3) {$x_3$};
        \draw (5,-.8) node [rond] (C3){$\lnot
          x_3$}; 
        \draw (6,0) node [rond] (A4) {$s_4$};
        \draw (7,.8) node [rond] (B4) {$x_4$};
        \draw (7,-.8) node [rond] (C4){$\lnot x_4$};

      \draw [-latex'] (A1) -- (B1);
      \draw [-latex'] (A1) -- (C1);
      \draw [-latex'] (B1) -- (A2);
      \draw [-latex'] (C1) -- (A2);
      \draw [-latex'] (A2) -- (B2);
      \draw [-latex'] (A2) -- (C2);

      \draw [-latex'] (B2) -- (A3);
      \draw [-latex'] (C2) -- (A3);

      \draw [-latex'] (A3) -- (B3);
      \draw [-latex'] (A3) -- (C3);
      \draw [-latex'] (B3) -- (A4);
      \draw [-latex'] (C3) -- (A4);
      \draw [-latex'] (A4) -- (B4);
      \draw [-latex'] (A4) -- (C4);
      \draw[rounded corners=4mm] (B4) -| (8,-1.2);
      \draw[rounded corners=4mm] (C4) -| (8,-1.2);
      \draw [rounded corners=4mm,-latex'] (8,-1.2) |- (5,-1.6) -| (A1);
    \end{tikzpicture}
  \caption{The B\"uchi game for a formula with $4$ variables}
  \label{fig:buchi-nphard}
\end{figure}
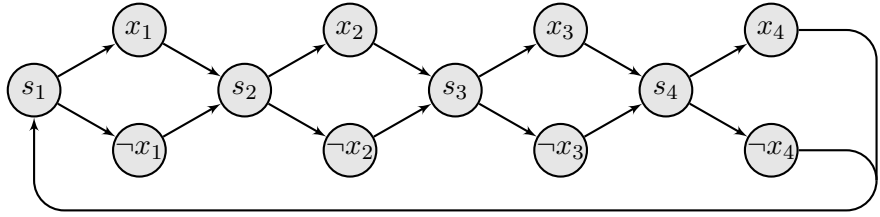

\smallskip We show that formula~$\phi$ is satisfiable \iff there is a
Nash equilibrium where each player~$B_i$ gets payoff~$\beta_i$
satisfying~$\beta_i\preorder v$ (hence $\beta_i\not=(1,1)$), and
player~$A$ gets payoff~$\One$.

First assume that the formula is satisfiable, and pick a witnessing
valuation~$u$. By~playing according to~$u$, player~$A$ can satisfy all
of her objectives (hence she cannot improve her payoff, since the
preorder is monotonic). Since she alone controls all the game, the
other players cannot improve their payoff, so that this is a Nash
equilibrium. Moreover, since~$A$ plays memoryless, only one of~$x_i$
and~$\lnot x_i$ is visited for each~$i$, so~that the payoff~$\beta_i$
for~$B_i$ satisfies~$\beta_i\preorder v$.  Conversely, if there is a
Nash equilibrium with the desired payoff, then by hypothesis, exactly
one of each~$x_i$ and~$\lnot x_i$ is visited infinitely often (so that
the payoff for~$B_i$ is not~$(1,1)$), which defines a
valuation~$u$. Since in this Nash equilibrium, player~$A$ satisfies
all its objectives, one state of each target is visited, which means
that under valuation~$u$, formula~$\phi$ evaluates to true.
\end{proof}

\subsubsection{Applications.}
We now describe examples of preorders which satisfy the conditions on
the existence of an element $v$ such that $v' \ne \One \Leftrightarrow
v' \preorder v$.

\begin{lemma}\label{lem:second-maximal}
  Conjunction, counting and lexicographic preorders have an element
  $v$ such that $v' \ne \One \Leftrightarrow v' \preorder v$.
\end{lemma}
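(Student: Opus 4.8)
The plan is to handle the three preorders separately; in each case I would exhibit an explicit witness $v\in\{0,1\}^n$ (for every~$n$, in the sense of the footnote to Proposition~\ref{prop:buchi-np}) and check the required equivalence $v'\ne\One \Leftrightarrow v'\preorder v$ by directly unfolding the definitions of Section~\ref{sec:prefrel}.

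For the \textbf{conjunction} preorder I would take $v=\Zero$ (indeed any vector distinct from~$\One$ works, since conjunction has only the two classes $\One$ and its complement). As $v\ne\One$, for an arbitrary~$v'$ the relation $v'\preorder v$ holds precisely when $v'_i=0$ for some~$i$, that is, precisely when $v'\ne\One$. For the \textbf{counting} preorder I would take $v=(1,\dots,1,0)$, i.e.\ any vector with exactly $n-1$ coordinates equal to~$1$; then $v'\preorder v$ iff $|\{i\mid v'_i=1\}|\le n-1$ iff $v'\ne\One$. For the \textbf{lexicographic} preorder I would again take $v=(1,\dots,1,0)$; the key point is that the lexicographic order is total, that $\One$ is its greatest element, and that $v$ is the unique element covered by~$\One$. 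To see the last point, suppose $v\prec w$ with $w\ne\One$ and let $i$ be the least index with $w_i=0$: if $i<n$ then $v$ and~$w$ agree on the first $i-1$ coordinates while $v_i=1>0=w_i$, so $w\prec v$, a contradiction; and $i=n$ forces $w=v$, again a contradiction. By totality it follows that $v'\not\preorder v$ iff $v\prec v'$ iff $v'=\One$, i.e.\ $\{v'\mid v'\preorder v\}=\{0,1\}^n\setminus\{\One\}$, which is exactly the claim.

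I do not expect any genuine obstacle: two of the three verifications are one-line computations, and the only place where a brief argument rather than a computation is needed is the lexicographic case, where one must rule out a vector strictly between $(1,\dots,1,0)$ and~$\One$ — this is the ``least index of a~$0$'' argument above. To close, I would note that combining this lemma with Lemma~\ref{prop:buchi-nphard} (whose other hypothesis, monotonicity of the preorder, is satisfied by conjunction, counting and lexicographic, as already observed in Section~\ref{sec:prefrel}) yields the announced \NP-hardness of the constrained NE existence problem for ordered B\"uchi objectives with the conjunction or lexicographic preorders listed in Table~\ref{table-buchi}.
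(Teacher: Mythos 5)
Your proof is correct and takes essentially the same route as the paper: exhibit an explicit witness and unfold the definitions, with the paper using $v=(1,\dots,1,0)$ for all three preorders (your choice of $\Zero$ for conjunction and your totality/unique-cover argument for the lexicographic case are only minor stylistic variations). If anything, you are slightly more thorough, since you verify both directions of the equivalence ($\One\not\preorder v$ included), whereas the paper only spells out that $v'\ne\One$ implies $v'\preorder v$.
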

\begin{proof}
  Consider $v = (1,\dots,1,0)$, and $v' \ne \One$.  For conjunction,
  there is~$i$ such that $v'_i=0$, so $v'\preorder v$.  For counting,
  $|\{i\mid v'_i = 1\}| < n$, so $v'\preorder v$.  For the
  lexicographic preorder, let $i$ be the smallest index such that
  $v'_i =0$, and either $v_i=1$ and $v_j = v'_j$ for all $j<i$, or for
  all $j \in \{1,\ldots,n\}$, $v_j = v'_j$.  In both cases $v'
  \preorder v$.
\end{proof}

As a consequence, the result of Lemma~\ref{prop:buchi-nphard} applies
in particular to the conjunction and lexicographic preorders, for
which the constrained NE existence problem is thus \NP-complete. Hence we
get:

\begin{corollary}
  For finite games with ordered B\"uchi objectives with either of the
  conjunction or the lexicographic preorders, the constrained
  NE existence problem is \NP-complete. 
\end{corollary}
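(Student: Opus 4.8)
The statement is a corollary that we obtain by assembling the bounds already established in this subsection, so the plan is simply to check that the two preorders in question fall under the scope of the relevant propositions. For \NP-membership, I would invoke the first item of Proposition~\ref{prop:buchi-np}: it suffices to observe that both the conjunction preorder and the lexicographic preorder are monotonic (this was noted in Section~\ref{sec:prefrel}, where all examples (1)--(6) are listed as monotonic) and that each of them can be written as a polynomial-size monotonic Boolean circuit (also recorded in Section~\ref{sec:prefrel}, where it is stated that the encoding of preorders (1)--(6) takes constant size, and that any monotonic preorder can be rewritten as a monotonic Boolean circuit with negations pushed to the leaves). Hence a finite game with ordered B\"uchi objectives and conjunction (resp.\ lexicographic) preorders is, up to a polynomial-time transformation, an instance with preorders given by monotonic Boolean circuits, and Proposition~\ref{prop:buchi-np} places the constrained NE existence problem in \NP.

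For \NP-hardness, I would appeal to Lemma~\ref{prop:buchi-nphard}, which shows \NP-hardness of the constrained NE existence problem for finite turn-based games with ordered B\"uchi objectives whose preorder is monotonic and admits an element~$v$ with the property that $v' \ne \One \Leftrightarrow v' \preorder v$. The only thing left to verify is that conjunction and lexicographic preorders satisfy this second-maximal-element condition, and this is exactly the content of Lemma~\ref{lem:second-maximal} (taking $v = (1,\dots,1,0)$). Combining Lemma~\ref{lem:second-maximal} with Lemma~\ref{prop:buchi-nphard} therefore yields \NP-hardness for both preorders.

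Putting the two halves together gives \NP-completeness, which proves the corollary. I do not expect any genuine obstacle here: the work was done in Proposition~\ref{prop:buchi-np}, Lemma~\ref{prop:buchi-nphard} and Lemma~\ref{lem:second-maximal}, and the corollary is purely a matter of checking that the hypotheses of those results apply to the conjunction and lexicographic preorders (monotonicity, polynomial-size circuit encoding, and existence of the distinguished element~$v$). The mildest point of care is to make sure the reduction used in the membership direction (replacing the circuit-based presentation by the explicit conjunction/lexicographic preorder, or vice versa) is polynomial, which follows from the constant-size encoding convention adopted for preorders (1)--(6).
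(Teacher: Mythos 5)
Your proposal is correct and follows essentially the same route as the paper: NP-membership comes from the first item of Proposition~\ref{prop:buchi-np} (conjunction and lexicographic are monotonic and admit small monotonic-circuit presentations, so the monotonic-circuit upper bound applies), and NP-hardness comes from combining Lemma~\ref{prop:buchi-nphard} with Lemma~\ref{lem:second-maximal}, which is exactly how the paper derives the corollary.
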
\section{Ordered reachability objectives}\label{sec:reach}
In this Section we assume that preference relations of the players are
given by ordered reachability objectives (as defined in
Section~\ref{sec:prefrel}), and we prove the results listed in
Table~\ref{table-reach} (page~\pageref{table-reach}).  We will first
consider the general case when preorders are given by Boolean circuits
and we will show that the various decision problems are
\PSPACE-complete. We will even notice that the hardness result holds
for several simpler preorders.  We will finally improve this result in
a number of cases.

For the rest of this section, we fix a game $\calG=\tuple{\Stat,\Agt,
  \Act,\Allow, \Tab,(\mathord\prefrel_A)_{A\in\Agt}}$, and we assume that
$\prefrel_A$ is given by an ordered reachability objective $\omega_A =
\langle (\Omega_i^A)_{1 \le i \le n_A},\allowbreak (\mathord\preorder_A)_{A \in \Agt}
\rangle$.

\subsection{General case: preorders are given as circuits}
\label{ssec-generalcase}

We prove the following result:

\begin{proposition}\label{prop:reach-pspace}\hfill
  \begin{itemize}
  \item For finite games with ordered reachability objectives where
    preorders are given by Boolean circuits, the value problem, the NE
    existence problem and
  the    constrained NE existence problem are in \PSPACE.
  \item For finite two-player turn-based games with ordered
    reachability objectives where preorders have~$\One$ as a unique
    maximal element, the value problem is \PSPACE-hard.
  \item For finite two-player games with ordered reachability
    objectives where preorders have~$\One$ as a unique maximal
    element, and have an element $v$ such that for every $v'$, $v' \ne
    \One \Leftrightarrow v' \preorder v$, then the NE existence problem and the
    constrained NE existence problem are \PSPACE-hard.
  \end{itemize}
\end{proposition}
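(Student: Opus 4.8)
The three items are proved separately.

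\emph{Upper bounds (first item).} The plan is to mimic the proof of Theorem~\ref{thm:buchi-pspace}, replacing ``$\Inf$'' by ``$\Occ$'' everywhere. Fix a payoff vector $(v^A)_{A\in\Agt}$ (one may treat the finitely many of them in turn, or guess one). For a player $A$ whose preorder is given by a circuit $C_A$, build from $C_A$ a circuit $D_A$ whose inputs are the states of $\calG$, by hard-wiring its $w$-inputs to $v^A$, feeding each former $v_i$-input the disjunction of the states in $\Omega^A_i$, and negating the output; then $D_A[\Occ(\rho)]$ evaluates to true iff $\payoff_A(\rho)\not\preorder_A v^A$, i.e.\ iff $\rho$ is an improvement for $A$. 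This turns each player's objective into a single objective that is a Boolean circuit read on the set of \emph{visited} states. For the value problem one then makes the game turn-based (let $A$ move first and the coalition react, as at the beginning of Section~\ref{sec:single}), leaving $A$ with one such objective; for the (constrained) NE existence problems one runs the suspect-game algorithm of Section~\ref{subsec:circuits}: by Proposition~\ref{lem:polynomial-size} the suspect game $\calJ(\calG)$ has polynomial size, by Proposition~\ref{lem:play-length} it suffices to guess a lasso outcome $\pi\cdot\tau^\omega$ with $|\pi|,|\tau|\le|\Stat|^2$ along which \Adam\ obeys \Eve, check its payoff meets the constraints, and verify condition~\ref{cond:win} of Theorem~\ref{thm:eq-win} at every position by deciding, in the suspect game continued after the visited prefix and with the targets already hit recorded, whether \Eve\ wins. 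The only genuinely new point, and the main obstacle, is thus: two-player turn-based games over a polynomial-size arena whose winning condition is a Boolean combination (given by a circuit) of reachability objectives can be solved in polynomial space. This holds because the set of states visited so far is a \emph{monotone} record of polynomial bit-size, so such a game behaves like a reachability/safety game whose value is computable in polynomial space; it is consistent with, and subsumes, the \PSPACE-completeness of generalized reachability that the next items establish.

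\emph{Hardness of the value problem (second item).} I reduce from the validity problem for quantified Boolean formulas, which is \PSPACE-complete. Given $\exists x_1\,\forall x_2\,\exists x_3\cdots Q_n x_n.\ \phi$ with $\phi=C_1\wedge\cdots\wedge C_m$ in conjunctive normal form, build a two-player turn-based game $\calG_\phi$ that reads the variables once: from an initial state $s_0$ the play goes through states $x_1^{b_1},\dots,x_n^{b_n}$ and then loops in a sink; the move choosing the value of $x_i$ belongs to player $A$ if $x_i$ is existentially quantified and to player $B$ otherwise. Player $A$ has the ordered reachability objective with the preorder $\preorder_A$ under consideration and one reachability objective per clause, with target $T_j=\{x_i^1\mid x_i\in C_j\}\cup\{x_i^0\mid\lnot x_i\in C_j\}$. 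An outcome reaches $T_j$ iff the valuation it encodes satisfies $C_j$, so $A$ can force the payoff to be $\One$ iff she has a Skolem strategy making $\phi$ true, i.e.\ iff the formula is valid. Since $\One$ is the unique maximal element of $\preorder_A$, it is the only element of its top equivalence class (an element equivalent to a maximal element is maximal), so forcing payoff $\One$ is the same as ``ensuring'' the threshold play $\pi$ whose set of visited states is declared to be all of $\Stat$; hence the value problem is \PSPACE-hard. Taking $B$'s preference opposite to $A$'s makes $\calG_\phi$ a zero-sum determined game, reused below.

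\emph{Hardness of the (constrained) NE existence problems (third item).} Start from the zero-sum $\calG_\phi$ above: $\prefrel_A$ is total, Noetherian and almost-well-founded, and $\calG_\phi$ is determined, so Propositions~\ref{lem:link-value-constr} and~\ref{lem:link-value-exist} apply with $\pi$ of payoff $\One$ and $\pi^+$ the (bottom) class of all non-$\One$ payoffs. By Proposition~\ref{lem:link-value-constr}, the game $\calG'$ obtained by trivialising $B$'s preference — concretely, giving $B$ a single reachability objective with empty target, so all plays are equivalent for $B$ — has a Nash equilibrium whose outcome gives $A$ a payoff different from $\One$ iff $A$ cannot ensure $\pi$ in $\calG_\phi$, i.e.\ iff $\phi$ is not valid. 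Using the element $v$ with $v'\ne\One\Leftrightarrow v'\preorder v$, the constraint ``$A$'s payoff $\preorder v$'' is exactly ``$A$'s payoff $\ne\One$'', so this is an instance of the constrained NE existence problem for the class considered; since non-validity of \textup{QBF} is \PSPACE-hard, so is this problem. For the unconstrained NE existence problem, apply Proposition~\ref{lem:link-value-exist} to $\calG_\phi$: in the resulting game $\calG_\pi$, player $B$'s new preference (preferring the play $s_0\cdot s_1^\omega$ ending in the added sink $s_1$) is a single reachability objective, and $A$ keeps her ordered conjunction-of-reachability objective, with the play $s_0\cdot s_1^\omega$ visiting neither $s_0$ nor $s_1$ in any $T_j$, hence of payoff $\Zero$ and thus automatically equivalent to $s_0\cdot\pi^+$ (both lie in the bottom class). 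So $\calG_\pi$ is a two-player game with ordered reachability objectives whose preorders meet the hypotheses of the statement, and by Proposition~\ref{lem:link-value-exist} it has a Nash equilibrium iff $\phi$ is not valid; hence the NE existence problem is \PSPACE-hard as well.
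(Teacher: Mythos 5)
Your overall plan is the paper's (circuit-over-$\Occ$ objectives plus the suspect game and Theorem~\ref{thm:eq-win} for the upper bound, a QSAT game for value-hardness, Propositions~\ref{lem:link-value-constr} and~\ref{lem:link-value-exist} for the equilibrium problems), and your second item and your constrained-case argument are essentially the paper's. But there is a genuine gap in your unconstrained NE existence reduction. In the game $\calG_\pi$ of Section~\ref{sec:link-value-exist}, player~$A$'s preference must satisfy $s_0\cdot s_1^\omega \sim_A s_0\cdot\pi^+$, where $\pi^+$ lies in the \emph{highest} equivalence class strictly below $\pi$; since $\pi$ has payoff $\One$ and every $v'\ne\One$ satisfies $v'\preorder v$, that class is the class of $v$, not the bottom class. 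You leave $A$'s targets unchanged, so $s_0\cdot s_1^\omega$ has payoff $\Zero$, and you assert that $\Zero$ is ``automatically equivalent to $\pi^+$''. That holds only for preorders (like conjunction) in which all non-$\One$ payoffs are equivalent; for the counting or lexicographic preorders, which satisfy the hypotheses of the third item and are the intended applications, $\Zero\prec v$, so the game you build is not $\calG_\pi$ and Proposition~\ref{lem:link-value-exist} does not apply. The equivalence then genuinely fails: if $\phi$ is invalid but $A$ can still secure a payoff strictly above $\Zero$ inside $\calG_\phi$ (say, satisfy one clause), the intended equilibrium ``match at $s_0$ and go to $s_1$'' is not an equilibrium, because $A$ improves by mismatching, while any profile entering $\calG_\phi$ is destroyed by $B$ switching so as to match. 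The paper's construction is exactly where the hypothesis on $v$ is needed: add $s_1$ to every target $T^A_i$ with $v_i=1$, so that $s_0\cdot s_1^\omega$ receives payoff exactly $v$ and thus sits in the class of $\pi^+$, while remaining an ordered reachability objective with the same preorder. (You use $v$ only to phrase the constraint in the constrained case, where your argument is fine; it is in the unconstrained case that $v$ is indispensable.)

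For the \PSPACE\ upper bound, the claim you yourself isolate as ``the main obstacle'' --- that a polynomial-size turn-based game whose winning condition is a Boolean circuit over visited-state atoms can be solved in \PSPACE\ --- is precisely what the paper has to prove, and your one-line justification is not sound as stated: having a ``monotone record of polynomial bit-size'' just means the natural arena is an exponential-size, succinctly described reachability/safety game, and such games are not solvable in polynomial space in general. What rescues the bound here is the monotone structure itself: the paper builds the product $\calG'$ over $\Stat\times 2^\Stat$ and shows it game-simulates $\calG$ (Lemma~\ref{lem:construction-reach-general}), observes that \Eve's condition in the resulting suspect game is a single B\"uchi condition (hence memoryless determined), and then gives the alternating polynomial-time algorithm of Lemma~\ref{lem:alternating-algo}, whose correctness --- cutting every branch at the first repeated configuration, which is legitimate only because the visited-set component can only grow and the suspect set only shrink, together with the history-reduction argument turning the pruned strategy back into a genuine winning strategy --- is the technical heart that your proposal still leaves to be supplied. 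Your guess-the-lasso outer loop (via Proposition~\ref{lem:play-length}) and the prefix-aware check of condition~(3) of Theorem~\ref{thm:eq-win} are otherwise exactly the paper's algorithm.
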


\noindent The upper bound will be proven by reduction to games with ordered
B\"uchi objectives using game-simulation.

\subsubsection{Reduction to a game with ordered B\"uchi objectives.}
We show how to transform a game $\calG$ with preferences given by
Boolean circuits over reachability objectives into a new
game~$\calG'$, with preferences given by Boolean circuits over B\"uchi
objectives.  Although the size of~$\calG'$ will be exponential,
circuit order with B\"uchi objectives define prefix-independent
preference relations and thus checking condition~\ref{cond:win} of
Theorem~\ref{thm:eq-win} can be made more efficient.

States of $\calG'$ store the set of states of $\calG$ that have
already been visited. The set of states of~$\calG'$ is $\Stat' = \Stat
\times 2^\Stat$. The transitions are as follows: $(s,S) \rightarrow
(s',S')$ when there is a transition $s\rightarrow s'$ in~$\calG$ and
$S' = S \cup \{s'\}$. We keep the same circuits to define the
preference relations, but the reachability objectives are transformed
into B\"uchi objectives: a~target set~$T$ is transformed into $T' = \{
(s,S) \mid S \cap T \ne \varnothing\}$. Although the game has
exponential size, the preference relations only depend on the strongly
connected components the path ends~in, so~that we will be able to use
a special algorithm, which we describe after this lemma.

We define the relation $s \simulrel s'$ over states of $\calG$
and~$\calG'$ \iff $s' = (s,S)$ with $S\subseteq \Stat$, and prove that
it is a game simulation (see Definition~\ref{def-gsim}).

\begin{lemma}\label{lem:construction-reach-general}
  The relation~$\simulrel$ (resp. $\simulrel^{-1}$) is a game
  simulation between $\calG$ and~$\calG'$, and it is
  preference-preserving from $(s_0,(s_0,\{s_0\}))$
  (resp. $((s_0,\{s_0\}),s_0)$).
\end{lemma}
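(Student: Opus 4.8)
The plan is to dispatch the two game-simulation clauses of Definition~\ref{def-gsim} (for $\simulrel$ and for $\simulrel^{-1}$) by a single uniform computation, and then to treat preference-preservation by pinning down exactly which plays of $\calG'$ are $\simulrel$-related to a given play of $\calG$.

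For the game-simulation clauses, I would first record the structure of $\calG'$: its legal moves at a configuration $(s,S)$ are precisely the legal moves of $\calG$ at $s$, and $\Tab'((s,S),m_\Agt)=(\Tab(s,m_\Agt),\,S\cup\{\Tab(s,m_\Agt)\})$. Hence, in both directions, a move $m_\Agt$ may always be answered by the very same move $m'_\Agt=m_\Agt$. Condition~\ref{cond:sim22} is then immediate, since the first components of the two successors agree and $\simulrel$ constrains only first components. For condition~\ref{cond:sim21}, the observation to make is that a state $t'=(t,S'')$ of $\calG'$ admits a transition from $(s,S)$ only when $S''=S\cup\{t\}$: if $S''\ne S\cup\{t\}$ then $\Susp(((s,S),t'),m_\Agt)=\varnothing$ and any pre-image of $t'$ works, while if $S''=S\cup\{t\}$ then for \emph{every} move $n_\Agt$ one has $\Tab'((s,S),n_\Agt)=t'\iff\Tab(s,n_\Agt)=t$, so $\Susp(((s,S),t'),m_\Agt)=\Susp((s,t),m_\Agt)$ and the natural pre-image (namely $t$ itself, resp.\ $t'=(t,S\cup\{t\})$ for $\simulrel^{-1}$) satisfies the required inclusion with equality. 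Since $\simulrel$ relates $u$ to $(u,U)$ for \emph{any} set $U$, the condition ``$t\simulrel t'$'' (resp.\ $t'\simulrel^{-1}t$) is automatic.

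For preference-preservation I would prove the structural fact that if $\rho\in\Play_\calG(s_0)$, $\rho'\in\Play_{\calG'}((s_0,\{s_0\}))$ and $\rho\simulrel\rho'$, then $\rho'_{=i}=(\rho_{=i},\Occ(\rho_{\le i}))$ for every $i$: the first component is fixed by $\rho\simulrel\rho'$, the transitions of $\calG'$ force the second component to accumulate the newly visited state at each step, and the initial component $\{s_0\}$ closes the induction. Consequently the second component of $\rho'$ stabilises to $\Occ(\rho)$, so for each target set $T$ the transformed B\"uchi target $T'=\{(s,S)\mid S\cap T\ne\varnothing\}$ is visited infinitely often by $\rho'$ exactly when $\Occ(\rho)\cap T\ne\varnothing$, i.e.\ exactly when $\rho$ satisfies the reachability objective for $T$. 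Hence $\rho$ and $\rho'$ receive the same payoff vector, and since the preorder (Boolean circuit) of each player is kept unchanged between $\calG$ and $\calG'$, we get $\rho_1\prefrel_A\rho_2\iff\rho'_1\prefrel'_A\rho'_2$ for all $A$, which is preference-preservation from $(s_0,(s_0,\{s_0\}))$; the case of $\simulrel^{-1}$ from $((s_0,\{s_0\}),s_0)$ is the same chain of equivalences read backwards, using that from $(s_0,\{s_0\})$ the plays of $\calG'$ are in bijection with the plays of $\calG$ from $s_0$.

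The only delicate point — the expected obstacle — is precisely this identification of $\simulrel$-related plays: one must exploit \emph{both} that $\rho'$ is a genuine path of $\calG'$ (so its second components obey the accumulation rule) \emph{and} that it starts in $(s_0,\{s_0\})$, for otherwise the second component would not be determined and the B\"uchi payoff in $\calG'$ would not coincide with the reachability payoff in $\calG$. Everything else is bookkeeping, since in each relevant case the suspect sets on the two sides are either both empty or literally equal.
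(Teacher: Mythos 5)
Your proposal is correct and follows essentially the same route as the paper's proof: answer each move by the identical move, split on whether the second component obeys the accumulation rule (suspect set empty versus literally equal to the one in $\calG$), and reduce preference-preservation to the equivalence ``$\rho$ reaches $T$ iff $\rho'$ visits $T'$ infinitely often'' for $\simulrel$-related plays started in $s_0$ and $(s_0,\{s_0\})$. Your explicit identification $\rho'_{=i}=(\rho_{=i},\Occ(\rho_{\le i}))$ just spells out what the paper uses implicitly, so there is nothing to add.
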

\begin{proof}
  Let $m_\Agt$ be a move; writing $t = \Tab(s,m_\Agt)$, we~have
  $\Tab'((s,S),m_\Agt) = (t,S\cup \{t\})$. Therefore $\Tab(s,m_\Agt)
  \simulrel \Tab'(s',m_\Agt)$.  Let $(t,S')$ be a state of~$\calG'$;
  then we also have ${t\simulrel (t,S')}$.  If ${S' = S\cup \{t\}}$
  then $\Susp((s,t),m_\Agt) = \Susp(((s,S),(t,S')),m_\Agt)$; otherwise
  ${\Susp(((s,S),(t,S')),m_\Agt) = \varnothing}$.  In both cases,
  condition~(2) in the definition of a game simulation is obviously
  satisfied.

  In the other direction, let $(s',S\cup\{s'\}) = \Tab((s,S),m_\Agt)$;
  we have that $s'\simulrel (s',S\cup\{s'\})$.  Let $t \in
  \Stat$. Then $t \simulrel (t,S\cup \{t\})$, and $\Susp((s,t),m_\Agt)
  = \Susp(((s,S),(t,S\cup\{t\})),m_\Agt)$. Hence $\simulrel^{-1}$ is a
  game simulation.
  
  \smallskip Let $\rho$ and $\rho'$ be two paths, from $s_0$ and
  $(s_0,\{s_0\})$ respectively, and such that $\rho \simulrel \rho'$.
  We show preference preservation, by showing that $\rho$ reaches
  target set~$T$ \iff $\rho'$~visits~$T'$ infinitely often.
  If~$\rho$~visits some state $s\in T$, then from that point, states
  visited by $\rho'$ are of the form $(s',S')$ with $s \in S'$; all
  these states are in $T'$, therefore $\rho'$ visits~$T'$ infinitely
  often.  Conversely, if $\rho'$ visits~$T'$ infinitely often, then
  some state of~$T'$ have been visited by~$\rho$. From this, we~easily
  obtain preference preservation.
\end{proof}

\noindent As a corollary (Proposition~\ref{prop:sim}) we get that there is a
correspondence between Nash equilibria in $\calG$ and Nash equilibria
in $\calG'$.

\begin{lemma}
  If there is a Nash equilibrium $\sigma_\Agt$ in $\calG$ from $s_0$,
  then there is a Nash equilibrium $\sigma'_\Agt$ in $\calG'$ from
  $(s_0,\{s_0\})$ such that $\Out_{\calG}(s_0,\sigma_\Agt) \simulrel
  \Out_{\calG'}((s_0,\{s_0\}),\sigma'_\Agt)$.  And vice-versa: if
  there is a Nash equilibrium $\sigma'_\Agt$ in $\calG'$ from
  $(s_0,\{s_0\})$, then there is a Nash equilibrium $\sigma_\Agt$ in
  $\calG$ from $s_0$ such that
  $\Out_{\calG'}((s_0,\{s_0\}),\sigma'_\Agt) \simulrel^{-1}
  \Out_{\calG}(s_0,\sigma_\Agt)$.
\end{lemma}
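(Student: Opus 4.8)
The plan is to obtain this lemma as an immediate corollary of Lemma~\ref{lem:construction-reach-general} together with Proposition~\ref{prop:sim}, so there is little genuine work to do beyond checking that the hypotheses line up. For the first direction I would instantiate Proposition~\ref{prop:sim} with the games $\calG$ and $\calG'$, the states $s_0$ and $(s_0,\{s_0\})$, and the relation $\simulrel$. Lemma~\ref{lem:construction-reach-general} tells us exactly that $\simulrel$ is a preference-preserving game simulation from $(s_0,(s_0,\{s_0\}))$, and the pair $(s_0,(s_0,\{s_0\}))$ belongs to $\simulrel$ by definition of $\simulrel$; hence all the hypotheses of Proposition~\ref{prop:sim} hold, and it produces a Nash equilibrium $\sigma'_\Agt$ in $\calG'$ from $(s_0,\{s_0\})$ with $\Out_{\calG}(s_0,\sigma_\Agt)\simulrel\Out_{\calG'}((s_0,\{s_0\}),\sigma'_\Agt)$, which is the claimed statement.

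For the converse I would apply Proposition~\ref{prop:sim} a second time, with the roles of the two games exchanged: take $\calG'$ as the source game with state $(s_0,\{s_0\})$, $\calG$ as the target game with state $s_0$, and the relation $\simulrel^{-1}$. Again Lemma~\ref{lem:construction-reach-general} provides precisely what is needed, namely that $\simulrel^{-1}$ is a preference-preserving game simulation from $((s_0,\{s_0\}),s_0)$ (and $((s_0,\{s_0\}),s_0)\in{\simulrel^{-1}}$ trivially), so Proposition~\ref{prop:sim} applies verbatim and yields a Nash equilibrium $\sigma_\Agt$ in $\calG$ from $s_0$ with $\Out_{\calG'}((s_0,\{s_0\}),\sigma'_\Agt)\simulrel^{-1}\Out_{\calG}(s_0,\sigma_\Agt)$.

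There is no real obstacle here: the lemma is a bookkeeping consequence of the two earlier results, and the only things to verify are the trivial membership of the initial pair of states in the simulation relation and the preference-preservation of that relation from the initial pair, both of which are already established. If any point deserves a remark, it is simply that Proposition~\ref{prop:sim} is stated for arbitrary games sharing the same player set and does not assume finiteness, so applying it to the exponential-size game $\calG'$ causes no difficulty.
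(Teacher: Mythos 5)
Your proposal is correct and matches the paper's own treatment: the lemma is stated there precisely as a corollary of Proposition~\ref{prop:sim} applied in both directions, with Lemma~\ref{lem:construction-reach-general} supplying the fact that $\simulrel$ and $\simulrel^{-1}$ are preference-preserving game simulations from the respective initial pairs. Nothing further is needed.
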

Note that, if $\Out_{\calG}(s_0,\sigma_\Agt) \simulrel
\Out_{\calG'}((s_0,\{s_0\}),\sigma'_\Agt)$, then
$\Out_{\calG}(s_0,\sigma_\Agt)$ satisfies the reachability objective
with target set $T$ \iff $\Out_{\calG'}((s_0,\{s_0\}),\sigma'_\Agt)$
satisfies the B\"uchi objective with target set $T' = \{(s,S) \mid S
\cap T \ne \emptyset\}$.  From this strong correspondence between
$\calG$ and $\calG'$, we get that it is sufficient to look for Nash
equilibria in game $\calG'$.

\subsubsection{How to efficiently solve the suspect game of $\calG'$}
In game $\calG'$, preference relations are
prefix-independent. Applying Remark~\ref{rem:prefix-independant} the
preference relation in the suspect game is then also
prefix-independent, and the payoff of a play only depends on which
strongly-connected component the path ends in.  We now give an
alternating algorithm which runs in polynomial time and solves the
game~$\calH(\calG',\pi')$, where $\pi'$ is an infinite path in
$\calG'$.

\begin{lemma}\label{lem:alternating-algo}
  The winner of $\calH(\calG',\pi')$ can be decided by an alternating
  algorithm which runs in time polynomial in the size of
  $\calG$.
\end{lemma}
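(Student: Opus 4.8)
The plan is to solve $\calH(\calG',\pi')$ by exploiting the monotone structure of the suspect game built on the product arena~$\calG'$. First I would note that a configuration of $\calH(\calG',\pi')$ is a triple $((s,S),P)$ with $s\in\Stat$, $S\subseteq\Stat$, $P\subseteq\Agt$ (an $\Adam$-configuration carrying in addition a legal move $m_\Agt$), and all of these have size polynomial in $|\calG|$, even though $\calG'$ itself is exponential. Along any play of $\calH(\calG',\pi')$ the set $S$ of $\calG$-states recorded by $\calG'$ is non-decreasing and the suspect set $P$ is non-increasing; hence the \emph{summary} $(S,P)$ stabilises to some $(S^\ast,P^\ast)$, and the potential $\Phi(S,P)=(|\Stat|-|S|)+|P|\le|\Stat|+|\Agt|$ strictly decreases at every step where the summary changes. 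Combining Remark~\ref{rem:prefix-independant} with the way payoffs are computed in $\calG'$ (a B\"uchi target arising from a reachability target $T$ is visited infinitely often along a play of $\calG'$ iff $S^\ast$ meets $T$), whether a play $\rho$ of $\calH(\calG',\pi')$ is won by $\Eve$ depends \emph{only} on $(S^\ast,P^\ast)$: writing $\payoff_A(S)$ for the vector whose $i$-th bit is $1$ iff $S$ meets the $i$-th target of $A$, the play is winning iff for every $A\in P^\ast$ the circuit defining $\preorder_A$ evaluates to true on $(\payoff_A(S^\ast),\payoff_A(\pi'))$, i.e. $\payoff_A(S^\ast)\preorder_A\payoff_A(\pi')$. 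Call $(S,P)$ a \emph{winning summary} when this holds; it is decided in polynomial time, and note that $\pi'$ is only used through the polynomial-size vector $(\payoff_A(\pi'))_{A\in\Agt}$.

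Next I would associate with each summary $(S,P)$ its \emph{slice}: the subgame of $\calH(\calG',\pi')$ restricted to configurations with summary exactly $(S,P)$, where every edge leaving the slice is redirected to an absorbing sink labelled by its target configuration $c'$ (which necessarily has $\Phi(c')<\Phi(S,P)$). By the counting argument of Proposition~\ref{lem:polynomial-size} this slice has polynomially many configurations (its $\Adam$-configurations range over $s\in S$ and the legal moves at $s$), and the set of sinks reachable from it is polynomially bounded too. The recursive alternating procedure $\textsf{EveWins}(c)$, for $c$ with summary $(S,P)$, then: (1) existentially guesses a memoryless strategy $\sigma$ for $\Eve$ on the slice (memoryless suffices because, once truth values are attached to the sinks, the slice game is a reachability/safety game for $\Eve$); (2) checks in polynomial time that, when $\Eve$ plays $\sigma$, either $(S,P)$ is a winning summary and no sink $c'$ with $\textsf{EveWins}(c')$ false is reachable, or $(S,P)$ is not a winning summary and the reachable non-sink part of the $\sigma$-restricted slice is acyclic (so every play eventually exits) while still no sink $c'$ with $\textsf{EveWins}(c')$ false is reachable; and (3) universally branches over the polynomially many sinks $c'$ reachable under $\sigma$ and recursively evaluates $\textsf{EveWins}(c')$. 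Correctness rests on the observation that a play of $\calH(\calG',\pi')$ is a concatenation of at most $|\Stat|+|\Agt|+1$ epochs, one per value of the summary, the last one infinite, and that $\Eve$ wins from $c$ iff she has a strategy whose every outcome either remains in the current slice forever (acceptable exactly when $(S,P)$ is a winning summary) or leaves it to a configuration from which she wins --- prefix-independence being precisely what allows the history to be forgotten when a new epoch starts; thus $\textsf{EveWins}$ computes exactly this fixpoint.

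For the complexity, the depth of nested $\textsf{EveWins}$-calls is bounded by the initial potential, hence by $|\Stat|+|\Agt|+1$, and at each level the procedure performs one existential guess of a polynomial-size object, a polynomial amount of deterministic work (slice construction, sink enumeration by graph reachability, acyclicity and winning-summary tests), and one universal branching; so along any branch of the alternating computation the running time is polynomial in $|\calG|$, which is exactly the claim. Since alternating polynomial time coincides with \PSPACE, this also re-establishes the \PSPACE\ upper bound used in Proposition~\ref{prop:reach-pspace}. The step I expect to be the real obstacle is the first one: proving cleanly that the winning condition of $\calH(\calG',\pi')$ collapses to the polynomial-size summary $(S^\ast,P^\ast)$, and that each summary slice has polynomially many configurations; once these two structural facts are in place, the alternating procedure and its analysis are routine bookkeeping.
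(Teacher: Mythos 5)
Your proof is correct, and it reaches the conclusion by a genuinely different route than the paper. Both arguments rest on the same two structural facts --- the winning condition of $\calH(\calG',\pi')$ collapses to the limit summary $(S^\ast,P^\ast)$ (your ``winning summary'' is exactly the paper's B\"uchi target $T=\{((s,S),P)\mid \forall A\in P.\ C^A[v^A(S),w^A]\ \text{evaluates to \true}\}$), and the monotonicity of $(S,P)$ gives a polynomial progress bound --- but the algorithms built on them differ. The paper does not decompose into slices: its alternating procedure simply plays the game forward (\Eve's moves existential, \Adam's universal), records the configurations visited, halts at the first repetition, and accepts iff the repeated configuration lies in $T$; correctness is obtained from memoryless determinacy of the induced (turn-based B\"uchi) game together with a loop-removal argument on histories, and the time bound follows because monotonicity of $(S,P)$ bounds the longest acyclic path polynomially. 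Your procedure instead makes the fixpoint structure explicit: you guess one memoryless strategy per summary slice, verify it deterministically (a safety check when the summary is winning, a reachability/acyclicity check otherwise), and recurse on the exit sinks, the recursion depth being bounded by the potential; prefix-independence (Remark~\ref{rem:prefix-independant}) is what lets you re-choose the strategy at each epoch, and memoryless determinacy of reachability/safety games is what makes the per-slice guess sound. Each approach buys something: the paper's walk-and-detect-a-cycle algorithm is shorter to state and needs no explicit strategy guessing, while your slice decomposition isolates the two structural facts cleanly and would adapt directly to any prefix-independent winning condition determined by a monotone, polynomial-size summary. One presentational point only: in your step~(2) the clause ``no sink $c'$ with $\textsf{EveWins}(c')$ false is reachable'' is not itself a polynomial-time test at that level --- it is implemented by the universal branching and recursion of step~(3); read that way, your argument is complete.
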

\begin{proof}
  Let $C^A$ be the circuit defining the preference relation of player
  $A$.  Let $\rho = (s_i,S_i)_{i\ge 0}$ be a path in $\calG'$, the
  sequence $(S_i)_{i \ge 0}$ is non-decreasing and converges to a
  limit $S(\rho)$.  We have $\payoff_A(\rho) = \One_{\{i \mid T_A^i
    \cap S(\rho) = \varnothing\}}$.  Therefore the winning condition
  of $\Eve$ in $\calH(\calG',\pi')$ for a play $\rho$ only depends on
  the limits~$\limitpi2(\rho)$ and~$S(\Sproj_1(\rho))$.  It can be
  described as a single B\"uchi condition with target set $T
  = \{ ((s,S),P) \mid \forall A\in P.\ C^A[ v^A(S) , w^A ] \
  \text{evaluates to \true} \}$ where $v^A(S) = \One_{\{i \mid T_A^i
    \cap S = \varnothing\}}$ and $w^A = \payoff_A(\pi')$.  We now
  describe the algorithm.

  Initially the current state is set to $((s_0,\{s_0\}),\Agt)$. We
  also keep a list of the states which have been visited, and we
  initialise it with $\Occ \leftarrow \{ (s_0,\{s_0\}),\Agt \}$.
  Then,
  \begin{itemize}
  \item if the current state is $((s,S),P)$, the algorithm
    existentially guesses a move~$m_\Agt$ of \Eve and we set $t =
    ((s,S),P,m_\Agt)$;
  \item otherwise if the current state is of the form
    $((s,S),P,m_\Agt)$, it universally guesses a state $s'$ which
    corresponds to a move of \Adam and we set $t = ((s',S\cup
    \{s'\}),P\cap \Susp((s,s'),m_\Agt))$.
  \end{itemize}
  If $t$ was already seen (that is, if $t \in \Occ$), the algorithm
  returns $\true$ when $t\in T$ and $\false$ when $t \notin T$,
  otherwise the current state is set to $t$, and we add $t$ to the
  list of visited states: $\Occ \leftarrow \Occ \cup \{t\}$, and we
  repeat this step.  Because we stop when the same state is seen, the
  algorithm stops after at most $\ell+1$ steps, where $\ell$ is the
  length of the longest acyclic path.  Since the size of~$S$ can only
  increase and the size of~$P$ only decrease, we~bound~$\ell$ with
  $|\Stat|^2 \cdot |\Agt|$.
 
  We now prove the correctness of the algorithm.  First,
  $\calH(\calG',\pi')$ is a turn-based B\"uchi game, which is a
  special case of parity game.  Parity games are known to be
  determined with memoryless
  strategies~\cite{mostowski1991games,emerson1991tree}, hence
  $\calH(\calG',\pi')$ is determined with memoryless strategies.
 
  If the algorithm returns \true, then there exist a strategy
  $\sigma_\shortEve$ of \Eve such that for all the strategies
  $\sigma_\shortAdam$ of \Adam, any outcome $\rho$ of
  $\Out(\sigma_\shortEve,\sigma_\shortAdam)$ is such that there exist
  $i < j \leq \ell + 1$ with $\rho_i = \rho_j \in T$ and all $\rho_k$
  with $k < j$ are different.  We extend this strategy
  $\sigma_\shortEve$ to a winning strategy $\sigma'_\shortEve$ for
  \Eve.  We do so by ignoring the loops we see in the history,
  formally we inductively define a reduction $r$ of histories by:
  \begin{itemize}
  \item $r(\varepsilon) = \varepsilon$;
  \item if $((s,S),P)$ does not appear in $r(h)$ then $r(h \cdot
    ((s,S),P)) = r(h) \cdot ((s,S),P)$; 
  \item otherwise $r(h \cdot ((s,S),P)) = r(h)_{\le i}$  where $i$ is
    the smallest index such that $r(h)_i = ((s,S),P)$.
  \end{itemize}
  We then define $\sigma'_\shortEve$ for any history $h$ by
  $\sigma'_\shortEve(h) = \sigma_\shortEve(r(h))$.
  
  We show by induction that if $h$ is a history compatible with
  $\sigma'_\shortEve$ from $((s_0,\{s_0\}),\Agt)$ then $r(h)$ is
  compatible with $\sigma_\shortEve$ from $((s_0,\{s_0\}),\Agt)$ .  It
  is true when $h= ((s_0,\{s_0\}),\Agt)$, now assuming it holds for
  all history of length $\le k$, we show it for history of length
  $k+1$.  Let $h\cdot s$ be a history of length $k+1$ compatible with
  $\sigma'_\shortEve$.  By hypothesis $r(h)$ is compatible with $h$
  and since $\sigma'_\shortEve (h)= \sigma_\shortEve(r(h))$,
  $r(h)\cdot s$ is compatible with $\sigma_\shortEve$.  If $r(h\cdot
  s) = r(h)\cdot s$ then $r(h\cdot s)$ is compatible with
  $\sigma_\shortEve$.  Otherwise $r(h\cdot s)$ is a prefix of $r(h)$
  and therefore of length $\le k$, we can apply the induction
  hypothesis to conclude that $r(h\cdot s)$ is compatible with
  $\sigma_\shortEve$.
  
  We now show that the strategy $\sigma'_\shortEve$ that we defined,
  is winning.  Let $\rho$ be a possible outcome of
  $\sigma'_\shortEve$, let $i<j$ be the first indexes such that
  $\rho_i,\rho_j \in (\Stat\times S(\rho)) \times \limitpi2(\rho)$ and
  $\rho_i = \rho_j$.  Because there is no repetition between $i$ and
  $j-1$: $r(\rho_{\le j-1}) = r(\rho_{\le i -1}) \rho_{i} \cdots
  \rho_{j-1}$.  We have that $\sigma_\shortEve(r(\rho_{\le i -1})
  \rho_{i} \cdots \rho_{j-1}) = \sigma'_\shortEve(\rho_{j-1})$.  From
  this move, $\rho_j$ is a possible next state, so $r(\rho_{\le i -
    1}) \rho_{i}\cdots \rho_{j}$ is a possible outcome of
  $\sigma_\shortEve$.  As $\rho_{i} = \rho_{j}$ and all other states
  are different, by the hypothesis on $\sigma_\shortEve$ we have that
  $\rho_j \in T$.  This shows that $\rho$ ultimately loops in states
  of $T$ and therefore $\rho$ is a winning run for \Eve.
 
  Reciprocally, if \Eve has a winning strategy, she has a memoryless
  one~$\sigma_\shortEve$ since this is a B\"uchi game. 
  We can see this strategy as an oracle for the various
  existential choices in the algorithm.
  Consider some universal choices in the algorithm, it corresponds to
  a strategy $\sigma_\shortAdam$ for \Adam. The branch corresponding to
  $(\sigma_\shortEve,\sigma_\shortAdam)$ ends the first time we
  encounter a loop, we write this history $h\cdot h'$ with $\last(h')
  = \last(h)$.  Since the strategy $\sigma_\shortEve$ is memoryless,
  $h\cdot h'^\omega$ is a possible outcome.  Since it is winning,
  $\last(h')$ is in $T$ and therefore the branch is accepting.  This
  being true for all the branches given by the choices of
  $\sigma_\shortEve$, the algorithm answers \true.
\end{proof}

\subsubsection{Proof of the \PSPACE\ upper bounds in
Proposition~\ref{prop:reach-pspace}.}
We describe a \PSPACE~algorithm for solving the constrained NE existence
problem.  The algorithm proceeds by trying all plays~$\pi$ in $\calG$
of the form described in Proposition~\ref{lem:play-length}.  This
corresponds to a (unique) play $\pi'$ in $\calG'$.  We check that
$\pi'$ has a payoff satisfying the constraints, and that there is a
path $\rho$ in $\calH(\calG',\pi')$, whose projection is $\pi'$, along
which \Adam obeys \Eve, and which stays in the winning region of \Eve.
This last step is done by using the algorithm of
Lemma~\ref{lem:alternating-algo} on each state $\rho$ goes through.
All these conditions are satisfied exactly when the conditions of
Theorem~\ref{thm:eq-win} are satisfied, in which case there is a Nash
equilibrium within the given bounds.

The \PSPACE\ upper bound for the value problem can be inferred from
Proposition~\ref{lem:link-value-constr}.

\subsubsection{Proof of \PSPACE-hardness for the value problem.}

We show \PSPACE-hardness of the value problem when the preorder has
$\One$ as a unique maximal element.

We reduce \QSAT to the value problem, where \QSAT is the
satisfiability problem for quantified Boolean formulas. For an instance
of \QSAT, we assume without loss of generality that the Boolean
formula is a conjunction of disjunctive clauses\footnote{With the
  convention that an empty disjunction is equivalent to~$\bot$.}.

Let $\phi = Q_1 x_1 \dots Q_p x_p.\ \phi'$, where $Q_i \in \{ \forall,
\exists \}$ and $\phi'= c_1 \land \dots \land c_n$ with $c_i =
\bigvee_{1\leq j\leq 3} \ell_{i,j}$ and $\ell_{i,j} \in \{ x_k ,
\lnot x_k \mid 1 \le k \le p\}\cup\{\top,\bot\}$.  We~define a
turn-based game~$\calG(\phi)$ in the following way (illustrated in
Example~\ref{ex:reduction} below).  There~is one state for each
quantifier, one for each literal, and two additional states~$\top$
and~$\bot$:
\[
\Stat = \{ Q_k \mid 1 \le k \le p\} \cup \{ x_k , \lnot x_k \mid 
1 \le k\le p \} \cup \{ \top,\bot\}.
\]

The game involves two players, $A$~and~$B$. The~states~$\top$, 
and~$\bot$,
the existential-quantifier states and the literal states are all controlled
by~$A$, while the universal-quantifier states belong to player~$B$. For
all~$1\leq k\leq p$, the state corresponding to quantifier~$Q_k$ has two
outgoing transitions, going to $x_k$ and~$\lnot x_k$ respectively. Those two
literal states only have one transition to the next quantifier
state~$Q_{k+1}$, or to the final state~$\top$ if $k=p$. Finally,
states~$\top$ and~$\bot$ carries a self-loop
(notice that~$\bot$ is not reachable, while $\top$ will always be
visited).

Player~$A$ has one target set for each clause: if $c_i =
\bigwedge_{1\leq j\leq 3} \ell_{i,j}$ then $T^A_i = \{\ell_{i,j} \mid
1\leq j\leq 3\}$.  The $i$-th objective~$\Omega_i^A$ is to reach
target set~$T^A_i$.  The following result is then straightforward:
\begin{lemma}\label{lem:formula-valid}
  Formula $\phi$ is valid \iff player~$A$ has a strategy whose
  outcomes from state~$Q_1$ all visit each target set~$T^A_i$.
\end{lemma}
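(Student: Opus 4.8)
The plan is to set up a direct dictionary between plays of $\calG(\phi)$ starting in~$Q_1$ and valuations of $x_1,\dots,x_p$, and between strategies of player~$A$ and Skolem-function families for the existential quantifiers of~$\phi$; under this dictionary, ``$\phi$ is valid'' and ``$A$ has a strategy all of whose outcomes visit every $T^A_i$'' become the same statement. First I would observe that, from~$Q_1$, the only available behaviour is to go to a literal state for~$x_1$, then (forced) to~$Q_2$, then to a literal state for~$x_2$, and so on, and after the literal state for~$x_p$ to fall into~$\top$ and loop there forever ($\bot$~being unreachable). So every outcome from~$Q_1$ has the form $Q_1\cdot\ell_1\cdot Q_2\cdot\ell_2\cdots Q_p\cdot\ell_p\cdot\top^\omega$ with $\ell_k\in\{x_k,\lnot x_k\}$, and is entirely determined by the choices made in the quantifier states. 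Reading $\ell_k=x_k$ as ``$x_k$ true'' yields a bijection between these outcomes and valuations~$v$; and by definition of the target sets, the outcome visits~$T^A_i$ iff some literal of~$c_i$ is visited, i.e.\ iff $v\models c_i$. Hence an outcome visits \emph{all} of $T^A_1,\dots,T^A_n$ exactly when its associated valuation satisfies~$\phi'$.

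Next I would relate strategies of~$A$ to Skolem families. Player~$A$ makes a genuine choice only in the existential states~$Q_k$ (picking $x_k$ or~$\lnot x_k$); the universal states belong to~$B$, and the literal states carry a single outgoing transition. Along a history ending in~$Q_k$, the literal states already seen encode the values of $x_1,\dots,x_{k-1}$, so a strategy of~$A$ is precisely, for each existential index~$k$, a function giving the value of~$x_k$ in terms of $x_1,\dots,x_{k-1}$. For the ``if'' direction: if $\phi$ is valid, fix a witnessing Skolem family and let $A$ play at each~$Q_k$ the value it prescribes, extracting the arguments from the history. Then whatever $B$ does~--- i.e.\ for every choice of the universally quantified values~--- the resulting valuation satisfies~$\phi'$, so by the previous paragraph every target set is visited, and this strategy of~$A$ is as required.

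For the ``only if'' direction I would take a strategy~$\sigma_A$ all of whose outcomes from~$Q_1$ visit every~$T^A_i$ and read off a Skolem family recursively: for an existential index~$k$, the value of~$x_k$ as a function of the lower-index universal values is the choice $\sigma_A$ makes in the unique $\sigma_A$-consistent history reaching~$Q_k$ with those universal values (and with the already-defined existential values at indices $<k$). Every valuation produced this way is the valuation associated with some outcome of~$\sigma_A$, hence satisfies~$\phi'$, so the family witnesses the validity of~$\phi$. The only subtlety, and the nearest thing to an obstacle here, is checking that the dependency structure is right: an $A$-strategy may a priori depend on earlier existential choices too, but since those are themselves fixed by~$\sigma_A$ as functions of earlier universal choices, the effective dependence is only on the universally quantified variables of lower index, matching the quantifier prefix of~$\phi$. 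The rest is the routine bookkeeping of the outcome/valuation bijection.
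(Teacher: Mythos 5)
Your proof is correct, but it takes a genuinely different route from the paper's. The paper proves the lemma by a direct induction on the number of quantifiers, in both directions explicitly constructing strategies for~$A$ (splitting on whether the outermost quantifier is existential or universal, and combining the inductively obtained strategies $\sigma^{x_1}$ and $\sigma^{\lnot x_1}$), and for the converse exhibiting, for every strategy, an outcome that misses some target; in particular it handles explicitly the degenerate clauses built from the constant literals $\top$ and $\bot$ (e.g.\ a clause consisting only of $\bot$, whose target is unreachable). You instead factor the lemma through a dictionary: plays from~$Q_1$ are in bijection with valuations, a play visits $T^A_i$ exactly when its valuation satisfies~$c_i$ (this does remain true for the constant literals, since the state $\top$ is visited on every play and $\bot$ on none --- worth one explicit sentence, as the clauses may contain $\top$ or $\bot$), and strategies of~$A$ correspond to Skolem-function families for the existential variables, the dependency-structure point you raise in the ``only if'' direction being exactly the right thing to check, together with the uniqueness of the $\sigma_A$-consistent history reaching~$Q_k$ for given universal values. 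Modulo that dictionary the lemma becomes the standard characterization of QBF validity via Skolem functions (equivalently, winning strategies in the semantic evaluation game). This buys a shorter, more modular argument with no case analysis on the quantifier prefix, and it works for arbitrary prefixes, not just strictly alternating ones; what it costs is reliance on that standard characterization, which, if one insisted on a self-contained proof, would itself be established by essentially the induction that the paper carries out inside the game.
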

\begin{proof}  
  We begin with the direct implication, by induction on~$p$.  For the
  base case, $\phi=Q_1 x_1.\ \bigwedge_i c_i$ where $c_i$ only
  involves~$x_1$ and~$\non x_1$. We~consider two cases:
  \begin{itemize}
  \item $Q_1=\exists$: since we assume~$\phi$ be true, there must
    exist a value for~$x_1$ which makes all clauses true. If this
    value is~$\top$, consider the strategy~$\sigma_\top$ of Player~$A$
    such that $\sigma_\top(Q_1)=x_1$.  Then each clause~$c_i$ must
    have~$x_1$ as one of its literals, so that the
    objective~$\Omega^A_i$ is satisfied with this strategy.  The same
    argument applies if the value for $x_1$ were~$\bot$.
  \item $Q_1=\forall$: in that case, Player~$A$ has only one strategy.
    For both $x_1$ and $\non x_1$ all the clauses are
    satisfied. It~follows that each clause~$c_i$ must contain $x_1$
    and $\non x_1$, so that objective~$\Omega^A_i$ is satisfied for
    any strategy of player~$B$.
  \end{itemize}
  
\noindent   Now, assume that the result holds for all \QSAT instances with at
  most $p-1$ quantifiers.
  \begin{itemize}
  \item if $Q_1 = \exists$, then one of $Q_2x_2\ldots Q_p x_p
    \phi'[x_1 \leftarrow \top]$ and $Q_2x_2\ldots Q_p x_p\phi'[x_1
    \leftarrow \bot]$ is valid. We~handle the first case, the second
    one being symmetric. For a literal~$\lambda_k\in\{x_k,\non x_k\}$,
    we~write 
%$L_{\ell_i}$
%    for the set of clauses containing $\ell_i$ as a literal, and
    $T_{\lambda_k}$ for the set of target sets~$T_i^A$ such that the
    clause~$c_i$ contains the literal~$\lambda_k$.

    Assume $Q_2x_2\ldots Q_p x_p \phi'[x_1 \leftarrow \top]$ is valid;
    by~induction we know that there exists a strategy~$\sigma^{x_1}$
    such that all the targets in~$T_{\lambda_k}$ are visited along any
    outcome from state~$Q_2$ (because $\calG(Q_2x_2\ldots Q_p x_p
    \phi'[x_1 \leftarrow \top])$ is the same game as~$\calG(\phi)$,
    but with~$Q_2$ as the initial state, and with the targets
    in~$T_{x_1}$ containing~$\{\top\}$ in place of~$x_1$).  We~define
    the strategy~$\sigma$ by $\sigma(Q_1)=x_1$ and $\sigma(Q_1 \cdot
    x_1 \cdot \pathg) = \sigma^{x_1}(\pathg)$.  An~outcome of~$\sigma$
    will necessarily visit~$x_1$, hence visiting all the targets
    in~$T_{x_1}$; because $\sigma$ follows $\sigma^{x_1}$, all the
    objectives not in~$T_{x_1}$ are~met as~well.
  \item if $Q_1 = \forall$, then $Q_2x_2\ldots Q_p x_p \phi'[x_1
    \leftarrow \top]$ is valid.  Using the induction hypothesis we
    know that from~$Q_2$ there is a strategy $\sigma^{x_1}$ that
    enforces a visit to all the targets in~$T_{x_1}$.  Similarly,
    $Q_2x_2\ldots Q_p x_p \phi'[x_1 \leftarrow \bot]$ is valid, and
    there is a strategy~$\sigma^{\non x_1}$ that visits all the
    objectives not in~$T_{\non x_1}$.  We~define a new
    strategy~$\sigma$~as follows: $\sigma(Q_1 \cdot x_1 \cdot \pathg)
    = \sigma^{x_1}(\pathg)$ and $\sigma(Q_1 \cdot \non x_1 \cdot
    \pathg) = \sigma^{\non x_1}(\pathg)$. Consider an outcome
    of~$\sigma$: if~it visits~$x_1$, then all the objectives
    in~$T_{x_1}$ are visited, and because the path
    follows~$\sigma^{x_1}$, the objectives not in~$T_{x_1}$ are also
    visited. The other case is similar.
  \end{itemize}

  \medskip\noindent  We now turn to the converse implication.  Assume the
  formula is not valid. We prove that for any strategy~$\sigma$ of
  player~$A$, there is an outcome~$\pathg$ of this strategy such that
  some objective~$\Omega^A_i$ is not satisfied.  We~again proceed by
  induction, beginning with the case where~$n=1$.
  \begin{itemize}
  \item if $Q_1 = \exists$, then both $\phi'[x_1 \leftarrow \top]$ and
    $\phi'[x_1 \leftarrow \bot]$ are false.
    This entails that one of the clauses only
    involves~$\bot$ (no~other disjunction involving~$x_1$ and\slash
    or~$\non x_1$ is always false), and the corresponding
    reachability condition is~$\bot$, which is not reachable.
  \item if $Q_1=\forall$, then one of $\phi'[x_1 \leftarrow \top]$
    and $\phi'[x_1 \leftarrow \bot]$ is false.
    In~the former case, one of the clauses~$c_i$
    contains~$\non x_1$, or only contains~$\bot$.  Then along the run
    $Q_1 \cdot x_1 \cdot \top^\omega$, 
    the objective~$T^A_i$ is not
    visited.  The other case is similar.
  \end{itemize}

  Now, assuming that the result holds for formulas with~$n-1$
  quantifiers, we prove the result with~$n$ quantifiers.
  \begin{itemize}
  \item if $Q_1 = \exists$, then both $Q_2x_2\ldots Q_p x_p \phi'[x_1
    \leftarrow \top]$ and $Q_2x_2\ldots Q_p x_p \phi'[x_1 \leftarrow
    \bot]$ are false.  Ising the induction
    hypothesis, any run from~$Q_2$ fails to visit some objective not
    in $T_{x_1} \cup T_{\non x_1}$. Hence no strategy from~$Q_1$ can
    enforce a visit to all the objectives.
  \item if $Q_1 = \forall$, then one of $Q_2x_2\ldots Q_p x_p \phi'[x_1
    \leftarrow \top]$ and $Q_2x_2\ldots Q_p x_p \phi'[x_1 \leftarrow
    \bot]$ is false.  
   We handle the first case, the second one being symmetric.
    By induction hypothesis, for any strategy~$\sigma$ of player~$A$
    in the game $\calG(\phi'[x_1 \leftarrow \top])$, one of the
    outcome fails to visit all the objective not in~$T_{x_1}$.  Then
    along the path $\pathg = Q_1 \cdot x_1 \cdot \pathg'$, some
    objectives not in $T_{x_1}$ are not visited.\forceqed
  \end{itemize} 
\end{proof}

\noindent We can directly conclude from this lemma that the value of the game for $A$ is
$\One$ (the~unique maximal payoff for our preorder) \iff the formula~$\phi$ is
valid, which proves that the former problem is \PSPACE-hard.

\begin{example}
  \label{ex:reduction}
  As an example of the construction, let us consider the formula
  \begin{equation}\label{eq:formula1}
    \phi = \forall x_1.\ \exists
    x_2.\ \forall x_3.\ \exists x_4.\ (x_1 \lor \lnot x_2 \lor \lnot x_3)
    \land (x_1 \lor x_2 \lor x_4) \land (\lnot x_4 \lor\bot \lor\bot)
  \end{equation}
  The target sets for player~$A$ are given by $T^A_1 = \{x_1; \lnot
  x_2; \lnot x_3\}$, $T^A_2 = \{ x_1; x_2; x_4\}$, and $T^A_3 =
  \{\lnot x_4; \bot\}$.  The structure of the game is represented in
  Figure~\ref{fig:reach-game}.  $B$ has a strategy that falsifies one
  of the clauses whatever $A$ does, which means that the formula is
  not valid.
\end{example}

  \begin{figure}[!ht]
    \centering{
      \begin{tikzpicture}[thick]
        %\everymath{\scriptstyle}
        \tikzstyle{carre}=[draw,minimum width=6mm,minimum height=6mm,inner sep=0pt,fill=black!10]
        \tikzstyle{rond}=[draw,minimum width=7mm,circle,inner sep=0pt,fill=black!10]
        %% \draw (-1.4,0) node [draw,circle] (A0) {$s_0$};
        %% \draw (-1.4,-1) node [rounded corners=2mm,draw] (B0) {$y$};
        \draw (-3.5,1.2) node [rond] {} node [right=.5cm] {player $A$};
        \draw (-3.5,.3) node [carre] {} node [right=.5cm] {player $B$};
        \draw (0,0) node [carre] (A1) {$\forall_1$}; 
        \draw (1,1) node [rond] (B1) {$x_1$};
        \draw (1,-1) node [rond] (C1) {$\lnot x_1$}; 
        \draw (2,0) node [rond] (A2) {$\exists_2$};
        \draw (3,1) node [rond](B2){$x_2$};
        \draw (3,-1) node [rond](C2){$\lnot x_2$};
      \draw (4,0) node [carre] (A3) {$\forall_3$}; 
      \draw (5,1) node [rond] (B3) {$x_3$};
      \draw (5,-1) node [rond] (C3){$\lnot x_3$};
      \draw (6,0) node [rond] (A4) {$\exists_4$};
      \draw (7,1) node [rond] (B4) {$x_4$};
      \draw (7,-1) node [rond] (C4){$\lnot x_4$};
      \draw (8,0) node [rond] (A5){$\top$};

      %% \draw [-latex'] (A0) -- (A1);
      %% \draw [-latex'] (A0) -- (B0);
      \draw [-latex'] (A1) -- (B1);
      \draw [-latex'] (A1) -- (C1);
      \draw [-latex'] (B1) -- (A2);
      \draw [-latex'] (C1) -- (A2);
      \draw [-latex'] (A2) -- (B2);
      \draw [-latex'] (A2) -- (C2);

      \draw [-latex'] (B2) -- (A3);
      \draw [-latex'] (C2) -- (A3);

      \draw [-latex'] (A3) -- (B3);
      \draw [-latex'] (A3) -- (C3);
      \draw [-latex'] (B3) -- (A4);
      \draw [-latex'] (C3) -- (A4);
      \draw [-latex'] (A4) -- (B4);
      \draw [-latex'] (A4) -- (C4);
      \draw [-latex'] (B4) -- (A5);
      \draw [-latex'] (C4) -- (A5);

      \draw[-latex'] (A5) .. controls +(0.8,-0.5) and +(0.8,0.5) .. (A5);
      \end{tikzpicture}
      %\end{minipage}
    }
    \caption{Reachability game associated with the
      formula~\eqref{eq:formula1}}
    \label{fig:reach-game}
\end{figure}
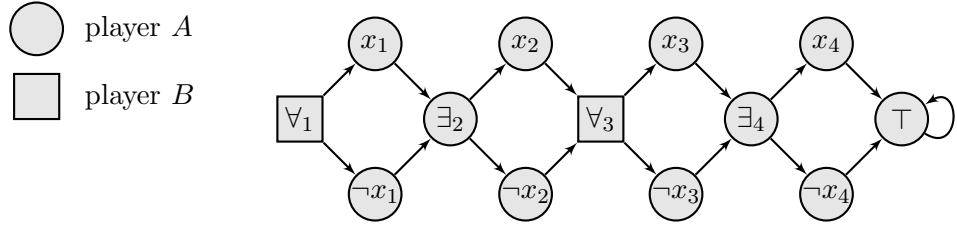

\subsubsection{Proof of \PSPACE-hardness for the (constrained) NE existence problem.}

We will now prove \PSPACE-hardness for the NE existence problem, under
the conditions specified in the statement of
Proposition~\ref{prop:reach-pspace}, using
Proposition~\ref{lem:link-value-exist}.  We specify the new preference
relation for the construction of Section~\ref{sec:link-value-exist}.
We give $B$ one objective, which is to reach $s_1$ ($s_1$ is the sink
state introduced by the construction).  In terms of preferences for
$A$, going $s_1$ should be just below visiting all targets.  For this
we use the statement in Proposition~\ref{prop:reach-pspace}, that
there is $v$ such that for every $v'$, $v' \ne \One \Leftrightarrow v'
\preorder v$, and add $s_1$ as a target to each $T^A_i$ such that $v_i
= 1$.  This defines a preference relation equivalent to the one in the
game constructed in Section~\ref{sec:link-value-exist}, therefore we
deduce with Proposition~\ref{lem:link-value-exist} that the NE existence
problem is \PSPACE-hard.

\subsubsection{Applications}
\label{subsec:boo-reach}
We should now notice that conjunction, counting and lexicographic
preorders (thanks to the fact that $\One$ is the unique maximal
element for theses orders and to Lemma~\ref{lem:second-maximal}).
As conjunction (for instance) can easily be encoded using a
(monotonic) Boolean circuit in polynomial time, the hardness results
are also valid if the preorder is given by a (monotonic) Boolean
circuit. Finally the subset preorder can be expressed as a
polynomial-size Boolean circuit and has a maximal element. We
therefore get the following summary of results:
\begin{corollary}\hfill
  \begin{itemize}
  \item For finite games with ordered reachability objectives, with either the
    conjunction, the counting or the lexicographic preorder, the value
    problem, the NE existence problem and the constrained NE existence problem
    are \PSPACE-complete.
  \item For finite games with ordered reachability objectives, where the
    preorders are given by (monotonic) Boolean circuits, the value problem,
    the NE existence problem and the constrained NE existence problem are
    \PSPACE-complete.
  \item For finite games with ordered reachability objectives, with
    the subset preorder, the value problem is \PSPACE-complete.
  \end{itemize}
\end{corollary}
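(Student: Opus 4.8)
The corollary is a direct consequence of Proposition~\ref{prop:reach-pspace}, so the plan is simply to check that each preorder in the statement matches the hypotheses of the relevant item of that proposition, and to assemble the pieces. I would organise the argument by the three types of bound (membership, value-problem hardness, NE-hardness) rather than by preorder.

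First, \PSPACE\ membership. Each of the conjunction, counting, lexicographic and subset preorders can be encoded by a Boolean circuit of polynomial size (as observed in Section~\ref{sec:prefrel}), hence is in particular a preorder given by a Boolean circuit. The first item of Proposition~\ref{prop:reach-pspace} therefore immediately gives that the value problem, the NE existence problem and the constrained NE existence problem are all in \PSPACE\ for ordered reachability objectives with any of these four preorders, as well as for preorders given directly by (monotonic) Boolean circuits.

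Next, \PSPACE-hardness of the value problem. For the conjunction, counting, lexicographic and subset preorders one checks at once that $\One$ is the unique maximal element: for conjunction and subset any $v \ne \One$ has a zero coordinate and hence $v \prec \One$, and for counting and lexicographic the same follows since $|\{i \mid v_i = 1\}| < n$ (resp.\ $v$ is lexicographically below $\One$). The second item of Proposition~\ref{prop:reach-pspace} then yields \PSPACE-hardness of the value problem for all four preorders. Since conjunction is moreover a monotonic preorder admitting a polynomial-size monotonic Boolean circuit encoding (Section~\ref{sec:prefrel}), this hardness transfers to the value problem for ordered reachability objectives with preorders given by (monotonic) Boolean circuits as well; and monotonic circuits being a special case of general circuits, to the general circuit case too.

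Finally, \PSPACE-hardness of the NE existence and constrained NE existence problems. By Lemma~\ref{lem:second-maximal}, the conjunction, counting and lexicographic preorders each admit an element~$v$ with $v' \ne \One \Leftrightarrow v' \preorder v$; combined with the fact just noted that $\One$ is their unique maximal element, the third item of Proposition~\ref{prop:reach-pspace} gives \PSPACE-hardness of both problems for these three preorders, and, via the monotonic-circuit encoding of conjunction, for preorders given by (monotonic) Boolean circuits. Note we do not and cannot claim NE-hardness for the subset preorder by this route: it has no such ``second maximal'' element~$v$ (e.g.\ $(0,\dots,0,1) \ne \One$ is incomparable with the natural candidate $(1,\dots,1,0)$), which is exactly why only the value problem is asserted in that case. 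There is no genuinely hard step here; the only thing to be careful about is precisely this matching of hypotheses, since all the real content has already been established in Proposition~\ref{prop:reach-pspace} and Lemma~\ref{lem:second-maximal}.
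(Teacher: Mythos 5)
Your proposal is correct and follows essentially the same route as the paper: it derives everything from Proposition~\ref{prop:reach-pspace} together with Lemma~\ref{lem:second-maximal}, checking that $\One$ is the unique maximal element of the relevant preorders and using the polynomial-size (monotonic) Boolean-circuit encodings both for \PSPACE\ membership and to transfer hardness to the circuit case. Your added remark explaining why the subset preorder only yields hardness of the value problem matches the paper's treatment (its NE problems are handled separately and shown \NP-complete later).
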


\noindent On~the other hand, the disjunction and maximise preorders do not have
a unique maximal element, so the hardness result does not carry over
to these preorders.  In~the same way, for the subset preorder, there
is no~$v$ such that $v' \ne \One \Leftrightarrow v' \preorder v$, so
the hardness result does not apply.  We prove later (in
Section~\ref{reach-simple}) that in these special cases, the
complexity is actually lower.

\subsection{Simple cases}
\label{reach-simple}

As for ordered B\"uchi objectives, for some ordered reachability
objectives, the preference relation can be (efficiently) (co-)reduced
to a single reachability objective. We do not give the formal
definitions, they can easily be inferred from that for B\"uchi
objectives on page~\pageref{subsec:reducible}.

\begin{proposition}\label{prop:simple-reach-p}\label{prop:reach-max-np}\hfill
  \begin{itemize}
  \item For finite games with ordered reachability objectives which
    are reducible to single reachability objectives and in which the
    preorders are non-trivial, the value problem is \P-complete.
  \item For finite games with ordered reachability objectives which
    are co-reducible to single reachability objectives, and and in
    which the preorders are non-trivial, the NE existence problem and the
    constrained NE
    existence problem are \NP-complete.
  \end{itemize}
\end{proposition}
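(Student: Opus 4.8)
The plan is to mimic, step for step, the development for ordered B\"uchi objectives in Proposition~\ref{prop:buchi-reducible}, replacing ``single B\"uchi objective'' by ``single reachability objective'' and ``co-B\"uchi game'' by ``safety game'' throughout. The two bullets are treated independently: the value problem reduces directly to a zero-sum reachability game, while the (constrained) NE existence problem is handled through the suspect game, which under co-reducibility collapses to a safety game exactly as in Section~\ref{subsec:reachability}.

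For the value problem, let $\omega = \langle (\Omega_i)_{1\le i\le n},\preorder\rangle$ be the ordered reachability objective of the distinguished player~$A$, and let $\pi$ be the threshold play, with payoff $v^A=\payoff_\omega(\pi)$. By reducibility I compute in polynomial time a target set $\widehat T(v^A)$ such that $v^A \preorder \payoff_\omega(\rho) \iff \Occ(\rho)\cap \widehat T(v^A)\neq\varnothing$. Hence ``player~$A$ can ensure $\pi$'' is literally the value problem for the single reachability objective $\widehat T(v^A)$, which is in \P\ (Table~\ref{table-single}); since the reduction is polynomial, the value problem for reducible ordered reachability objectives is in \P. Hardness is inherited, since a single reachability objective is a non-trivial ordered reachability objective that is trivially reducible, and the value problem is already \P-hard on that subclass (Section~\ref{subsec:reachability}).

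For the NE existence problems, assume the $(\omega_A)_{A\in\Agt}$ are co-reducible to single reachability objectives, fix a play $\pi$, and set $w^A=\payoff_A(\pi)$. In $\calH(\calG,\pi)$, \Eve wins $\rho$ iff for every $A\in\limitpi2(\rho)$ one has $\payoff_A(\Sproj_1(\rho))\preorder_A w^A$, i.e.\ (by co-reducibility) $\Sproj_1(\rho)$ does not visit $\widehat T^A(w^A)$. Exactly as in Lemma~\ref{lemma:reach-to-safety}, one checks that over $\calJ(\calG)$ this is equivalent to the safety objective with forbidden set $\{(s,P)\mid \exists A\in P.\ s\in\widehat T^A(w^A)\}$: a play avoiding this set has every limit suspect missing its set $\widehat T^A(w^A)$ on the projection (the limit set is contained in every $P$ occurring along the play), and conversely an \Adam strategy forcing the forbidden set, then switching to obeying \Eve, defeats her. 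Thus the winning region $W(\calG,\pi)$ is that of a safety game on the polynomial-size arena $\calJ(\calG)$ (Proposition~\ref{lem:polynomial-size}) and is computed in polynomial time by attractor computation. The algorithm is then the non-deterministic polynomial-time procedure of Section~\ref{subsec:reachability}: guess a lasso-shaped play $\rho=\tau_1\cdot\tau_2^\omega$ with $|\tau_i|\le 2|\Stat|^2$ in $\calJ(\calG)$ along which \Adam obeys \Eve and such that $\pi=\Sproj_1(\rho)$ meets the payoff constraints; compute $W(\calG,\pi)$ as above; and check that $\rho$ stays in $W(\calG,\pi)$. Correctness follows from Theorem~\ref{thm:eq-win} and Proposition~\ref{lem:play-length}, using that an \Adam-state of $\rho$ lying in $W(\calG,\pi)$ forces all its successors (all deviations) into $W(\calG,\pi)$, from which \Eve has a safety strategy, so condition~(\ref{cond:win}) holds. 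This gives membership in \NP; \NP-hardness of both the NE existence and the constrained NE existence problems is inherited from single reachability objectives (Section~\ref{subsec:reachability}), a non-trivial co-reducible subclass.

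The one point needing care — and the only real obstacle — is that the safety winning condition is prefix-dependent (cf.\ Remark~\ref{rem:prefix-independant}), so I must argue that checking ``$\rho$ stays in $W(\calG,\pi)$'' really captures condition~(\ref{cond:win}) rather than merely ``$\rho$ stays in the winning region''. This works because the winning region of a safety game is disjoint from the forbidden set, so each prefix $\rho_{\le i}$ is already free of forbidden states and \Eve's safety strategy from $\rho_{=i}$ keeps it so forever. Beyond that, everything is a transcription of Section~\ref{subsec:reachability} with $\widehat T^A(w^A)$ in place of the target sets, together with a re-run of the proof of Lemma~\ref{lemma:reach-to-safety} to confirm the limit-suspect bookkeeping.
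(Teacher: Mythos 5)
Your proposal is correct and follows essentially the same route as the paper: for the value problem, reducibility turns the question into a zero-sum single-reachability game solvable in \P, and for the (constrained) NE existence problem, one guesses the payoff and a short lasso, co-reduces each player's preference at the guessed payoff to a single reachability target $\widehat T^A(w^A)$, and checks the lasso against the safety winning region of the suspect game on $\calJ(\calG)$, with hardness inherited from single reachability objectives. The only difference is presentational: the paper phrases the key step as an equivalence with a Nash equilibrium of payoff~$0$ for the single objectives $\widehat T^A(w^A)$ and invokes the algorithm of Section~\ref{subsec:reachability} as a black box, whereas you inline that algorithm (re-deriving the analogue of Lemma~\ref{lemma:reach-to-safety} and the prefix-dependence check), which is a sound unrolling of the same argument.
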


\begin{proof}
  Since \P-hardness (resp. \NP-hardness) already holds for the value
  (resp. NE existence) problem with a single reachability objective
  (see~\cite[Sect.~2.5.1]{GTW02}), we only focus on the upper bounds.

  We begin with the value problem: given a payoff vector~$u$ for
  player~$A$, we~build the new target set~$\widehat T$ in polynomial
  time, and then use a classical algorithm for deciding whether
  $A$~has a winning strategy (see \cite[Sect.~2.5.1]{GTW02}).  If
  she~does, then she can secure payoff~$u$.

  Consider now the constrained NE existence problem, and assume that the
  preference relation for each player~$A$ is given by target
  sets~$(T_i^A)_{1\leq i\leq n_A}$.  The \NP-algorithm consists in
  guessing the payoff vector~$(v_A)_{A\in\Agt}$ and an ultimately
  periodic play~$\rho = \pi \cdot \tau^\omega$ with $|\pi|,|\tau| \le
  |\Stat|^2$, which, for each~$A$, visits $T_i^A$ \iff
  $v^A_i = 1$.  We then co-reduce the payoff to a new target
  set~$\widehat T^A(v^A)$ for each player~$A$.

  The run~$\rho$ is the outcome of a Nash equilibrium with
  payoff~$(v_A)_{A \in \Agt}$ for the original preference relation
  \iff $\rho$~is the outcome of a Nash equilibrium with payoff~$0$
  with the single reachability objective~$\widehat T^A(v^A)$ for
  each~$A\in\Agt$.  Indeed, in~both cases, this is equivalent to the
  property that no player~$A$ can enforce a payoff greater than~$v^A$.
  Applying the algorithm presented in
  Section~\ref{subsec:reachability}.  this condition can be checked in
  polynomial time.
\end{proof}

We now see to which ordered objectives this result applies. It is
not difficult to realise that the same transformations as those made
in the proof of Lemma~\ref{lemma:examples} can be made as well for
reachability objectives. We therefore get the following lemma, from
which we get the remaining results in Table~\ref{table-reach}.

\begin{lemma}
  Ordered reachability objectives with disjunction or maximise
  preorders are reducible to single reachability objectives. Ordered
  reachability objectives with disjunction, maximise or subset
  preorders are co-reducible to single reachability objectives.
\end{lemma}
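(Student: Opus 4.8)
The plan is to mirror, for reachability objectives, the proof of Lemma~\ref{lemma:examples} which established the analogous statement for B\"uchi objectives. The key observation is that in the reachability setting the relevant information about a play~$\rho$ is the set $\Reach(\rho) = \{i \mid \Occ(\rho) \cap T_i \neq \varnothing\}$, which plays exactly the r\^ole that $\{i \mid \Inf(\rho) \cap T_i \neq \varnothing\}$ played in the B\"uchi case; so $\payoff_\omega(\rho) = \One_{\Reach(\rho)}$. Reducibility (resp.\ co-reducibility) to a single reachability objective then amounts to: given a payoff vector~$v$, producing in polynomial time a target set~$\widehat T(v)$ such that $v \preorder \payoff_\omega(\rho)$ (resp.\ $\payoff_\omega(\rho) \not\preorder v$) iff $\Occ(\rho) \cap \widehat T(v) \neq \varnothing$. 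First I would state this informal correspondence explicitly, so that the three constructions below are seen to be the verbatim transcriptions of the B\"uchi ones with ``$\Inf$'' replaced by ``$\Occ$''.

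Then I would treat the three preorders in turn, exactly as in Lemma~\ref{lemma:examples}. For the \textbf{disjunction} preorder: if $v \neq \Zero$, set $\widehat T(v) = \bigcup_{i=1}^{n} T_i$; if $v = \Zero$, set $\widehat T(v) = \Stat$. Then $v \preorder \payoff_\omega(\rho)$ iff some $T_i$ is reached iff $\Occ(\rho) \cap \widehat T(v) \neq \varnothing$; since disjunction is total, Lemma~\ref{lemma-total} (whose statement and proof are phrased abstractly and apply equally to reachability ordered objectives, since it only uses reducibility/co-reducibility notions) gives co-reducibility as well. For the \textbf{maximise} preorder: given~$v$, let $i_0 = \max\{i \mid v_i = 1\}$ and set $\widehat T(v) = \bigcup_{i \geq i_0} T_i$; then $v \preorder \payoff_\omega(\rho)$ iff $i_0 \leq \max\{i \mid \Occ(\rho)\cap T_i \neq\varnothing\}$ iff $\Occ(\rho) \cap \widehat T(v) \neq \varnothing$, so $\omega$ is reducible, and co-reducible by totality. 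For the \textbf{subset} preorder, which is not total: given~$v$, set $\widehat T(v) = \bigcup_{\{i \mid v_i = 0\}} T_i$; then $\payoff_\omega(\rho) \not\preorder v$ iff there is some~$i$ with $\Occ(\rho)\cap T_i \neq \varnothing$ and $v_i = 0$ iff $\Occ(\rho)\cap\widehat T(v)\neq\varnothing$, which is co-reducibility. Each $\widehat T(v)$ is clearly computable in polynomial (indeed linear) time from the~$T_i$'s and~$v$.

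I do not expect a genuine obstacle here: the proof is essentially a copy of the B\"uchi case, and the only thing to be careful about is that Lemma~\ref{lemma-total} was stated for B\"uchi ordered objectives, so I should either observe that its proof goes through unchanged for reachability ordered objectives (it manipulates only payoff vectors, the preorder, and the abstract (co-)reducibility property), or briefly re-state it in the reachability setting. The slight mismatch worth a sentence is that subset is reducible to a \emph{generalised} reachability objective (all of several target sets must be reached) rather than a single one, exactly as noted in the Remark following the B\"uchi corollary; but this is not needed for the claim, since we only assert co-reducibility for subset. I would close with the one-line proof.

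\begin{proof}
  Recall that for a play~$\rho$ and an ordered reachability
  objective~$\omega = \langle (\Omega_i)_{1 \le i \le n}, \preorder\rangle$
  with target sets~$T_i$, one has $\payoff_\omega(\rho) = \One_{R(\rho)}$
  where $R(\rho) = \{i \mid \Occ(\rho) \cap T_i \ne \varnothing\}$. The
  constructions are then identical to those in the proof of
  Lemma~\ref{lemma:examples}, with $\Inf$ replaced by $\Occ$.

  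Assume $\preorder$ is the \textbf{disjunction} preorder. If $v \ne
  \Zero$, let $\widehat T(v) = \bigcup_{i=1}^n T_i$; if $v = \Zero$, let
  $\widehat T(v) = \Stat$. In both cases $v \preorder
  \payoff_\omega(\rho)$ iff $\Occ(\rho) \cap \widehat T(v) \ne
  \varnothing$, so $\omega$ is reducible to a single reachability
  objective; it is total, hence co-reducible as well by (the reachability
  analogue of) Lemma~\ref{lemma-total}.

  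Assume $\preorder$ is the \textbf{maximise} preorder. Given a
  payoff~$v \ne \Zero$, let $i_0 = \max\{i \mid v_i = 1\}$ and $\widehat
  T(v) = \bigcup_{i \ge i_0} T_i$ (and $\widehat T(\Zero) = \Stat$).
  Then, for a run $\rho$, $v \preorder \payoff_\omega(\rho)$ iff $i_0 \le
  \max\{i \mid \Occ(\rho)\cap T_i \ne \varnothing\}$ iff $\Occ(\rho)\cap
  \widehat T(v) \ne \varnothing$. Hence $\omega$ is reducible, and also
  co-reducible since maximise is total.

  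Finally, assume $\preorder$ is the \textbf{subset} preorder. Given a
  payoff~$v$, let $\widehat T(v) = \bigcup_{\{i \mid v_i = 0\}} T_i$.
  Then, for a run $\rho$, $\payoff_\omega(\rho) \not\preorder v$ iff there
  is some~$i$ with $\Occ(\rho)\cap T_i \ne \varnothing$ and $v_i = 0$ iff
  $\Occ(\rho) \cap \widehat T(v) \ne \varnothing$. Hence $\omega$ is
  co-reducible to a single reachability objective. All the target
  sets~$\widehat T(v)$ above are computable in polynomial time.
\end{proof}
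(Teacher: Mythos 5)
Your proof is correct and follows exactly the route the paper takes: the paper itself merely remarks that the transformations from Lemma~\ref{lemma:examples} carry over to reachability objectives with $\Inf$ replaced by $\Occ$, and your write-up is a faithful (and slightly more explicit, e.g.\ for the $v=\Zero$ case of the maximise preorder and the transfer of Lemma~\ref{lemma-total}) elaboration of that observation.
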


We conclude with stating the following corollary:
\begin{corollary}\hfill
  \begin{itemize}
  \item For finite games with ordered reachability objectives, with
    either the disjunction or the maximise preorder, the value problem
    is \P-complete.
  \item For finite games with ordered reachability objectives, with either the
    disjunction, the maximise or the subset preorder, the NE existence problem
    and the constrained NE existence problem are \NP-complete.
  \end{itemize}
\end{corollary}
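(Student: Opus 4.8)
The plan is to obtain this corollary as a direct instantiation of Proposition~\ref{prop:simple-reach-p} to the concrete preorders treated in the lemma immediately preceding the corollary. First I would record that the disjunction, maximise and subset preorders over $\{0,1\}^n$ are all non-trivial in the sense required by that proposition: each of them has at least two equivalence classes (for instance $\Zero$ and $\One$ are never identified), so the hypothesis ``the preorders are non-trivial'' is satisfied in every case we need.

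For the value problem, the preceding lemma states that ordered reachability objectives with the disjunction or the maximise preorder are reducible to a single reachability objective. Applying the first item of Proposition~\ref{prop:simple-reach-p} then gives membership of the value problem in \P. The matching \P-hardness is inherited from the value problem for a single reachability objective (which is the special case $n=1$, and whose \P-hardness is recalled in the proof of Proposition~\ref{prop:simple-reach-p}; see also \cite[Sect.~2.5.1]{GTW02}). This yields \P-completeness for the disjunction and maximise preorders. Note that the subset preorder is deliberately excluded from this first item, since, as already observed, it is not total and is not reducible to a single reachability objective (only to a generalised reachability objective); so I would take care not to let the value-problem claim spill over to it.

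For the NE existence and constrained NE existence problems, the preceding lemma states that ordered reachability objectives with the disjunction, maximise or subset preorder are co-reducible to a single reachability objective. The second item of Proposition~\ref{prop:simple-reach-p} then provides the \NP upper bound in all three cases. The corresponding \NP-hardness is again inherited from the single-reachability-objective case (the $n=1$ case), for which \NP-hardness of both the NE existence problem and the constrained NE existence problem was established in Section~\ref{subsec:reachability}. Combining the two bounds gives \NP-completeness, completing the proof.

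I do not expect any genuine obstacle here: the statement is a routine packaging of a general reduction lemma (Proposition~\ref{prop:simple-reach-p}) together with the (co-)reducibility facts just proven. The only point needing mild attention is bookkeeping—keeping straight that disjunction and maximise are \emph{reducible} (hence give a \P\ value problem) whereas disjunction, maximise \emph{and} subset are only \emph{co-reducible} (hence give the \NP\ equilibrium-existence results), so that the two items of the corollary quote the correct halves of the lemma and of Proposition~\ref{prop:simple-reach-p}.
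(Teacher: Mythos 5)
Your proposal is correct and follows exactly the paper's route: the corollary is obtained by instantiating Proposition~\ref{prop:simple-reach-p} with the (co-)reducibility facts of the preceding lemma, with hardness inherited from the single-reachability case. Your bookkeeping about which preorders are reducible versus only co-reducible (and hence the exclusion of the subset preorder from the value-problem item) matches the paper's intent.
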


\section{Conclusion}

\paragraph{Summary and impact of the results}
Concurrent games are a natural class of games, extending classical
turn-based games with more complex interactions.  We have developed a
complete methodology, involving new techniques, for computing pure
Nash equilibria in this class of games. We were able to characterise
the complexity of finding Nash equilibria (possibly with constraints
on the payoff) for simple qualitative objectives first
(Section~\ref{sec:single}), and then for semi-quantitative objectives
(Section~\ref{sec:buchi} and \ref{sec:reach}). We would like to point
out that the algorithm for B\"uchi objectives with maximise preorder
(see Section~\ref{subsec:reducible}) has been implemented in tool
Praline\footnote{Available on
  \url{http://www.lsv.ens-cachan.fr/Software/praline/}}~\cite{brenguier13b}

We believe the methodology we have developed in this paper can be used
in many other contexts, and the suspect game is a very powerful tool
that will allow to analyze various properties of multi-agent
systems. Indeed, the correspondence between pure Nash equilibria in
the original game and winning strategies in the suspect game holds
with no assumption on the structure of the game. In particular it can
be applied to games given as pushdown systems, counter systems, \etc
Also it does not assume anything on the preference relations, only the
resulting winning condition in the suspect game can become very
complex if the preference relations are complex. Now the matter is
just algorithmics, in that we have to solve a two-player turn-based
game in a potentially complex arena (if the original game structure is
complex) with a potentially complex winning condition (if the
preference relations are complex).

The suspect game construction can also be adapted to compute many
other kinds of equilibria; this is for instance applied to robust
equilibria in~\cite{brenguier13}. We believe this can be used in many
other contexts.

We have also developed in this paper another tool that might have its
own interest and be useful in some other contexts: the game-simulation
(see Section~\ref{sec:simulation}). We used this tool several times
(for handling objectives given by deterministic Rabin automata, but
also for handling ordered reachability objectives). This tool can also
be used to handle more complex game structures, like we did
in~\cite{BBMU11} for timed games, when we originally introduced this
notion. In particular, the construction done in~\cite{BBMU11} shows
that we can compute Nash equilibria for timed games with all kinds of
objectives studied in the current paper.

Our future researches will include extending the use of the
suspect game abstraction for other families of games, and to push it
further to also handle truly quantitative objectives.

\paragraph{Discussion on the various hypotheses made in this paper}
We have assumed strategies are pure, and game structures are
deterministic. This is indeed a restriction, and allowing for
randomised strategies would be of great interest. Note however that
pure Nash equilibria are resistant to malicious randomised players
(that is, to deviations by randomised strategies).  There is no
obvious way to modify the suspect game construction to handle either
stochastic game structures or randomised strategies. Indeed, given a
history, it is hard to detect strategy deviations if they can be
randomised, and therefore the set of suspects is hard to compute (and
actually even define).  This difficulty is non-surprising, since the
existence of a Nash equilibrium in pure or randomised strategies is
undecidable for stochastic games with reachability or B\"uchi
objectives~\cite{UW11}, and the existence of a Nash equilibrium in
randomised strategies is undecidable for deterministic
games~\cite{UW11a}. However we would like to exhibit subclasses of
stochastic games for which we can synthesize randomised Nash
equilibria, this is part of our research programme.

We have assumed that strategies are based on histories that only
record states which have been visited, and not actions which have been
played. We believe this is more relevant in the context of distributed
systems, where only the effect of an action might be visible to other
players. Furthermore, this framework is more general than the one
where every player could see the actions of the other players, since
the latter can easily be encoded in the former. In the context of
complete information (precise view of the actions), computing Nash
equilibria is rather easy since, once a player has deviated from the
equilibrium, all the other players know it and can make a coalition
against that player. To illustrate that simplification, we only
mention that the constrained NE existence problem falls in \NP for finite
games with single parity objectives (we can obtain this bound based on
the suspect game construction), if we assume that strategies can
observe actions, whereas the problem is $\P^\NP_\parallel$-hard if
strategies do not observe the actions.

Finally we have chosen the framework of concurrent games, and not that of
turn-based games as is often the case in the literature. Concurrent games
naturally appear when studying timed games~\cite{BBMU11} (the~semantics of a
timed game is that of a concurrent game, and the abstraction based on regions
that is correct for timed games is concurrent), and in the context of
distributed systems, concurrent moves are also very natural. In~fact
turn-based games are even a simpler case of concurrent games when we assume
strategies can see the actions. Of~course, the~suspect game construction
applies to turn-based games, but becomes quite simple (as~is the case if
strategies do see actions), since the set of suspect players is either the
set~$\Agt$ of all players (this is the case as long as no player has deviated from the
equilibrium), or reduces to a singleton, as~soon as a player has deviated.
To~illustrate this simplification, we notice that in the turn-based finite
games, the constrained NE existence problem is \NP-complete for single parity
objectives~\cite{ummels08} (it is $\P^\NP_\parallel$-complete in finite
concurrent games).

\bigskip\noindent \textbf{Acknowledgment.} 
We would like to thank the reviewers for their numerous comments and remarks,
which helped us improve the presentation of this paper.

\bibliographystyle{myplain}
\bibliography{newbib}

\clearpage
\appendix

\section*{Appendix: Proof of Proposition~\ref{proposition:explosion}}
\label{app}

We show \PSPACE-hardness of the constrained existence of a Nash
equilibrium for various kinds of qualitative objectives, using an
encoding of the satisfiability of a \QSAT formula $\psi = \forall x_1.\
\exists x_2.\ \dots \forall x_{p-1}.\ \exists x_p.\ \bigwedge_{1 \le i
  \le n} c_i$, where each $c_i$ is of the form $\ell_{i,1} \lor
\ell_{i,2} \lor \ell_{i,3}$ and $\ell_{i,j} \in \{ x_k , \lnot x_k
\mid 1 \le k \le p \}$.

We construct a game
$\calG_\psi=\tuple{\Stat,\Agt,\Act,\Allow,\Tab,(\mathord\prefrel_A)_{A\in\Agt}
}$ as follows:
$\Stat = \{u,w\} \cup \bigcup_{k\in \lsem 1 , p \rsem} \{ s_k , t_k,
f_k, d_k, e_k\} \cup \bigcup_{i\in \lsem 1 ,n \rsem} \{ b_i , c_i \}$;
$\Agt = \{ \Eve \} \cup \bigcup_{k\in \lsem 1 , p \rsem} \{A_k , B_k
\}$; $\Act = \{ 0 , 1 , 2\}$. We now define the transition table 
(the~structure of the game is represented in
\figurename~\ref{fig:hardness-succinct}).

\begin{figure}[h]
  \centering
  \begin{tikzpicture}
    \tikzstyle{rond}=[draw,circle,minimum size=7mm,inner sep=0mm,fill=black!10]
    \tikzstyle{oval}=[draw,minimum height=7mm,inner sep=1mm,rounded corners=2mm,fill=black!10]
    \draw (0,0) node[rond] (S1) {$s_1$};
    \draw (1,1) node[rond] (T1) {$t_1$};
    \draw (1,-1) node[rond] (F1) {$f_1$};
    \draw (2,0) node[rond] (S2) {$s_2$};
    \draw (3,1) node[rond] (T2) {$t_2$};
    \draw (3,-1) node[rond] (F2) {$f_2$};
    \draw (4,0) node[inner sep=4mm] (S3) {\dots};
    \draw (5,0) node[rond] (SN) {$s_p$};
    \draw (6,1) node[rond] (TN) {$t_p$};
    \draw (6,-1) node[rond] (FN) {$f_p$};
    % \draw (7,0) node[rond,inner sep=0] (SN1) {$s_{n+1}$};
    \draw (8,1) node[rond] (B1) {$b_1$};
    \draw (8,0) node[rond] (B2) {$b_2$};
    \draw (8,-1) node (B3) {\dots};
    \draw (8,-2) node[rond] (BK) {$b_n$};
    \draw (9.5,1) node[rond] (C1) {$c_1$};
    \draw (9.5,0) node[rond] (C2) {$c_2$};
    \draw (9.5,-1) node (C3) {\dots};
    \draw (9.5,-2) node[rond] (CK) {$c_n$};
    \draw (11,1.8) node[oval] (D1) {$d(\ell_{1,1})$};
    \draw (11,1) node[oval] (D2) {$d(\ell_{1,2})$};
    \draw (11,0.2) node[oval] (D3) {$d(\ell_{1,3})$};
    \draw (11,-1.5) node {\dots};
    \draw (13,0) node[rond] (W) {$w$};
    
    \draw (5,2) node[rond] (U) {$u$};
    
    \draw[-latex'] (S1) -- (T1);
    \draw[-latex',dotted] (S1) -- (U);
    \draw[-latex'] (S1) -- (F1);
    \draw[-latex'] (T1) -- (S2);
    \draw[-latex'] (F1) -- (S2);
    \draw[-latex'] (S2) -- (T2);
    \draw[-latex'] (S2) -- (F2);
    \draw[dashed] (T2) -- (S3);
    \draw[dashed] (F2) -- (S3);
    \draw[-latex'] (SN) -- (TN);
    \draw[-latex'] (SN) -- (FN);
    \draw[-latex'] (TN) -- (B1);
    \draw[-latex'] (FN) -- (B1);
    % \draw[-latex'] (SN1) -- (B1);
    \draw[-latex'] (B1) -- (C1);
    \draw[-latex'] (B1) -- (B2);
    \draw[-latex'] (B2) -- (C2);
    \draw[-latex'] (B2) -- (B3);
    \draw[-latex'] (B3) -- (C3);
    \draw[-latex'] (BK) -- (CK);
    
    \draw[-latex',dotted] (T2) -- (U);
    \draw[-latex',dotted] (F2) -- (U);
    \draw[-latex',dotted] (TN) -- (U);
    \draw[-latex',dotted] (T2) -- (U);
    \draw[-latex',dotted] (D1) -- (U);
    \draw[-latex',dotted] (D2) -- (U);
    \draw[-latex',dotted] (D3) -- (U);
    \draw[-latex'] (D1) -- (W);
    \draw[-latex'] (D2) -- (W);
    \draw[-latex'] (D3) -- (W);
    
    \draw[-latex'] (C1) -- (D1);
    \draw[-latex'] (C1) -- (D2);
    \draw[-latex'] (C1) -- (D3);
  \end{tikzpicture}
  \caption{Encoding of a \QSAT formula into a game with succinct
    representation of the transition formula.  Dotted edges correspond
    to the strategy profile that in each states selects action $0$ for
    every player.}
  \label{fig:hardness-succinct}
\end{figure}
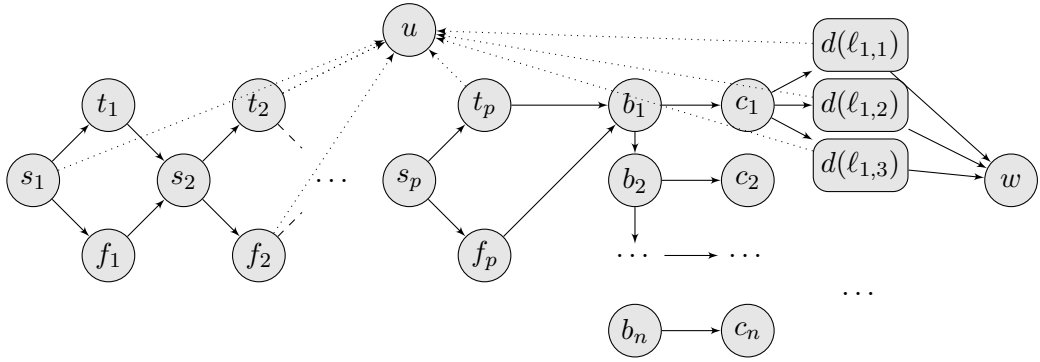

\begin{itemize}
\item If $k \le p$ is odd, then in state $s_k$, the transition
  function is given by:\footnote{The operator $\bigotimes$ evaluates
    the parity of the number of subformulas that are true:
    $\bigotimes_{h=1}^g \alpha_h$ is \true iff $|\{\alpha_h \mid
    \alpha_h\ \text{evaluates to}\ \true\}|$ is odd.}
  \begin{multline*}
  \Bigl(\bigl(\bigotimes_{1\le k' \le p} (A_{k'} = 1) \otimes 
   \bigotimes_{1\le k' \le
    p, k' \ne k} (B_{k'} = 1),t_k\bigr),\\ 
  \bigl(\bigotimes_{1\le k' \le p, k' \ne k}
  (A_{k'} = 0) \otimes \bigotimes_{1\le k' \le p} (B_{k'} = 0),f_k\bigr),\
  \bigl(\top,u\bigr)\Bigr)
  \end{multline*}
  In the first part, the coalition of all the players except \Eve and
  $B_k$ takes the decision to go to~$t_k$, and any of those players
  can switch her action and enforce state~$t_k$ (meaning that $x_k$ is
  set to \true); if the move to state $t_k$ is not chosen, then the
  coalition of all players except \Eve and $A_k$ takes the decision to
  go to~$f_k$, and any of those players can switch her action and
  enforce state $f_k$ (meaning that $x_k$ is set to \false); otherwise
  the game goes to state~$u$.

  In states~$t_k$ and $f_k$, the transition function is~$(\top,s_{k+1})$.
\item If $k \le p$ is even, then in state $s_k$ the transition
  function is given by $((\Eve = 1, t_k), (\top,f_k))$: \Eve~decides the value
  of variable~$x_k$ (state~$t_k$ corresponds to setting~$x_k$ to \true, and
  state~$f_k$ corresponds to setting variable~$x_k$ to \false).

  In state $t_k$, the transition function is given by 
  \[
  \Bigl(\bigl(\bigvee_{1 \le k' \le p} (A_{k'} = 1) \lor \bigvee_{1\le k' \le p, k'
    \ne k} (B_{k'} = 1), s_{k+1}\bigr),\ \bigl(\top, u\bigr)
  \Bigr)
  \] 
  with $s_{p+1} = b_1$: any player except \Eve and $B_k$ can decide to
  go to state $s_{k+1}$ by playing action~$1$; otherwise the game
  proceeds to state~$u$. Intuitively, any of the above players can
  ``validate'' the previous choice of \Eve having set $x_k$ to \true.

  In state $f_k$, the transition function is given by 
  \[
  \Bigl(\bigl(\bigvee_{1 \le k' \le p, k' \ne k} (A_{k'} = 0) \lor 
   \bigvee_{1\le k'  \le p} (B_{k'} = 0), s_{k+1}\bigr),\bigl(\top, u\bigr)
  \Bigr)  
  \] 
  with $s_{p+1} = b_1$: any player except \Eve and $A_k$ can decide to
  go to state $s_{k+1}$ by playing action~$1$; otherwise the game
  proceeds to state~$u$. Intuitively, any of the above players can
  ``validate'' the previous choice of \Eve having set~$x_k$ to \false.
\item If $i \le n$, in $b_i$,
  % for each $l\le n$, and $i \le k$, 
  the transition function is given by
  \[
  \Bigl(\bigl(\bigotimes_{1 \le k \le p} ((A_k = 1) \otimes (B_k= 1)),
  c_i\bigr),\bigl(\top,b_{i+1}\bigr)\Bigr)
  \]
  with $b_{n+1} = u$. Intuitively the coalition of all players except
  \Eve can decide to go to state~$c_i$, which will mean that they want
  to check the truth of clause~$c_i$. Moreover, any of those players
  can switch her action and decide by her own to check this clause.
\item If $i \le n$, in~$c_i$, the transition function is given by
  \[ 
  \Bigl(\bigl(\Eve = 1, d(\ell_{i,1})\bigr) , \bigl(\Eve = 2,
  d(\ell_{i,2})\bigr) , \bigl(\top, d(\ell_{i,3})\bigr)\Bigr) 
  \]
  where for all $1 \le k \le p$, $d(x_k) = d_k$ and $d(\non x_k) =
  e_k$. Intuitively \Eve proves the current clause is satisfied by
  pointing to the literal which is set to \true.

  In state $d_k$ ($1 \le k \le p$), the transition function is given by
  $\bigl((B_k = 1,w), (\top,u)\bigr)$. Player~$B_k$ decides to go to~$u$
  or~$w$.

  In state $e_k$ ($1 \le k \le p$), the transition function is given by
  $\bigl((A_k = 1,w), (\top,u)\bigr)$.
\end{itemize}\medskip

\noindent Intuitively, in the game we have just defined, \Eve will be in charge of
properly choosing the value of the existentially quantified variables
in~$\psi$. The value of the variables will be given by the history
(visiting~$t_k$ means variable $x_k$ is set to \true, whereas visiting $f_k$
means variable~$x_k$ is set to \false). Then, player~$A_k$ will be in charge
of witnessing that variable~$x_k$ is set to
\true, whereas player $B_k$ will be in charge of witnessing that
variable~$x_k$ is set to \false. Their role will be clearer in the
proof.
  
The objective for each player $A_k$, $B_k$ is to reach state~$w$, and
for \Eve to reach state~$u$.  This is naturally a reachability
objectives but can also be encoded as a B\"uchi objective or a safety
objective where the goal is to avoid state~$u$ for $A_k$ and~$B_k$,
and avoid~$v$ for~\Eve.

\bigskip We will show that
%\begin{equation}
there is a Nash equilibrium in $\calG_\psi$
where \Eve wins if, and only if, $\psi$ is valid.
%\tag{$\star$}\label{eq-app}
%\end{equation}

Before switching to the proof of this equivalence, we define a
correspondence between (partial) valuations and histories in the
game. with~a partial valuation $v\colon\{x_1,\dots,x_k\} \to
\{\true,\false\}$, we~associate the history $\mathsf{h}(v) = s_1 w_1
s_2 w_2 \dots w_k s_{k+1}$ where for all~$1\leq k'\leq k$, $w_{k'} = t_{k'}$
(resp. $w_{k'}=f_{k'}$) if $v(x_{k'}) =\true$ (resp. $v(x_{k'})=\false$).
Conversely with every history~$h$ in~$\calG_\psi$, we~associate the partial
valuation $\mathsf{v}_h\colon\{x_1,\dots,x_k\} \to \{\true,\false\}$ such that
state $s_{k+1}$ (with $s_{p+1}=b_1$) is the latest such state appearing
along~$h$, and $\mathsf{v}_h(x_{k'}) = \true$ (resp.~$\false$) if $h$ visits
$t_{k'}$ (resp.~$f_{k'}$), for all~$1\leq k'\leq k$.

\medskip %\fbox{Proof of $\Leftarrow$} 
Assume formula $\psi$ is valid.
For all players $A_k$ and $B_k$ we set strategies $\sigma_{A_k}$ and
$\sigma_{B_k}$ to always play action~$2$. We~now turn to the strategy
for~\Eve.  Consider a history $h = s_1 \cdots w_{k-1} \cdot s_{k}$ where
$k<p$ is odd.  Let $v'$ be the valuation where $v'(x_{k'}) =
\mathsf{v}_h(x_{k'})$ for all $k'<k$, and $v'(x_k) = 1$.  We set
$\sigma_\Eve(h)$ to be~$1$ if $v'$ makes the formula $\forall
x_{k+1}.\ \dots \exists x_p.\ \bigwedge_{1\le i \le n} c_i$ valid, and~$0$
otherwise. Since $\psi$ is valid, one of the two choices makes the rest of the
formula true. This ensures that a history that reaches~$b_1$ and that is
compatible with $\sigma_\Eve$ will define a valuation that makes
$\bigwedge_{1\le i \le n} c_i$ true. Fix a history~$h$ that is compatible
with~$\sigma_\Eve$ and ends up in some state~$c_i$: the~strategy of~$\Eve$ is
to go to~$d(\ell_{i,j})$ where the literal~$\ell_{i,j}$ makes the clause~$c_i$
true under valuation $\mathsf{v}_h$. For all other histories, we set the
strategy of~\Eve to be~$2$.

We show that the strategy profile $\sigma_\Agt = (\sigma_\Eve,
(\sigma_{A_k},\sigma_{B_k})_{1 \le k \le p})$ is a Nash
equilibrium. First notice that the outcome of $\sigma_\Agt$ is $s_1
\cdot u$ (since all players $A_k$ and $B_k$ play action~$2$): 
\Eve~wins, and all the other players lose.
We now describe interesting deviating strategies for the players~$A_k$
or~$B_k$:
\begin{itemize}
\item Consider a deviating strategy $\sigma'_{A_k}$ for player $A_k$:
  let $h \in \FOut(\replaceter{\sigma}{A_k}{\sigma'_{A_k}})$; if
  $\sigma'_{A_k}(h)=2$, then $\Out(\replaceter{\sigma}{A_k}{\sigma'_{A_k}})$
  ends up in state~$u$; therefore an interesting deviating strategy should
  choose some value~$0$ or~$1$ after any history. Now if $k'$ is odd with
  $k' \ne k$, then from $s_{k'}$, player~$A_k$ can choose to go to $t_{k'}$ 
  (action~$1$) or $f_{k'}$ (action~$0$). 
  If $k$ is odd, then the only way not to end up
  in~$u$ from~$s_k$ is to choose action~$1$ which leads to state $t_k$.
  Now at state $t_{k'}$ with $k'$ even, $\sigma'_{A_k}$ should validate
  the choice of \Eve (that~is, play action~$1$ in $t_{k'}$~--~meaning
  that variable $x_{k'}$ has value \true). At~state~$f_{k'}$ with $k'$ even,
  if $k' \ne k$, $\sigma'_{A_k}$ should validate the choice of~\Eve
  (that~is, play action~$0$ in $f_{k'}$~--~meaning that variable $x_{k'}$
  has value \false). At state $f_k$ if $k$ is even, nothing can be
  done which could be profitable to player~$A_k$: state~$u$ will be
  reached.
\item A similar reasoning can be done for player $B_k$: the only
  difference is at state $s_k$ when $k$ is odd, where player $B_k$ can
  only choose action~$0$ and go through~$f_k$.
\item In the part of the game after $b_1$, each player can deviate and
  choose to go to some state~$c_i$; this choice will be made for checking
  the truth of clause $c_i$ under the valuation that has been fixed by
  the history so~far.
\end{itemize}

\noindent Consider a deviation of some player that moves to $c_i$, and write~$h$
for the corresponding history up to state~$c_i$. The strategy of~\Eve
after~$h$ is to go to~$d(\ell_{i,j})$ where $\ell_{i,j}$ sets $c_i$ to
true under valuation~$\mathsf{v}_h$.
If $d(\ell_{i,j}) = x_k$, then $(a)$~this means that $\mathsf{v}_h(x_k)
= \true$, and $(b)$~the~next state is controlled by player~$B_k$. Using the
characterization of interesting deviating strategies above, it~cannot be the
case that player $B_k$ is the deviating player since from~$t_k$ (which~is
visited by~$h$), if~only $B_k$~deviates, the game unavoidably goes to
state~$u$. Hence, for every strategy $\sigma'_{B_k}$ for player $B_k$,
history~$h$ cannot be an outcome of $\replaceter{\sigma}{B_k}{\sigma'_{B_k}}$.
In~particular, no~deviation of player~$B_k$ can lead to state~$w$. Similarly,
if $d(\ell_{i,j}) = \lnot x_m$, the~outcome ends up in~$u$. In~other words,
each time a player other than \Eve changes her strategy, the outcome ends up
in~$u$, yielding no improvement for the player.

Hence no player can improve her outcome by changing unilaterally her
strategy, which shows that the strategy profile $\sigma_\Agt$ is a
Nash equilibrium where \Eve wins.

\bigskip 
%\fbox{Proof of $\Rightarrow$} 
Now assume there is a Nash equilibrium~$\sigma_\Agt$ in which \Eve wins.
Let~$v$ be a valuation such that for every $2\leq k' \leq p$ even,
\begin{equation}
  \label{eq} \tag{\#}
  (\sigma_\Eve(\mathsf{h}(v_{|\{x_1,\dots,x_{k'-1}\}})) = 1)
  \Leftrightarrow (v(x_{k'}) = \true)
\end{equation}
where $v_{|\{x_1,\dots,x_{k'-1}\}}$ is the valuation$v$ restricted to
$\{x_1,\dots,x_{k'-1}\}$. We show the following two properties:
\begin{itemize}
\item if $v(x_k) = \true$ then there is a strategy $\sigma'_{A_k}$ 
  for~$A_k$ s.t. $\mathsf{h}(v) \in
  \FOut(\replaceter{\sigma}{A_k}{\sigma'_{A_k}})$;
\item if $v(x_k) = \false$ then there is a strategy $\sigma'_{B_k}$
  for~$B_k$ s.t. $\mathsf{h}(v) \in
  \FOut(\replaceter{\sigma}{B_k}{\sigma'_{B_k}})$.
\end{itemize}

We show the result by induction on the number of atomic
propositions. For zero atomic propositions, the result obviously
holds. Assume the result holds for atomic propositions
$\{x_1,\dots,x_{h-1}\}$ ($h \le p$). Let $v$ be a valuation over $\{
x_1, \dots ,x_{h}\}$, and $k$ such that $v(x_k)$ is true. Define~$v'$ as the
restriction of~$v$ to atomic propositions $\{ x_1, \dots ,x_{h-1}\}$.
By~induction hypothesis, $\mathsf{h}(v') = s_1 \cdot w_1 \cdots w_{h-1}
\cdot s_{h}$ is an outcome of some strategy $\sigma'_{A_k}$ for player~$A_k$.
\begin{itemize}
\item If $h$ is odd.  Let $m_\Agt = \sigma_\Agt(\mathsf{h}(v'))$.  We
  set $\sigma'_{A_k}(\mathsf{h}(v'))$ to be~$1$ if $\bigotimes_{1\le k'
    \le p, k'\le k} (m_{A_{k'}} = 1) \otimes \bigotimes_{1\le k' \le p, k'
    \ne k} (m_{B_{k'}} = 1)$ is different from $v(x_h)$, and to be~$0$
  otherwise.  Then we have that the next state is $t_h$ if, and only~if,
  $v(x_h)$ is true.
\item If $h$ is even, then the state after~$s_h$ (actually after
  $\mathsf{h}(v')$) is~$t_h$ if $v(x_h)$ is true, and $f_h$ otherwise,
  and this cannot be changed by player~$A_k$.  Then in $t_h$ and $f_h$
  we set $\sigma'_{A_k}(\mathsf{h}(v') t_h)$
  (resp. $\sigma'_{A_k}(\mathsf{h}(v') f_h)$) to be~$1$.  Note that
  since $v(x_k)$ is true we cannot reach~$f_k$, hence setting the
  action of $A_k$ in those states to~$1$ always ensures that the next
  state is~$s_{h+1}$.
\end{itemize}
This shows that $\mathsf{h}(v) \in
\FOut(\replaceter{\sigma}{A_k}{\sigma'_{A_k}})$ for some strategy
$\sigma'_{A_k}$.

The second property can be proven similarly for player~$B_k$.

\medskip 

Let $v$ be a valuation satisfying condition~\eqref{eq}. We show that
$\psi$ evaluates to \true under that valuation.  Let $c_i$ be a clause
of~$\psi$, and let $j= \sigma_\Eve(\mathsf{h}(v) \cdot b_1 \cdots b_l
\cdot c_l)$. We show that $v(\ell_{i,j}) = \true$, which means that
$c_i$ evaluates to \true under~$v$. This will show that formula~$\psi$
is valid since condition~\eqref{eq} defines sufficiently many witness
valuations. Assume w.l.o.g. that $\ell_{i,j}=x_k$. Assume towards a
contradiction that $v(x_k)=\false$. We~have proven that there is a
strategy $\sigma'_{B_k}$ for player~$B_k$ such that $\mathsf{h}(v)
\cdot b_1 \cdots b_i \cdot c_i \in
\FOut(\replaceter{\sigma}{B_k}{\sigma'_{B_k}})$. Now, the state~$x_k$ is
controlled by player~$B_k$, so $B_k$ can enforce a visit to~$w$ from $x_k$,
so there is a strategy $\sigma''_{B_k}$ for player~$B_k$ such that
$\mathsf{h}(v) \cdot b_1 \cdots b_l \cdot c_l \cdot x_k \in
\FOut(\replaceter{\sigma}{B_k}{\sigma''_{B_k}})$. This contradicts the fact
that $\sigma_\Agt$ is a Nash equilibrium. We conclude that $v(x_k) =
\true$, and we conclude that $\psi$ is valid (as explained above).
\qed

\end{document}